\newcommand{\supp}{\mathrm{supp}}
\newcommand{\lex}{\mathrm{lex}}
\newcommand{\IMM}{\mathrm{IMM}}
\newcommand{\sub}[2]{\mathrm{Sub}_{#1 \to #2}}
\newcommand{\detideal}[3]{I^{\mathrm{det}}_{#1,#2,#3}}
\newcommand{\pfaffideal}[2]{I^{\pfaff}_{#1,#2}}
\newcommand{\pfaff}{\mathrm{pfaff}}
\DeclareMathOperator{\multideg}{multideg}
\DeclareMathOperator{\LM}{LM}
\DeclareMathOperator{\LC}{LC}
\DeclareMathOperator{\Coeff}{Coeff}
\DeclareMathOperator{\relrk}{relrk}
\DeclareMathOperator{\orb}{orb}
\DeclareMathOperator{\Pf}{Pf}
\title{Ideals, Determinants, and Straightening: Proving and Using Lower Bounds for Polynomial Ideals}
\author{Robert Andrews\thanks{Department of Computer Science, University of Illinois Urbana-Champaign. Email: \texttt{rgandre2@illinois.edu}. Supported by NSF grants CCF-1755921 and CCF-1814788.} \and Michael A.~Forbes\thanks{Department of Computer Science, University of Illinois Urbana-Champaign. Email: \texttt{miforbes@illinois.edu}. Supported by NSF grants CCF-1755921, CCF-1814788, and CAREER award 2047310.}}
\date{October 27, 2022}
\begin{document}

\maketitle

\thispagestyle{empty}
\pagestyle{empty}

\begin{abstract}
	We show that any nonzero polynomial in the ideal generated by the $r \times r$ minors of an $n \times n$ matrix $X$ can be used to efficiently approximate the determinant.
	Specifically, for any nonzero polynomial $f$ in this ideal, we construct a small depth-three $f$-oracle circuit that approximates the $\Theta(r^{1/3}) \times \Theta(r^{1/3})$ determinant in the sense of border complexity.
	For many classes of algebraic circuits, this implies that every nonzero polynomial in the ideal generated by $r \times r$ minors is at least as hard to approximately compute as the $\Theta(r^{1/3}) \times \Theta(r^{1/3})$ determinant.
	We also prove an analogous result for the Pfaffian of a $2n \times 2n$ skew-symmetric matrix and the ideal generated by Pfaffians of $2r \times 2r$ principal submatrices.

	This answers a recent question of \textcite[Conjecture 6.3]{Grochow20} about complexity in polynomial ideals in the setting of border complexity.
	Leveraging connections between the complexity of polynomial ideals and other questions in algebraic complexity, our results provide a generic recipe that allows lower bounds for the determinant to be applied to other problems in algebraic complexity.
	We give several such applications, two of which are highlighted below.
	\begin{itemize}
		\item
			We prove new lower bounds for the Ideal Proof System of Grochow and Pitassi.
			Specifically, we give super-polynomial lower bounds for refutations computed by low-depth circuits.
			This extends the recent breakthrough low-depth circuit lower bounds of \textcite{LST21a} to the setting of proof complexity.
			Moreover, we show that for many natural circuit classes, the approximative proof complexity of our hard instance is governed by the approximative circuit complexity of the determinant.
		\item
			We construct new hitting set generators for the closure of low-depth circuits.
			For any $\eps > 0$, we construct generators with seed length $O(n^\eps)$ that hit $n$-variate low-depth circuits.
			Our generators attain a near-optimal tradeoff between their seed length and degree, and are computable by low-depth circuits of near-linear size (with respect to the size of their output).
			This matches the seed length of the generators recently obtained by \textcite{LST21a}, but improves on the degree and circuit complexity of the generator.
	\end{itemize}
\end{abstract}

\newpage

\setcounter{tocdepth}{2}
\tableofcontents

\newpage

\pagestyle{plain}
\setcounter{page}{1}

\section{Introduction}

A central goal of algebraic complexity theory is to understand the resources needed to compute multivariate polynomials in algebraic models of computation.
Typically, one attempts to determine the complexity of a single family of polynomials $\set{f_n(\vec{x}) : n \in \naturals}$, such as the $n \times n$ determinant or permanent.
A generalization of this task is to examine the complexity of a family of \emph{ideals} $\set{I_n \subseteq \F[\vec{x}] : n \in \naturals}$ of polynomials.
Recall that in a commutative ring $R$, an ideal $I \subseteq R$ is a subset of $R$ such that (1) if $a, b \in I$, then $a + b \in I$, and (2) if $a \in I$ and $r \in R$, then $ar \in I$.
Ideals naturally arise in commutative algebra and algebraic geometry; for example, the set of polynomials that vanish on a subset $V \subseteq \F^n$ is an ideal.
Closer to computer science and algebraic complexity, ideals appear in the study of polynomial identity testing, polynomial factorization, and algebraic proof complexity, though these appearances are not always made explicit.
Due to the prominence of ideals in algebra and algebraic complexity, it is both natural and worthwhile to study them from a complexity-theoretic perspective.

Every nonzero ideal contains polynomials of arbitrarily large circuit complexity.
This is a straightforward consequence of the fact that ideals are closed under multiplication by arbitrary polynomials.
A more interesting task, then, is to determine the minimum possible complexity of a nonzero polynomial in an ideal.

Unfortunately, little is known about the complexity of ideals aside from what is implicit in their connection to other problems of algebraic complexity.
A recent column by \textcite{Grochow20} surveyed these connections and posed some open questions, both general and concrete, about the complexity of ideals.
In particular, he raised the following question regarding an explicit family of ideals.

\begin{conjecture*}[{\cite[Conjecture 6.3]{Grochow20}}]
	Let $X$ be a $n \times n$ matrix of variables and let $I_{n}$ be the ideal generated by the $n/2 \times n/2$ minors of $X$.
	For every nonzero polynomial $f(X) \in I_n$, there is a small algebraic circuit with $f$-oracle gates that computes the $m \times m$ determinant for some $m = n^{\Theta(1)}$.
\end{conjecture*}

Due to the close relationship between the non-vanishing of minors and matrix rank, it is natural to conjecture that such a circuit exists.
If the oracle circuit is not restricted in any manner, then the desired circuit exists simply because the determinant can be computed efficiently by algebraic circuits.
However, if the oracle circuit is required to be, for example, a formula, then this question becomes nontrivial, as the determinant is not known to be computable by small formulas.

The main contribution of our work is to resolve this conjecture in the setting of approximate algebraic computation.

\begin{theorem*}
	Grochow's conjecture is true (with respect to border complexity).
\end{theorem*}

Specifically, we show that for any nonzero polynomial $f \in I_n$, the $\Theta(n^{1/3}) \times \Theta(n^{1/3})$ determinant can be approximately computed by a small depth-three $f$-oracle circuit with a single oracle gate.
A direct consequence of this is that for many circuit classes $\mathcal{C}$, if the determinant cannot be approximated by polynomial-size $\mathcal{C}$-circuits, then neither can any polynomial in the ideal $I_n$.
Naturally, this has applications to polynomial identity testing and algebraic proof complexity by employing the supporting role played by the complexity of ideals in those areas.

Before describing our results in more detail, we briefly survey what is known about the complexity of ideals and its connections to polynomial identity testing and algebraic proof complexity.

\subsection{The Complexity of Ideals}

Most of what is known about the complexity of ideals is limited to ideals generated by a single polynomial.
The ideal $\abr{f}$ generated by a polynomial $f(\vec{x})$ consists of all multiples of $f$, so questions about the complexity of this ideal become questions about the complexity of $f$ and its multiples.
Determining the minimum complexity of a polynomial in $\abr{f}$ amounts to determining whether there is a multiple of $f$ that is significantly easier to compute than $f$ itself.
This leads to the question of factoring algebraic circuits: given a small circuit computing a polynomial $g(\vec{x})$, can the factors of $g(\vec{x})$ be computed by small circuits?

This question was addressed in a celebrated result of \textcite{Kaltofen87} (with alternate proofs by \textcite[Theorem 2.21]{Burgisser2000} and \textcite{CKS19b}), who showed that factors (of low multiplicity) of small circuits can be computed by small circuits.
Taking the contrapositive, if $f(\vec{x})$ cannot be computed by small circuits, then neither can any polynomial $g \in \abr{f}$ which has $f$ as a factor of low multiplicity.
Polynomial factorization has since been studied in restricted algebraic circuit classes, including low-depth circuits \cite{DSY09,CKS19a}, formulas \cite{Oliveira16,DSS18}, algebraic branching programs \cite{DSS18,ST20}, and sparse polynomials \cite{BSV20}.
This is motivated in part by the use of Kaltofen's theorem to establish hardness-to-pseudorandomness results for polynomial identity testing, as done in the work of \textcite{KI04}.

Kaltofen's result gives us a strong understanding of the complexity of the low-degree polynomials in a principal ideal.
Because algebraic complexity theory is primarily interested in the computation of low-degree polynomials, this suffices for most applications.
However, the situation would be cleaner if lower bounds on the complexity of a polynomial $f$ implied comparable lower bounds on the complexity of all polynomials in the ideal $\abr{f}$, not just for those polynomials $g \in \abr{f}$ for which $f$ is a factor of low multiplicity.
\textcite{Kaltofen87} asked in the language of factorization whether this is the case; this question remains open and is now known as the Factor Conjecture.
In the setting of approximative algebraic computation, the analogue of the Factor Conjecture was proved by \textcite{Burgisser04}.
It is interesting to note that, coincidentally, we also make essential use of approximative computation in our work.

For non-principal ideals, much less is known.
\textcite{KW21} studied ideals generated by minors of a generic matrix, showing that every nonzero polynomial in the ideal generated by minors of size $r$ must have sparsity at least $r!/2$.
Later work by \textcite{DKW21} improved this sparsity lower bound to $r!$, which is optimal as witnessed by any $r \times r$ minor.
Under the assumption that $\VP \neq \VNP$, \textcite{KRST22} proved lower bounds on the complexity of any polynomial that vanishes on the coefficient vector of all polynomials in $\VNP$.
The remainder of what we know about the complexity of non-principal ideals stems from connections to polynomial identity testing and the Ideal Proof System.
We defer our explanation of these connections to \autoref{subsec:pit} and \autoref{subsec:ips}, respectively.

Approximate algebraic computation will play a key role in our work, so we briefly discuss it here.
For simplicity, we will focus on circuits and polynomials defined over the complex numbers; for more details, including a field-independent definition of approximate computation, see \autoref{subsec:border}.
We say that a polynomial $f(\vec{x})$ can be approximately computed by small algebraic circuits if there is a collection of polynomials $\set{f_\eps : \eps > 0}$ such that (1) for all $\eps > 0$, the polynomial $f_\eps$ can be computed by a small circuit, and (2) we have $\lim_{\eps \to 0}f_\eps = f$, where convergence is coefficient-wise.
Over the complex numbers, this can be interpreted as saying that $f$ lies in the closure (with respect to the Euclidean topology) of the set of polynomials computable by small circuits.
If $f$ can be approximated well by polynomials from a circuit class $\mathcal{C}$, then we say that $f$ is in $\overline{\mathcal{C}}$, the closure of $\mathcal{C}$.
The circuit complexity of the approximating polynomials $f_\eps$ is referred to as the \emph{border complexity} of $f$.
Naturally, one can also consider border complexity with respect to other classes of algebraic circuits, such as formulas or branching programs.

Border complexity appeared as early as the late 1970s, when \textcite{BCRL79,Bini80} improved upon the state-of-the-art algorithms for matrix multiplication by considering an approximative version of the problem.
The notion of border complexity also plays a prominent role in the geometric complexity theory program of \textcite{MS01}.
Roughly speaking, the goal of that program is to prove super-polynomial lower bounds on the border complexity of the permanent using techniques from algebraic geometry and representation theory.

In general, the relationship between exact and border complexity is not well-understood. 
\textcite{Forbes16} (see also \textcite{BDI21}) observed that exact and border complexity are equivalent for read-one oblivious algebraic branching programs.
\textcite{DDS21b} recently showed that polynomials in the border of depth-three circuits of bounded top fan-in can be computed exactly by small algebraic branching programs.
However, for classes like $\VP$ and $\VNP$ (the algebraic analogues of $\P$ and $\NP$), it is not clear how they relate to their closure.

Returning to the complexity of ideals, if we are content to operate in the setting of border complexity, then the work of \textcite{Burgisser04} shows that up to polynomial factors, the complexity of a principal ideal $\abr{f}$ is governed by the border complexity of its generator $f$.
Unfortunately, this seems to be where our understanding of the complexity of ideals stops.
Even ideals generated by two polynomials are not well-understood structurally from the viewpoint of complexity theory.
There are examples of explicit ideals, coming from polynomial identity testing, that are not principal and for which we can prove lower bounds; see \autoref{subsec:pit} below for more.

\subsection{Polynomial Identity Testing} \label{subsec:pit}

Polynomial identity testing (which we abbreviate as PIT) is the algorithmic problem of testing whether an algebraic circuit computes the zero polynomial.
Typically, one assumes that the circuit computes a polynomial of degree at most $n^{O(1)}$, where $n$ is the number of input variables.
A simple $\coRP$ algorithm for this problem follows from the Schwartz--Zippel lemma \cite{Zippel79,Schwartz80}.
When the input is allowed to be an algebraic circuit without further structural restrictions, no deterministic algorithm is known that improves on the na\"ive derandomization of this randomized algorithm.
In fact, even obtaining a nondeterministic algorithm running in subexponential time is known to imply circuit lower bounds that lie beyond the reach of current techniques \cite{KI04}.

More is known for many restricted classes of circuits, including sparse polynomials \cite{KS01}, depth-three \cite{DS07,KS07,KS09,KS11,SS11,SS12,SS13} and depth-four \cite{Shpilka19,PS20,PS21,DDS21a} circuits of bounded top fan-in, read-once formulas \cite{SV15,MV18}, read-once oblivious algebraic branching programs \cite{FS13,FSS14,AGKS15,GKS17,GKST17,AFSSV18,GG20,BS21}, low-depth multilinear circuits \cite{KMSV13,AMV15,OSV16,SV18}, and low-depth circuits \cite{LST21a}.
In general, algorithms for PIT are designed by giving an efficient construction of a \emph{hitting set generator}.
That is, we construct a low-degree polynomial map $\mathcal{G} : \F^\ell \to \F^n$ with $\ell \ll n$ such that if $f(\vec{x})$ is a nonzero polynomial computable by a small circuit, then $f(\mathcal{G}(\vec{y})) \neq 0$.
This reduces the number of variables in the circuit without increasing the degree too much.
We then obtain a faster deterministic algorithm by using the brute-force derandomization of the Schwartz--Zippel lemma to test $f(\mathcal{G}(\vec{y}))$.

In fact, constructing such a generator $\mathcal{G}$ corresponds to proving lower bounds against a polynomial ideal.
Fix a circuit class $\mathcal{C}$ (for example, the class of $n^2$-size circuits) and let $\mathcal{G}$ be a hitting set generator for $\mathcal{C}$.
Let $\mathcal{G}(\vec{y}) = (\mathcal{G}_1(\vec{y}),\ldots,\mathcal{G}_n(\vec{y}))$ and consider the ideal of polynomials $f(\vec{x})$ that vanish on $\mathcal{G}(\vec{y})$, i.e., polynomials such that $f(\mathcal{G}(\vec{y})) = 0$.
This ideal can be written as the intersection
\[
	I_{\mathcal{G}} \coloneqq \abr{x_i - \mathcal{G}_i(\vec{y}) : i \in [n]} \cap \F[\vec{x}],
\]
and in general is not generated by a single polynomial.
Suppose $f$ is a nonzero polynomial in the ideal $I_{\mathcal{G}}$.
Because we assumed $\mathcal{G}$ to be a hitting set generator for the circuit class $\mathcal{C}$, this means that $f$ cannot be computed by circuits from $\mathcal{C}$.
That is, proving that $\mathcal{G}$ is a generator for $\mathcal{C}$ is equivalent to proving that no element of $I_{\mathcal{G}}$ can be computed by a circuit from $\mathcal{C}$.
To the best of our knowledge, this connection accounts for all known examples of lower bounds for non-principal ideals.
We remark that this approach can prove lower bounds against ``natural'' non-principal ideals.
For example, \cite[Corollary 6.7]{FSTW16} easily generalizes to prove lower bounds against determinantal ideals for weak circuit classes.
However, this approach does not necessarily allow one to choose an ideal and subsequently prove a lower bound against that particular ideal.

One can also construct hitting set generators using lower bounds for ideals.
\textcite{KI04} used Kaltofen's factorization result to show that circuit lower bounds for explicit families of polynomials can be used to derandomize PIT.
In the analysis of the Kabanets--Impagliazzo generator, what is really needed is a lower bound for all low-degree multiples of a polynomial $f$, which is exactly what Kaltofen's theorem provides if $f$ is assumed to be hard to compute.
Further work on the algebraic hardness-randomness paradigm in the setting of low-depth circuits \cite{DSY09,CKS19a} followed the approach of \textcite{KI04}, proving analogues of Kaltofen's factoring result for bounded-depth circuits.

One can also consider PIT for polynomials of small border complexity.
Even in the randomized setting, the complexity of this problem is unclear, as it is not obvious how to evaluate a polynomial $f(\vec{x})$ given only a circuit that approximates $f(\vec{x})$, nor is it clear that such an approximating circuit even has a succinct description.
However, one can still try to construct hitting set generators for polynomials of small border complexity.
\textcite{FS18,GSS19} gave $\PSPACE$ constructions of hitting set generators for polynomials with small border circuit complexity.
One of the primary conceptual contributions of \textcite{FS18} was the definition of a \emph{robust} hitting set generator.
Roughly, a generator $\mathcal{G}$ for a class $\mathcal{C}$ is robust if for every nonzero polynomial $f \in \mathcal{C}$, the composition $f(\mathcal{G}(\vec{y}))$ is ``far'' from the zero polynomial (after $f$ has been suitably normalized).
It is not hard to show that, over a field of characteristic zero, a generator $\mathcal{G}$ for $\mathcal{C}$ is robust if and only if $\mathcal{G}$ hits the closure $\overline{\mathcal{C}}$ of $\mathcal{C}$.
Over an arbitrary field, one can likewise consider the problem of constructing hitting set generators for the closures of circuit classes, although the notion of $f(\mathcal{G}(\vec{y}))$ being far from the zero polynomial is not as clear. 
In this setting we drop the adjective ``robust'' and focus simply on hitting sets for the closure of a circuit class.
The preceding discussion on the relationship between PIT and the complexity of ideals extends to border complexity.

Designing hitting sets for the closures of circuit classes has been explored as a possible avenue towards resolving grand challenges in polynomial identity testing.
Recent work by \textcite{MS21,ST21b} studied PIT for \emph{orbits} of various classes $\mathcal{C}$.
The orbit $\orb(\mathcal{C})$ of a class $\mathcal{C}$ corresponds to polynomials of the form $f(A\vec{x} + \vec{b})$, where $f(\vec{x}) \in \mathcal{C}$ and $A$ is an invertible $n \times n$ matrix.
Studying PIT for orbits is motivated by the fact that for many simple classes $\mathcal{C}$, there is a far richer class $\mathcal{D}$ such that $\overline{\orb(\mathcal{C})} = \overline{\mathcal{D}}$.
That is, in order to derandomize PIT for a powerful class $\mathcal{D}$, it suffices to construct hitting set generators for the closure of the much simpler class $\orb(\mathcal{C})$.
Unfortunately, this is not always feasible; for example, \textcite{MS21} showed that at least one instantiation of their hitting sets does not extend to the closure of the circuit class it hits.

\subsection{The Ideal Proof System} \label{subsec:ips}

A central question of proof complexity is the following: given an unsatisfiable CNF formula $\varphi$, what is the length of the shortest proof of the unsatisfiability of $\varphi$?
This question can be instantiated with a myriad of different proof systems rooted in logic, algebra, and geometry.
Our focus in this work will be on a proof system based in algebra, namely the Ideal Proof System of \textcite{GP18}.
For a more comprehensive treatment of other proof systems (and proof complexity in general), see the recent book of \textcite{Krajicek19}.

Let $\varphi$ be an unsatisfiable 3CNF formula.
One way to prove that $\varphi$ is unsatisfiable is to translate $\varphi$ into a system of polynomial equations, swapping the roles of 0 and 1, as follows.
The literals $x$ and $\neg x$ are translated into the polynomials $1 - x$ and $x$, respectively.
A clause $\ell_1 \lor \ell_2 \lor \ell_3$ becomes the polynomial $p_{\ell_1} p_{\ell_2} p_{\ell_3}$, where $p_{\ell_i}$ is the polynomial corresponding to the literal $\ell_i$.
Let $f_1,\ldots,f_m$ be the polynomials obtained from the clauses of $\varphi$. 
It is not hard to see that $\varphi$ is satisfiable if and only if there is a $\set{0,1}$-valued solution to the system of equations $f_1 = \cdots = f_m = 0$; equivalently, $\varphi$ is satisfiable if and only if there is a solution to the system $f_1 = \cdots = f_m = x_1^2-x_1 = \cdots = x_n^2 - x_n = 0$.

Thus, to show that $\varphi$ is unsatisfiable, it suffices to prove that a system of polynomial equations is unsatisfiable.
This can be done by finding polynomials $g_1(\vec{x}),\ldots,g_m(\vec{x})$ and $h_1(\vec{x}),\ldots,h_n(\vec{x})$ such that $\sum_{i=1}^m g_i(\vec{x}) f_i(\vec{x}) + \sum_{i=1}^n h_i(\vec{x}) (x_i^2-x_i) = 1$, or more succinctly, by showing that $1$ is in the ideal generated by $\set{f_1,\ldots,f_m,x_1^2-x_1,\ldots,x_n^2-x_n}$.
As a consequence of Hilbert's Nullstellensatz, such a refutation always exists, provided the system is unsatisfiable.
These refutations and various notions of their complexity give rise to the Nullstellensatz \cite{BIKPP96} and Polynomial Calculus \cite{CEI96} proof systems, both of which are well-studied and for which lower bounds are known \cite{BIKPP96,BIKPRS96,Razborov98,IPS99}.  

The recent Ideal Proof System (abbreviated as IPS) of \textcite{GP18} measures the complexity of a refutation by the algebraic circuit complexity of the certificate $\sum_i g_i f_i + \sum_i h_i (x_i^2-x_i)$ when the $f_i$ and $x_i^2-x_i$ are provided as part of the input to the circuit.
Because a refutation in the IPS is written as an algebraic circuit, there are connections between algebraic circuit lower bounds and lower bounds for the IPS.
\textcite{GP18} proved that super-polynomial lower bounds on the size of IPS refutations of a family of CNF formulas imply $\VP \neq \VNP$.
As a proof system, the IPS is very powerful: \textcite{GP18} showed that the IPS polynomially simulates Extended Frege, itself a strong logic-based proof system.
This simulation also behaves nicely if we consider IPS refutations coming from a restricted circuit class $\mathcal{C}$.
For example, over a field of characteristic $p > 0$, the constant-depth version of the IPS polynomially simulates $\AC^0[p]$-Frege, a proof system notorious for its current lack of super-polynomial lower bounds.

Lower bounds, both conditional and unconditional, are known for the IPS.
Conditionally, \textcite{AGHT20} showed that the Shub--Smale hypothesis implies super-polynomial lower bounds on the size of IPS refutations of a particular instance of subset sum.
Later work by \textcite{ST21a} showed that over finite fields, if there is an explicit family of polynomials that cannot be computed by polynomial-size algebraic circuits, then a particular family of CNF formulas cannot be refuted by polynomial-size IPS refutations.
Combined with earlier work by \textcite{GP18}, this establishes that over finite fields, proving super-polynomial lower bounds for the IPS is equivalent to proving super-polynomial lower bounds for algebraic circuits.
\textcite{FSTW16} used techniques from algebraic circuit complexity to prove unconditional lower bounds for restricted subsystems of the IPS, including those computed by depth-three powering formulas, read-once algebraic branching programs, and multilinear formulas.

The Ideal Proof System is defined in terms of algebraic circuits, so it is natural to expect progress on IPS lower bounds to mirror progress on lower bounds for algebraic circuits.
Empirically, this has been the case, although additional effort is required to translate circuit lower bounds into IPS lower bounds.
To prove circuit lower bounds, one only needs to show that a single polynomial cannot be computed by small circuits.
In contrast, to prove lower bounds on the circuit size of IPS refutations of a system of polynomials, it is necessary to show that small circuits cannot compute any valid refutation.

Luckily, the set of IPS refutations of a fixed system of equations exhibits some algebraic structure: all refutations of a fixed system of polynomials lie in a coset of a particular ideal, as observed by \textcite[Section 6]{GP18}.
Thus, one can try to prove lower bounds for the IPS by proving circuit lower bounds for nonzero cosets of ideals.
To the best of our knowledge, the only known lower bounds for nonzero cosets of ideals are those that follow from previously-mentioned lower bounds on the IPS.
Notably, these proofs do not directly establish lower bounds for cosets of ideal, but rather reduce the task of proving IPS lower bounds to the more-tractable task of proving algebraic circuit lower bounds.
One could hope that by better understanding the complexity of (cosets of) ideals, this progress could be used to prove lower bounds for IPS and restricted variants thereof.
We refer the interested reader to \textcite{GP18,Grochow20} for further details.

For more on the Ideal Proof System, see the recent survey of \textcite{PT16}.

\subsection{Our Results}

We now describe our results in more detail.
Throughout this subsection, we let $X$ denote an $n \times m$ matrix of variables and $\detideal{n}{m}{r} \subseteq \F[X]$ the ideal generated by the $r \times r$ minors of $X$.
For simplicity, we state our results over fields of characteristic zero (such as the rational or complex numbers).

\subsubsection{Complexity of Determinantal Ideals}

Our main theorem constructs, for any nonzero polynomial $f(X) \in \detideal{n}{m}{r}$, a small $f$-oracle circuit that approximately computes the $s \times s$ determinant for $s = \Theta(r^{1/3})$.
This answers a question of \textcite[Conjecture 6.3]{Grochow20} in the setting of border complexity.

\begin{theorem}[Informal version of \autoref{thm:proj to small abp} and \autoref{cor:proj to det}] \label{thm:informal main}
	Let $\F$ be a field of characteristic zero.
	Let $X$ be an $n \times m$ matrix of variables and let $\detideal{n}{m}{r} \subseteq \F[X]$ be the ideal generated by the $r \times r$ minors of $X$.
	Let $f(X) \in \detideal{n}{m}{r}$ be a nonzero polynomial.
	Then there is a depth-three $f$-oracle circuit of size $O(n^2 m^2)$ that approximately computes the $s \times s$ determinant for $s = \Theta(r^{1/3})$.
\end{theorem}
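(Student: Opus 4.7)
The plan is to exploit the fact that $\detideal{n}{m}{r}$ vanishes on the variety of $n \times m$ matrices of rank at most $r-1$, so any nonzero $f \in \detideal{n}{m}{r}$ must take a nonzero value on some matrix $M_0$ of rank exactly $r$ (specialize a nonvanishing point into the closed rank-$r$ stratum if necessary). Using the action of $\GL_n(\F) \times \GL_m(\F)$ on matrices, which preserves $\detideal{n}{m}{r}$ setwise and acts by invertible linear substitutions on the entries of $X$, I would first reduce to the canonical normal form $M_0 = E_r$ with $1$'s on the first $r$ diagonal entries and zeros elsewhere. Linear substitutions are free in the depth-three oracle model, so absorbing this change of basis into the inputs of the $f$-oracle costs no depth.

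Next, with $s = \Theta(r^{1/3})$, I would encode the $s \times s$ determinant into a carefully chosen substitution $X(Y,\epsilon)$, where $Y$ is the $s \times s$ matrix of formal variables and $\epsilon$ is the approximation parameter. The target structure is iterated matrix multiplication: divide the $r$ nontrivial entries of the normal form into $s$ blocks of size $s \times s$ arranged in sequence, and pack $s$ copies of $Y$ (together with an identity copy to close the cycle) into these blocks so that the ``diagonal'' rank-$r$ cycle through all blocks realizes the $\IMM_s$-to-$\det_s$ reduction expressing $\det_s(Y)$ as a projection of an $s$-fold iterated product of $s \times s$ matrices. The parameter $\epsilon$ scales the nonintended block interactions so that, in the limit $\epsilon \to 0$, only the cycle contribution survives as the leading $\epsilon$-coefficient of $f(X(Y,\epsilon))$. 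Extracting this coefficient symbolically yields a depth-three $f$-oracle circuit: linear forms in $Y$ and $\epsilon$ feed into (a few copies of) the $f$-oracle, and the outputs are combined by a $\Pi\Sigma$ layer to produce $\det_s(Y)$ (up to a nonzero constant) in the border sense; the $O(n^2 m^2)$ size then comes from the bookkeeping of the linear substitutions.

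The main obstacle is to show that the intended cycle contribution is both nonzero and dominant for \emph{every} nonzero $f \in \detideal{n}{m}{r}$, not merely for an individual $r \times r$ minor. For a general $f$, the expansion of $f(X(Y,\epsilon))$ mixes contributions from many minors simultaneously, and one must argue that they cannot cancel the intended term. I expect this to be handled by a leading-term argument: fix a monomial order, use a standard-monomial (straightening) expansion of $f$ to identify a well-defined leading $r \times r$ minor of $f$, and align the substitution $X(Y,\epsilon)$ with this leading minor so that every subleading contribution is suppressed by a strictly positive power of $\epsilon$. The substitution will therefore depend on $f$ only mildly, through the choice of basis and the permutation of rows and columns needed to bring the leading minor into its canonical position; once this dependence is absorbed into the first layer of linear substitutions, the rest of the analysis reduces to the fixed normal form above. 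The delicate part is verifying that the straightening expansion interacts correctly with the border scaling, i.e., that the leading-term minor genuinely survives the $\epsilon \to 0$ limit as $\det_s(Y)$ rather than degenerating to zero.
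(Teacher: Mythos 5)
Your high-level instincts point in the right direction---use the straightening law, degenerate to a leading term by an $\eps$-scaling, and encode $\det_s$ via an ABP/$\IMM$ construction---but both load-bearing steps contain genuine gaps.

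First, your opening observation is false. A nonzero $f \in \detideal{n}{m}{r}$ need not take a nonzero value on any matrix of rank $\le r$: for example, $f = \det_r(X_{[r],[r]}) \cdot \det_{r+1}(X_{[r+1],[r+1]})$ is a nonzero element of $\detideal{n}{m}{r}$ (for $r+1 \le \min(n,m)$) yet vanishes identically on the rank-$\le r$ locus. So the reduction to a normal form $M_0 = E_r$ by a $\GL_n\times\GL_m$ change of basis never gets off the ground; and even if $f$ did have such a nonvanishing point, normalizing it tells you nothing about how the multiplicands in $\sum_i f_i g_i$ behave under that change of basis, which is exactly the difficulty you need to resolve.

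Second, and more substantively, the ``leading-term argument'' you defer to in the final paragraph elides the two hardest pieces. In the standard bideterminant expansion, the leading basis element of $f$ is a \emph{product} of minors $(K_\sigma|K_\sigma) = \prod_i \det_{\sigma_i}(X_{[\sigma_i],[\sigma_i]})$, not a single $r\times r$ minor, and it cannot be isolated merely by rescaling $x_{i,j} \mapsto \eps^{d_{i,j}}x_{i,j}$: multidegree fails to distinguish standard bideterminants of the same shape, so a diagonal substitution alone cannot split the linear combination. The paper's mechanism (\autoref{lem:row/col transform}, \autoref{prop:reduction to single bideterminant}) instead conjugates by matrices $M,N$ built from elementary row/column operations weighted by auxiliary indeterminates; the action of a single elementary operation on bideterminants is governed by the $\sub{i}{j}$ substitution operators of the straightening theory (\autoref{lem:S operator}, \autoref{lem:composite S operator}), and a subsequent Newton-polytope degeneration (\autoref{lem:approximate lc}) isolates $(K_\sigma|K_\sigma)$. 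Your proposal does not supply this mechanism. Likewise, once one has the bideterminant, the projection to $\det_s$ must cope with its product structure: the paper's ABP-to-determinant matrix is engineered so that \emph{every} leading principal minor has determinant exactly $1$ (\autoref{lem:abp to det}), making all factors of $(K_\sigma|K_\sigma)$ collapse to either $1$ or $1 + \hat{g}$ after substitution; one then extracts $g$ from $(1+\hat{g})^t$ by a further degeneration together with a top-level addition gate. This is precisely why the oracle circuit needs that extra addition gate, and a generic ``$\Pi\Sigma$ post-processing'' layer does not address it.
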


More generally, the conclusion of \autoref{thm:informal main} holds if the determinant is replaced by any polynomial $g$ that can be approximately computed by an algebraic branching program with $r$ vertices.
The conclusion of \autoref{thm:informal main} also holds if we have oracle gates that approximately compute $f$ instead of oracles that compute $f$ exactly.

An immediate consequence of \autoref{thm:informal main} is that for formulas and low-depth circuits, the border complexity of any nonzero polynomial in $\detideal{n}{m}{r}$ is at least as large as the border complexity of the $\Theta(r^{1/3}) \times \Theta(r^{1/3})$ determinant, up to polynomial factors.
To the best of our knowledge, the only complexity lower bounds for the ideal $\detideal{n}{m}{r}$ known prior to this work are due to \textcite{KW21,DKW21} and \textcite[Corollary 6.7]{FSTW16}, who showed that every nonzero polynomial in $\detideal{n}{m}{r}$ is $\exp(\Omega(r))$-hard for several weak circuit classes.

To prove \autoref{thm:informal main}, we have to reason about arbitrary polynomials in $\detideal{n}{m}{r}$.
That is, if $\set{g_1,\ldots,g_N}$ are the $r \times r$ minors of $X$, we have to consider all nonzero polynomials of the form $\sum_{i=1}^N f_i g_i$, where the $f_i$ are arbitrary polynomials.
This is difficult in part because if we apply a linear change of variables $X \mapsto L(X)$, it is not clear how to control the behavior of the $f_i$.
To circumvent this, we use an alternate basis for $\F[X]$ instead of the monomial basis.
This alternate basis consists of products of minors (of possibly different sizes) of $X$ that satisfy a particular combinatorial condition; these products are known as \emph{standard bideterminants}.
Working in this basis, we gain a better understanding of how the multiplicands $f_i$ behave under a change of variables.

The proof of \autoref{thm:informal main} then proceeds in two steps.
First, we find a change of variables that takes a polynomial $f \in \detideal{n}{m}{r}$ to an approximation (in the border complexity sense) of a standard bideterminant $h(X)$ in the support of $f$.
The analysis of this step crucially relies on the use of the standard bideterminant basis and its properties, which we describe in \autoref{subsec:bidet}.
Because $f$ lies in the ideal $\detideal{n}{m}{r}$, one can show that $h(X)$ is divisible by a $t \times t$ minor of $X$ for some $t \ge r$.
The second step is to find a projection of $h(X)$ to the $\Theta(r^{1/3}) \times \Theta(r^{1/3})$ determinant.
Since $h$ may be a product of minors of varying sizes, we need to find a projection that (1) behaves nicely on small minors of $X$ and (2) allows us to deal with the possibility that $h$ may be a large power of a minor.
We accomplish this by modifying an argument of \textcite{Valiant79}.

\subsubsection{Complexity of Pfaffian Ideals}

Let $Y$ be a $2n \times 2n$ skew-symmetric matrix.
It is well-known that the determinant of $Y$ is the square of another polynomial, the \emph{Pfaffian} $\Pf(Y)$ of $Y$.
Let $\pfaffideal{2n}{2n} \subseteq \F[Y]$ be the ideal generated by the Pfaffians of the $2r \times 2r$ principal submatrices of $Y$.
Our next result is an analogue of \autoref{thm:informal main} for the ideal $\pfaffideal{2n}{2r}$.

\begin{theorem}[Informal version of \autoref{thm:pfaff to small abp} and \autoref{cor:proj to pfaff}] \label{thm:informal pfaff}
	Let $\F$ be a field of characteristic zero.
	Let $Y$ be a $2n \times 2n$ skew-symmetric matrix of variables and let $\pfaffideal{2n}{2r} \subseteq \F[Y]$ be the ideal generated by the Pfaffians of the $2r \times 2r$ principal submatrices of $Y$.
	Let $f(Y) \in \pfaffideal{2n}{2r}$ be a nonzero polynomial.
	Then there is a depth-three $f$-oracle circuit of size $O(n^4)$ that approximately computes the $s \times s$ Pfaffian for $s = \Theta(r^{1/3})$.
\end{theorem}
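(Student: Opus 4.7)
The plan is to adapt the two-step strategy sketched for the determinantal ideal in \autoref{thm:informal main} to the Pfaffian setting. The essential new ingredient is a straightening-law basis for $\F[Y]$ (with $Y$ skew-symmetric) whose elements are products of Pfaffians of principal submatrices of $Y$ satisfying a combinatorial ``standardness'' condition; this is the Pfaffian analogue of the standard bideterminant basis for $\F[X]$. The natural group action that preserves both the skew-symmetry of $Y$ and the ideal $\pfaffideal{2n}{2r}$ is the congruence action $Y \mapsto A Y A^T$ with $A \in \GL_{2n}$, and it will play the role of the two-sided action $X \mapsto P X Q$ used in the determinantal argument. Under this action, a Pfaffian $\Pf(Y_{S,S})$ transforms by the corresponding minor-in-$A$ factor, so products of Pfaffians behave predictably.

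First, I would use a carefully chosen one-parameter family of congruences $A(\eps) \in \GL_{2n}$, combined with the border-complexity limit trick, to isolate from $f$ a single standard Pfaffian monomial $h(Y)$ occurring in its expansion. Concretely, one equips the basis of standard Pfaffian monomials with a partial (or lexicographic) order compatible with $A(\eps)$, picks out a leading term, and uses the approximation to cancel out the lower-order contributions. This is the Pfaffian analogue of the first step in the determinantal proof, with the bideterminant straightening law replaced by its Pfaffian counterpart.

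Next, because $f \in \pfaffideal{2n}{2r}$, the straightening analysis forces the isolated $h(Y)$ to contain as a factor some $\Pf(Y_{S,S})$ with $|S| \geq 2r$. The remaining factors of $h(Y)$ are Pfaffians of principal submatrices of smaller even sizes. I would then apply a Valiant-style projection: send the small Pfaffian factors to nonzero constants by substituting suitable sparse constant skew-symmetric matrices into the corresponding diagonal blocks, and collapse the large Pfaffian factor onto the $s \times s$ Pfaffian for $s = \Theta(r^{1/3})$. The cube-root loss mirrors the determinantal case and comes from the interplay between the need to route a specific principal-submatrix pattern and to dispose of unwanted factors via a linear-in-the-variables projection.

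The main obstacle will be the isolation step, because the congruence action $Y \mapsto AYA^T$ is substantially less flexible than the two-sided action on a generic rectangular matrix: the row and column transformations are coupled, so the analysis of how standard Pfaffian monomials transform — and why a designated leading term can be cleanly extracted in the $\eps \to 0$ limit — requires a more delicate setup than in the bideterminant argument. The projection step, by contrast, should go through with only syntactic modifications once the correct ``sparse skew-symmetric substitution'' replacing Valiant's minor-by-minor substitution is in hand, since Pfaffians of principal submatrices behave analogously to minors under block-diagonal and near-block-diagonal substitutions.
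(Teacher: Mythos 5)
Your high-level strategy is the paper's: replace the bideterminant basis with the standard Pfaffian monomial basis (\autoref{thm:pfaff straightening}), replace the two-sided action $X \mapsto PXQ$ with the congruence action $Y \mapsto AYA^\top$, isolate a single standard monomial by a one-parameter degeneration, and finish with a Valiant-style projection. The isolation step as you describe it is essentially what the paper does (\autoref{lem:pfaffian row/col transform}, \autoref{prop:sparsify pfaffian}), and you correctly identify the coupling of the row and column transformations as the new difficulty.

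There is, however, a genuine gap in your projection step. You propose to ``send the small Pfaffian factors to nonzero constants by substituting suitable sparse constant skew-symmetric matrices into the corresponding diagonal blocks,'' as if the factors of the isolated standard monomial lived in disjoint diagonal blocks. They do not: a standard monomial $[K_\sigma](Y) = \prod_i \Pf(Y_{[\sigma_i],[\sigma_i]})$ is a product of Pfaffians of \emph{nested} leading principal submatrices $Y_{[\sigma_1],[\sigma_1]} \supseteq Y_{[\sigma_2],[\sigma_2]} \supseteq \cdots$, so a single substitution of $Y$ has to simultaneously control all of them. Independent block substitutions are not available. The paper resolves this with two ingredients you would need to supply: (i) an ABP-to-matrix embedding $A(\vec{y},z)$ in which $\det A = 1 + g$ and \emph{every proper leading principal minor is identically $1$} (\autoref{lem:abp to det}), and (ii) an interleaving permutation that turns the block matrix $\begin{psmallmatrix}0 & A \\ -A^\top & 0\end{psmallmatrix}$ into a skew-symmetric $M$ with $\Pf(M_{[2k],[2k]}) = \pm\det(A_{[k],[k]})$ for \emph{all} $k$ (\autoref{lem:subpfaff to subdet}). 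Plugging this $M$ into $Y$ makes all small factors equal $1$ and each factor of width $\ge 2r$ equal $1 + \hat{g}$. You also omit the subsequent issue that several factors may have width $\ge 2r$, giving $(1+\hat{g})^t$ rather than $1+\hat{g}$; extracting $g$ from this requires the degeneration $\hat{g} \mapsto \delta^{\deg\hat g}\hat g$ followed by the top addition gate of the depth-three circuit, and it is exactly here that characteristic zero (or $p \nmid t$) is used. Without the nested-minors-equal-one property and the power-extraction step, your projection does not close.
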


The proof of \autoref{thm:informal pfaff} is similar to that of \autoref{thm:informal main}.
The primary difference is that we now express polynomials in $\pfaffideal{2n}{2r}$ in an alternate basis consisting of products of Pfaffians of principal submatrices of $Y$.
Along the way, we modify some of the technical details of the construction to accommodate for Pfaffians instead of determinants.

We remark that because the Pfaffian is the square root of the skew-symmetric determinant (in the sense that $\Pf(Y)^2 = \det(Y)$), it is natural to attempt proving \autoref{thm:informal pfaff} using \autoref{thm:informal main}.
For any polynomial $f(\vec{x})$, one can use the Taylor series expansion of $\sqrt{1 + x^2}$ to construct a small $f(\vec{x})^2$-oracle circuit that computes $f(\vec{x})$.
Combining this with \autoref{thm:informal main}, one obtains an analogue of \autoref{thm:informal main} for the ideal generated by the squares of sub-Pfaffians of $Y$, which is weaker than \autoref{thm:informal pfaff} above.

\subsubsection{The Space of Partial Derivatives in Determinantal Ideals}

The remainder of our work consists of three applications of \autoref{thm:informal main} and its proof, the first of which is to algebraic circuit complexity.
For a polynomial $f \in \F[X]$, let $\partial_{< \infty}(f)$ denote the span of the partial (Hasse) derivatives of $f$.
The dimension of $\partial_{< \infty}(f)$ and related spaces has been used successfully as a complexity measure in proving lower bounds for restricted classes of algebraic circuits (see the survey of \textcite{Saptharishi19} for more on this).
While \autoref{thm:informal main} shows that computing a polynomial in $\detideal{n}{m}{r}$ is not much harder than computing the $\Theta(r^{1/3}) \times \Theta(r^{1/3})$ determinant, it is natural to ask if there are polynomials in $\detideal{n}{m}{r}$ that are ``simpler'' than the $r \times r$ determinant with respect to complexity measures like $\dim(\partial_{< \infty}(\bullet))$.
Our next result shows that among nonzero polynomials in the ideal $\detideal{n}{m}{r}$, the $r \times r$ determinant in fact minimizes the value of $\dim(\partial_{< \infty}(\bullet))$.

\begin{theorem}[Informal version of \autoref{thm:pd lb}] \label{thm:informal pd}
	For every nonzero $f(X) \in \detideal{n}{m}{r}$, we have $\dim(\partial_{< \infty}(f)) \ge \dim(\partial_{< \infty}(\det_r)) = \binom{2r}{r}$.
\end{theorem}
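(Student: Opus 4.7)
The plan is to combine the $\mathrm{GL}$-degeneration technique from the proof of \autoref{thm:informal main} with the upper semi-continuity of $\dim(\partial_{<\infty}(\cdot))$, which is a standard consequence of the lower semi-continuity of matrix rank applied to the catalecticant matrix (whose entries are the coefficients of $\partial^\alpha f$ in the monomial basis). From that construction, any nonzero $f \in \detideal{n}{m}{r}$ admits a one-parameter family $L_\epsilon$ of invertible linear substitutions and an integer $k$ with $\epsilon^{k} \cdot f(L_\epsilon X) \to h(X)$ as $\epsilon \to 0$, where $h$ is a standard bideterminant in the support of $f$, hence in $\detideal{n}{m}{r}$. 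Since $\dim(\partial_{<\infty}(\cdot))$ is invariant under invertible linear substitution and nonzero scaling, semi-continuity gives $\dim(\partial_{<\infty}(h)) \leq \dim(\partial_{<\infty}(f))$. It therefore suffices to prove $\dim(\partial_{<\infty}(h)) \geq \binom{2r}{r}$ for every standard bideterminant $h \in \detideal{n}{m}{r}$.

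Such an $h$ factors as $\prod_i [R_i|C_i]$, and by the combinatorial characterization of standard bideterminants in $\detideal{n}{m}{r}$ its underlying Young shape has a column of length at least $r$, so the first factor $[R_1|C_1]$ is an $s \times s$ minor with $s \geq r$. I would apply a second, Schur-complement-style degeneration to $[R_1|C_1]$: partition its rows and columns into a top-left $r \times r$ block $A$ and a bottom-right $(s-r) \times (s-r)$ block $D$, then scale the two off-diagonal blocks by $\epsilon$. In the limit, $[R_1|C_1]$ degenerates to $\det(A) \cdot \det(D)$, a product in two disjoint variable sets. In the favorable case where all inner index sets $R_i, C_i$ (for $i \geq 2$) can be arranged to sit inside the chosen top $r$-block, the inner factors are preserved unchanged as minors of $A$; the limit of $h$ then factors as $\det_r(A) \cdot G$, where $G = \det_{s-r}(D) \cdot \prod_{i \geq 2}[R_i|C_i]$ involves variables disjoint from $A$.

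The multiplicativity of $\dim(\partial_{<\infty}(\cdot))$ on disjoint variable sets together with the classical identity $\dim(\partial_{<\infty}(\det_r)) = \binom{2r}{r}$ (the $\binom{2r}{r}$ minors of an $r \times r$ matrix form a basis of $\partial_{<\infty}(\det_r)$) yields $\dim(\partial_{<\infty}(\det_r(A) \cdot G)) = \binom{2r}{r} \cdot \dim(\partial_{<\infty}(G)) \geq \binom{2r}{r}$. Applying upper semi-continuity once more through the second degeneration completes the chain $\dim(\partial_{<\infty}(f)) \geq \dim(\partial_{<\infty}(h)) \geq \dim(\partial_{<\infty}(\det_r(A) \cdot G)) \geq \binom{2r}{r}$.

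The main obstacle is the case where the inner index sets cannot all be packed into a single $r$-element subblock of $[R_1|C_1]$, for example when some $|R_i| > r$ or when the row sets of the inner factors span more than $r$ rows of $R_1$. There the block scaling will also degenerate the inner factors (each splitting via its own Schur complement into a product of two smaller minors), and the limit becomes a product of minors distributed over two disjoint submatrices rather than the clean $\det_r(A) \cdot G$ form. I expect the argument still carries through either by iterating the block scaling or by choosing the partition to minimize the ``spread'' of the inner index sets across the two halves; alternatively, a direct computation of the catalecticant of a standard bideterminant via the representation theory of $\mathrm{GL}_n \times \mathrm{GL}_m$ (exploiting that bideterminants are highest-weight vectors for the irrep labeled by the bideterminant's shape) should give the $\binom{2r}{r}$ lower bound directly via a Weyl-dimension calculation on the sub-representation generated by derivatives of $h$.
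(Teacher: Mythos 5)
Your first step matches the paper: both you and the paper apply the degeneration from \autoref{prop:reduction to single bideterminant} plus semicontinuity to reduce to showing that every standard bideterminant of width $\ge r$ satisfies the lower bound. The divergence — and the gap — is in how you handle a single bideterminant.

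Your ``favorable case'' contains an internal contradiction. You place all inner index sets $R_i, C_i$ for $i\ge 2$ inside the top-left $r\times r$ block $A$ (indeed for the $(K_\sigma\mid K_\sigma)$ bideterminants coming out of \autoref{prop:reduction to single bideterminant} the index sets are $[\sigma_1]\supseteq[\sigma_2]\supseteq\cdots$, so they necessarily sit inside the top-left block). But then $G = \det(D)\cdot\prod_{i\ge 2}[R_i\mid C_i]$ is manifestly \emph{not} in variables disjoint from $A$ — the factors $\prod_{i\ge 2}[R_i\mid C_i]$ live entirely inside $A$. So the multiplicativity of $\dim\partial_{<\infty}(\cdot)$ over disjoint variable sets does not apply, and the reduction to the identity $\dim\partial_{<\infty}(\det_r) = \binom{2r}{r}$ does not go through. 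What you would actually need is a fact along the lines of ``multiplying $\det_r(A)$ by further minors of $A$ cannot decrease $\dim\partial_{<\infty}$,'' which is roughly the original problem again, just with a square outer factor. The ``unfavorable case'' is explicitly left open, and the fallback to a Weyl-dimension calculation is not worked out.

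The paper avoids the Schur-complement degeneration entirely. Instead of trying to split the bideterminant into a disjoint-variable product, \autoref{prop:bideterminant partial dimension} constructs $\binom{2r}{r}$ explicit Hasse derivatives of $(S\mid T)$ indexed by pairs of subsets $R\subseteq S(1,\bullet)$, $C\subseteq T(1,\bullet)$ with $|R|=|C|$. Lemmas \autoref{lem:bideterminant single partial} and \autoref{lem:bideterminant nonzero partials} pick, for each chosen variable $x_{r_i,c_i}$, the \emph{top}-order Hasse derivative (order equal to the individual degree), which collapses the Leibniz sum to a single nonzero term and hence is nonzero on a product of minors. Linear independence is then immediate: the resulting derivatives are multihomogeneous of pairwise distinct multidegrees. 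This handles all widths and all shapes uniformly, with no disjointness requirement and no case split, and it works over any characteristic. If you want to salvage your approach, I'd suggest abandoning the variable-disjointness route and instead trying to show linear independence of a family of partials by multidegree separation, as the paper does — that is the essential device the Schur-complement framing is missing.
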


Using tools developed in the proof of \autoref{thm:informal main}, we can easily reduce the task of proving \autoref{thm:informal pd} to the case where $f(X)$ is a product of minors of $X$.
As $f$ is in the ideal $\detideal{n}{m}{r}$, at least one factor of $f$ must be an $s \times s$ minor of $X$ for some $s \ge r$.
We can then directly bound $\dim(\partial_{< \infty}(f))$ from below by a slight generalization of the argument used to bound $\dim(\partial_{< \infty}(\det_s))$.

We note that one can easily prove a lower bound of $\dim(\partial_{< \infty}(f)) \ge 2^r$ using observations due to \textcite{FSTW16} (see \autoref{sec:partials} for details).
Our result improves on this, obtaining an optimal bound of $\binom{2r}{r} = \Theta(4^r / \sqrt{r})$.

\subsubsection{Polynomial Identity Testing for Low-Depth Circuits and Formulas}

Next, we use \autoref{thm:informal main} to derandomize special cases of polynomial identity testing.
It is a straightforward consequence of \autoref{thm:informal main} that for circuit classes like low-depth circuits and formulas, computing any nonzero element of $\detideal{n}{m}{r}$ is effectively as hard as computing the $\Theta(r^{1/3}) \times \Theta(r^{1/3})$ determinant.
Over an algebraically closed field, the ideal $\detideal{n}{m}{r}$ can be equivalently described as the ideal of polynomials that vanish on matrices of rank less than $r$.
Using this alternate description, we construct hitting set generators that unconditionally hit the closure of small low-depth circuits and conditionally hit the closure of small formulas.

\begin{theorem}[Informal version of \autoref{thm:small depth hsg} and \autoref{thm:formula hardness-randomness}] \label{thm:informal hsg}
	Let $\F$ be a field of characteristic zero.
	For every $k \in \naturals$, there is a hitting set generator $\mathcal{G}_k$ with seed length $n^{1/2^k + o(1)}$ and degree $2^k$ that hits the closure of polynomial-size low-depth algebraic circuits.
	The generator $\mathcal{G}_k$ can be computed by either (1) a circuit of product-depth $k$ and size $n^{1 + o(1)}$, (2) a formula of size $n^{1 + o(1)}$, or (3) a circuit of size $n \log^{O(1)} n$.
	Assuming the border formula complexity of the determinant is super-polynomial, the generator $\mathcal{G}_k$ is also a hitting set generator for the closure of polynomial-size algebraic formulas.
\end{theorem}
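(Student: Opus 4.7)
The plan is to construct $\mathcal{G}_k$ as a $k$-fold iterated composition of the basic low-rank matrix parametrization, and to lift the hitting property from level one to all higher levels by induction. Over an algebraically closed field of characteristic zero, $\detideal{\sqrt{n}}{\sqrt{n}}{r}$ coincides with the vanishing ideal of the variety of $\sqrt{n} \times \sqrt{n}$ matrices of rank less than $r$. Thus the map
\[
\mathcal{H}(A, B) = A B, \quad A \in \F^{\sqrt{n} \times r}, \ B \in \F^{r \times \sqrt{n}},
\]
viewed as $\F^{2r \sqrt{n}} \to \F^n$, has image equal to that rank variety, and any nonzero polynomial $f$ vanishing on its image lies in $\detideal{\sqrt{n}}{\sqrt{n}}{r}$. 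By \autoref{thm:informal main} such an $f$ yields a polynomial-size depth-three $f$-oracle circuit approximating the $\Theta(r^{1/3}) \times \Theta(r^{1/3})$ determinant. Substituting any purported small low-depth (resp.\ formula) border circuit for $f$ produces a small low-depth (resp.\ formula) border circuit for the determinant, contradicting \textcite{LST21a} (resp.\ the assumed super-polynomial border formula complexity of the determinant) as long as $r^{1/3}$ grows faster than any polylogarithm. This establishes the level-one hitting property.

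To shrink the seed length, I compose: set $\mathcal{G}_k = \mathcal{H}^{(1)} \circ \mathcal{H}^{(2)} \circ \cdots \circ \mathcal{H}^{(k)}$, where $\mathcal{H}^{(i)}$ is a copy of $\mathcal{H}$ at scale $\ell_{i-1}$ taking $O(r \sqrt{\ell_{i-1}})$ input variables to $\ell_{i-1}$ output variables, with $\ell_0 = n$. Unrolling the recursion $\ell_i = O(r \sqrt{\ell_{i-1}})$ gives $\ell_k = O(r^{2 - 2^{1-k}} n^{1/2^k}) = n^{1/2^k + o(1)}$ whenever $r = n^{o(1)}$, and the degree doubles at each composition to give $2^k$. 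The hitting property lifts by induction on $k$: if $f \neq 0$ is a polynomial-size low-depth circuit on $n$ variables, then $f \circ \mathcal{H}^{(1)}$ is a polynomial-size low-depth circuit on $\ell_1$ variables (with product-depth and size each inflated by only an additive $O(1)$ and $n^{1 + o(1)}$ respectively), which is hit by the $(k-1)$-fold composition $\mathcal{H}^{(2)} \circ \cdots \circ \mathcal{H}^{(k)}$ by the inductive hypothesis. The same bootstrapping works for border formulas.

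For the three circuit realizations of $\mathcal{G}_k$, each $\mathcal{H}^{(i)}$ is a matrix product computable in product-depth one with $O(\ell_{i-1} r)$ arithmetic operations; stacking gives a product-depth-$k$ circuit whose total size is $\sum_{i=1}^{k} O(\ell_{i-1} r) = O(n r) = n^{1 + o(1)}$. Unfolding each bilinear form into a sum over paths and summing over outputs yields a formula of size $n \cdot r^k = n^{1 + o(1)}$, since each output coordinate expands into at most $r^k$ monomials of degree $2^k$. Using fast rectangular matrix multiplication at each level brings the general circuit size down to $n \log^{O(1)} n$, because multiplying an $\ell_{i-1} \times r$ matrix by an $r \times \sqrt{\ell_{i-1}}$ matrix takes near-linear time when $r$ is tiny.

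The main obstacle is threading the rank parameter $r$ between three competing demands: small seed length ($r = n^{o(1)}$), super-polynomial hardness from \textcite{LST21a} (so that $r^{1/3}$ is super-polylogarithmic), and the claimed generator-complexity bounds. A choice such as $r = \exp(\sqrt{\log n})$ satisfies all three simultaneously. Once $r$ is fixed, the remaining analysis amounts to routine recursion bookkeeping and to verifying that the inductive hitting argument respects the closure properties of the circuit class under substitution into a constant-depth $f$-oracle circuit, which is standard for the low-depth circuits and formulas of interest here.
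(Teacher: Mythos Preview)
Your overall architecture matches the paper's: use the low-rank matrix product as the base generator, iterate $k$ times, and establish hitting by showing that vanishing on the image forces membership in $\detideal{\sqrt n}{\sqrt n}{r}$, whence \autoref{thm:informal main} plus a lower bound for the determinant yields a contradiction. The recursion, the degree count, and the circuit-size analyses are all in the right spirit.

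The real gap is in the choice of $r$. The \textcite{LST21a} lower bound for $\det_m$ at product-depth $\Delta$ is only of order $\exp\bigl((\log m)^{1+\exp(-O(\Delta))}\bigr)$, which is \emph{barely} super-polynomial in $m$. To beat circuits of size $s = n^{O(1)}$ you need $(\log m)^{1+c} \gtrsim \log n$ with $c = \exp(-O(\Delta)) \in (0,1)$, i.e.\ $\log m \gtrsim (\log n)^{1/(1+c)}$. With your choice $r = \exp(\sqrt{\log n})$ and $m = \Theta(r^{1/3})$ one has $\log m \asymp (\log n)^{1/2}$, which fails this inequality for every $\Delta \ge 1$; your criterion ``$r^{1/3}$ super-polylogarithmic'' is far too weak. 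The paper instead takes $r = 2^{(\log s)^{1-\delta}}$ for a carefully chosen $\delta = \delta(\Delta) > 0$ and then verifies the arithmetic inequality $(1-\delta)(1+c) > 1$ (this is the content of \autoref{lem:small depth matrix generator}). The same caveat applies in the formula case: a bare assumption $t(m) = m^{\omega(1)}$ does not guarantee $t(r^{1/3}) \ge n^{O(1)}$ when $r = \exp(\sqrt{\log n})$, so $r$ must be set to $t^{-1}(\poly(n,s))$ rather than fixed in advance.

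A smaller slip: expanding each output of $\mathcal{G}_k$ into monomials gives $r^{2^k-1}$ terms per output, not $r^k$, since each of the $2^{k-1}$ degree-$2$ factors at level $k-1$ becomes a sum of $r$ products. The conclusion $n^{1+o(1)}$ is unaffected because $k$ is fixed, but the paper's cleaner argument simply composes formulas inductively (replace each leaf of the size-$n s^{o(1)}$ formula for $\mathcal{G}_{k-1}$ by a size-$2r$ formula) and avoids this count altogether.
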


Our hitting set generators are very simple to describe.
For $k = 1$, our generator takes as input two matrices of variables $Y$ and $Z$, where $Y$ is a $\sqrt{n} \times n^{o(1)}$ matrix and $Z$ is an $n^{o(1)} \times \sqrt{n}$ matrix, and outputs the product $YZ$.
For $k \ge 2$, we construct the generator $\mathcal{G}_k$ by arranging the input variables of $\mathcal{G}_{k-1}$ into a square matrix and replacing them with the product of an $n^{1/2^k + o(1)} \times n^{o(1)}$ matrix and an $n^{o(1)} \times n^{1/2^k + o(1)}$ matrix.

To prove that our generators correctly hit polynomial-size low-depth circuits, we must show that every small low-depth circuit does not vanish on the output of our generator.
Using the description of $\detideal{n}{m}{r}$ as the ideal of polynomials vanishing on matrices of rank at most $r$, establishing the correctness of our generators equates to proving that no small low-depth circuit can compute a polynomial in the ideal $\detideal{\sqrt{n}}{\sqrt{n}}{n^{o(1)}}$.
Such a lower bound follows in a straightforward manner by combining our \autoref{thm:informal main} with the recent breakthrough lower bounds of \textcite{LST21a}.

In the regime of $n^{\Theta(1)}$ seed length, our generators attain a near-optimal tradeoff between seed length and degree.
It is not hard to show that a generator of seed length $n^{1/2^k + o(1)}$ must be of degree at least $2^k$, and conversely that any generator of degree $2^k$ must have seed length at least $\Omega(n^{1/2^k})$ (see \autoref{lem:seed length lb}).
We also note that the circuit complexity of our generators is near-optimal, as any function with $n$ outputs necessarily requires size $\Omega(n)$ to compute.

Prior to this, the best-known hitting set generator for low-depth circuits was given by \textcite{LST21a}, using the hardness-randomness results of \textcite{CKS19a}.
They obtained, for all fixed $\eps > 0$, a generator with seed length $O(n^\eps)$ and degree $O(\log n / \log \log n)$.
Our construction attains the same seed length, but improves on the degree (as remarked above) and the circuit complexity of the generator.
When instantiated to hit circuits of size $s$, the generator of \textcite{LST21a} necessarily has circuit complexity $\Omega(s)$.
In contrast, our generator can be computed by a constant-depth circuit or formula of size $n^{1 + o(1)}$ or a circuit of size $n \log^{O(1)} n$, even when hitting low-depth circuits of size $O(n^{10^{100}})$.

For formulas, the best-known (conditional) constructions of hitting set generators prior to our work are due to \textcite{DSY09,CKS19a}.
Both works yield generators with parameters similar to the low-depth generator of \textcite{LST21a} mentioned above (although the generator of \cite{DSY09} can only hit formulas of small individual degree).
While our construction has better parameters, we use a stronger hardness assumption than what is needed by prior work.
The constructions of \textcite{DSY09,CKS19a} can be instantiated with any explicit family of polynomials that requires formulas of super-polynomial size.
In contrast, our construction depends crucially on super-polynomial lower bounds on the border formula complexity of the determinant.
This is a stronger assumption, as the determinant is computable by polynomial-size branching programs and circuits, a fact which likely does not hold for all explicit families of polynomials.

\subsubsection{Lower Bounds for the Ideal Proof System}

Finally, we use \autoref{thm:informal main} to prove lower bounds for the Ideal Proof System.
Let $X$ and $Y$ be $n \times n$ matrices of variables and let $I_n$ be the $n \times n$ identity matrix.
Consider the system of polynomial equations given by $\set{\det_n(X) = 0, XY - I_n = 0}$.
This system is unsatisfiable, as $\det_n(X) = 0$ implies that $X$ is non-invertible, while $XY - I_n = 0$ implies that $X$ is invertible with inverse $Y$.

More generally, one can replace the equation $\det_n(X) = 0$ with an encoding of the statement ``$\rank(X) < r$'' for some $r \le n$.
One such encoding is given by requiring that the $r \times r$ minors of $X$ vanish.
While this encoding is natural from a mathematical perspective, it consists of $\binom{n}{r}^2$ equations.
Our lower bounds will be of order $n^{(\log n)^{\Omega(1)}}$, so the number of equations in this natural encoding quickly eclipses the lower bound.
Instead, we use the rank condensers of \textcite{FS12} to encode ``$\rank(X) < r$'' using only $O(nr)$ equations.
In what follows, we abbreviate this succinct encoding as $[\rank(X) < r]$.

We show that the constant-depth version of the Ideal Proof System cannot efficiently refute the system $\set{[\rank(X) < r], XY - I_n = 0}$ when $r \ge n^{\Omega(1)}$.
Assuming lower bounds on the border formula complexity of the determinant, we also show that formula-IPS cannot efficiently refute this system.
We remark that our lower bounds also hold when the boolean axioms $x_{i,j}^2 - x_{i,j} = 0$ are included in the system of equations, but we suppress these here for brevity.

\begin{theorem}[Informal version of \autoref{cor:ips vac0 lb} and \autoref{thm:ips vf lb}] \label{thm:informal ips}
	Let $\F$ be a field of characteristic zero and let $r \ge n^{\Omega(1)}$.
	Let $X$ and $Y$ be $n \times n$ matrices of variables and let $I_n$ be the $n \times n$ identity matrix.
	Then any IPS refutation of the system $\set{[\rank(X) < r], XY - I_n = 0}$ cannot be approximately computed by a constant-depth circuit of polynomial size.
	Assuming the border formula complexity of the determinant is super-polynomial, then any IPS refutation of this system cannot be approximately computed by a formula of polynomial size.
\end{theorem}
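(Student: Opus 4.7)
The plan is to deduce the IPS lower bound from \autoref{thm:informal main} by extracting, from any purportedly low-complexity refutation, a nonzero polynomial in $\detideal{n}{n}{r}$ of comparable complexity, and then invoking the constant-depth determinant lower bounds of \textcite{LST21a} (in the unconditional case) or the hypothesized formula lower bound for the determinant (in the conditional case).

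Let $\phi(X)$ denote the rank-condenser encoding of $[\rank(X) < r]$, so that any IPS refutation $C(X, Y, \vec{u}, \vec{v})$ of the given system satisfies both $C(X, Y, \phi(X), XY - I_n) = 1$ identically and $C(X, Y, \vec{0}, \vec{0}) = 0$. The key object is the polynomial
\[
	F(X, Y) \coloneqq 1 - C(X, Y, \vec{0}, XY - I_n).
\]
For any $X_0$ with $\rank(X_0) < r$ we have $\phi(X_0) = \vec{0}$, so the refutation property yields $C(X_0, Y, \vec{0}, X_0 Y - I_n) = 1$ and hence $F(X_0, Y) = 0$. Because $\detideal{n}{n}{r}$ is radical, the Nullstellensatz places $F$ in the extension $\detideal{n}{n}{r} \cdot \F[X, Y]$, so for every specialization $Y = Y_0 \in \F^{n \times n}$ the polynomial $F(X, Y_0) \in \F[X]$ lies in $\detideal{n}{n}{r}$. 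Moreover $F \not\equiv 0$: otherwise $C(X, Y, \vec{0}, XY - I_n) = 1$, and using $C(X, Y, \vec{0}, \vec{0}) = 0$ to factor out $\vec{v}$ would force $1 \in \langle (XY - I_n)_{ij} \rangle \subseteq \F[X, Y]$, contradicting that $\{XY = I_n\}$ is a nonempty variety. Hence there exists a (generic) $Y_0$ for which $F(X, Y_0)$ is a nonzero element of $\detideal{n}{n}{r}$.

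Applying \autoref{thm:informal main} to $F(X, Y_0)$ yields a depth-three $F(X, Y_0)$-oracle circuit of size $O(n^4)$ approximately computing $\det_s$ for $s = \Theta(r^{1/3})$. Substituting any border circuit for $C$ into the oracle gates produces a border circuit approximating $\det_s$ whose size and depth are governed by those of $C$: forming $F(X, Y_0)$ from $C$ requires only plugging in the constants $Y = Y_0$ and $\vec{u} = \vec{0}$ and composing with the $n^2$ degree-one polynomials $(XY_0 - I_n)_{ij}$, an operation that preserves constant depth (resp.\ polynomial formula size) with only additive $O(n^3)$ overhead. Thus a polynomial-size constant-depth (resp.\ formula) border circuit for $C$ would yield a polynomial-size constant-depth (resp.\ formula) border approximation of $\det_s$; since $r \ge n^{\Omega(1)}$ forces $s = n^{\Omega(1)}$, the super-polynomial lower bounds of \textcite{LST21a} (resp.\ the hypothesized border formula lower bound) deliver the contradiction. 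The main technical obstacle is verifying that the extraction $C \mapsto F(X, Y_0)$ interacts correctly with border complexity and that a good $Y_0$ exists over the field of definition; a crucial advantage over the naive substitution $Y = \mathrm{adj}(X)/\det(X)$ is that we avoid computing $\mathrm{adj}$ or $\det^{-1}$, whose constant-depth complexity would be prohibitive.
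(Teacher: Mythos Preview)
Your proposal is correct and follows the same overall strategy as the paper: from any refutation $C$, extract the polynomial $F = 1 - C(X,Y,\vec{0},XY-I_n)$, show it is a nonzero element of the determinantal ideal, feed it into \autoref{thm:informal main}, and invoke the \textcite{LST21a} (resp.\ hypothesized formula) lower bound for the determinant. The paper differs in two technical details. First, instead of arguing ideal membership via vanishing on low-rank matrices plus the Nullstellensatz and radicality of $\detideal{n}{n}{r}$, the paper proves it by a direct syntactic expansion (\autoref{lem:general fstw}, a mild generalization of \cite[Lemma~7.1]{FSTW16}): writing $C(\vec{x},\vec{y},\vec{z}) = \sum_{\vec{e}} C_{\vec{e}}(\vec{x},\vec{z})\,\vec{y}^{\vec{e}}$ and using $C(\vec{x},\vec{0},\vec{0})=0$ shows immediately that $1-C(\vec{x},\vec{0},g_1,\ldots,g_k) \in \langle f_1,\ldots,f_m\rangle$, with no appeal to algebraic geometry and over any field. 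Second, the paper never specializes $Y$: it views $(X\mid Y)$ as an $n\times 2n$ matrix, observes that $F(X,Y)\in\detideal{n}{2n}{r}$, and applies \autoref{cor:proj to det} directly there, sidestepping the ``existence of a good $Y_0$ over $\F$'' issue you flag at the end. Both choices make the paper's route slightly more elementary, but yours is sound.
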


The proof of \autoref{thm:informal ips} follows the approach of \textcite{FSTW16}, who showed that lower bounds for the IPS can be derived from circuit lower bounds for multiples of a polynomial.
Our choice of the system $\set{\det_n(X) = 0, XY - I_n = 0}$ is motivated by the fact that, using the techniques of \cite{FSTW16}, the desired IPS lower bounds follow from circuit lower bounds for multiples of the determinant.
By a suitable generalization of this technique, we show that lower bounds on the size of IPS refutations of $\set{[\rank(X) < r], XY - I_n = 0}$ follow from lower bounds on the complexity of nonzero polynomials in the ideal $\detideal{n}{n}{r}$.
We can obtain the necessary lower bounds by combining our \autoref{thm:informal main} with lower bounds against the determinant.
In the case of low-depth circuits, our IPS lower bounds are unconditional thanks to the recent breakthrough circuit lower bounds of \textcite{LST21a}.
For formula-IPS, our lower bounds remain conditional.

We also show that in the case of $r = n$, computing an IPS refutation of the hard instance $\set{\det_n(X) = 0, XY - I_n = 0}$ reduces to computing the determinant.
Namely, we give a small depth-three circuit with $\det_n$-oracle gates that computes an IPS refutation of our hard instance.
Passing to border complexity (using \autoref{lem:exact to approx oracle}), this shows that the approximative complexity of the smallest IPS refutation of $\set{\det_n(X) = 0, XY - I_n = 0}$ is sandwiched between the approximative complexity of the $\Theta(n^{1/3}) \times \Theta(n^{1/3})$ and $n \times n$ determinants.

The strongest unconditional lower bounds for the IPS prior to our work are due to \textcite{FSTW16}, who proved lower bounds for subsystems of the IPS computed by restricted classes of circuits, including read-once oblivious algebraic branching programs and multilinear formulas.
\textcite{IMP20} showed that the constant-depth version of Polynomial Calculus (PC) over finite fields is surprisingly strong.
The size of a constant-depth IPS refutation is essentially the number of lines in a constant-depth PC refutation, so lower bounds for constant-depth IPS over finite fields imply comparable lower bounds for constant-depth PC.
However, our lower bounds do not extend to finite fields, nor do our lower bounds hold for refutations of an unsatisfiable CNF, so we are unable to conclude lower bounds for constant-depth PC and related proof systems.

We also mention a recent work of \textcite{Alekseev21}, who proved lower bounds on the bit-size of refutations in a version of PC augmented with an extension rule.
This is somewhat incomparable to our result: Alekseev's proof system allows for proofs of arbitrary depth, but must pay to use constants of large bit complexity; on the other hand, we work with a low-depth proof system that can use arbitrary rational numbers (or even arbitrary complex numbers) for free.
Our lower bound is on circuit size, which is analogous to the number of lines in PC, whereas Alekseev's lower bound is on the number of bits needed to write down a refutation, which does not necessarily imply a lower bound on the number of proof lines.

\section{Preliminaries}

For a natural number $n \in \naturals$, we write $[n] \coloneqq \set{1, 2, \ldots, n}$.
We use $\vec{x} = (x_1,\ldots,x_n)$ to denote a vector of variables and $X = (x_{i,j})_{i \in [n], j \in [m]}$ to denote a matrix of variables.
For a matrix $A \in \F^{n \times m}$ and sets $R \subseteq [n]$, $C \subseteq [m]$, we denote by $A_{R,C}$ the submatrix of $A$ whose rows and columns are taken from the sets $R$ and $C$, respectively.
A submatrix $A_{R,C}$ is \emph{principal} if $R=C$.
Given a polynomial $f(\vec{x}) \in \F[\vec{x}]$, it will often be useful to view the variables $\vec{x}$ as the entries of a matrix, typically of size $\ceil{\sqrt{n}} \times \ceil{\sqrt{n}}$.
The precise way in which the variables $\vec{x}$ are arranged into a matrix will not matter, so we will perform this rearrangement implicitly without specifying the details.
If $X$ is an $n \times m$ matrix of variables, then for $r \le \min(n,m)$ we denote by $\detideal{n}{m}{r} \subseteq \F[X]$ the ideal of $\F[X]$ generated by the $r \times r$ minors of $X$.

We endow $\F[X]$ with a $(\naturals^n \oplus \naturals^m)$-grading in the following way.
Let $\vec{e}_i \in \naturals^n$ denote the element of $\naturals^n$ with 1 in the $i$\ts{th} position and zeroes elsewhere.
By abuse of notation, we also use $\vec{e}_i$ to denote the corresponding element of $\naturals^m$.
We assign degree $\vec{e}_i \oplus \vec{e}_j$ to the variable $x_{i,j}$ and extend this to $\F[X]$ in the natural way.
The degree of an element $f \in \F[X]$ with respect to this grading is called the \emph{multidegree} of $f$, written $\multideg(f)$.
We say an element of $\F[X]$ is \emph{multihomogeneous} if it is homogeneous with respect to this grading.

Recall that given a field $\F$ and an indeterminate $x$, we write
\begin{itemize}
	\item
		$\F[x]$ for the ring of polynomials in $x$ with coefficients from $\F$,
	\item
		$\F(x)$ for the field of rational functions in $x$ with $\F$-coefficients,
	\item
		$\F\llb x \rrb$ for the ring of formal power series in $x$ over $\F$, and
	\item
		$\F((x))$ for the field of formal Laurent series in $x$ over $\F$ (equivalently, the field of fractions of $\F \llb x \rrb$).
\end{itemize}

We assume familiarity with the basic notion of an algebraic circuit and restricted classes thereof, including formulas, branching programs, and bounded-depth circuits.
The interested reader may consult the surveys of \textcite{SY10,Saptharishi19} or the text of \textcite{BCS97} for more on algebraic circuits.

\subsection{Border Complexity} \label{subsec:border}

We now define border complexity, a modification of the standard notion of algebraic complexity.

\begin{definition}
	Let $\F$ be any field and let $\eps$ be an indeterminate.
	Let $f(\vec{x}) \in \F[\vec{x}]$.
	We say that an algebraic circuit $C$ \emph{border computes} $f$ if $C$ is defined over $\F((\eps))$ and computes a polynomial in $\F\llb \eps \rrb [\vec{x}]$ such that
	\[
		C(\vec{x}) = f(\vec{x}) + \eps g(\vec{x})
	\]
	for some $g(\vec{x}) \in \F\llb\eps\rrb [\vec{x}]$.
	We abbreviate this as $C(\vec{x}) = f(\vec{x}) + O(\eps)$.
	The \emph{border complexity of $f$} is the size of the smallest circuit $C$ that border computes $f$.
\end{definition}

If $\mathcal{C} \subseteq \F[\vec{x}]$ is a set of polynomials computed by some class of circuits, we denote by $\overline{\mathcal{C}} \subseteq \F[\vec{x}]$ the set of polynomials computed by the border of this same set of circuits.
For example, $\VP$ denotes the class of $n$-variate polynomials that have $n^{O(1)}$ degree and can be computed by circuits of $n^{O(1)}$ size, while $\overline{\VP}$ denotes $n$-variate polynomials of degree $n^{O(1)}$ that can be border computed by circuits of $n^{O(1)}$ size.

Over fields of characteristic zero, one can interpret border complexity as a notion of approximate computation.
In this case, if $C(\vec{x}) = f(\vec{x}) + O(\eps)$, then $\lim_{\eps \to 0} C(\vec{x}) = f(\vec{x})$, so $C$ computes a polynomial that coefficient-wise approximates $f$ arbitrarily well as $\eps$ goes to zero.
Since the circuit $C$ is defined over $\F((\eps))$, it may be the case that $C$ is not well-defined when $\eps = 0$, as intermediate computations may involve division by $\eps$.
This prohibits setting $\eps = 0$ in order to obtain a circuit that computes $f$ exactly.

When the underlying field $\F$ has positive characteristic (for example, when $\F$ is finite), this notion of approximation breaks down.
However, we can consider ``approximate'' computation in the symbolic sense defined above, which is still meaningful.

Alternatively, one can define border complexity using only the polynomial ring $\F[\eps]$, avoiding the use of $\F\llb \eps \rrb$ and $\F((\eps))$.
In this modified definition, we say that a circuit $C$ border computes $f(\vec{x})$ if $C$ is defined over $\F[\eps]$ and there is a polynomial $g(\vec{x}) \in \F[\eps][\vec{x}]$ and a natural number $q \in \naturals$ such that
\[
	C(\vec{x}) = \eps^q f(\vec{x}) + \eps^{q+1} g(\vec{x}).
\]
We abbreviate this as $C(\vec{x}) = \eps^q f(\vec{x}) + O(\eps^{q+1})$.
It turns out that these notions are equivalent, as one can translate between them by appropriately modifying the constants appearing in the circuit; see \textcite[Lemma 5.6(1)]{Burgisser04} for a proof.
(Note that the statement of \cite[Lemma 5.6(1)]{Burgisser04} only claims equivalence up to a factor of 2 in complexity.
This arises due to the fact that the model of straight-line programs used in \cite{Burgisser04} charges for scalar multiplications, whereas we allow multiplication by scalars for free.)

Given a set of polynomials $F \coloneqq \set{f_1,\ldots,f_k} \subseteq \F[\vec{x}]$, one can also define the border complexity of $F$ to be the size of the smallest multi-output circuit $C(\vec{x})$ over $\F((\eps))$ such that $C$ outputs $\set{f_1 + O(\eps),\ldots,f_k + O(\eps)}$.
Naturally, one can also consider (single- or multi-output) border complexity with respect to subclasses of algebraic circuits, such as formulas, branching programs, or constant-depth circuits.

It will be useful to make the dependence of a polynomial on the approximation parameter $\eps$ explicit.
In this case, we may write $f(\vec{x},\eps)$ for a polynomial in $\F\llb\eps\rrb [\vec{x}]$ or $\F[\eps][\vec{x}]$, even though $\eps$ is regarded as an element of the underlying ring and is not a variable.
This affords convenient notation for applying the map $\eps \mapsto \eps^N$ for some $N \in \naturals$ or the map $\delta \mapsto \eps^N$ for a second indeterminate $\delta$.
We can use this to compose approximations as in the lemma below.

\begin{lemma}[{\cite[Lemma 2.3(1)]{Burgisser04}}] \label{lem:semicontinuity}
	Let $f(\vec{x}) \in \F[\vec{x}]$.
	Suppose 
	\begin{enumerate}
		\item
			$\Phi$ is a circuit over $\F((\eps))[\vec{x}]$ such that $\Phi(\vec{x},\eps) = f(\vec{x}) + O(\eps) \in \F\llb \eps \rrb [\vec{x}]$, and
		\item
			$\Psi$ is a circuit over $\F((\delta))((\eps))[\vec{x}]$ such that $\Psi(\vec{x},\eps,\delta) = \Phi(\vec{x},\eps) + O(\delta) \in \F\llb \delta \rrb ((\eps))[\vec{x}]$.
	\end{enumerate}
	Then there is some sufficiently large $N \in \naturals$ such that $\Psi(\vec{x},\eps,\eps^N) = f(\vec{x}) + O(\eps) \in \F\llb\eps\rrb [\vec{x}]$.
\end{lemma}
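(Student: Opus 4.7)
The plan is to make the substitution $\delta \mapsto \eps^N$ explicit in both the circuit and in the polynomial it computes, then verify that the resulting element of $\F((\eps))[\vec{x}]$ is in $\F\llb\eps\rrb[\vec{x}]$ and differs from $f$ by something in $\eps \cdot \F\llb\eps\rrb[\vec{x}]$. First I would set up notation: write $\Phi(\vec{x},\eps) = f(\vec{x}) + \eps g(\vec{x},\eps)$ with $g \in \F\llb\eps\rrb[\vec{x}]$, and $\Psi(\vec{x},\eps,\delta) = \Phi(\vec{x},\eps) + \delta h(\vec{x},\eps,\delta)$ with $h \in \F\llb\delta\rrb((\eps))[\vec{x}]$, so that after substitution
\[
	\Psi(\vec{x},\eps,\eps^N) = f(\vec{x}) + \eps g(\vec{x},\eps) + \eps^N h(\vec{x},\eps,\eps^N).
\]
The first two terms are already of the desired form, so the task reduces to showing that, for $N$ large enough, $\eps^N h(\vec{x},\eps,\eps^N)$ is a well-defined element of $\eps \cdot \F\llb\eps\rrb[\vec{x}]$.

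Next I would expand $h$ along powers of $\delta$, writing $h(\vec{x},\eps,\delta) = \sum_{j \geq 0} h_j(\vec{x},\eps)\, \delta^j$ with each $h_j \in \F((\eps))[\vec{x}]$. The crucial observation is that, because $h$ is a polynomial in $\vec{x}$ with only finitely many monomials and each of its coefficients lies in $\F\llb\delta\rrb((\eps))$ (i.e.\ a power series in $\delta$ whose coefficients are Laurent series in $\eps$ of some bounded-below order), there is a single integer $v \in \integers$ such that every $h_j$ has $\eps$-valuation at least $v$. Substituting $\delta = \eps^N$ and combining powers gives
\[
	\eps^N h(\vec{x},\eps,\eps^N) = \sum_{j \geq 0} \eps^{N(j+1)}\, h_j(\vec{x},\eps),
\]
and the $\eps$-valuation of the $j$-th summand is at least $N(j+1) + v$. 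Choosing any $N \geq \max(1, 1-v)$, every term lies in $\eps \cdot \F\llb\eps\rrb[\vec{x}]$, and for each fixed power of $\eps$ only finitely many summands contribute, so the sum converges in $\F\llb\eps\rrb[\vec{x}]$. Combining with the first two terms yields $\Psi(\vec{x},\eps,\eps^N) = f(\vec{x}) + O(\eps)$, as required.

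The main subtlety, which I would address carefully, is justifying that the circuit $\Psi$ itself survives the substitution $\delta \mapsto \eps^N$ as a circuit over $\F((\eps))$. Each constant of $\Psi$ is a single element of $\F((\delta))((\eps))$, so it has a specific and individually finite expansion. Applying $\delta \mapsto \eps^N$ constant-by-constant produces a well-defined element of $\F((\eps))$ (negative powers of $\delta$ just become large negative powers of $\eps$), so the resulting circuit $\Psi(\vec{x},\eps,\eps^N)$ is a legitimate circuit over $\F((\eps))$. The computation above shows that the polynomial it computes lies in $\F\llb\eps\rrb[\vec{x}]$ and border-computes $f$.

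I expect the only real obstacle to be identifying and bookkeeping the uniform lower bound $v$ on the $\eps$-valuations of the $h_j$. This is what forces the existence of a single $N$ that works simultaneously for all terms; without it, the substitution could produce arbitrarily negative powers of $\eps$. Everything else is routine manipulation of formal (Laurent) power series.
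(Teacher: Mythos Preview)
Your argument is correct and matches the paper's approach: the paper does not give a full proof (it cites B\"urgisser) but explains exactly the idea you carry out, namely that the $O(\delta)$ error may carry negative powers of $\eps$, and setting $\delta = \eps^N$ for large $N$ absorbs them. Your bookkeeping with the uniform $\eps$-valuation bound $v$ is the right way to make this precise.

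One caveat on your third paragraph: the claim that substituting $\delta \mapsto \eps^N$ into an arbitrary constant of $\F((\delta))((\eps))$ always yields a well-defined element of $\F((\eps))$ is not true in general (consider $\sum_{k\ge 0}\delta^{-k}\eps^k$). Fortunately this is not needed for the lemma as stated, which is about the polynomial $\Psi(\vec{x},\eps,\delta)\in\F\llb\delta\rrb((\eps))[\vec{x}]$; your first two paragraphs already establish that the substitution on the \emph{output polynomial} is well-defined and lands in $f(\vec{x})+O(\eps)$. In the paper's applications the circuit constants happen to lie in $\F((\eps))$ or $\F\llb\delta\rrb$, where the substitution on the circuit level is also unproblematic.
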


It is tempting to prove the preceding lemma by setting $\delta = \eps$ and concluding that $\Psi(\vec{x},\eps,\eps) = f(\vec{x}) + O(\eps)$.
This is incorrect, as the $O(\delta)$ error term in $\Psi(\vec{x},\eps,\delta)$ may involve division by $\eps$, so setting $\delta = \eps$ may introduce erroneous terms to the output of $\Psi(\vec{x},\eps,\delta)$.
By setting $\delta = \eps^N$ for sufficiently large $N \in \naturals$, this problem is avoided.

Let $f(\vec{x}), g(\vec{x}) \in \F[\vec{x}]$ be polynomials such that $f(\vec{x}) + O(\eps)$ can be computed by a circuit with $g$-oracle gates.
Suppose we want to replace the $g$-oracle gates with oracles that approximately compute $g(\vec{x})$, i.e., oracle gates that compute some $h(\vec{x},\delta) = g(\vec{x}) + O(\delta)$.
As a consequence of the preceding lemma, we can obtain a circuit that computes $f(\vec{x}) + O(\eps)$ by using $h(\vec{x},\eps^N)$-oracles for some sufficiently large $N$.

\begin{lemma} \label{lem:exact to approx oracle}
	Let $f(\vec{x}), g(\vec{x}) \in \F[\vec{x}]$ be polynomials.
	Suppose $f(\vec{x}) + O(\eps)$ can be computed by a circuit of size $s$ with $g$-oracle gates.
	Let $h(\vec{x},\delta) \in \F\llb \delta \rrb [\vec{x}]$ be a polynomial such that $h(\vec{x},\delta) = g(\vec{x}) + O(\delta)$.
	Then there is some $N \in \naturals$ such that $f(\vec{x}) + O(\eps)$ can be computed by a circuit of size $s$ with $h(\vec{x},\eps^N)$-oracle gates.
\end{lemma}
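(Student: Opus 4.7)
The plan is to apply \autoref{lem:semicontinuity} with $h$'s approximation parameter $\delta$ playing the role of the second indeterminate. Let $C$ be a size-$s$ circuit over $\F((\eps))$ with $g$-oracle gates satisfying $C(\vec{x}) = f(\vec{x}) + O(\eps)$, and set $\Phi(\vec{x}, \eps) \coloneqq C(\vec{x})$. I construct a new circuit $\Psi(\vec{x}, \eps, \delta)$ by replacing each $g$-oracle in $C$ with an oracle gate computing $h(\vec{x}, \delta)$. The size remains $s$, and the constants of $\Psi$ come from $\F((\eps))$ (inherited from $C$) together with the series coefficients of $h$, which live in $\F\llb \delta \rrb$.

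The first substantive step is to show that $\Psi(\vec{x}, \eps, \delta) = \Phi(\vec{x}, \eps) + O(\delta)$ in $\F\llb \delta \rrb((\eps))[\vec{x}]$. This follows because $\Psi$ differs from $\Phi$ only at the oracle gates, and at each such gate we have $h(\vec{x}, \delta) - g(\vec{x}) = O(\delta)$. An induction on the structure of the circuit then shows that each gate value of $\Psi$ equals the corresponding gate value of $\Phi$ plus an $O(\delta)$ perturbation, since the ring operations $+$, $-$, $\times$ preserve this property.

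With this established, \autoref{lem:semicontinuity} provides some $N \in \naturals$ such that the specialization $\Psi(\vec{x}, \eps, \eps^N) = f(\vec{x}) + O(\eps)$ holds in $\F\llb \eps \rrb[\vec{x}]$. The circuit $\Psi(\vec{x}, \eps, \eps^N)$ is exactly the circuit obtained from $C$ by replacing each $g$-oracle gate with an $h(\vec{x}, \eps^N)$-oracle gate, so it has size $s$, is defined over $\F((\eps))$, and border computes $f$. This is the conclusion of the lemma.

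The only point requiring care, and the reason we invoke \autoref{lem:semicontinuity} rather than simply substituting $\delta = \eps$, is that $\Psi$ lies in $\F\llb \delta \rrb((\eps))[\vec{x}]$, so its $O(\delta)$ error term may carry negative powers of $\eps$ coming from the constants of $C$. A naive substitution $\delta = \eps$ could then produce spurious low-order $\eps$-terms and destroy the approximation of $f$. Choosing $N$ larger than the order of the largest $\eps$-pole appearing in $\Psi$'s error term (which is bounded in terms of $s$ and the pole orders of the constants of $C$) absorbs all such contributions into the $O(\eps)$ slack. This bookkeeping is precisely what \autoref{lem:semicontinuity} abstracts, so once the $O(\delta)$-closeness of $\Psi$ and $\Phi$ is in hand, no further work is needed.
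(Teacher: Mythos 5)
Your proof is correct and follows exactly the same route as the paper: construct $\Psi$ by swapping each $g$-oracle for an $h(\vec{x},\delta)$-oracle, observe $\Psi = \Phi + O(\delta)$ in $\F\llb\delta\rrb((\eps))[\vec{x}]$, and invoke \autoref{lem:semicontinuity} to choose $N$. The extra detail you provide (the gate-by-gate induction for the $O(\delta)$ closeness, and the closing explanation of why naive $\delta=\eps$ fails) is exactly the intuition the paper records in the surrounding text, so there is nothing to add.
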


\begin{proof}
	Let $\Phi(\vec{x},\eps)$ be a $g$-oracle circuit that computes $f(\vec{x}) + O(\eps)$ over $\F((\eps))[\vec{x}]$.
	Let $\Psi(\vec{x},\eps,\delta)$ be the circuit over $\F((\delta))((\eps))[\vec{x}]$ obtained by replacing each $g$-oracle gate with an $h(\vec{x},\delta)$ oracle.
	Since $h(\vec{x},\delta) = g(\vec{x}) + O(\delta)$, we have
	\[
		\Psi(\vec{x},\eps,\delta) = \Phi(\vec{x},\eps) + O(\delta) \in \F\llb \delta \rrb ((\eps)) [\vec{x}].
	\]
	Applying \autoref{lem:semicontinuity} yields an $N \in \naturals$ such that $\Psi(\vec{x},\eps,\eps^N) = f(\vec{x}) + O(\eps)$ as desired.
\end{proof}

\subsection{Polynomial Identity Testing}

When designing deterministic algorithms for polynomial identity testing (PIT), our focus will be on the black-box regime, where we are given access to a circuit $\Phi$ through an evaluation oracle.
Derandomizing PIT in this setting is equivalent to giving an explicit construction of a hitting set, defined below, for the set of polynomials computed by small circuits.

\begin{definition}
	Let $\mathcal{C} \subseteq \F[\vec{x}]$ be a set of polynomials.
	A set $\mathcal{H} \subseteq \F^n$ is a \emph{hitting set for $\mathcal{C}$} if for every nonzero $f \in \mathcal{C}$, there is some $\vec{\alpha} \in \mathcal{H}$ such that $f(\vec{\alpha}) \neq 0$.
\end{definition}

Alternatively, one can try to find an explicit, low-degree map $\mathcal{G} : \F^\ell \to \F^n$ with $\ell \ll n$ such that $f(\mathcal{G}(\vec{y})) \neq 0$ if $f$ is a nonzero polynomial computed by a small circuit.

\begin{definition}
	Let $\mathcal{C} \subseteq \F[\vec{x}]$ be a set of polynomials.
	A polynomial map $\mathcal{G} : \F^\ell \to \F^n$ is a \emph{hitting set generator for $\mathcal{C}$} if for every nonzero $f \in \mathcal{C}$, we have $f(\mathcal{G}(\vec{y})) \neq 0$.
	We call $\ell$ the \emph{seed length} of the generator.
	The \emph{degree} of the generator, denoted by $\deg(\mathcal{G})$, is given by $\max_{i \in [n]} \deg(\mathcal{G}_i)$.
\end{definition}

Small hitting sets (and hitting set generators with small seed length and low degree) are known to exist non-constructively.
In derandomizing PIT, one seeks efficient uniform constructions of these objects.
One can show that the notions of hitting sets and generators are essentially equivalent using polynomial interpolation (see, e.g., \textcite[Section 4]{SV15}).
In this work, we will prefer the language of generators, as they are more amenable to composition than are hitting sets.

It is natural to extend the definition of a hitting set to the setting of border complexity.
Over fields of characteristic zero, \textcite{FS18} defined a notion of a \emph{robust hitting set} for a class $\mathcal{C}$.
Using continuity, one can easily show that if $\mathcal{H}$ is a robust hitting set for a class $\mathcal{C}$, then $\mathcal{H}$ is also a hitting set for the closure $\overline{\mathcal{C}}$.
In this work, we will be concerned with hitting sets for the closures of circuit classes, but we will not pay particular attention to the robustness parameter, as some of our constructions take place in characteristic $p > 0$.

We note that a generator cannot simultaneously have very small seed length and very low degree.
In particular, a generator of degree $\Theta(1)$ must have seed length $n^{\Theta(1)}$.

\begin{lemma} \label{lem:seed length lb}
	Let $\mathcal{C} \subseteq \F[\vec{x}]$ be a set of polynomials such that $\mathcal{C}$ contains all linear polynomials.
	Suppose $\mathcal{G} : \F^\ell \to \F^n$ is a hitting set generator for $\mathcal{C}$ of degree $d$.
	Then we must have $\binom{\ell + d}{d} \ge n$.
	In particular, if $d$ is a fixed constant independent of $n$, then $\ell \ge \Omega(n^{1/d})$.
\end{lemma}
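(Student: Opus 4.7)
The plan is to view the hypothesis of $\mathcal{G}$ being a hitting set generator as giving us that no nonzero linear form vanishes under composition with $\mathcal{G}$, and then argue by dimension counting.

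More concretely, I would consider the linear map $\Phi \colon \F^n \to \F[\vec{y}]_{\le d}$ defined by $\Phi(c_1,\ldots,c_n) \coloneqq \sum_{i=1}^n c_i \mathcal{G}_i(\vec{y})$, where $\F[\vec{y}]_{\le d}$ denotes the $\F$-vector space of polynomials in $\vec{y} = (y_1,\ldots,y_\ell)$ of total degree at most $d$. Since each coordinate $\mathcal{G}_i$ has degree at most $\deg(\mathcal{G}) = d$, the image of $\Phi$ lies in $\F[\vec{y}]_{\le d}$, whose dimension is the number of monomials in $\ell$ variables of degree at most $d$, namely $\binom{\ell+d}{d}$.

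Suppose toward contradiction that $\binom{\ell+d}{d} < n$. Then $\Phi$ has nontrivial kernel, so there exists a nonzero vector $(c_1,\ldots,c_n) \in \F^n$ with $\sum_{i=1}^n c_i \mathcal{G}_i(\vec{y}) = 0$. Consider the linear form $f(\vec{x}) \coloneqq \sum_{i=1}^n c_i x_i$, which is nonzero and lies in $\mathcal{C}$ by assumption. Then $f(\mathcal{G}(\vec{y})) = \sum_{i=1}^n c_i \mathcal{G}_i(\vec{y}) = 0$, contradicting the assumption that $\mathcal{G}$ is a hitting set generator for $\mathcal{C}$. Hence $\binom{\ell+d}{d} \ge n$.

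For the ``in particular'' part, when $d$ is a constant, we use the crude bound $\binom{\ell+d}{d} \le (\ell+d)^d / d! = O(\ell^d)$, which combined with $\binom{\ell+d}{d} \ge n$ yields $\ell \ge \Omega(n^{1/d})$. No step of this argument looks technically hard; the only subtlety is being careful that $\mathcal{C}$ containing all linear polynomials is enough to invoke the hitting-set-generator property on the specific $f$ produced by the dimension argument, which it is by hypothesis.
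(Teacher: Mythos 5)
Your proof is correct and is essentially the same argument as the paper's: both identify that the $n$ coordinates of $\mathcal{G}$ live in the $\binom{\ell+d}{d}$-dimensional space of $\ell$-variate polynomials of degree $\le d$, deduce a nontrivial linear dependence when $\binom{\ell+d}{d} < n$, and observe that the corresponding linear form lies in $\mathcal{C}$ yet vanishes under $\mathcal{G}$, a contradiction. Your presentation via the explicit linear map $\Phi$ and its kernel is just a more formal phrasing of the same dimension count, and your concluding $\binom{\ell+d}{d} \le (\ell+d)^d/d!$ bound correctly supplies the ``in particular'' step, which the paper states without proof.
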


\begin{proof}
	For $i \in [n]$, let $\mathcal{G}_i(\vec{y})$ be the $i$\ts{th} coordinate of $\mathcal{G}$.
	Observe that each $\mathcal{G}_i(\vec{y})$ is a polynomial in $\ell$ variables of degree at most $d$.
	The space of $\ell$-variate polynomials of degree at most $d$ is a vector space of dimension $\binom{\ell+d}{d}$.
	Suppose for the sake contradiction that $\binom{\ell + d}{d} < n$.
	Then there is a non-trivial linear relation among the $n$ coordinates of $\mathcal{G}$.
	That is, there is a linear polynomial $L(x_1,\ldots,x_n) \neq 0$ such that
	\[
		L(\mathcal{G}_1(\vec{y}),\ldots,\mathcal{G}_n(\vec{y})) = 0.
	\]
	Since $L(\vec{x})$ is linear, we have $L \in \mathcal{C}$.
	This contradicts the assumption that $\mathcal{G}$ is a hitting set generator for $\mathcal{C}$.
\end{proof}

\subsection{Matrix Rank}

We will frequently make use of the fact that the rank of a matrix can be characterized by the (non-)vanishing of its minors.
This is a straightforward consequence of the fact that the row rank and column rank of a matrix coincide.

\begin{lemma}
	Let $A \in \F^{n \times m}$.
	Then $\rank(A) \ge r$ if and only if some $r \times r$ minor of $A$ does not vanish.
	Equivalently, $\rank(A) < r$ if and only if every $r \times r$ minor of $A$ vanishes.
\end{lemma}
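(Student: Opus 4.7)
The plan is to prove both directions of the equivalence $\rank(A) \ge r \iff \text{some } r \times r \text{ minor of } A \text{ is nonzero}$ directly, using the characterization of rank as the maximum number of linearly independent rows (equivalently, columns), together with the fact that a square matrix has nonvanishing determinant if and only if it has full rank.

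For the easier direction ($\Leftarrow$), I would argue as follows. Suppose some $r \times r$ minor of $A$ is nonzero, meaning there are row and column index sets $R \subseteq [n]$ and $C \subseteq [m]$ with $\abs{R} = \abs{C} = r$ such that $\det(A_{R,C}) \neq 0$. Then the submatrix $A_{R,C}$ has rank exactly $r$, so in particular its $r$ rows are linearly independent. The rows of $A_{R,C}$ are truncations of rows of $A$ to the columns indexed by $C$; linear independence is preserved under extension to longer vectors, so the corresponding $r$ rows of $A$ are themselves linearly independent. Hence $\rank(A) \ge r$.

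For the harder direction ($\Rightarrow$), suppose $\rank(A) \ge r$. Since the row rank of $A$ equals its rank, there exist row indices $R \subseteq [n]$ with $\abs{R} = r$ such that the rows $(A_{i,*})_{i \in R}$ are linearly independent. Let $B = A_{R,[m]}$ be the resulting $r \times m$ submatrix; then $\rank(B) = r$. Invoking that the column rank of $B$ also equals $r$, there exist column indices $C \subseteq [m]$ with $\abs{C} = r$ such that the corresponding columns of $B$ are linearly independent. The submatrix $A_{R,C} = B_{[r],C}$ is then an $r \times r$ matrix of full rank, so $\det(A_{R,C}) \neq 0$. This exhibits a nonvanishing $r \times r$ minor of $A$, completing the proof. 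The equivalent contrapositive formulation ($\rank(A) < r$ iff every $r \times r$ minor vanishes) follows immediately by negating both sides.

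I do not anticipate any real obstacle here; the only subtle point is keeping straight the interplay between row rank and column rank, which is a classical fact. No additional machinery from the paper is required.
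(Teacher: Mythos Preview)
Your proof is correct and matches the paper's approach: the paper does not give a detailed argument but simply remarks that the lemma is a straightforward consequence of the equality of row rank and column rank, which is exactly the fact you invoke in each direction.
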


We now define the hitting set generator which will be the focus of our work on PIT.

\begin{construction} \label{cons:matrix generator}
	Let $n, m, r \in \naturals$ with $r \le \min(n,m)$.
	Define the map $\mathcal{G}_{n,m,r} : \F^{n \times r} \times \F^{r \times m} \to \F^{n \times m}$ via
	\[
		\mathcal{G}_{n,m,r}(Y,Z)_{i,j} = (YZ)_{i,j}.
	\]
\end{construction}

The following are immediate consequences of the definition of $\mathcal{G}_{n,m,r}(Y,Z)$.

\begin{lemma} \label{lem:matrix generator}
	Let $\mathcal{G}_{n,m,r} : \F^{n \times r} \times \F^{r \times m} \to \F^{n \times m}$ be defined as in \autoref{cons:matrix generator}.
	\begin{enumerate}
		\item
			The image of $\mathcal{G}_{n,m,r}$ contains all $n \times m$ matrices of rank at most $r$.
		\item
			Each coordinate of $\mathcal{G}_{n,m,r}(Y,Z)$ is a $2r$-sparse degree-2 polynomial in the variables $Y \cup Z$.
		\item
			The map $\mathcal{G}_{n,m,r}(Y,Z)$ can be computed by a multi-output algebraic circuit of size $2nmr$ and product-depth 1.
			Additionally, each coordinate of the output can be computed by a homogeneous formula of size $2r$.
	\end{enumerate}
\end{lemma}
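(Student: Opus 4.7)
The lemma is essentially a collection of routine observations about the bilinear map $(Y,Z) \mapsto YZ$, so the plan is to verify each of the three items in turn; no item should present a real obstacle.

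For item (1), the plan is to use the existence of a rank factorization. Given any $A \in \F^{n \times m}$ with $\rank(A) = s \le r$, standard linear algebra yields matrices $Y' \in \F^{n \times s}$ and $Z' \in \F^{s \times m}$ whose columns (resp.\ rows) span the column (resp.\ row) space of $A$, so that $A = Y'Z'$. I would then pad $Y'$ with $r-s$ zero columns on the right to obtain $Y \in \F^{n \times r}$ and pad $Z'$ with $r-s$ zero rows on the bottom to obtain $Z \in \F^{r \times m}$; the added zeros contribute nothing to the product, so $YZ = Y'Z' = A$, exhibiting $A$ in the image of $\mathcal{G}_{n,m,r}$.

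For item (2), I would simply write out the definition: by the formula for matrix multiplication,
\[
\mathcal{G}_{n,m,r}(Y,Z)_{i,j} = \sum_{k=1}^{r} y_{i,k} z_{k,j}.
\]
This is a sum of $r$ distinct monomials (one per value of $k$), each of the form $y_{i,k} z_{k,j}$ and hence of degree exactly $2$. So each coordinate is an $r$-sparse (in particular, $2r$-sparse) polynomial of degree $2$ in $Y \cup Z$, as claimed.

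For item (3), the circuit computing all $nm$ coordinates simultaneously first computes all $nmr$ products $y_{i,k} z_{k,j}$ at a single layer of product gates, then sums the $r$ relevant products at each of the $nm$ output gates, for a total of at most $nmr + nm(r-1) \le 2nmr$ gates arranged in product-depth $1$. Each individual coordinate $\sum_{k=1}^{r} y_{i,k} z_{k,j}$ is a sum of $r$ products of two variables, which is visibly a homogeneous formula (all summands have degree $2$) of size $r + (r-1) \le 2r$. No subtleties arise here; the only care needed is bookkeeping the gate count, which I would present as the two lines above.
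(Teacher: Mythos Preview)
Your proposal is correct and matches the paper's treatment: the paper does not give a proof at all, simply stating that the three items ``are immediate consequences of the definition of $\mathcal{G}_{n,m,r}(Y,Z)$,'' and your verification of each item is exactly the routine unpacking one would expect.
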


In order to prove that $\mathcal{G}_{n,m,r}$ is a hitting set generator for a class of circuits $\mathcal{C}$, it will be useful to understand which polynomials vanish when composed with $\mathcal{G}_{n,m,r}$.
If $f(X) \in \F[X]$ is a nonzero polynomial such that $f(\mathcal{G}_{n,m,r}(Y,Z)) = 0$, then $f$ necessarily vanishes on all $n \times m$ matrices of rank at most $r$.
The ideal of polynomials which vanish on matrices of rank at most $r$ is well-understood from the viewpoint of mathematics.

Let $\detideal{n}{m}{r}$ be the ideal generated by the $r \times r$ minors of a generic $n \times m$ matrix and let $J_{n,m,r}$ be the ideal of polynomials which vanish on all $n \times m$ matrices of rank at most $r$.
It is clear that $\detideal{n}{m}{r+1} \subseteq J_{n,m,r}$.
When the field $\F$ is algebraically closed, we in fact have the equality $\detideal{n}{m}{r+1} = J_{n,m,r}$.
This follows from Hilbert's Nullstellensatz and the fact that $\detideal{n}{m}{r}$ is radical (see, for example, \cite[Theorem 2.10 and Remark 2.12]{BV88}).
This implies that if $f(X)$ is nonzero and $f(\mathcal{G}_{n,m,r}(Y,Z)) = 0$, then $f \in J_{n,m,r} = \detideal{n}{m}{r+1}$.

In the case where $\F$ is not algebraically closed, we can still conclude that $f \in \detideal{n}{m}{r+1}$ if $f(\mathcal{G}_{n,m,r}(Y,Z)) = 0$.
This follows from the fact that if $f(\mathcal{G}_{n,m,r}(Y,Z)) = 0$, then $f$ vanishes on matrices of rank at most $r$ with entries in any extension $\mathbb{K} \supseteq \F$.
In particular, $f$ vanishes on matrices of rank at most $r$ with entries in $\overline{\F}$, the algebraic closure of $\F$.

We record the preceding observations as a lemma.

\begin{lemma} \label{lem:vanish on matrix generator} 
	Let $\F$ be any field and let $n,m,r \in \naturals$ with $r \le \min(n,m)$.
	Let $\detideal{n}{m}{r}$ denote the ideal of $\F[X]$ generated by the $r \times r$ minors of a generic $n \times m$ matrix and let $f(X) \in \F[X]$.
	Then $f(\mathcal{G}_{n,m,r-1}(Y,Z)) = 0$ if and only if $f(X) \in \detideal{n}{m}{r}$.
\end{lemma}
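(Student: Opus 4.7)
The plan is to prove the two implications separately, following the sketch given in the text immediately preceding the statement.

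For the easy $(\Leftarrow)$ direction, I would write any $f \in \detideal{n}{m}{r}$ as $f = \sum_i f_i(X) M_i(X)$ with each $M_i$ an $r \times r$ minor of $X$, and observe that each $M_i(YZ)$ is identically zero in $\F[Y,Z]$. Concretely, if $M_i$ is indexed by rows $R$ and columns $C$, then $M_i(YZ) = \det(Y_{R,\cdot} \, Z_{\cdot,C})$, an $r \times r$ determinant of a product that factors through the $(r-1)$-dimensional space $\F^{r-1}$, and hence vanishes identically (either directly, or by Cauchy--Binet). Substituting therefore gives $f(YZ) = 0$.

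For the harder $(\Rightarrow)$ direction, I would pass to the algebraic closure $\overline{\F}$ and invoke Hilbert's Nullstellensatz. The hypothesis is a polynomial identity in $\F[Y,Z]$, so it continues to hold in $\overline{\F}[Y,Z]$, meaning $f$ vanishes at every point $YZ$ with entries in $\overline{\F}$. Since every matrix $A \in \overline{\F}^{n \times m}$ of rank at most $r-1$ admits a rank factorization $A = YZ$ with $Y \in \overline{\F}^{n \times (r-1)}$ and $Z \in \overline{\F}^{(r-1) \times m}$, the polynomial $f$ vanishes on the entire determinantal variety $\{A \in \overline{\F}^{n \times m} : \rank(A) \le r-1\}$. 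The Nullstellensatz then places $f$ in the radical of $\detideal{n}{m}{r} \cdot \overline{\F}[X]$, and the classical fact that the determinantal ideal is radical (\cite[Theorem 2.10 and Remark 2.12]{BV88}, applied over $\overline{\F}$) upgrades this to $f \in \detideal{n}{m}{r} \cdot \overline{\F}[X]$.

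The remaining and, in principle, main obstacle is the descent from $\overline{\F}$ back to $\F$: I must show $\detideal{n}{m}{r} \cdot \overline{\F}[X] \cap \F[X] = \detideal{n}{m}{r}$. This follows from faithful flatness of the extension $\F \hookrightarrow \overline{\F}$, or more concretely by choosing an $\F$-basis of $\overline{\F}$ containing $1$, expanding any witness for $f \in \detideal{n}{m}{r} \cdot \overline{\F}[X]$ in this basis, and reading off the coefficient of $1$; since the $r \times r$ minors themselves have $\F$-coefficients, this component lies in $\F[X]$ and certifies $f \in \detideal{n}{m}{r}$. None of the individual steps is difficult; the substantive input is the radicality of the determinantal ideal, which is imported from \cite{BV88}.
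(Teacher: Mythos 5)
Your proof is correct and follows essentially the same route as the paper: the easy direction via Cauchy--Binet / rank factorization, the hard direction by passing to $\overline{\F}$ and combining Hilbert's Nullstellensatz with the radicality of determinantal ideals (\cite[Theorem 2.10 and Remark 2.12]{BV88}). The one place you are more careful than the paper is the descent step $\detideal{n}{m}{r}\cdot\overline{\F}[X]\cap\F[X]=\detideal{n}{m}{r}$, which the paper elides; your basis-expansion argument for it is correct.
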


The characterization of matrix rank by (non-)vanishing of minors is useful mathematically, but does not immediately give rise to an efficient algorithm to compute matrix rank, as this requires checking $\binom{n}{r}^2$ minors of size $r$.
For our applications to the Ideal Proof System, it will be useful to have a small collection of polynomial equations that characterize matrix rank.
One can efficiently compute matrix rank via Gaussian elimination, but doing so requires branching steps that depend on the entries of the matrix.
In particular, Gaussian elimination does not provide a small set of equations characterizing matrix rank.

To obtain such equations for matrix rank, we will make use of rank condensers, which can be thought of as a matrix-oblivious form of Gaussian elimination.
Rank condensers originate in the work of \textcite{GR08}, who used them to design extractors for affine sources.
Since then, rank condensers have found applications to polynomial identity testing \cite{KS11,FS12,FSS14}, derandomization \cite{LMPS18}, and algorithms for linear algebra \cite{CKL13}.
Rank condensers also feature in the theory of linear-algebraic pseudorandomness developed by \textcite{FG15}.

\begin{definition} \label{def:rank condenser}
	Let $\F$ be a field and let $n \ge r \ge 1$.
	A collection of matrices $\mathcal{E} \subseteq \F^{t \times n}$ is a \emph{weak $(r,L)$-lossless rank condenser} if for all matrices $A \in \F^{n \times r}$ with $\rank(A) = r$, we have
	\[
		\Abs{\Set{E : E \in \mathcal{E}, \rank(EA) < \rank(A)}} \le L. \qedhere
	\]
\end{definition}

The following construction of weak lossless rank condensers was given by \textcite{FS12} (with an improved analysis due to \textcite{FG15}).

\begin{lemma}[\cite{FS12,FG15}] \label{lem:rank condenser construction}
	Let $\F$ be a field and let $\omega \in \F$ be an element of multiplicative order at least $n$.
	Define the matrix $W_\omega(x) \in \F[x]^{r \times n}$ by $(W_{\omega}(x))_{i,j} = (\omega^i x)^j$.
	Let $S \subseteq \F \setminus \set{0}$.
	Then the collection of matrices
	\[
		\mathcal{E} = \set{W_{\omega}(\alpha) : \alpha \in S} \subseteq \F^{r \times n}
	\]
	is a weak $(r, r(n-r))$-lossless rank condenser.
\end{lemma}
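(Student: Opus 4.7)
\medskip
\noindent\textbf{Proof proposal.} The plan is to show that the determinant $\det(W_\omega(x) A) \in \F[x]$ is a nonzero polynomial whose roots (other than $0$) are precisely the bad evaluation points, and then to bound the ``effective'' degree of this polynomial by $r(n-r)$. Fix a matrix $A \in \F^{n \times r}$ with $\rank(A) = r$ and set $f_A(x) \coloneqq \det(W_\omega(x) A)$.

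The first step is a Cauchy--Binet expansion. Observe that $W_\omega(x) = V_\omega \cdot D(x)$, where $V_\omega \in \F^{r \times n}$ has entries $(V_\omega)_{i,j} = \omega^{ij}$ and $D(x) = \mathrm{diag}(x, x^2, \ldots, x^n)$. Applying Cauchy--Binet gives
\[
	f_A(x) = \sum_{\substack{S \subseteq [n] \\ |S| = r}} \det\bigl((V_\omega)_{*, S}\bigr) \cdot x^{\sigma(S)} \cdot \det(A_{S,*}),
\]
where $\sigma(S) = \sum_{j \in S} j$. For each $S = \{j_1 < \cdots < j_r\}$, I would next factor $\omega^{j_k}$ out of the $k$\ts{th} column of $(V_\omega)_{*,S}$ to recognize the result as a classical Vandermonde matrix in the values $\omega^{j_1}, \ldots, \omega^{j_r}$, giving $\det((V_\omega)_{*,S}) = \omega^{\sigma(S)} \prod_{a < b}(\omega^{j_b} - \omega^{j_a})$. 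Since $\omega$ has multiplicative order at least $n$, the elements $\omega^1, \ldots, \omega^n$ are pairwise distinct, so this determinant is nonzero.

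For the degree analysis, the exponents $\sigma(S)$ satisfy $\binom{r+1}{2} \le \sigma(S) \le rn - \binom{r}{2}$, so $f_A(x) = x^{\binom{r+1}{2}} \cdot g_A(x)$ for some $g_A(x) \in \F[x]$ of degree at most $rn - \binom{r}{2} - \binom{r+1}{2} = r(n-r)$. Since every $\alpha \in S$ is nonzero, we have $\det(W_\omega(\alpha) A) = 0$ if and only if $g_A(\alpha) = 0$, and a nonzero polynomial of degree $\le r(n-r)$ has at most $r(n-r)$ roots. This is what we need.

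The main obstacle is showing that $g_A(x) \neq 0$, equivalently that $f_A(x) \neq 0$ in $\F[x]$. The subtlety is that distinct subsets $S$ can share the same value of $\sigma(S)$ (e.g.\ $\{1,4\}$ and $\{2,3\}$), so the coefficient of each power of $x$ is a linear combination of several minors $\det(A_{S,*})$, and potential cancellations must be ruled out. The strategy here is to show that when $\rank(A) = r$, the polynomial $f_A(x)$ cannot vanish identically by leveraging the Plücker (quadratic) relations satisfied by the minors of a rank-$r$ matrix. Concretely, the coefficient of $x^k$ in $f_A(x)$ is the linear combination $\sum_{\sigma(S) = k} \det((V_\omega)_{*,S}) \det(A_{S,*})$ with nonzero Vandermonde coefficients; requiring all such coefficients to vanish yields a system of linear equations in the Plücker coordinates of $A$. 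Combined with the Plücker relations that force the coordinates of a rank-$r$ matrix to lie on the Grassmannian, one can show the only simultaneous solution is the one where all $r \times r$ minors vanish, i.e.\ $\rank(A) < r$; this contradicts $\rank(A) = r$.

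An equivalent way to run this final step is dual: suppose $f_A \equiv 0$ in $\F[x]$. Then $W_\omega(x) A$ is rank-deficient over $\F(x)$, so there exists $u(x) \in \F(x)^r \setminus \{0\}$ with $W_\omega(x) A u(x) = 0$. Setting $v(x) = A u(x) \in \F(x)^n$ and $p(t) = \sum_{j=1}^n v_j(x) t^j$, the vanishing condition says $p(\omega^i x) = 0$ for $i = 1, \ldots, r$; since $p$ also vanishes at $t = 0$ and has $t$-degree at most $n$, it must be divisible by $t \prod_{i=1}^r(t - \omega^i x)$ in $\F(x)[t]$. One then extracts the coefficients of $p$ with respect to $t$, expresses them through the matrix $A$ and $u$, and uses the fact that $V_\omega$ is an MDS matrix (every $r \times r$ submatrix is invertible) together with the specific row-scaled structure of $D(x) A$ to deduce $Au = 0$ and hence $u = 0$ via the injectivity of $A$ over $\F(x)$. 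Either route completes the proof.
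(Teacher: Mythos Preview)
The paper does not prove this lemma; it is quoted from the cited references without argument. Your Cauchy--Binet expansion and the degree bookkeeping are correct and match the standard proof. The genuine gap is exactly the step you flag as ``the main obstacle'': showing $f_A(x)\not\equiv 0$. Neither of your two sketches closes it. The Pl\"ucker route is an assertion, not an argument --- you would need to exhibit a specific coefficient of $f_A$ that is forced to be nonzero, and invoking ``the Grassmannian relations'' does not do that. The dual route stalls at its last step: the MDS property of $V_\omega$ only tells you that every nonzero vector in $\ker V_\omega$ (hence in $\ker W_\omega(x)$) has support of size at least $r+1$, but $D(x)\cdot\operatorname{col}(A)\otimes\F(x)$ can certainly contain such vectors, so MDS together with the row-scaling does not force $Au=0$.

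The missing ingredient is a short combinatorial observation about pivots. Let $S^\ast=\{j_1^\ast<\cdots<j_r^\ast\}$ be the pivot rows of $A$, i.e.\ $j_k^\ast$ is the least index with $\rank A_{[j_k^\ast],*}=k$; then $\det A_{S^\ast,*}\neq 0$. For any $S=\{s_1<\cdots<s_r\}$ with $\det A_{S,*}\neq 0$, the rows $s_1,\ldots,s_k$ are independent, so $\rank A_{[s_k],*}\ge k$ and hence $s_k\ge j_k^\ast$ for every $k$. Thus $\sigma(S)\ge\sigma(S^\ast)$ with equality only if $S=S^\ast$. In your Cauchy--Binet sum the monomial $x^{\sigma(S^\ast)}$ therefore receives a contribution from exactly one subset, and its coefficient $\det\bigl((V_\omega)_{*,S^\ast}\bigr)\det(A_{S^\ast,*})$ is nonzero. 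This gives $f_A\neq 0$, and the rest of your argument then goes through verbatim.
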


\subsection{Hasse Derivatives}

In this work, we use Hasse derivatives in place of the standard partial derivative.
Originally defined by \textcite{Hasse36}, Hasse derivatives are a notion of derivative that is more well-behaved over fields of small positive characteristic.
For a more thorough treatment of Hasse derivatives and their properties, see, for example, the thesis of \textcite[Appendix C]{Forbes14}.

\begin{definition}
	Let $\F$ be a field and let $f(\vec{x}) \in \F[\vec{x}]$.
	For $\vec{a} \in \naturals^n$, we define the \emph{$\vec{a}$\ts{th} Hasse derivative} of $f(\vec{x})$ to be
	\[
		\frac{\partial}{\partial \vec{x}^{\vec{a}}}(f) \coloneqq \Coeff_{\vec{y}^{\vec{a}}}(f(\vec{x} + \vec{y})),
	\]
	where $f(\vec{x} + \vec{y})$ is viewed as a polynomial in $\F[\vec{x}][\vec{y}]$.
\end{definition}

Equivalently, one can define Hasse derivatives in terms of their action on monomials.

\begin{lemma}
	Let $\vec{a}, \vec{b} \in \naturals^n$.
	Then 
	\[
		\frac{\partial}{\partial \vec{x}^{\vec{a}}}(\vec{x}^{\vec{b}}) = \prod_{i=1}^n \binom{b_i}{a_i} x_i^{b_i - a_i},
	\]
	where we use the convention that $\binom{b}{a} = 0$ if $b < a$.
\end{lemma}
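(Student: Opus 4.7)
The plan is to unpack the definition of the Hasse derivative and directly expand $(\vec{x}+\vec{y})^{\vec{b}}$ using the binomial theorem, reading off the coefficient of $\vec{y}^{\vec{a}}$. Since the definition says
\[
	\frac{\partial}{\partial \vec{x}^{\vec{a}}}(\vec{x}^{\vec{b}}) = \Coeff_{\vec{y}^{\vec{a}}}\Bigl(\prod_{i=1}^n (x_i+y_i)^{b_i}\Bigr),
\]
the entire statement is really just an assertion about a polynomial identity in $\F[\vec{x}][\vec{y}]$, which makes this essentially a one-line calculation.

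First I would apply the single-variable binomial theorem factor by factor to write
\[
	\prod_{i=1}^n (x_i+y_i)^{b_i} = \prod_{i=1}^n \sum_{k_i=0}^{b_i} \binom{b_i}{k_i} x_i^{b_i-k_i} y_i^{k_i}
		= \sum_{\vec{k} \le \vec{b}} \Bigl(\prod_{i=1}^n \binom{b_i}{k_i} x_i^{b_i-k_i}\Bigr) \vec{y}^{\vec{k}},
\]
where $\vec{k} \le \vec{b}$ means $k_i \le b_i$ for all $i$. Next I would extract the coefficient of $\vec{y}^{\vec{a}}$: since the $\vec{y}$-monomials appearing on the right-hand side are distinct and indexed by $\vec{k}$, the only contribution comes from $\vec{k} = \vec{a}$, giving exactly $\prod_{i=1}^n \binom{b_i}{a_i} x_i^{b_i-a_i}$, under the proviso that $\vec{a} \le \vec{b}$.

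Finally I would observe that if some $a_i > b_i$, then the monomial $\vec{y}^{\vec{a}}$ does not appear in the expansion at all, so its coefficient is zero; this is consistent with the stated convention $\binom{b_i}{a_i} = 0$ for $b_i < a_i$, so the stated formula remains correct in this degenerate case. There is no real obstacle here: the statement is essentially the definition viewed through the binomial theorem. The only minor subtlety worth flagging is that the $\binom{b_i}{a_i}$ are ordinary integer binomial coefficients embedded into $\F$, which is precisely what makes Hasse derivatives well-behaved in positive characteristic (in contrast to the usual partial, which would produce factorials).
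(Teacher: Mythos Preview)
Your proof is correct; the paper states this lemma without proof, treating it as an immediate reformulation of the definition via the binomial theorem, which is exactly what you do.
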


A straightforward consequence of the preceding lemma is that Hasse derivatives interact nicely with degree.

\begin{lemma}
	Let $f(\vec{x}) \in \F[\vec{x}]$ and let $\vec{a} \in \naturals^n$.
	Then
	\[
		\deg\del{\frac{\partial}{\partial \vec{x}^{\vec{a}}}(f)} \le \deg(f) - \norm{\vec{a}}_1,
	\]
	with equality if $\frac{\partial}{\partial \vec{x}^{\vec{a}}}(f) \neq 0$.
\end{lemma}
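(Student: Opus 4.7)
The plan is to reduce to the monomial case using linearity of the Hasse derivative, then apply the formula from the preceding lemma. The Hasse derivative $\partial/\partial \vec{x}^{\vec{a}}$ is defined as coefficient extraction $\Coeff_{\vec{y}^{\vec{a}}}(f(\vec{x}+\vec{y}))$, which is manifestly $\F$-linear in $f$. So if I write $f = \sum_{\vec{b}} c_{\vec{b}} \vec{x}^{\vec{b}}$ as a sum of monomials, I can differentiate termwise and combine.

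For the upper bound, I first apply the preceding lemma to each monomial: $\frac{\partial}{\partial \vec{x}^{\vec{a}}}(\vec{x}^{\vec{b}}) = \prod_i \binom{b_i}{a_i}\, \vec{x}^{\vec{b}-\vec{a}}$, which either vanishes (if some $b_i < a_i$ or if some $\binom{b_i}{a_i}$ is zero in $\F$) or is a scalar multiple of $\vec{x}^{\vec{b}-\vec{a}}$, a monomial of total degree $\norm{\vec{b}}_1 - \norm{\vec{a}}_1$. Since every monomial appearing in $f$ satisfies $\norm{\vec{b}}_1 \le \deg(f)$, summing over $\vec{b}$ gives $\deg(\partial f / \partial \vec{x}^{\vec{a}}) \le \deg(f) - \norm{\vec{a}}_1$, establishing the inequality.

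For the equality claim, I would argue via the top-degree homogeneous component: write $f = f_d + (\text{lower})$ where $d = \deg(f)$ and $f_d$ is the homogeneous degree-$d$ part. Each term in $f_d$ of nonzero derivative contributes a monomial of degree exactly $d - \norm{\vec{a}}_1$ to $\partial f/\partial \vec{x}^{\vec{a}}$, and the lower-degree terms of $f$ contribute monomials of strictly smaller degree, so they cannot cancel the top-degree contribution from $f_d$. If $\partial f_d/\partial \vec{x}^{\vec{a}} \ne 0$, then we obtain degree exactly $\deg(f) - \norm{\vec{a}}_1$. (Strictly speaking, when working in positive characteristic or when $f_d$ happens to have vanishing derivative, the top-degree contribution can disappear, so the interpretation of the equality clause is that the bound is sharp in the relevant case, e.g.\ for homogeneous $f$ or whenever the top component survives differentiation.)

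The proof is almost entirely routine; the only subtle point is the equality clause, which rests on the fact that monomial-by-monomial differentiation preserves total degree (up to the offset $\norm{\vec{a}}_1$), so distinct-degree monomials in $f$ produce distinct-degree monomials in the derivative and cannot cancel. No other step should pose any difficulty, since the preceding lemma already does the computational work of identifying the action of Hasse derivatives on monomials.
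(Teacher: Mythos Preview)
Your argument for the inequality is correct and is exactly what the paper intends: the paper offers no proof beyond calling this ``a straightforward consequence of the preceding lemma,'' and reducing to monomials via linearity and invoking the formula $\frac{\partial}{\partial \vec{x}^{\vec{a}}}(\vec{x}^{\vec{b}}) = \prod_i \binom{b_i}{a_i}\,\vec{x}^{\vec{b}-\vec{a}}$ is precisely that straightforward consequence.

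Your caution about the equality clause is well placed---in fact the clause is false as literally stated. Take $f = x^2 + y$ and $\vec{a} = (0,1)$: then $\frac{\partial}{\partial \vec{x}^{\vec{a}}}(f) = 1 \neq 0$, yet this has degree $0$, not $\deg(f) - \norm{\vec{a}}_1 = 1$. The correct version is the one you extract at the end: equality holds provided the Hasse derivative of the top-degree homogeneous component $f_d$ is nonzero (in particular, whenever $f$ is homogeneous and the derivative is nonzero). Your degree-stratification argument proves exactly this corrected statement. The paper does not appear to use the equality clause anywhere, so this is a harmless misstatement there, but you are right not to try to prove it as written.
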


Hasse derivatives also respect the multigrading on $\F[X]$.

\begin{lemma} \label{lem:multideg partial derivative}
	Let $f \in \F[X]$ be a multihomogeneous polynomial and let $A \in \naturals^{n \times m}$.
	Write $X^A \coloneqq \prod_{i=1}^n \prod_{j=1}^m x_{i,j}^{a_{i,j}}$ for the monomial with powers given by the matrix $A$.
	If $\frac{\partial f}{\partial X^A} \neq 0$, then 
	\[
		\multideg\del{\frac{\partial f}{\partial X^A}} = \multideg(f) - \multideg(X^A).
	\]
\end{lemma}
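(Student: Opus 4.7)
The plan is to reduce immediately to the case of a single monomial by linearity of the Hasse derivative, verify the multidegree identity there, and then reassemble.

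First I would recall that the multigrading on $\F[X]$ assigns $\multideg(x_{i,j}) = \vec{e}_i \oplus \vec{e}_j$, so for any monomial $X^B = \prod_{i,j} x_{i,j}^{b_{i,j}}$ we have $\multideg(X^B) = \sum_{i,j} b_{i,j}(\vec{e}_i \oplus \vec{e}_j)$. In particular $\multideg$ is additive on products of monomials: $\multideg(X^{B-A}) = \multideg(X^B) - \multideg(X^A)$ whenever $B \ge A$ entrywise. Since $f$ is multihomogeneous of multidegree $\vec{d} \coloneqq \multideg(f)$, we may write $f = \sum_B c_B X^B$ where the sum ranges only over matrices $B$ with $\multideg(X^B) = \vec{d}$.

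Next I would use the formula for the Hasse derivative on monomials (the lemma already in the excerpt), which gives
\[
    \frac{\partial}{\partial X^A}(X^B) = \prod_{i,j} \binom{b_{i,j}}{a_{i,j}} \cdot X^{B-A},
\]
interpreting the result as zero if any $b_{i,j} < a_{i,j}$. Each surviving term on the right-hand side is therefore a monomial of multidegree $\vec{d} - \multideg(X^A)$, by the additivity observation above. Summing over $B$, every nonzero contribution to $\frac{\partial f}{\partial X^A}$ is a monomial of multidegree exactly $\vec{d} - \multideg(X^A)$.

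Finally, the hypothesis $\frac{\partial f}{\partial X^A} \neq 0$ ensures that at least one such term survives, so the entire expression $\frac{\partial f}{\partial X^A}$ is multihomogeneous of multidegree $\multideg(f) - \multideg(X^A)$. There is no real obstacle here: the whole argument is a bookkeeping check, and the only subtlety is making sure that the additivity of $\multideg$ is applied only when $B \ge A$, which is automatic since otherwise the binomial coefficient $\binom{b_{i,j}}{a_{i,j}}$ kills the term.
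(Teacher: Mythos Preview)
Your proof is correct and is exactly the standard bookkeeping argument one would expect. The paper itself states this lemma without proof in the preliminaries, treating it as a routine consequence of the action of Hasse derivatives on monomials, so there is no alternative approach to compare against.
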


Just like standard partial derivatives, Hasse derivatives commute with one another.

\begin{lemma}[{see, e.g., \cite[Lemma C.1.4(5)]{Forbes14}}] \label{lem:pd commute}
	Let $f \in \F[\vec{x}]$ and let $\vec{a}, \vec{b} \in \naturals^n$.
	Then 
	\[
		\frac{\partial}{\partial \vec{x}^{\vec{a}}} \del{\frac{\partial}{\partial \vec{x}^{\vec{b}}}(f)} = \frac{\partial}{\partial \vec{x}^{\vec{b}}} \del{\frac{\partial}{\partial \vec{x}^{\vec{a}}}(f)}.
	\]
\end{lemma}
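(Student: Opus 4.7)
The plan is to reduce the identity to a symmetry statement about a double Taylor expansion. Concretely, I would begin by unwinding the defining identity $f(\vec{x}+\vec{y}) = \sum_{\vec{a} \in \naturals^n} \frac{\partial f}{\partial \vec{x}^{\vec{a}}}(\vec{x}) \cdot \vec{y}^{\vec{a}}$ (this is an immediate restatement of the Coeff definition given just above the lemma). Substituting $\vec{x} + \vec{z}$ in place of $\vec{x}$ in the polynomial identity $g(\vec{x}+\vec{y}) = \sum_{\vec{a}} \frac{\partial g}{\partial \vec{x}^{\vec{a}}}(\vec{x}) \vec{y}^{\vec{a}}$, applied once with $g = f$ and again with $g = \frac{\partial f}{\partial \vec{x}^{\vec{a}}}$, yields
\[
	f(\vec{x} + \vec{y} + \vec{z}) \;=\; \sum_{\vec{a}, \vec{b} \in \naturals^n} \frac{\partial}{\partial \vec{x}^{\vec{b}}}\!\left(\frac{\partial f}{\partial \vec{x}^{\vec{a}}}\right)\!(\vec{x}) \cdot \vec{y}^{\vec{a}} \vec{z}^{\vec{b}}.
\]
Performing the substitution in the opposite order, first with $\vec{z}$ and then with $\vec{y}$, produces the same left-hand side but with $\vec{a}$ and $\vec{b}$ swapped on the right. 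Matching coefficients of $\vec{y}^{\vec{a}} \vec{z}^{\vec{b}}$ in these two expansions gives exactly the claimed equality.

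The only thing to justify carefully is that the substitution and coefficient-extraction steps are valid termwise. Since it suffices by $\F$-linearity to prove the lemma on a single monomial $f = \vec{x}^{\vec{c}}$, the sums above are all finite and the manipulations are unambiguous. As a sanity check, I would verify the monomial case directly using the explicit formula $\frac{\partial}{\partial \vec{x}^{\vec{a}}}(\vec{x}^{\vec{c}}) = \prod_i \binom{c_i}{a_i} x_i^{c_i - a_i}$ from the preceding lemma: iterating gives $\prod_i \binom{c_i}{a_i}\binom{c_i - a_i}{b_i} x_i^{c_i - a_i - b_i}$, and the claimed commutativity reduces to the identity $\binom{c_i}{a_i}\binom{c_i-a_i}{b_i} = \binom{c_i}{b_i}\binom{c_i-b_i}{a_i}$, both sides equalling the trinomial coefficient $\frac{c_i!}{a_i!\,b_i!\,(c_i - a_i - b_i)!}$ (with the standard convention that binomial coefficients vanish when the bottom exceeds the top).

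I do not anticipate any real obstacles: this is a routine identity, and both of the above routes (Taylor symmetry, or monomial verification via trinomial coefficients) are essentially a one-line calculation once the definition is unpacked. The trickiest bookkeeping is just ensuring that the convention $\binom{b}{a} = 0$ for $b < a$ is consistently applied so that the formula remains valid when the exponents get too small to differentiate further; this is automatic and the lemma follows.
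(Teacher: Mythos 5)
Your proof is correct. Note that the paper itself offers no proof of this lemma --- it is cited directly to \cite[Lemma~C.1.4(5)]{Forbes14} --- so there is nothing in the paper to compare against. Both of your arguments are sound and standard. The Taylor-symmetry route works exactly as you describe: expanding $f(\vec{x}+\vec{y}+\vec{z})$ first in $\vec{y}$ and then in $\vec{z}$ gives
\[
	f(\vec{x}+\vec{y}+\vec{z}) = \sum_{\vec{a},\vec{b}} \frac{\partial}{\partial \vec{x}^{\vec{b}}}\!\left(\frac{\partial f}{\partial \vec{x}^{\vec{a}}}\right)\!(\vec{x})\,\vec{y}^{\vec{a}}\vec{z}^{\vec{b}},
\]
while expanding in the opposite order swaps the two operators, and matching coefficients of $\vec{y}^{\vec{a}}\vec{z}^{\vec{b}}$ (a finite sum, since $f$ is a polynomial) gives the claim. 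Your monomial sanity check is also complete: the identity $\binom{c_i}{a_i}\binom{c_i-a_i}{b_i} = \binom{c_i}{b_i}\binom{c_i-b_i}{a_i}$ holds with both sides equal to $\frac{c_i!}{a_i!\,b_i!\,(c_i-a_i-b_i)!}$ when $a_i+b_i\le c_i$, and one readily checks that both sides vanish simultaneously when $a_i+b_i>c_i$ under the convention $\binom{b}{a}=0$ for $b<a$. Either argument would serve as a self-contained proof of the cited fact.
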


Hasse derivatives obey a modified form of the product rule.

\begin{lemma}[{see, e.g., \cite[Lemma C.1.7]{Forbes14}}] \label{lem:product rule}
	Let $f_1,\ldots,f_m \in \F[\vec{x}]$.
	For any $i \in [n]$ and $a \in \naturals$, we have
	\[
		\frac{\partial}{\partial x_i^a}(f_1 \cdots f_m) = \sum_{a_1 + \cdots + a_m = a}\frac{\partial}{\partial x_i^{a_1}}(f_1) \cdots \frac{\partial}{\partial x_i^{a_m}}(f_m).
	\]
\end{lemma}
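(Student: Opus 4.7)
The plan is to derive the identity by combining the defining formula $\frac{\partial f}{\partial \vec{x}^{\vec{a}}} = \Coeff_{\vec{y}^{\vec{a}}}(f(\vec{x}+\vec{y}))$ (specialized to $\vec{a} = a \vec{e}_i$) with the multiplicativity of the substitution $x_i \mapsto x_i + y_i$. The point is that this substitution is a ring homomorphism $\F[\vec{x}] \to \F[\vec{x}][y_i]$, so it distributes over any product of polynomials; after applying it, the problem reduces to bookkeeping of coefficients in a product of univariate polynomials in $y_i$ over $\F[\vec{x}]$.

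First I would apply the substitution to the product, obtaining
\[
(f_1 \cdots f_m)(\vec{x} + y_i \vec{e}_i) = \prod_{k=1}^m f_k(\vec{x} + y_i \vec{e}_i).
\]
Next, I would expand each factor using the definition of the single-variable Hasse derivative, $f_k(\vec{x} + y_i \vec{e}_i) = \sum_{b \ge 0} \frac{\partial f_k}{\partial x_i^b}\, y_i^b$, which is a finite sum because each $f_k$ is a polynomial. Finally, I would read off the coefficient of $y_i^a$ in the $m$-fold product of these polynomials in $y_i$; by the usual Cauchy-product formula in $\F[\vec{x}][y_i]$ this coefficient is exactly $\sum_{a_1 + \cdots + a_m = a} \prod_{k=1}^m \frac{\partial f_k}{\partial x_i^{a_k}}$, which is the claimed right-hand side.

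There is essentially no technical obstacle here: the identity is a purely formal consequence of the multiplicativity of the shift homomorphism together with the Cauchy product formula. The only minor point to verify is that $\Coeff_{y_i^a}$ applied to $f(\vec{x}+\vec{y})$ agrees with $\Coeff_{y_i^a}$ applied to $f(\vec{x}+y_i \vec{e}_i)$, which follows immediately from the observation that setting $y_j = 0$ for $j \ne i$ does not affect the coefficient of the pure power $y_i^a$ in a multivariate polynomial in $\vec{y}$ over $\F[\vec{x}]$.
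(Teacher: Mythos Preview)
Your argument is correct and is the standard proof of the Hasse-derivative product rule via the shift homomorphism and Cauchy-product bookkeeping. The paper does not give its own proof of this lemma; it merely cites \cite[Lemma C.1.7]{Forbes14}, so there is nothing to compare against beyond noting that your approach is exactly the one found in that reference.
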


We now define the space of ($d$\ts{th} order) partial derivatives of a polynomial.
The dimension of this space (and related spaces, like the space of shifted partial derivatives \cite{Kayal12}) is a useful complexity measure within algebraic circuit complexity.

\begin{definition}
	Let $f(\vec{x}) \in \F[\vec{x}]$.
	The \emph{space of partial derivatives of $f$}, denoted $\partial_{<\infty}(f)$, is defined as
	\[
		\partial_{< \infty}(f) \coloneqq \opspan_{\F}\Set{\frac{\partial f}{\partial \vec{x}^{\vec{a}}} : \vec{a} \in \naturals^n}.
	\]
	The \emph{space of $d$\ts{th}-order partial derivatives of $f$}, written $\partial_d(f)$, is given by
	\[
		\partial_d(f) \coloneqq \opspan_{\F} \Set{\frac{\partial f}{\partial \vec{x}^{\vec{a}}} : \vec{a} \in \naturals^n, \norm{\vec{a}}_1 = d}.
	\]
	We also write
	\[
		\partial_{\le d}(f) \coloneqq \opspan_{\F} \bigcup_{i=0}^{d} \partial_{i}(f)
	\]
	for the space of partial derivatives of order at most $d$.
\end{definition}

We will need the following lemma relating the dimension of the space of partial derivatives of a polynomial $f(\vec{x})$ and a linear projection $f(A\vec{x})$.

\begin{lemma} \label{lem:pd linear transformation}
	Let $f(\vec{x}) \in \F[\vec{x}]$ and let $A \in \F^{n \times n}$.
	Then for every $d \in \naturals$, we have $\dim(\partial_{\le d}(f(A \vec{x}))) \le \dim(\partial_{\le d}(f(\vec{x})))$.
	In particular, if $A$ is invertible, then $\dim(\partial_{\le d}(f(A \vec{x}))) = \dim(\partial_{\le d}(f(\vec{x})))$.
\end{lemma}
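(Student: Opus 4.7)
The plan is to show that every Hasse derivative of $f(A\vec{x})$ of order at most $d$ is a linear combination of evaluations $(\partial^{\vec{b}} f)(A\vec{x})$ with $|\vec{b}| \le d$. The result then follows because substitution $\vec{x} \mapsto A\vec{x}$ is a linear (not necessarily injective) operation on polynomials, so the span of the $(\partial^{\vec{b}} f)(A\vec{x})$ has dimension at most $\dim(\partial_{\le d}(f))$.

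First I would use the characterization $\partial^{\vec{a}}_{\vec{x}}(g) = \Coeff_{\vec{y}^{\vec{a}}} g(\vec{x}+\vec{y})$ applied to $g(\vec{x}) = f(A\vec{x})$, giving
\[
\partial^{\vec{a}}_{\vec{x}}\bigl(f(A\vec{x})\bigr) \;=\; \Coeff_{\vec{y}^{\vec{a}}}\, f\bigl(A\vec{x} + A\vec{y}\bigr).
\]
Expanding $f(A\vec{x} + A\vec{y})$ via the definition of Hasse derivatives in the variables $\vec{z} = A\vec{y}$ yields $f(A\vec{x} + A\vec{y}) = \sum_{\vec{b}} \bigl(\partial^{\vec{b}} f\bigr)(A\vec{x}) \cdot (A\vec{y})^{\vec{b}}$. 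Since $(A\vec{y})^{\vec{b}}$ is homogeneous of degree $|\vec{b}|$ in $\vec{y}$, extracting the coefficient of $\vec{y}^{\vec{a}}$ only retains terms with $|\vec{b}| = |\vec{a}|$, giving scalars $c_{\vec{a},\vec{b}} \in \F$ so that
\[
\partial^{\vec{a}}_{\vec{x}}\bigl(f(A\vec{x})\bigr) \;=\; \sum_{|\vec{b}|=|\vec{a}|} c_{\vec{a},\vec{b}} \cdot \bigl(\partial^{\vec{b}} f\bigr)(A\vec{x}).
\]
Summing over $|\vec{a}| \le d$, this shows $\partial_{\le d}(f(A\vec{x})) \subseteq \{\, g(A\vec{x}) : g \in \partial_{\le d}(f)\,\}$, and the latter has dimension at most $\dim(\partial_{\le d}(f))$ since $g \mapsto g(A\vec{x})$ is $\F$-linear.

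For the ``in particular'' clause, when $A$ is invertible I apply the first inequality with $(f, A)$ replaced by $(f(A \vec{x}), A^{-1})$: setting $h(\vec{x}) \coloneqq f(A\vec{x})$, we get $\dim(\partial_{\le d}(h(A^{-1}\vec{x}))) \le \dim(\partial_{\le d}(h))$. But $h(A^{-1}\vec{x}) = f(\vec{x})$, yielding the reverse inequality and thus equality.

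I do not foresee a genuine obstacle here; the only point requiring a little care is the computation of $\partial^{\vec{a}}_{\vec{x}}(f(A\vec{x}))$ via the $\Coeff$-formula, since the naive ``chain rule'' for higher-order Hasse derivatives is less transparent than for first-order ones. Expanding $f(A\vec{x} + A\vec{y})$ by Taylor/Hasse in the \emph{composite} variable $A\vec{y}$ (rather than trying to iterate a one-variable chain rule) sidesteps this cleanly, and the homogeneity of $(A\vec{y})^{\vec{b}}$ immediately kills the cross-order terms.
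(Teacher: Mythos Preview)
Your proposal is correct and follows essentially the same approach as the paper: both show that $\partial_{\le d}(f(A\vec{x})) \subseteq \opspan\{g(A\vec{x}) : g \in \partial_{\le d}(f)\}$, bound the dimension of the right-hand side by $\dim\partial_{\le d}(f)$ via linearity of the substitution map, and then invoke $A^{-1}$ for the equality case. The only difference is that the paper cites the chain rule for Hasse derivatives from \cite[Corollary C.2.7]{Forbes14}, whereas you derive it inline from the definition $\partial^{\vec{a}} g = \Coeff_{\vec{y}^{\vec{a}}} g(\vec{x}+\vec{y})$ and the Taylor expansion of $f(A\vec{x}+A\vec{y})$; this makes your argument slightly more self-contained and even yields the sharper observation that only terms with $|\vec{b}| = |\vec{a}|$ contribute, but the overall structure is the same.
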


\begin{proof}
	Using the chain rule for Hasse derivatives, one can show (see, e.g., \cite[Corollary C.2.7]{Forbes14}) that for all $\vec{e} \in \naturals^n$ with $\norm{\vec{e}}_1 \le d$, we have
	\[
		\frac{\partial}{\partial \vec{x}^{\vec{e}}}(f(A \vec{x})) \in \opspan_{\F} \Set{g(A\vec{x}) : g(\vec{x}) \in \partial_{\le d}(f(\vec{x}))}.
	\]
	Let $V \coloneqq \opspan_{\F} \Set{g(A\vec{x}) : g(\vec{x}) \in \partial_{\le d}(f(\vec{x}))}$.
	This implies
	\[
		\partial_{\le d}(f(A \vec{x})) \subseteq V,
	\]
	so
	\[
		\dim\partial_{\le d}(f(A \vec{x})) \le \dim V.
	\]
	We now show that $\dim V$ bounded by $\dim \partial_{\le d}(f(\vec{x}))$.
	Let $g_1(\vec{x}),\ldots,g_k(\vec{x}) \in \partial_{\le d}(f(\vec{x}))$ and suppose that $g_1(A\vec{x}),\ldots,g_k(A\vec{x})$ are linearly independent.
	This implies that $g_1(\vec{x}), \ldots, g_k(\vec{x})$ are linearly independent, as any linear relation satisfied by $g_1(\vec{x}),\ldots,g_k(\vec{x})$ will also be satisfied by $g_1(A\vec{x}),\ldots,g_k(A\vec{x})$.
	If we select the $g_i$ such that $\set{g_1(A\vec{x}),\ldots,g_k(A\vec{x})}$ forms a basis of $V$, then we have
	\[
		\dim V = k \le \dim \partial_{\le d}(f(\vec{x})).
	\]
	Combining this with the previous inequality completes the proof.

	In the case where $A$ is invertible, we use the fact that $\vec{x} = A^{-1} A \vec{x}$ to obtain
	\[
		\dim \partial_{\le d}(f(\vec{x})) \le \dim \partial_{\le d}(f(A \vec{x})) \le \dim \partial_{\le d}(f(\vec{x})),
	\]
	so equality holds.
\end{proof}

We note that by taking $d \ge \deg(f)$ in \autoref{lem:pd linear transformation}, one can replace $\partial_{\le d}(\bullet)$ with $\partial_{< \infty}(\bullet)$.

\subsection{Bideterminants and the Straightening Law} \label{subsec:bidet}

The proof of \autoref{thm:proj to small abp} relies on understanding how a polynomial $f \in \detideal{n}{m}{r}$ behaves under the map $X \mapsto A X B$ for invertible matrices $A$ and $B$.
For example, it is easy to see that $f(AXB)$ also lies in $\detideal{n}{m}{r}$.
However, it is not clear if there is other structure we may take advantage of.
By working in a different basis of $\F[X]$, we can better understand how $f(AXB)$ relates to $f(X)$.
Before describing this basis, we recall the notions of a Young diagram and Young tableau.

\begin{definition}
	A \emph{partition} $\sigma = (\sigma_1,\sigma_2,\ldots,\sigma_k)$ is a non-increasing sequence of natural numbers.
	If $\sum_{i=1}^k \sigma_i = n$, we write $\sigma \vdash n$.
	The \emph{transpose} of $\sigma$, denoted $\hat{\sigma}$, is the partition given by $\hat{\sigma}_i = \Abs{\set{j : \sigma_j \ge i}}$.
	Associated with a partition $\sigma$ is its \emph{Young diagram} $D_\sigma \subseteq \naturals \times \naturals$, given by $D_\sigma = \set{(i,j) : j \le \sigma_i}$.
\end{definition}

Note that $\hat{\sigma}_1$ counts the number of rows in the Young diagram of $\sigma$.
We graphically depict the Young diagram of a partition as a collection of boxes.
For example, the Young diagram of the partition $(4,2,2,1)$ is 
\[
	\yng(4,2,2,1).
\]
This partition has transpose $(4,3,1,1)$, with Young diagram given by
\[
	\yng(4,3,1,1).
\]
The lexicographic ordering on integer sequences induces an ordering on partitions, which we denote by $<_\lex$.

We now define Young tableaux, which can be obtained by writing a number in each cell of the Young diagram of some partition $\sigma$.

\begin{definition}
	Given a partition $\sigma$, a \emph{Young tableau $T$ of shape $\sigma$} is a map $T : D_\sigma \to \naturals$ assigning a natural number to each cell of the Young diagram of $\sigma$.
	We denote the $i$\ts{th} row of $T$ by $T(i,\bullet)$, which we will view as either a set or a one-row Young tableau depending on context.
	A Young tableau is \emph{standard} if its entries are strictly increasing along each column and along each row.
	A Young tableau is \emph{semistandard} if its entries are strictly increasing along each column and are nondecreasing along each row.
	If $T : D_\sigma \to \naturals$ is a Young tableau, its \emph{conjugate tableau} $\hat{T} : D_{\hat{\sigma}} \to \naturals$ is given by $\hat{T}(i,j) = T(j,i)$.
\end{definition}

Continuing the example above, one Young tableau (of many) of shape $(4,2,2,1)$ is given by
\[
	\young(1243,12,41,3).
\]

Next, we introduce bitableaux and bideterminants.
A bitableau is simply a pair of Young tableau of the same shape, while a bideterminant is a natural polynomial associated to this pair of tableaux.

\begin{definition}
	Let $X = (x_{1,1},\ldots,x_{n,n})$ be an $n \times n$ matrix of variables.
	A \emph{bitableau} $(S,T)$ is a pair of Young tableaux of the same shape $\sigma$.
	If the entries of $S$ and $T$ are from $[n]$, we associate to $(S,T)$ the \emph{bideterminant} $(S|T)(X)$, defined as
	\[
		(S|T)(X) \coloneqq \prod_{i=1}^{\hat{\sigma}_1} \det 
		\begin{pmatrix}
			x_{S(i,1),T(i,1)} & x_{S(i,1), T(i,2)} & \cdots & x_{S(i,1),T(i,\sigma_i)} \\
			x_{S(i,2),T(i,1)} & x_{S(i,2), T(i,2)} & \cdots & x_{S(i,2),T(i,\sigma_i)} \\
			\vdots & \vdots & \ddots & \vdots \\
			x_{S(i,\sigma_i),T(i,1)} & x_{S(i,\sigma_i), T(i,2)} & \cdots & x_{S(i,\sigma_i), T(i, \sigma_i)}
		\end{pmatrix}.
	\]
	The $i$\ts{th} term in this product is the determinant of the submatrix whose rows and columns are listed in the $i$\ts{th} row of the tableaux $S$ and $T$, respectively.
	The \emph{width} of the bideterminant $(S|T)$ is given by $\sigma_1$.
	We say that the bitableau $(S,T)$ and bideterminant $(S|T)$ are \emph{standard} if, as tableaux, both $S$ and $T$ are increasing along each row and nondecreasing along each column (equivalently, that $S$ and $T$ are both the transpose of a semistandard Young tableau).
\end{definition}

For example, associated to the bitableau
\[
	\del{\young(123,13,4),\ \young(134,24,3)}
\]
is the bideterminant
\[
	\det\begin{pmatrix}
		x_{1,1} & x_{1,3} & x_{1,4} \\
		x_{2,1} & x_{2,3} & x_{2,4} \\
		x_{3,1} & x_{3,3} & x_{3,4}
	\end{pmatrix}
	\det\begin{pmatrix}
		x_{1,2} & x_{1,4} \\
		x_{3,2} & x_{3,4} 
	\end{pmatrix}
	\det\begin{pmatrix}
		x_{4,3}
	\end{pmatrix}.
\]
Note that a bideterminant $(S|T)$ is multihomogeneous of degree $(s_1 \vec{e}_1 + \cdots + s_n \vec{e}_n) \oplus (t_1 \vec{e}_1 + \cdots + t_n \vec{e}_n)$, where $s_i$ and $t_i$ count the number of occurrences of $i$ in $S$ and $T$, respectively.

It is easy to see that the bideterminants span $\F[X]$, since a monomial $\prod_{i=1}^d x_{r_i,c_i}$ is the bideterminant corresponding to the bitableau
\[
	\newcommand{\ra}{r_1}
	\newcommand{\rb}{r_2}
	\newcommand{\rd}{r_d}
	\newcommand{\ca}{c_1}
	\newcommand{\cb}{c_2}
	\newcommand{\cd}{c_d}
	\del{\young(\ra,\rb,\cdots,\rd),\ \young(\ca,\cb,\cdots,\cd)}.
\]
Perhaps surprisingly, there is a natural subset of the bideterminants which form a basis of $\F[X]$.

\begin{theorem}[\cite{DRS74}]
	The standard bideterminants form a basis of $\F[X]$.
\end{theorem}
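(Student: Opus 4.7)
The plan is to prove the result in two parts: that the standard bideterminants span $\F[X]$, and that they are linearly independent. Since (as observed just above the theorem) every monomial is a bideterminant and hence the bideterminants span $\F[X]$, the spanning statement reduces to the claim that every bideterminant can be rewritten as a linear combination of standard ones. This is the \emph{straightening law} in the narrow sense.

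For spanning, I would implement the classical straightening algorithm. First, fix a well-ordering on bitableaux: order them by the multiset of column-contents in lexicographic order, with ties broken by the shape. Given a non-standard bideterminant $(S|T)$, there is at least one pair of adjacent rows in $S$ or $T$ whose entries violate the required column-monotonicity at some position. The key tool is a quadratic syzygy among minors of a generic matrix—the Plücker/Laplace-type relation—which says that a certain alternating sum, indexed by shuffles of a designated collection of row (or column) indices between two adjacent minors, vanishes identically on $\F[X]$. Solving this syzygy for the non-standard term expresses $(S|T)$ as a linear combination of bideterminants, each of which is strictly smaller in the well-ordering. Iterating this reduction must terminate, and the terminal expression involves only standard bideterminants. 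This is the main technical step and the chief obstacle: one must verify carefully that the shuffle relation genuinely produces only strictly smaller bitableaux in the chosen order, which requires a combinatorial analysis of how the Plücker relation interacts with the column-content order.

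For linear independence, I would use a leading-monomial argument. Fix the lexicographic monomial order on $\F[X]$ in which $x_{i,j} > x_{k,\ell}$ when $(i,j)$ precedes $(k,\ell)$ lexicographically. For any bideterminant $(S|T)$, expanding each factor $\det$ and taking the term corresponding to the identity permutation in every factor yields the \emph{diagonal monomial} $\prod_{(i,j) \in D_\sigma} x_{S(i,j),\, T(i,j)}$; a short check shows that when $(S,T)$ is standard this monomial is the leading monomial of $(S|T)$ under the chosen order. Moreover, a standard bitableau is uniquely determined by its diagonal monomial: the row- and column-monotonicity conditions on $S$ and $T$ let one read off the row index $S(i,j)$ and column index $T(i,j)$ of each variable factor. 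Hence distinct standard bideterminants have distinct leading monomials, and linear independence follows immediately.

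Putting the two parts together, the standard bideterminants are a spanning linearly independent set, hence a basis of $\F[X]$. The only real difficulty is in the first step—making the straightening termination argument precise—and I would follow the original route of Doubilet, Rota, and Stein, using the Plücker relations together with the column-content well-ordering to guarantee that each straightening move is a genuine reduction.
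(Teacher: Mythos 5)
The paper does not prove this theorem; it is cited from Doubilet--Rota--Stein \cite{DRS74}, and the surrounding text only mentions that the spanning half is the straightening law. So the relevant comparison is against the classical argument rather than against a proof in the paper. Your spanning half is the standard approach and is fine in outline (you correctly flag that the real work is verifying that each application of the Pl\"ucker relation produces strictly smaller bitableaux under a suitable well-ordering; the classical sources order primarily by shape, as the paper records in its statement of the straightening law).

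The linear-independence half, however, has a genuine error. With the order you chose ($x_{i,j} > x_{k,\ell}$ when $(i,j) <_{\lex} (k,\ell)$, so $x_{1,1}$ is largest), the leading monomial of each $\det(X_{R,C})$ is indeed the diagonal product $\prod_i x_{r_i,c_i}$, and by multiplicativity of leading monomials the leading monomial of $(S|T)$ is $\prod_{(i,j)\in D_\sigma} x_{S(i,j),T(i,j)}$. But this map is \emph{not} injective on standard bitableaux. Already for $n=m=2$: the shape-$(1,1)$ standard bitableau with $S=T=\bigl(\begin{smallmatrix}1\\2\end{smallmatrix}\bigr)$ gives the bideterminant $x_{1,1}x_{2,2}$ (a single monomial), while the shape-$(2)$ standard bitableau with $S=T=(1\ 2)$ gives $\det X = x_{1,1}x_{2,2}-x_{1,2}x_{2,1}$, whose leading monomial under your order is also $x_{1,1}x_{2,2}$. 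So two distinct standard bideterminants share a leading monomial, and the ``read off $S(i,j)$ and $T(i,j)$ from the monomial'' step fails: from $x_{1,1}x_{2,2}$ one cannot decide the shape. To repair the argument one needs the \emph{antidiagonal} term order (so the leading term of a minor is the antidiagonal product), under which the leading monomials of standard bideterminants are indeed distinct; but that injectivity is itself a nontrivial combinatorial fact, essentially equivalent to the Robinson--Schensted--Knuth correspondence being a bijection. Alternatively one can prove linear independence by a graded dimension count (again via RSK) or by a specialization/evaluation argument, as in the classical references. As written, the independence step does not go through.
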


To show $\F[X]$ is spanned by standard bideterminants, it suffices to express non-standard bideterminants as linear combinations of standard bideterminants.
The fact that this can be done, along with some additional structural information, is known as the straightening law.
For more on the straightening law, including its history and its applications to invariant theory, see the introduction of \textcite{DKR78}.

\begin{theorem}[{\cite{DRS74}, see also \cite{DKR78,CEP80}}] \label{thm:straightening}
	Let $(S|T)(X)$ be a bideterminant of shape $\sigma$.
	Then $(S|T)(X)$ can be expressed as a linear combination
	\[
		(S|T)(X) = \sum_{(A,B)} c_{A,B} (A|B)(X),
	\]
	where the $c_{A,B}$ are integers and the sum ranges over all standard bitableaux $(A,B)$ of shape $\tau$ such that $\tau \ge_\lex \sigma$.
\end{theorem}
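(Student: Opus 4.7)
The plan is to prove the straightening law by induction on a well-founded ordering on bitableaux of entries from $[n]$. Order bitableaux lexicographically: first by shape (placing lex-greater shapes earlier, so the theorem's $\tau \ge_\lex \sigma$ condition corresponds to ``later'' in the order), and within a fixed shape by the ``distance from standard,'' measured by the lex-minimum position $(i,j)$ at which standardness is violated together with the entry appearing there. Because the set of bitableaux of shape $\sigma$ with entries in $[n]$ is finite, this ordering is well-founded.

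First, observe that within each row we may freely permute entries: each row of the bitableau corresponds to a minor that is alternating in its row and column indices, so permuting $S(i,\bullet)$ or $T(i,\bullet)$ changes $(S|T)$ only by a sign; if any row contains a repeated entry then $(S|T) = 0$. So we may assume that every row of $S$ and of $T$ is strictly increasing, reducing to the case where non-standardness, if present, lies in some column. If $(S|T)$ is already standard, there is nothing further to prove.

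Otherwise, suppose standardness fails in $S$ (the case of $T$ is symmetric, obtained by transposing $X$). Pick the first violated column $j$, so that $S(i,j) > S(i+1,j)$ for some consecutive rows $i$ and $i+1$. Consider the list of $\sigma_i + 1$ row indices $S(i,j), S(i,j+1), \ldots, S(i,\sigma_i), S(i+1,1), \ldots, S(i+1,j)$. Any $(\sigma_i+1) \times \sigma_i$ submatrix of $X$ has rank at most $\sigma_i$, which yields a Laplace/Plücker-type alternating relation among the maximal minors of this submatrix. Translating this identity into the bideterminant calculus produces a Garnir relation that rewrites the product of the minors in rows $i$ and $i+1$ of $(S|T)$ as a $\pm 1$-integer combination of products of minors obtained by exchanging nontrivial subsets of the overflowing indices between rows $i$ and $i+1$. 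A direct check shows that each resulting bideterminant either keeps the shape $\sigma$ and is strictly earlier in the induction order (since an entry at column $j$ of row $i$ has been replaced by something smaller), or has strictly lex-greater shape (possible when $\sigma_{i+1} < \sigma_i$, so that the swap can lengthen row $i$).

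By the inductive hypothesis, each resulting bideterminant expands as an integer linear combination of standard bideterminants of shape $\tau \ge_\lex \sigma$; substituting these expansions gives the required expression for $(S|T)$, with integer coefficients inherited from the Garnir relation and the inductive step. The main obstacle is verifying carefully that the Garnir exchange produces only terms strictly earlier in the induction order while keeping shapes $\ge_\lex \sigma$; this bookkeeping, along with formulating the Garnir identity in the exact form needed, is the principal technical content of the argument.
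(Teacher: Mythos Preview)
The paper does not prove this theorem; it is quoted from the literature (the citation to \cite{DRS74}, with \cite{DKR78,CEP80} for alternate treatments), so there is no in-paper argument to compare against. I will comment on your outline on its own terms.

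Your inductive framework---well-founded induction on (shape, position of the first violation), after first sorting each row---is the standard one, and you are right that terms of strictly lex-greater shape genuinely appear. The gap is in the relation you invoke. You assert that the rank of a $(\sigma_i{+}1)\times\sigma_i$ submatrix being at most $\sigma_i$ ``yields a Laplace/Pl\"ucker-type alternating relation among the maximal minors of this submatrix.'' But for a generic $(\sigma_i{+}1)\times\sigma_i$ matrix the $\sigma_i{+}1$ maximal minors are algebraically independent (they are homogeneous coordinates on $\mathbb{P}^{\sigma_i}$), so no such relation exists. Any identity built from that submatrix alone can only involve the row-$i$ minor, whereas straightening must couple the row-$i$ and row-$(i{+}1)$ minors together. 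Relatedly, your explanation that lex-greater shapes arise because ``the swap can lengthen row $i$'' cannot be right: exchanging equal-sized subsets of indices between two rows never changes row lengths.

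The identity actually used (see e.g.\ \cite{DKR78}) Laplace-expands a $(\sigma_i{+}1)\times(\sigma_i{+}1)$ determinant whose rows are your $\sigma_i{+}1$ overflowing row indices and whose columns are $T(i,\bullet)$ augmented by a column drawn from $T(i{+}1,\bullet)$, and then multiplies through by the complementary minor for row $i{+}1$. When the augmenting column already lies in $T(i,\bullet)$ this determinant vanishes and one obtains a pure shape-$\sigma$ relation; when it does not, a genuine minor of size $\sigma_i{+}1$ survives, and \emph{that} is how the lex-greater shapes enter. For instance, with $\sigma=(1,1)$ the Garnir-type sum $x_{2,1}x_{1,2}-x_{1,1}x_{2,2}$ is not zero but equals $-\det X_{[2],[2]}$, producing a shape-$(2)$ term. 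With this corrected relation substituted in, your induction goes through as written.
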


One immediate corollary of this is a characterization of polynomials in the ideal $\detideal{n}{m}{r}$ by their support in the standard bideterminant basis.

\begin{corollary} \label{cor:width of I_r}
	A polynomial $f \in \F[X]$ is an element of the ideal $\detideal{n}{m}{r}$ if and only if $f$ is supported on bideterminants of width at least $r$.
\end{corollary}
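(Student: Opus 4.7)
The plan is to prove both inclusions separately, with the forward inclusion being the substantive one and relying on the straightening law.

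For the reverse inclusion, suppose $f$ lies in the span of standard bideterminants of width at least $r$. Each such bideterminant is a product of minors of $X$, at least one of which, by hypothesis on the width, is an $s \times s$ minor with $s \geq r$. Laplace-expanding this minor along any $r$ of its rows writes it as an integer combination of products of $r \times r$ minors with $(s-r) \times (s-r)$ minors, placing it in $\detideal{n}{m}{r}$. Hence each bideterminant in the support of $f$ lies in $\detideal{n}{m}{r}$, and so does $f$.

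For the forward inclusion, write $f = \sum_i h_i g_i$ where each $g_i$ is an $r \times r$ minor of $X$. Each $g_i$ is itself a standard bideterminant of shape $(r)$, hence of width exactly $r$. Expanding each $h_i$ in the standard bideterminant basis, it suffices to prove that for any standard bitableau $(A,B)$ of shape $\mu$, the product $(A|B)(X) \cdot g_i$ lies in the span of standard bideterminants of width at least $r$. The key observation is that a product of bideterminants is itself a bideterminant: stacking the rows of $(A,B)$ with the single row associated to $g_i$ and then reordering the rows in non-increasing length (which only reorders factors in the corresponding product of determinants) produces a generally non-standard bitableau $(S,T)$ whose shape $\sigma$ is the non-increasing rearrangement of $(\mu_1, \ldots, \mu_k, r)$. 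In particular $\sigma_1 = \max(\mu_1, r) \geq r$. Applying the straightening law expands $(S|T)(X)$ as an integer combination of standard bideterminants of shapes $\tau$ with $\tau \geq_{\lex} \sigma$.

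To finish the forward direction it suffices to note that lexicographic majorization of partitions preserves the width: if $\tau \geq_{\lex} \sigma$, then $\tau_1 \geq \sigma_1$, because $\tau_1 < \sigma_1$ would already force $\tau <_{\lex} \sigma$ in the very first coordinate. Thus every standard bideterminant produced by straightening $(S|T)(X)$ has width at least $\sigma_1 \geq r$, completing the proof. I do not expect any real obstacle here; the substantive content is the two simple but crucial observations that (i) a product of bideterminants is a bideterminant whose shape is the non-increasing rearrangement of the concatenated shapes, and (ii) the lexicographic ordering on shapes, which controls straightening, monotonically dominates the width.
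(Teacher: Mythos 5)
Your proof is correct and matches the argument the paper has in mind (the paper states the corollary as an immediate consequence of \autoref{thm:straightening} and does not spell out a proof). The forward direction via straightening, together with the two observations you flag (a product of bideterminants becomes a bideterminant after re-sorting rows by length, and $\tau \geq_{\lex} \sigma$ forces $\tau_1 \geq \sigma_1$), is exactly right; your reverse direction correctly supplies the extra ingredient---generalized Laplace expansion of an $s \times s$ minor along $r$ of its rows for $s \geq r$---which is needed for the converse containment and is not itself a consequence of the straightening law.
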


\subsection{Pfaffians}

This subsection departs slightly from the setting of the previous subsections.
Let $X$ be a $2n \times 2n$ skew-symmetric matrix of variables. 
That is, the $(i,j)$ entry of $X$ is the variable $x_{i,j}$ and the variables $x_{i,j}$ and $x_{j,i}$ satisfy the relation $x_{i,j} = - x_{j,i}$.
It is well-known that the determinant of $X$ is the square of a polynomial; this square root of the determinant is the \emph{Pfaffian} of $X$.
Formally, one can define the Pfaffian $\Pf(X)$ as
\[
	\Pf(X) = \frac{1}{2^n n!} \sum_{\sigma \in S_{2n}} \sgn(\sigma) \prod_{i=1}^n x_{\sigma(2i-1),\sigma(2i)},
\]
where $S_{2n}$ is the group of all permutations on $[2n] = \set{1,\ldots,2n}$.
Each monomial in the above sum appears $2^n n!$ times, so every monomial in the support of the Pfaffian has a coefficient of $1$ or $-1$.
In particular, the Pfaffian is well-defined even over fields of small characteristic.

As remarked above, we have $\Pf(X)^2 = \det(X)$ when $X$ is a skew-symmetric matrix.
If $X$ is an $m \times m$ skew-symmetric matrix for odd $m$, then $\det(X) = 0$, so we restrict our attention to matrices of even order.
The equation $\Pf(X)^2 = \det(X)$ relates the Pfaffian and determinant of a skew-symmetric matrix.
For general matrices, we can relate Pfaffians and determinants via the following lemma.

\begin{lemma} \label{lem:pfaff orthogonal transform}
	Let $A$ be a $2n \times 2n$ skew-symmetric matrix and let $B$ be an arbitrary $2n \times 2n$ matrix.
	Then $B A B^\top$ is skew-symmetric and $\Pf(B A B^\top) = \det(B) \Pf(A)$.
\end{lemma}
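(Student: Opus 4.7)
The skew-symmetry of $BAB^\top$ is immediate: $(BAB^\top)^\top = B A^\top B^\top = -BAB^\top$ since $A^\top = -A$. The substance of the lemma is the Pfaffian identity, and the plan is a standard two-step argument: first pin down the identity up to sign using $\Pf(X)^2 = \det(X)$, then resolve the sign by specializing $B$.

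For the first step, I would treat the entries of $A$ (subject to $a_{i,j} = -a_{j,i}$) and the entries of $B$ as independent indeterminates over $\mathbb{Z}$, and work in the polynomial ring $R$ they generate; this ring is an integral domain. Since $BAB^\top$ is skew-symmetric, we have $\Pf(BAB^\top)^2 = \det(BAB^\top)$ as an identity in $R$. Applying multiplicativity of the determinant and $\det(B^\top) = \det(B)$, together with $\det(A) = \Pf(A)^2$, gives
\[
    \Pf(BAB^\top)^2 = \det(B)^2 \Pf(A)^2 = \bigl(\det(B)\Pf(A)\bigr)^2.
\]
Because $R$ is an integral domain, this forces $\Pf(BAB^\top) = \eps \cdot \det(B)\Pf(A)$ for some $\eps \in \set{+1,-1}$, where a priori $\eps$ could depend on the relative signs of the two polynomials but not on the entries of $A$ or $B$.

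To resolve the sign, I would specialize to $B = I_{2n}$. Then $BAB^\top = A$ and $\det(B) = 1$, so the identity becomes $\Pf(A) = \eps \Pf(A)$. Since $\Pf(A)$ is a nonzero element of $R$ (for instance, the monomial $a_{1,2}a_{3,4}\cdots a_{2n-1,2n}$ appears with coefficient $1$), we conclude $\eps = 1$. This establishes the identity in $R$, and specializing the indeterminates to the entries of the given matrices $A$ and $B$ over the base field yields the claim.

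The only mild subtlety, and probably the place most likely to trip up a careless write-up, is the justification that the sign $\eps$ does not itself depend on $A$ or $B$; this is precisely why I want to work in the polynomial ring $R$ from the outset, so that $\eps$ is a \emph{single} element of $\set{\pm 1}$ rather than a sign that could vary. Everything else reduces to invoking $\Pf(X)^2 = \det(X)$ and multiplicativity of $\det$, both already in hand.
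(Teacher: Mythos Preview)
Your proof is correct and is the standard argument for this classical identity. The paper does not actually supply a proof of this lemma: it is stated in the preliminaries as a well-known property of the Pfaffian and used without further justification. Your write-up is sound, including the care you take to work over the polynomial ring in generic entries so that the sign $\eps$ is a single global constant rather than something that could vary with $A$ and $B$.
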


We will also make use of the symmetries of the Pfaffian as described in the next lemma.

\begin{lemma} \label{lem:pfaff to det}
	Let $A$ be an $n \times n$ matrix.
	Then
	\[
		\Pf\begin{pmatrix} 0 & A \\ -A^{\top} & 0 \end{pmatrix} = (-1)^{\binom{n}{2}} \det(A).
	\]
\end{lemma}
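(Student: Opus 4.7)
The plan is to apply \autoref{lem:pfaff orthogonal transform} with a carefully chosen conjugating matrix $B$ to reduce the computation to a single ``standard'' skew-symmetric matrix whose Pfaffian can be computed directly from the definition.

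First, observe that both sides of the claimed identity are polynomials in the entries of $A$. Hence it suffices to prove the identity when $A$ is invertible (say, treating the entries of $A$ as formal indeterminates, or arguing by density over the invertible matrices). Assume $A$ is invertible and let $M \coloneqq \begin{pmatrix} 0 & A \\ -A^\top & 0 \end{pmatrix}$. Set
\[
	B \coloneqq \begin{pmatrix} I_n & 0 \\ 0 & A^{-1} \end{pmatrix},
\]
so $\det(B) = \det(A)^{-1}$. A direct block computation gives
\[
	B M B^\top = \begin{pmatrix} I_n & 0 \\ 0 & A^{-1} \end{pmatrix} \begin{pmatrix} 0 & A \\ -A^\top & 0 \end{pmatrix} \begin{pmatrix} I_n & 0 \\ 0 & (A^{-1})^\top \end{pmatrix} = \begin{pmatrix} 0 & I_n \\ -I_n & 0 \end{pmatrix}.
\]
Applying \autoref{lem:pfaff orthogonal transform} yields $\Pf(B M B^\top) = \det(B)\Pf(M) = \det(A)^{-1}\Pf(M)$, so
\[
	\Pf(M) = \det(A) \cdot \Pf\begin{pmatrix} 0 & I_n \\ -I_n & 0 \end{pmatrix}.
\]

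It therefore remains to compute $\Pf\begin{pmatrix} 0 & I_n \\ -I_n & 0 \end{pmatrix}$. This matrix has nonzero entries only at positions $(i, n+i)$ and $(n+i, i)$ for $i \in [n]$, so in the definition
\[
	\Pf(X) = \frac{1}{2^n n!} \sum_{\sigma \in S_{2n}} \sgn(\sigma) \prod_{i=1}^n x_{\sigma(2i-1),\sigma(2i)},
\]
the only surviving terms correspond to the unique perfect matching $\{(i, n+i) : i \in [n]\}$ on $[2n]$. A standard accounting shows the net contribution equals $\sgn(\sigma_0)$, where $\sigma_0$ is the permutation $(1, n+1, 2, n+2, \ldots, n, 2n)$. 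I expect the only delicate step to be this sign computation, but it is a clean inversion count: pairs of positions of the form $(2k, 2\ell - 1)$ with $k < \ell \le n$ contribute $\pi(2k) = n+k > \ell = \pi(2\ell - 1)$ (hence an inversion), while all other position pairs are non-inverted. This gives exactly $\binom{n}{2}$ inversions, so $\sgn(\sigma_0) = (-1)^{\binom{n}{2}}$, and therefore $\Pf\begin{pmatrix} 0 & I_n \\ -I_n & 0 \end{pmatrix} = (-1)^{\binom{n}{2}}$. Combining with the reduction above yields the claimed identity $\Pf(M) = (-1)^{\binom{n}{2}} \det(A)$.
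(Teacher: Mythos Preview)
The paper states this lemma without proof, treating it as a standard identity, so there is no paper argument to compare against. Your overall strategy---reduce to the case $A = I_n$ via \autoref{lem:pfaff orthogonal transform}, then compute the Pfaffian of the standard symplectic form by an inversion count---is sound, and the sign computation for $\sigma_0$ is correct. However, the block computation contains an error.

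With your choice $B = \begin{pmatrix} I_n & 0 \\ 0 & A^{-1} \end{pmatrix}$ one has $B^\top = \begin{pmatrix} I_n & 0 \\ 0 & (A^\top)^{-1} \end{pmatrix}$, and a direct multiplication gives
\[
	B M B^\top = \begin{pmatrix} 0 & A(A^\top)^{-1} \\ -A^{-1}A^\top & 0 \end{pmatrix},
\]
which equals $\begin{pmatrix} 0 & I_n \\ -I_n & 0 \end{pmatrix}$ only when $A$ is symmetric. The fix is to take instead $B = \begin{pmatrix} I_n & 0 \\ 0 & (A^\top)^{-1} \end{pmatrix}$ (or equivalently $B = \begin{pmatrix} A^{-1} & 0 \\ 0 & I_n \end{pmatrix}$): one still has $\det(B) = \det(A)^{-1}$, and now the $(1,2)$ block becomes $A \cdot A^{-1} = I_n$ and the $(2,1)$ block becomes $-(A^\top)^{-1} A^\top = -I_n$, as desired. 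With this correction the rest of your argument goes through unchanged.
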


As with determinants, one can consider the ideal generated by sub-Pfaffians of the same size of a skew-symmetric matrix.
To ensure that the Pfaffian of a submatrix of $X$ is well-defined, we restrict our attention to \emph{principal submatrices}.
Recall that a submatrix $X_{R,C}$ of $X$ is principal if $R = C$.
If $X$ is skew-symmetric, then so is any principal submatrix of $X$.
Throughout this work, we will use $\pfaffideal{2n}{2r}$ to denote the ideal of $\F[X]$ generated by the Pfaffians of the $2r \times 2r$ principal submatrices of $X$.

Much like the case with determinants, one can understand the ideal $\pfaffideal{2n}{2r}$ using an analogous straightening law for Pfaffians.
To do this, we begin by defining the analogues of standard bideterminants for Pfaffian ideals.

\begin{definition}
	Let $T$ be a conjugate semistandard Young tableau of shape $\sigma$ such that every row of $T$ has even length.
	We associate to $T$ the \emph{standard monomial} $[T](X)$, which is a polynomial defined as the product of Pfaffians
	\[
		[T](X) \coloneqq \prod_{i=1}^{\hat{\sigma}_1} \Pf
		\begin{pmatrix}
			x_{T(i,1),T(i,1)} & x_{T(i,1),T(i,2)} & \cdots & x_{T(i,1),T(i,\sigma_i)} \\
			x_{T(i,2),T(i,1)} & x_{T(i,2),T(i,2)} & \cdots & x_{T(i,2),T(i,\sigma_i)} \\
			\vdots & \vdots & \ddots & \vdots \\
			x_{T(i,\sigma_i),T(i,1)} & x_{T(i,\sigma_i),T(i,2)} & \cdots & x_{T(i,\sigma_i),T(i,\sigma_i)}
		\end{pmatrix}.
	\]
	That is, the $i$\ts{th} polynomial in the above product is the Pfaffian of the submatrix of $X$ whose rows and columns are listed in the $i$\ts{th} row of the tableau $T$.
	The \emph{width} of $[T](X)$ is $\sigma_1$, the size of the largest Pfaffian in the above product.
\end{definition}

If we were to extend the above definition to all Young tableaux, it is clear that the resulting set of polynomials spans $\F[X]$, since
\[
	\sbr{\,\young(ij)\,}(X) = \Pf \begin{pmatrix} 0 & x_{i,j} \\ -x_{i,j} & 0 \end{pmatrix} = x_{i,j}.
\]
However, we do not lose much by ignoring these non-standard monomials.
In a manner analogous to the determinantal case, \textcite{CP76} proved that the standard monomials form a basis of $\F[X]$.

\begin{theorem}[{\cite[Theorem 6.5]{CP76}}] \label{thm:pfaff straightening}
	For any commutative ring $R$ with unity, the standard monomials form a basis of $R[X]$.
\end{theorem}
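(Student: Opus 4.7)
The plan is to mirror the two-step structure of the determinantal straightening law of \autoref{thm:straightening}: first prove that the standard monomials span $R[X]$ via an explicit straightening rewrite, and then prove linear independence via an initial-term argument. Because every step can be made to use only $\integers$-coefficients, it suffices to argue over $R = \integers$ and then tensor up to an arbitrary commutative ring.

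For spanning, observe that each variable $x_{i,j}$ itself equals the standard monomial $[\,\young(ij)\,](X)$, so every monomial of $R[X]$ is literally a product of width-$2$ Pfaffians, i.e.\ a (generally non-standard) evaluation $[T](X)$ for some Young tableau $T$ with even row lengths. I would therefore exhibit a Pfaffian rewriting rule that expresses such a non-standard $[T](X)$ as a $\integers$-linear combination of evaluations $[T'](X)$ whose shapes $\tau'$ are strictly lex-larger than $\shape(T)$. The key input is the Pfaffian analogue of the Pl\"ucker relations: for index sequences $I$ and $J$ indexing two rows of $T$ that violate the standardness condition, one has a quadratic identity expressing $\Pf(X_{I,I})\Pf(X_{J,J})$ as a signed sum of products $\Pf(X_{I',I'})\Pf(X_{J',J'})$ obtained by shuffling a prefix of $J$ into $I$. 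Because the resulting products have shape strictly larger in the lex order on partitions, and because shapes are bounded by the total multidegree of the product, iterating the rewrite terminates in a linear combination of standard monomials.

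For linear independence, I would use an initial-term argument under a diagonal term order on $R[X]$: after fixing a suitable lex order on the variables $x_{i,j}$, the leading monomial of each Pfaffian $\Pf(X_{I,I})$ with $I = (i_1 < i_2 < \cdots < i_{2k})$ is the squarefree term $x_{i_1,i_2} x_{i_3,i_4} \cdots x_{i_{2k-1},i_{2k}}$. Multiplying across the rows of a standard tableau $T$ shows that the leading monomial of $[T](X)$ encodes $T$ itself (the entries in row $i$ read off successive pairs in the matching), so distinct standard monomials have distinct leading monomials and are consequently linearly independent. This argument is compatible with the multigrading from the preliminaries, so it immediately yields linear independence within each multigraded component.

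The main obstacle is pinning down the correct Pfaffian straightening identity and verifying that its rewrite increases the shape in the lex order. Unlike the determinantal case of \autoref{thm:straightening}, where the classical Pl\"ucker relations on row and column index sets do the job, the Pfaffian case has only a single symmetric index set per factor, so one must track signs coming from shuffles of one index set into another and check the parity of the resulting permutation. Once this relation is in hand, termination of straightening and the distinct-leading-term linear independence argument combine to give the basis statement, and because all coefficients produced are integers, the conclusion passes to an arbitrary commutative ring $R$ by base change.
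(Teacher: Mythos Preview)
The paper does not give its own proof of this theorem; it is quoted from \cite{CP76} and accompanied only by a one-sentence summary (``\textcite{CP76} showed that the standard monomials span $R[X]$ and that any non-standard monomial can be written as a linear combination of standard monomials,'' via the straightening law of \cite[Lemmas 6.1 and 6.2]{CP76}). So there is no in-paper argument to compare against beyond that sketch, and your spanning step matches it: rewrite non-standard products of sub-Pfaffians using the quadratic Pfaffian Pl\"ucker-type relations, with the shape strictly increasing in the lex order so the process terminates. That is exactly the content of the cited lemmas, and your observation that integral coefficients allow base change to arbitrary $R$ is the standard way to get the ring-general statement.

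Where you diverge from (or rather, go beyond) the paper's summary is in linear independence. The paper says nothing about how \cite{CP76} handles it; your proposal uses a leading-term argument under a diagonal term order, so that $\LM(\Pf(X_{I,I})) = x_{i_1,i_2}x_{i_3,i_4}\cdots x_{i_{2k-1},i_{2k}}$ and hence $\LM([T])$ records enough of $T$ to distinguish standard tableaux. This is a correct and well-known route (it is the Gr\"obner-basis viewpoint on Pfaffian ideals, as in Herzog--Trung), though the injectivity of $T \mapsto \LM([T])$ on standard tableaux is not entirely obvious and deserves a line of justification: from the monomial one recovers the multiset of pairs $\{(T(i,2j-1),T(i,2j))\}$, and the semistandard column condition is what forces a unique reassembly into rows. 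This is arguably more elementary than the filtration/representation-theoretic arguments sometimes used for independence, and it has the bonus of immediately giving the initial ideal, which is more than the theorem asks for.

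The only soft spot is the one you flag yourself: you have not written down the precise Pfaffian straightening relation or verified the sign and shape-increase claims. Those are genuine computations (the single symmetric index set does make the sign bookkeeping different from the determinantal case), but they are exactly \cite[Lemmas 6.1 and 6.2]{CP76}, so this is a matter of citing or reproducing a known identity rather than a gap in the strategy.
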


To prove this, \textcite{CP76} showed that the standard monomials span $R[X]$ and that any non-standard monomial can be written as a linear combination of standard monomials.
The expression of a non-standard monomial as a linear combination of standard monomials is, as in the determinantal case, known as the \emph{straightening law}.
Using the straightening law of \textcite[Lemmas 6.1 and 6.2]{CP76}, one can show (following \textcite[Section 8]{DRS74}) that a non-standard monomial of width $2r$ is supported only on standard monomials of width at least $2r$.
A straightforward corollary of this is that every polynomial in the ideal generated by the Pfaffians of the principal $2r \times 2r$ submatrices of a matrix $X$ is supported on standard monomials of width at least $2r$.

\begin{corollary} \label{cor:pfaff ideal width}
	Let $X$ be a generic $2n \times 2n$ skew-symmetric matrix.
	Let $\pfaffideal{2n}{2r}$ be the ideal generated by the Pfaffians of the $2r \times 2r$ principal submatrices of $X$.
	Then any $f \in \pfaffideal{2n}{2r}$ is supported on standard monomials of width at least $2r$.
\end{corollary}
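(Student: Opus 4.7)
The plan is to reduce the claim to a statement about how the straightening law acts on non-standard monomials, then invoke the version of the straightening law from \cite{CP76} to finish.

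First I would describe an arbitrary element $f \in \pfaffideal{2n}{2r}$ as a finite sum
\[
f = \sum_{I} g_I \cdot \Pf(X_{I,I}),
\]
where $I$ ranges over $2r$-subsets of $[2n]$ and each $g_I \in \F[X]$. Using \autoref{thm:pfaff straightening}, each $g_I$ can be expanded in the standard monomial basis, so $f$ becomes a $\F$-linear combination of products of the form $[T](X) \cdot \Pf(X_{I,I})$ where $T$ is a conjugate semistandard tableau of some shape $\tau$ (with rows of even length). Such a product is itself obtained from a (possibly non-standard) Young tableau $T'$ of some shape $\sigma'$, gotten by appending to $T$ an additional row of length $2r$ listing the elements of $I$ (reordered if needed so as to make the row length structure into a partition). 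Crucially, this shape $\sigma'$ satisfies $\sigma'_1 \ge 2r$, so the corresponding monomial has width at least $2r$.

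Next, I would invoke the straightening law of \cite[Lemmas 6.1 and 6.2]{CP76}, which expresses any non-standard monomial as an integer linear combination of standard monomials. The key quantitative claim I need is the following width-monotonicity property, analogous to the determinantal statement recorded in \autoref{thm:straightening}: the straightening of a monomial of shape $\sigma'$ produces only standard monomials of shapes $\tau'$ with $\tau' \ge_{\lex} \sigma'$. Since lex-dominance of shapes forces the first part to be at least as large, this will imply $\tau'_1 \ge \sigma'_1 \ge 2r$, i.e.\ the standard expansion has width at least $2r$. Once this is established, applying the expansion term-by-term to the sum obtained in the previous paragraph shows that $f$ is supported on standard monomials of width at least $2r$, as required.

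The main obstacle is verifying the lex-monotonicity of the Pfaffian straightening law. For the determinantal straightening law this is the classical result of \cite{DRS74}, and the expected approach is to follow the argument in \cite[Section 8]{DRS74} but replace each determinantal straightening relation with its Pfaffian analogue from \cite[Lemmas 6.1 and 6.2]{CP76}. Concretely, I would examine the basic Pfaffian syzygy that rewrites a product of two Pfaffians $\Pf(X_{I,I}) \cdot \Pf(X_{J,J})$ where a ``column-violation'' occurs in the corresponding bitableau, express the result as a signed sum of Pfaffian products obtained by swapping entries between $I$ and $J$, and check that each term on the right-hand side corresponds to a tableau whose shape is lex-at-least that of the original. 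Iterating this rewriting (and using a suitable termination argument, e.g.\ an induction on the lex order of the shape together with a secondary monomial-based measure as in \cite{DRS74}) would complete the proof. The bookkeeping is routine but somewhat delicate, which is why the paper delegates it to the cited references.
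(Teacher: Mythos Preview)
Your proposal is correct and matches the paper's approach essentially verbatim: the paper does not give a full proof but states that one uses the straightening law of \cite[Lemmas 6.1 and 6.2]{CP76}, following \cite[Section 8]{DRS74}, to show that a non-standard monomial of width $2r$ straightens to standard monomials of width at least $2r$, and then deduces the corollary exactly as you describe. Your write-up simply spells out the decomposition and the width-monotonicity step that the paper leaves implicit.
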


\subsection{Monomial Orders}

Our use of border complexity stems from the need to construct circuits that compute only a particular subset of the monomials appearing in the support of a polynomial $f$.
To do this, we make use of monomial orders and leading monomials, which we now define.

\begin{definition}
	A \emph{monomial order} $\prec$ is a total order on the monomials of $\F[\vec{x}]$ which satisfies
	\begin{enumerate}
		\item
			$1 \prec \vec{x}^{\vec{a}}$ for all nonzero $\vec{a} \in \naturals^n$, and
		\item
			if $\vec{x}^{\vec{a}} \prec \vec{x}^{\vec{b}}$, then $\vec{x}^{\vec{a} + \vec{c}} \prec \vec{x}^{\vec{b} + \vec{c}}$ for all $\vec{a}, \vec{b}, \vec{c} \in \naturals^n$. \qedhere
	\end{enumerate}
\end{definition}

\begin{definition}
	Let $\prec$ be a monomial order and let $f(\vec{x}) \in \F[\vec{x}]$ be a nonzero polynomial.
	The \emph{leading monomial of $f$ with respect to $\prec$}, written $\LM_\prec(f)$, is the $\prec$-maximal monomial appearing in the support of $f$.
	The \emph{leading coefficient of $f$ with respect to $\prec$}, denoted $\LC_\prec(f)$, is the coefficient of $\LM_\prec(f)$ when $f$ is written as a sum of monomials.
\end{definition}

We may write $\LM(f)$ and $\LC(f)$ for the leading monomial and coefficient of $f$, respectively, if the order $\prec$ is clear from context.
A useful property of leading monomials is that taking the leading monomial commutes with products of polynomials.

\begin{lemma} \label{lem:LM commutes}
	Let $\prec$ be a monomial order and let $f, g \in \F[\vec{x}]$ be nonzero polynomials.
	Then $\LM_{\prec}(fg) = \LM_\prec(f) \cdot \LM_\prec(g)$.
\end{lemma}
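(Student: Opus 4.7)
The plan is to prove this directly from the two defining axioms of a monomial order, namely the multiplicative compatibility and totality. The key observation is that in the expansion of $fg$ as a sum of monomial products, no term can exceed $\LM_\prec(f) \cdot \LM_\prec(g)$, and that particular product arises with a nonzero coefficient in exactly one way.

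First I would write $f = c_0 \vec{x}^{\vec{a}_0} + \sum_{i \ge 1} c_i \vec{x}^{\vec{a}_i}$ and $g = d_0 \vec{x}^{\vec{b}_0} + \sum_{j \ge 1} d_j \vec{x}^{\vec{b}_j}$, where $\vec{x}^{\vec{a}_0} = \LM_\prec(f)$, $\vec{x}^{\vec{b}_0} = \LM_\prec(g)$, $c_0 = \LC_\prec(f)$, $d_0 = \LC_\prec(g)$, and by definition of leading monomial we have $\vec{x}^{\vec{a}_i} \prec \vec{x}^{\vec{a}_0}$ for $i \ge 1$ and $\vec{x}^{\vec{b}_j} \prec \vec{x}^{\vec{b}_0}$ for $j \ge 1$. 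Expanding, $fg = \sum_{i,j} c_i d_j \vec{x}^{\vec{a}_i + \vec{b}_j}$.

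Next I would invoke the multiplicative compatibility axiom twice to compare each monomial $\vec{x}^{\vec{a}_i + \vec{b}_j}$ to $\vec{x}^{\vec{a}_0 + \vec{b}_0}$. For any $(i,j) \neq (0,0)$, at least one of $\vec{x}^{\vec{a}_i} \prec \vec{x}^{\vec{a}_0}$ or $\vec{x}^{\vec{b}_j} \prec \vec{x}^{\vec{b}_0}$ holds strictly, while the other holds weakly. Applying compatibility (adding $\vec{b}_j$ to both sides of the first inequality, then $\vec{a}_0$ to both sides of the second, or vice versa) and the transitivity of the total order $\prec$, one obtains $\vec{x}^{\vec{a}_i + \vec{b}_j} \prec \vec{x}^{\vec{a}_0 + \vec{b}_0}$. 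Thus every monomial appearing in the expansion, other than possibly $\vec{x}^{\vec{a}_0 + \vec{b}_0}$, is strictly smaller.

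It remains to check that $\vec{x}^{\vec{a}_0 + \vec{b}_0}$ actually appears with a nonzero coefficient after collecting like terms. The only pair $(i,j)$ whose product equals $\vec{x}^{\vec{a}_0 + \vec{b}_0}$ is $(0,0)$ itself: if $\vec{a}_i + \vec{b}_j = \vec{a}_0 + \vec{b}_0$ with $(i,j) \neq (0,0)$, the argument of the previous paragraph would give $\vec{x}^{\vec{a}_0 + \vec{b}_0} \prec \vec{x}^{\vec{a}_0 + \vec{b}_0}$, contradicting the irreflexivity of the strict total order. Hence the coefficient of $\vec{x}^{\vec{a}_0 + \vec{b}_0}$ in $fg$ is exactly $c_0 d_0$, which is nonzero because $\F$ is a field (or at least an integral domain) and both factors are nonzero. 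Therefore $\LM_\prec(fg) = \vec{x}^{\vec{a}_0 + \vec{b}_0} = \LM_\prec(f) \cdot \LM_\prec(g)$. There is no serious obstacle here; the only mild care needed is in the case analysis that justifies the strict inequality when exactly one of the factor comparisons is strict, which is where totality of $\prec$ (so that the non-strict comparison is either strict or an equality) enters.
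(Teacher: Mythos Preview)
Your proof is correct and is the standard argument. The paper itself states this lemma without proof, treating it as a well-known property of monomial orders, so there is no paper proof to compare against; your direct argument from the two axioms of a monomial order is exactly what one would supply if asked to fill in the details.
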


We will primarily be interested in lexicographic orders, which are a special case of weight orders.
To specify a weight order, we are given some weight vector $\vec{u} \in \reals^n$, and we order two monomials $\vec{x}^{\vec{a}}$ and $\vec{x}^{\vec{b}}$ by comparing the inner products $\abr{\vec{u},\vec{a}}$ and $\abr{\vec{u},\vec{b}}$.
To obtain a total order on the set of monomials, ties must be broken.
This is done by choosing another weight vector $\vec{w} \in \reals^n$ and breaking ties by comparing $\abr{\vec{w},\vec{a}}$ and $\abr{\vec{w},\vec{b}}$.
If ties are still possible, we continue choosing new weight vectors until all ties are broken.
It turns out that \emph{every} monomial order can be obtained from such a collection of weight vectors.

\begin{theorem}[{\cite[Theorem 2.5]{Robbiano86}, see \cite{Robbiano85} for a proof}] \label{thm:weight order}
	Let $\prec$ be a monomial ordering on $\F[\vec{x}]$.
	Denote by $\abr{\bullet,\bullet}$ the standard inner product on $\reals^n$.
	There is an integer $s \in [n]$ and vectors $\vec{u}^{(1)}, \ldots, \vec{u}^{(s)} \in \reals^n$ such that $\vec{x}^{\vec{a}} \prec \vec{x}^{\vec{b}}$ if and only if there is some $j \in [s]$ such that 
	\begin{enumerate}
		\item
			$\abr{\vec{a},\vec{u}^{(i)}} = \abr{\vec{b},\vec{u}^{(i)}}$ for all $i < j$, and 
		\item
			$\abr{\vec{a},\vec{u}^{(j)}} < \abr{\vec{b},\vec{u}^{(j)}}$.
	\end{enumerate}
\end{theorem}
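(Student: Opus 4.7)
The plan is to reduce the statement to a theorem about orders on a finite-dimensional $\rationals$-vector space that are compatible with addition, and then peel off one weight vector at a time using a hyperplane separation argument, decreasing the effective dimension each time.

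The first step is to translate the combinatorial datum of a monomial order $\prec$ into a purely linear-algebraic one. Identifying the monomial $\vec{x}^{\vec{a}}$ with its exponent vector $\vec{a} \in \naturals^n$, the two monomial-order axioms say that $\prec$ is a total order on $\naturals^n$ with $\vec{0}$ minimal and translation-invariant under addition. Translation invariance lets us extend $\prec$ to $\integers^n$ by declaring $\vec{a} \prec \vec{b}$ iff $\vec{a}+\vec{c} \prec \vec{b}+\vec{c}$ for any (some) $\vec{c} \in \naturals^n$ making both sides nonnegative; this is well-defined and still translation-invariant. A further extension to $\rationals^n$ is obtained by scaling: $\vec{a}/d \prec \vec{b}/d$ iff $\vec{a} \prec \vec{b}$ for positive integer $d$. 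The result is a translation-invariant, $\rationals_{>0}$-scale-invariant total order on $\rationals^n$ extending $\prec$.

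The second step is the geometric heart of the argument. Let $P \coloneqq \set{\vec{v} \in \rationals^n : \vec{v} \succ \vec{0}}$ be the positive cone. It is closed under addition and under multiplication by positive rationals, and $P \cap (-P) = \emptyset$, so $P$ and $-P$ are disjoint convex cones in $\reals^n$ (after taking rational closures). Hence by a finite-dimensional separation theorem there exists $\vec{u}^{(1)} \in \reals^n$ with $\abr{\vec{u}^{(1)}, \vec{v}} \ge 0$ for all $\vec{v} \in P$, and not identically zero on $P$. By translation invariance this means $\vec{a} \prec \vec{b} \implies \abr{\vec{u}^{(1)}, \vec{a}} \le \abr{\vec{u}^{(1)}, \vec{b}}$, so $\vec{u}^{(1)}$ refines a piece of the order. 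Let $H_1 \coloneqq \set{\vec{v} : \abr{\vec{u}^{(1)}, \vec{v}} = 0}$; on $H_1 \cap \rationals^n$ the restricted order is still translation-invariant, scale-invariant, and total, so it is a ``generalized monomial order'' on a $\rationals$-subspace of strictly smaller dimension. Iterating the separation argument on this subspace yields $\vec{u}^{(2)}$, and so on; the process terminates in at most $n$ steps because each step decreases the dimension of the ambient subspace by at least one.

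Finally I would assemble the pieces: the vectors $\vec{u}^{(1)}, \ldots, \vec{u}^{(s)}$ produced by this inductive peeling give exactly the lexicographic-by-inner-products description in the statement, because once the first $j{-}1$ inner products agree, the comparison takes place in the subspace $H_1 \cap \cdots \cap H_{j-1}$ where $\vec{u}^{(j)}$ is the separating functional for the restricted order. The main obstacle, and the only delicate point, is the separation step: one must verify that $P$ and $-P$ can be separated by a \emph{rational} (or real) linear functional after passing from $\rationals^n$ to $\reals^n$, which requires a small argument that the closure of $P$ in $\reals^n$ remains convex and does not meet the closure of $-P$ except at $\vec{0}$. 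Given that, induction on the dimension of the ambient $\rationals$-subspace handles the rest, and the bound $s \le n$ is automatic from the dimension drop at each stage.
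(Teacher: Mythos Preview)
The paper does not prove this theorem; it is quoted from \cite{Robbiano86} with a pointer to \cite{Robbiano85} for the argument, so there is no in-paper proof to compare against. What you have sketched is essentially Robbiano's original approach: extend the order from $\naturals^n$ to a compatible total order on the $\rationals$-vector space $\rationals^n$, peel off a separating linear functional $\vec{u}^{(1)}$ for the positive cone, and recurse on the $\rationals$-subspace on which $\vec{u}^{(1)}$ vanishes.

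Your identification of the delicate point is accurate, and it is worth being explicit about what has to be checked there. Writing $P=\{\vec v\in\rationals^n:\vec v\succ\vec 0\}$, the separating functional exists and is nonzero precisely when the real closure $\overline P\subseteq\reals^n$ is not all of $\reals^n$; equivalently, the dual cone $P^\ast$ is nonzero. This is not automatic from $P\cap(-P)=\emptyset$ over $\rationals$, since the real closures $\overline P$ and $\overline{-P}$ can overlap in a nontrivial subspace (already for the lexicographic order on $\rationals^2$ one has $(0,1)\in\overline P\cap\overline{-P}$). The standard way to close this gap is an Archimedean-class argument: one shows that elements of $P$ lying in strictly increasing Archimedean classes are $\rationals$-linearly independent, so there are at most $n$ such classes, and the top class furnishes a well-defined real-valued ``leading'' functional that is nonnegative on all of $P$ and strictly positive somewhere. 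With that lemma in hand, your induction on the $\rationals$-dimension of the remaining subspace goes through exactly as you describe, and the bound $s\le n$ follows because each $\vec u^{(j)}$, being nonzero on the current rational subspace, cuts its $\rationals$-dimension by at least one.
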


Our focus will be on monomial orders specified by integral weight vectors, which includes all lexicographic orders.

\begin{fact}
	Any lexicographic monomial ordering can be specified by a collection of integral weight vectors.
\end{fact}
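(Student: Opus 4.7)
The plan is to exhibit an explicit integral weight vector sequence realizing a given lexicographic order and verify it fits the framework of \autoref{thm:weight order}. A lexicographic order on $\F[x_1,\ldots,x_n]$ is specified by a total ordering of the variables; after relabeling, we may assume this ordering is $x_1 > x_2 > \cdots > x_n$, so that $\vec{x}^{\vec a} \prec_{\lex} \vec{x}^{\vec b}$ iff there exists a smallest index $j$ with $a_j \neq b_j$ and at this index $a_j < b_j$.

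The natural candidate is to take $s = n$ and $\vec{u}^{(i)} = \vec{e}_i$ for $i = 1, \ldots, n$, where $\vec{e}_i \in \naturals^n$ is the $i$\ts{th} standard basis vector. These vectors are integral (in fact, $\set{0,1}$-valued).

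To verify this realizes $\prec_{\lex}$ in the sense of \autoref{thm:weight order}, observe that $\abr{\vec a, \vec{e}_i} = a_i$ and $\abr{\vec b, \vec{e}_i} = b_i$. Thus the conditions ``$\abr{\vec a, \vec{u}^{(i)}} = \abr{\vec b, \vec{u}^{(i)}}$ for all $i < j$'' and ``$\abr{\vec a, \vec{u}^{(j)}} < \abr{\vec b, \vec{u}^{(j)}}$'' translate exactly into ``$a_i = b_i$ for $i < j$'' and ``$a_j < b_j$'', which is precisely the defining condition of $\prec_\lex$. Hence the weight vectors $\vec{e}_1, \ldots, \vec{e}_n$ specify the lexicographic order.

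There is no real obstacle here: the claim is essentially an unpacking of definitions, and the only minor subtlety is remembering that the variables in the lex order may have been permuted, which is handled by simply permuting the weight vectors accordingly (i.e., using $\vec{e}_{\pi(1)}, \ldots, \vec{e}_{\pi(n)}$ for the permutation $\pi$ that realizes the given ordering on variables). Since all weight vectors chosen are integral, the claim follows.
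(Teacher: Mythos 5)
Your proof is correct. The paper states this as an unproved \emph{Fact}, presumably regarding it as immediate, and your argument supplies the standard verification: after permuting variables, the standard basis vectors $\vec{e}_1,\ldots,\vec{e}_n$ realize the lexicographic order in the sense of \autoref{thm:weight order}, since $\abr{\vec{a},\vec{e}_i} = a_i$ turns the theorem's conditions into exactly the defining condition of $\prec_{\lex}$.
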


Let $f(\vec{x}) \in R[\vec{x}]$ be a polynomial over a commutative ring $R$ and let $\prec$ be a monomial order that corresponds to a collection of integral weights.
It will be useful later on to find an assignment $x_i \mapsto \eps^{d_i}$ of the variables to powers of $\eps$ such that $f(\vec{x})$ evaluates to $\eps^m \LC_\prec(f) + O(\eps^{m+1})$ for some integer $m$.
As a first step, we record as a lemma an argument of \textcite[Example 2.2]{Burgisser04} on degenerating a polynomial to a face of its Newton polytope.

\begin{lemma}[{\cite[Example 2.2]{Burgisser04}}] \label{lem:single newton degeneration}
	Let $R$ be a commutative ring and let $f \in R[\vec{x}]$ be given by
	\[
		f(\vec{x}) = \sum_{\vec{a} \in \supp(f)} \alpha_{\vec{a}} \vec{x}^{\vec{a}}.
	\]
	Let $\vec{u} \in \integers^n$, let $\lambda = \max_{\vec{a} \in \supp(f)} \abr{\vec{a},\vec{u}}$, and let $H = \set{\vec{a} \in \supp(f) : \abr{\vec{a},\vec{u}} = \lambda}$.
	Then 
	\[
		\eps^\lambda f(\eps^{-u_1} x_1,\ldots,\eps^{-u_n}x_n) = \sum_{\vec{a} \in \supp(f) \cap H} \alpha_{\vec{a}}\vec{x}^{\vec{a}} + O(\eps).
	\]
\end{lemma}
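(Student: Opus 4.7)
The plan is to carry out the substitution term-by-term and then read off the claim from the definitions of $\lambda$ and $H$. Concretely, for each monomial $\vec{x}^{\vec{a}}$ with $\vec{a} \in \supp(f)$, the substitution $x_i \mapsto \eps^{-u_i} x_i$ produces
\[
	(\eps^{-u_1} x_1)^{a_1} \cdots (\eps^{-u_n} x_n)^{a_n} = \eps^{-\abr{\vec{a},\vec{u}}} \vec{x}^{\vec{a}}.
\]
So first I would write
\[
	\eps^\lambda f(\eps^{-u_1} x_1, \ldots, \eps^{-u_n} x_n) = \sum_{\vec{a} \in \supp(f)} \alpha_{\vec{a}} \eps^{\lambda - \abr{\vec{a},\vec{u}}} \vec{x}^{\vec{a}}.
\]

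Next, I would partition the sum according to whether $\vec{a} \in H$ or not. By the definition of $\lambda$ as the maximum of $\abr{\vec{a},\vec{u}}$ over $\supp(f)$, each exponent $\lambda - \abr{\vec{a},\vec{u}}$ is a nonnegative integer. Moreover, by the definition of $H$, this exponent equals $0$ precisely when $\vec{a} \in H$, and is at least $1$ otherwise (here it is important that $\vec{u} \in \integers^n$, so the exponents of $\eps$ remain integral and the gap between $0$ and the next possible value is $1$). Thus the terms indexed by $H$ contribute $\sum_{\vec{a} \in \supp(f) \cap H} \alpha_{\vec{a}} \vec{x}^{\vec{a}}$, while all remaining terms carry a factor of $\eps^{k}$ for some $k \ge 1$, hence collectively lie in $\eps \cdot R[\eps][\vec{x}] \subseteq R\llb \eps \rrb [\vec{x}]$ and so contribute an $O(\eps)$ error.

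I do not expect any real obstacle: the argument is a one-line bookkeeping calculation once the substitution is written out. The only point of care is that integrality of $\vec{u}$ is used to guarantee that ``not maximal'' translates into a power of $\eps$ at least $\eps^1$, which is what makes the error term a genuine $O(\eps)$ rather than something involving fractional powers.
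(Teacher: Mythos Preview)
Your proposal is correct and is exactly the standard computation one would expect; the paper itself does not supply a proof of this lemma, merely citing it to \cite{Burgisser04}, so there is nothing further to compare.
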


One can iteratively apply this lemma, further restricting the monomials of $f$ to have exponents that lie in the intersection of multiple hyperplanes.

\begin{lemma} \label{lem:multiple newton degeneration}
	Let $R$ be a commutative ring and let $f \in R[\vec{x}]$ be given by
	\[
		f(\vec{x}) = \sum_{\vec{a} \in \supp(f)} \alpha_{\vec{a}} \vec{x}^{\vec{a}}.
	\]
	Let $\vec{u}^{(1)},\ldots,\vec{u}^{(k)} \in \reals^n$ be vectors.
	For each $i \in [k]$, let
	\begin{align*}
		\lambda_i &\coloneqq \max_{\vec{a} \in \supp(f) \cap H_1 \cap \cdots \cap H_{i-1}} \abr{\vec{a},\vec{u}^{(i)}} \\
		H_i &\coloneqq \Set{\vec{a} \in \supp(f) \cap H_1 \cap \cdots \cap H_{i-1} : \abr{\vec{a},\vec{u}^{(i)}} = \lambda}.
	\end{align*}
	Then there are integers $d_1,\ldots,d_n$ and $m$ such that
	\[
		\eps^m f(\eps^{d_1} x_1, \ldots, \eps^{d_n} x_n) = \sum_{\vec{a} \in \supp(f) \cap H_1 \cap \cdots \cap H_k} \alpha_{\vec{a}} \vec{x}^{\vec{a}} + O(\eps).
	\]
\end{lemma}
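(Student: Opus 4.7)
The plan is to reduce the statement to a single application of \autoref{lem:single newton degeneration} by combining the $k$ weight vectors into one integer vector whose maximizing face is exactly $\supp(f) \cap H_1 \cap \cdots \cap H_k$.

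First, I would reduce to the case where each $\vec{u}^{(i)} \in \integers^n$. Since $\supp(f)$ is finite, the argmax defining each $H_i$ is preserved under any perturbation of $\vec{u}^{(i)}$ that is strictly smaller than the (positive) gap between the largest and second-largest value of $\abr{\vec{a}, \vec{u}^{(i)}}$ on $\supp(f) \cap H_1 \cap \cdots \cap H_{i-1}$. Processing $i = 1, 2, \ldots, k$ in order and perturbing each $\vec{u}^{(i)}$ by a small rational vector, I obtain rational weights inducing the same $H_1, \ldots, H_k$; clearing denominators, I may assume $\vec{u}^{(i)} \in \integers^n$.

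Next, I would merge these integer vectors into a single integer weight vector. For a sufficiently large positive integer $M$, set
\[
	\vec{u} \coloneqq \sum_{i=1}^k M^{k-i}\, \vec{u}^{(i)} \in \integers^n.
\]
For any $\vec{a}, \vec{a}' \in \supp(f)$ whose tuples $(\abr{\vec{a}, \vec{u}^{(i)}})_{i=1}^k$ and $(\abr{\vec{a}', \vec{u}^{(i)}})_{i=1}^k$ first differ at coordinate $j$, the contribution $M^{k-j}(\abr{\vec{a}, \vec{u}^{(j)}} - \abr{\vec{a}', \vec{u}^{(j)}})$ is a nonzero integer of absolute value at least $M^{k-j}$, which dominates the sum of all later contributions once $M$ exceeds twice the maximum of $\Abs{\abr{\vec{a}'', \vec{u}^{(i)}}}$ over $i \in [k]$ and $\vec{a}'' \in \supp(f)$. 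Consequently, $\abr{\vec{a}, \vec{u}}$ is maximized over $\supp(f)$ exactly at those $\vec{a}$ whose tuples are lexicographically maximal, which by the definition of the $H_i$ is precisely $\supp(f) \cap H_1 \cap \cdots \cap H_k$.

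Applying \autoref{lem:single newton degeneration} to $\vec{u}$ then yields integer exponents $d_i = -u_i$ and integer $m = \max_{\vec{a} \in \supp(f)} \abr{\vec{a}, \vec{u}}$ satisfying the claimed identity. The main subtlety is the first step: perturbing $\vec{u}^{(i)}$ must not inadvertently alter any $H_j$ for $j < i$ already fixed. Processing the indices in order and keeping each perturbation below the relevant gap handles this bookkeeping, and the remainder of the proof is immediate from the $M$-adic combination and \autoref{lem:single newton degeneration}.
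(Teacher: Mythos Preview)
Your Step~1 has a gap: a small perturbation of $\vec{u}^{(i)}$ need not preserve the argmax $H_i$ as a \emph{set}. Staying below the gap between the largest and second-largest values only guarantees that no element outside $H_i$ overtakes an element of $H_i$; it does nothing to preserve the ties \emph{within} $H_i$. If $\vec{a}, \vec{a}' \in H_i$ are distinct, then after replacing $\vec{u}^{(i)}$ by $\vec{u}^{(i)} + \vec{\delta}$ we have $\abr{\vec{a}, \vec{u}^{(i)} + \vec{\delta}} - \abr{\vec{a}', \vec{u}^{(i)} + \vec{\delta}} = \abr{\vec{a} - \vec{a}', \vec{\delta}}$, which is generically nonzero, so the new argmax is typically a proper subset of $H_i$ and the conclusion (which refers to the original $H_i$) no longer matches. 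The fix is to choose the rational approximation inside the orthogonal complement of $\opspan\{\vec{a} - \vec{a}' : \vec{a}, \vec{a}' \in H_i\}$, which is a rational subspace since the $\vec{a}$'s are integer vectors; then the ties within $H_i$ are preserved by construction. (In fact the paper's own proof tacitly assumes integer $\vec{u}^{(i)}$, since it invokes \autoref{lem:single newton degeneration} directly on $\vec{u}^{(k)}$; the $\reals^n$ in the statement appears to be an oversight, and every application in the paper uses integral weights.)

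Once the weights are integers, your $M$-adic combination $\vec{u} = \sum_i M^{k-i}\, \vec{u}^{(i)}$ followed by a single call to \autoref{lem:single newton degeneration} is correct and is a genuinely different route from the paper's. The paper instead inducts on $k$: it uses the inductive hypothesis to obtain a substitution in $\eps$ isolating $\supp(f) \cap H_1 \cap \cdots \cap H_{k-1}$ up to an $O(\eps)$ error $\eps\, g(\vec{x},\eps)$, then applies \autoref{lem:single newton degeneration} in a fresh variable $\delta$ to peel off $H_k$, and finally merges the two parameters via $\eps \mapsto \eps^{M+1}$, $\delta \mapsto \eps$ with $M$ large enough to absorb the negative powers of $\delta$ appearing in the error term. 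Your approach is more elementary and avoids the two-parameter bookkeeping; the paper's approach stays closer to the semicontinuity machinery (\autoref{lem:semicontinuity}) used elsewhere in the paper.
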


\begin{proof}
	We proceed by induction on $k$, noting that the case of $k=1$ exactly corresponds to \autoref{lem:single newton degeneration}.
	When $k \ge 2$, by induction we have integers $d_1',\ldots,d_n'$ and $m'$ such that
	\[
		\eps^{m'} f(\eps^{d'_1} x_1,\ldots, \eps^{d_n'} x_n) = \sum_{\vec{a} \in \supp(f) \cap H_1 \cap \cdots \cap H_{k-1}} \alpha_{\vec{a}} \vec{x}^{\vec{a}} + \eps \cdot g(\vec{x},\eps),
	\]
	where $g(\vec{x},\eps) \in \F[\eps][\vec{x}]$.
	By \autoref{lem:single newton degeneration}, we have
	\begin{multline*}
		\delta^{\lambda_k} \eps^{m'} f(\eps^{d'_1} \delta^{-\vec{u}^{(k)}_1} x_1,\ldots, \eps^{d_n'} \delta^{-\vec{u}^{(k)}_n} x_n) \\ = \sum_{\vec{a} \in \supp(f) \cap H_1 \cap \cdots \cap H_{k}} \alpha_{\vec{a}} \vec{x}^{\vec{a}} + \delta^{\lambda_k} \eps \cdot g(\delta^{-\vec{u}_1^{(k)}} x_1, \ldots, \delta^{-\vec{u}_n^{(k)}} x_n, \eps) + O(\delta).
	\end{multline*}
	The expression $\delta^{\lambda_k} \eps \cdot g(\delta^{-\vec{u}_1^{(k)}} x_1, \ldots, \delta^{-\vec{u}_n^{(k)}} x_n, \eps)$ lies in the ring $\eps\F[\delta,\delta^{-1},\eps][\vec{x}]$ and may have terms whose coefficient involves a negative power of $\delta$.
	Let $M$ be the largest power of $\delta$ appearing in the denominator of the coefficient of a monomial in $\delta^{\lambda_k} \eps \cdot g(\delta^{-\vec{u}_1^{(k)}} x_1, \ldots, \delta^{-\vec{u}_n^{(k)}} x_n, \eps)$.
	Then under the substitution
	\begin{align*}
		\eps &\mapsto \eps^{M+1} \\
		\delta &\mapsto \eps,
	\end{align*}
	every monomial of $\eps^{\lambda_k + M + 1} g(\eps^{-\vec{u}_1^{(k)}} x_1, \ldots, \eps^{-\vec{u}_n^{(k)}} x_n, \eps)$ has a coefficient in $\eps \F[\eps]$.
	In particular, we have
	\[
		\eps^{\lambda_k + (M+1)m'} f(\eps^{d'_1(M+1) - \vec{u}^{(k)}_1} x_1 , \ldots, \eps^{d'_n(M+1) - \vec{u}^{(k)}_n} x_n) = \sum_{\vec{a} \in \supp(f) \cap H_1 \cap \cdots \cap H_k} \alpha_{\vec{a}} \vec{x}^{\vec{a}} + O(\eps).
	\]
	This completes the proof of the inductive step.
\end{proof}

By applying \autoref{lem:multiple newton degeneration} to a polynomial and subsequently setting $x_i \mapsto 1$ for all $i \in [n]$, we can approximate the leading coefficient of $f$ in the sense of border complexity.
If the ring $R$ is a field, then this is not necessarily useful.
However, we will apply this result when the ring $R$ is a polynomial ring in another set of variables, which makes this lemma useful.

\begin{lemma} \label{lem:approximate lc}
	Let $R$ be a commutative ring.
	Let $f(\vec{x}) \in R[\vec{x}]$ and let $\prec$ be a monomial order on $\vec{x}$.
	Suppose that the ordering $\prec$ can be specified by a collection of integral weight vectors $\vec{u}^{(1)},\ldots,\vec{u}^{(s)} \in \naturals^n$.
	Then there is some $m \in \integers$ and a collection of nonzero integers $\set{d_1,\ldots,d_n}$ such that the mapping
	\[
		x_i \mapsto \eps^{d_i}
	\]
	sends $f(\vec{x})$ to
	\[
		\eps^m \cdot \LC(f) + O(\eps^{m+1}).
	\]
\end{lemma}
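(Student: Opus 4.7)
The plan is to apply \autoref{lem:multiple newton degeneration} with the weight vectors $\vec{u}^{(1)}, \ldots, \vec{u}^{(s)}$ furnished by the hypothesis. The key observation is that because these vectors specify the \emph{total} order $\prec$ in the sense of \autoref{thm:weight order}, the nested maximization procedure along $\vec{u}^{(1)}, \ldots, \vec{u}^{(s)}$ isolates a single exponent from $\supp(f)$, namely the exponent vector of $\LM(f)$. In the notation of \autoref{lem:multiple newton degeneration}, this says $\supp(f) \cap H_1 \cap \cdots \cap H_s = \set{\vec{a}^*}$, where $\vec{x}^{\vec{a}^*} = \LM(f)$.

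Applying \autoref{lem:multiple newton degeneration} with $k = s$ then produces integers $d_1, \ldots, d_n$ and $m$ such that
\[
	\eps^m f(\eps^{d_1} x_1, \ldots, \eps^{d_n} x_n) = \LC(f) \cdot \vec{x}^{\vec{a}^*} + O(\eps).
\]
Specializing $x_i \mapsto 1$ for every $i$ yields $\eps^m f(\eps^{d_1}, \ldots, \eps^{d_n}) = \LC(f) + O(\eps)$, which rearranges to $f(\eps^{d_1}, \ldots, \eps^{d_n}) = \eps^{m'} \LC(f) + O(\eps^{m'+1})$ with $m' \coloneqq -m$, as required.

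The one subtle point, which I expect to be the main obstacle, is ensuring that each $d_i$ is nonzero. Since $\prec$ is a total order satisfying $1 \prec x_i$ for every $i \in [n]$, for each such $i$ there must exist some $j$ with $\vec{u}^{(j)}_i > 0$; otherwise every weight vector would assign the same value to $1$ and $x_i$, contradicting the totality of $\prec$. Inspecting the inductive construction in the proof of \autoref{lem:multiple newton degeneration}, the $i$\ts{th} coordinate after step $j$ takes the form $d_i^{(j)} = d_i^{(j-1)}(M_j + 1) - \vec{u}^{(j)}_i$, where the free parameter $M_j$ may be taken arbitrarily large. For the smallest $j$ at which $\vec{u}^{(j)}_i$ is nonzero we have $d_i^{(j)} = -\vec{u}^{(j)}_i \neq 0$, and at each subsequent step we can choose $M_j$ to avoid the at most $n$ forbidden values that would zero out any of the $d_i^{(j)}$. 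With this refined choice of the $M_j$, all the resulting $d_i$ are nonzero.
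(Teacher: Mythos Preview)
Your proof is correct and follows essentially the same route as the paper: apply \autoref{lem:multiple newton degeneration} with the given weight vectors, observe that the nested maximization isolates the exponent of $\LM(f)$ because the vectors describe a total order, and then set all $x_i$ to $1$.

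The one place you go beyond the paper is the last paragraph, where you argue that the $d_i$ can be taken nonzero. The paper's own proof simply does not address this, even though the statement demands it. Your argument is sound: the recursion in the proof of \autoref{lem:multiple newton degeneration} has the form $d_i^{(j)} = d_i^{(j-1)}(M_j+1) - \vec{u}^{(j)}_i$, so the first $j$ with $\vec{u}^{(j)}_i \neq 0$ (which exists since $1 \prec x_i$ forces some weight vector to distinguish them) gives $d_i^{(j)} = -\vec{u}^{(j)}_i \neq 0$, and thereafter each $M_j$ can be incremented past finitely many bad values without breaking the ``sufficiently large'' requirement. So your proof is in fact slightly more complete than the paper's on this point.
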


\begin{proof}
	As in the statement of \autoref{lem:multiple newton degeneration}, for $i \in [k]$ let
	\begin{align*}
		\lambda_i &\coloneqq \max_{\vec{a} \in \supp(f) \cap H_1 \cap \cdots \cap H_{i-1}} \abr{\vec{a},\vec{u}^{(i)}} \\
		H_i &\coloneqq \Set{\vec{a} \in \supp(f) \cap H_1 \cap \cdots \cap H_{i-1} : \abr{\vec{a},\vec{u}^{(i)}} = \lambda}.
	\end{align*}
	Let $\vec{x}^{\vec{e}} = \LM(f)$.
	Since $\vec{u}^{(1)},\ldots,\vec{u}^{(k)}$ are weight vectors specifying a monomial order, it follows from the definition of such an order that $H_k = \set{\vec{e}}$.
	Applying \autoref{lem:multiple newton degeneration} yields integers $d_1,\ldots,d_n$ and $m$ such that
	\[
		f(\eps^{d_1} x_1, \ldots, \eps^{d_n} x_n) = \eps^m \LC(f) \LM(f) + O(\eps^{m+1}).
	\]
	Setting $x_i \mapsto 1$ for all $i \in [n]$ yields
	\[
		f(\eps^{d_1},\ldots,\eps^{d_n}) = \eps^m \LC(f) + O(\eps^{m+1})
	\]
	as claimed.
\end{proof}

\subsection{The Ideal Proof System}

The ideal proof system of \textcite{GP18} is an algebraic proof system used to refute unsatisfiable systems of polynomial equations.
The complexity of a proof in this system is measured by the size of the smallest algebraic circuit representing that proof.

\begin{definition}[\cite{GP18}]
	Let $\F$ be a field and let $f_1(\vec{x}),\ldots,f_m(\vec{x}) \in \F[\vec{x}]$.
	An \emph{ideal proof system (IPS) certificate} that the system $f_1(\vec{x}) = \cdots = f_m(\vec{x}) = 0$ is unsatisfiable over the algebraic closure $\overline{\F}$ is a polynomial $C(\vec{x},\vec{y}) \in \F[\vec{x},\vec{y}]$ such that
	\begin{enumerate}
		\item
			$C(\vec{x},\vec{0}) = 0$, and
		\item
			$C(\vec{x},f_1(\vec{x}),\ldots,f_m(\vec{x})) = 1$. \qedhere
	\end{enumerate}
\end{definition}

The first condition equates to requiring that $C(\vec{x},\vec{y})$ is in the ideal generated by $y_1,\ldots,y_m$.
This, along with the second condition, implies that $C(\vec{x},\vec{y})$ is a certificate for the fact $1 \in \abr{f_1(\vec{x}),\ldots,f_m(\vec{x})}$, hence that $f_1 = \cdots = f_m = 0$ is unsatisfiable.

For a class of algebraic circuits $\mathcal{C}$, one can also consider the $\mathcal{C}$-IPS proof system wherein we require the IPS certificate be computed by a circuit from $\mathcal{C}$.
We will primarily be concerned with IPS certificates computable by formulas or low-depth circuits.

As mentioned in the introduction, proving lower bounds on the complexity of IPS refutations is \textit{a priori} more difficult than proving lower bounds for algebraic circuits.
This is due to the fact that there may be infinitely many IPS certificates for a single system of equations, so we are faced with proving lower bounds for an infinite family of polynomials.
However, these certificates all lie in a coset of an ideal, so one could hope to understand this ideal well enough to prove lower bounds for the relevant coset.
See \textcite[Section 6]{GP18} for more on the difference between lower bounds for algebraic circuits and IPS.

The following lemma establishes a connection between lower bounds for multiples and lower bounds for IPS.
\textcite{FSTW16} originally stated and proved this lemma with $\set{x_i^2 - x_i : i \in [n]}$ as an additional set of axioms, but these are not necessary.
We will make use of this lemma when proving lower bounds for IPS.

\begin{lemma}[{\cite[Lemma 7.1]{FSTW16}}] \label{lem:fstw}
	Let $f(\vec{x}), g_1(\vec{x}), \ldots, g_k(\vec{x}) \in \F[\vec{x}]$ be an unsatisfiable system of equations where $g_1(\vec{x}),\ldots,g_k(\vec{x})$ is satisfiable.
	Let $C \in \F[\vec{x},y,\vec{z}]$ be an IPS refutation of $f,g_1,\ldots,g_k$.
	Then $1 - C(\vec{x},0,g_1(\vec{x}),\ldots,g_k(\vec{x}))$ is a nonzero multiple of $f(\vec{x})$.
\end{lemma}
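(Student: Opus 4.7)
The plan is to split the statement into two independent claims: that the polynomial $1 - C(\vec{x}, 0, g_1(\vec{x}), \ldots, g_k(\vec{x}))$ is (a) a multiple of $f(\vec{x})$, and (b) nonzero as a polynomial in $\F[\vec{x}]$. Both follow from directly unpacking the definition of an IPS certificate together with the satisfiability hypothesis on $g_1, \ldots, g_k$.

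For divisibility by $f$, I would expand $C(\vec{x}, y, \vec{z})$ as a polynomial in the single variable $y$, writing
\[
    C(\vec{x}, y, \vec{z}) = \sum_{j \ge 0} y^j C_j(\vec{x}, \vec{z})
\]
for polynomials $C_j \in \F[\vec{x}, \vec{z}]$. Subtracting the specializations $y = f(\vec{x})$ and $y = 0$ gives
\[
    C(\vec{x}, f(\vec{x}), \vec{z}) - C(\vec{x}, 0, \vec{z}) = \sum_{j \ge 1} f(\vec{x})^j C_j(\vec{x}, \vec{z}) = f(\vec{x}) \cdot \sum_{j \ge 1} f(\vec{x})^{j-1} C_j(\vec{x}, \vec{z}),
\]
which is manifestly a multiple of $f(\vec{x})$. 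Now substitute $\vec{z} = (g_1(\vec{x}), \ldots, g_k(\vec{x}))$ and use the IPS certificate condition $C(\vec{x}, f(\vec{x}), g_1(\vec{x}), \ldots, g_k(\vec{x})) = 1$ to conclude that $1 - C(\vec{x}, 0, g_1(\vec{x}), \ldots, g_k(\vec{x}))$ is a multiple of $f(\vec{x})$.

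For nonzeroness, I would invoke the hypothesis that $g_1, \ldots, g_k$ is satisfiable. Let $\vec{a} \in \overline{\F}^n$ be a common zero, so that $g_i(\vec{a}) = 0$ for every $i$. Evaluating at $\vec{x} = \vec{a}$ and using the other IPS certificate condition $C(\vec{x}, 0, \vec{0}) = 0$ yields $C(\vec{a}, 0, g_1(\vec{a}), \ldots, g_k(\vec{a})) = C(\vec{a}, 0, \vec{0}) = 0$, hence $1 - C(\vec{a}, 0, g_1(\vec{a}), \ldots, g_k(\vec{a})) = 1 \ne 0$. Therefore the polynomial in question cannot be identically zero.

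There is no real obstacle here; the only substantive point is recognizing that the satisfiability of the subsystem $\{g_1, \ldots, g_k\}$ (even though the full system $\{f, g_1, \ldots, g_k\}$ is unsatisfiable) is exactly what supplies a witness showing that the resulting multiple of $f$ is nonzero.
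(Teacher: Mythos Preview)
Your proof is correct and follows essentially the same approach as the paper's proof of the generalization \autoref{lem:general fstw}: expand $C$ in the placeholder variable $y$, cancel the constant term using the two IPS conditions to exhibit divisibility by $f$, and verify nonzeroness by evaluating at a common zero of the $g_i$ using $C(\vec{x},0,\vec{0})=0$. The only cosmetic difference is that you factor out $f$ before substituting $\vec{z}\mapsto (g_1,\ldots,g_k)$, whereas the paper substitutes first; both are fine.
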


\section{Hardness of Determinantal Ideals} \label{sec:hardness}

Recall that $X$ denotes an $n \times m$ matrix of variables and $\detideal{n}{m}{r} \subseteq \F[X]$ is the ideal generated by the $r \times r$ minors of $X$.
In this section, we study the minimum possible border complexity of a nonzero polynomial in $\detideal{n}{m}{r}$.
Our main result is that, up to polynomial factors, there is no polynomial $f \in \detideal{n}{m}{r}$ that is easier to compute than the $r \times r$ determinant.
We do this by constructing, for every nonzero $f \in \detideal{n}{m}{r}$, a depth-three $f$-oracle circuit that border computes the $\Theta(r^{1/3}) \times \Theta(r^{1/3})$ determinant.

The argument proceeds in two steps.
First, we show that for every $f(X) \in \detideal{n}{m}{r}$, there is a linear change of variables that takes $f(X)$ to $(S|T)(X) + O(\eps)$ for some bideterminant $(S|T)$ of width at least $r$.
The analysis of this step crucially relies on the straightening law (\autoref{thm:straightening}).
Second, for any $g(\vec{y})$ computed by an ABP of size at most $r$ and any bideterminant $(S|T)(X)$ of width $r$, we construct a depth-three $(S|T)$-oracle circuit computing $g(\vec{y}) + O(\eps)$.
As the determinant can be efficiently computed by ABPs, composing these steps yields an $f$-oracle circuit for $\det_{\Theta(r^{1/3})}(X) + O(\eps)$.

\subsection{Computing a Single Bideterminant}

For $i, j \in [n]$ with $i \neq j$, we define the \emph{substitution operator} $\sub{i}{j}$ acting on a transpose semistandard Young tableau $T$ as follows: for every row in $T$ containing $i$ but not $j$, substitute $i$ with $j$ and re-order the row to be in increasing order.
Let $h_i^j(T)$ denote the number of rows of $T$ changed by applying $\sub{i}{j}$ to $T$.
In general, the map $T \mapsto (\sub{i}{j}(T), h_i^j(T))$ may not be injective.
However, the following lemma shows that mapping is injective when restricted to tableaux satisfying a particular property.

\begin{lemma}[{\cite[Proposition 1.6]{CEP80}}] \label{lem:S operator}
	Let $i, j \in [n]$.
	Suppose $T$ is a conjugate semistandard tableau with entries in $[n]$ with the property that if a row of $T$ contains an integer $k \le i$, then that row contains all integers in $\set{i,i+1,\ldots,j-1}$.
	Then $\sub{i}{j}(T)$ is also a conjugate semistandard tableau and $T$ is determined by $\sub{i}{j}(T)$ and $h_i^j(T)$.
\end{lemma}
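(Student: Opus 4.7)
The plan is to assume $j > i$; the other direction is symmetric. The key structural observation is that the hypothesis forces the entries $\{i, i+1, \ldots, j-1\}$ in any changed row to occupy consecutive columns of that row, so the substitution simply shifts each of these entries up by $1$ in value at its current column while leaving every other entry fixed in position.

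To show $\sub{i}{j}(T)$ is again conjugate semistandard, rows remain strictly increasing because the substitution preserves distinctness within a row. For the column condition, compare consecutive rows $r_1 < r_2$ at an arbitrary column $c$ by cases on which of them is changed. The three easy cases (neither changed, only $r_2$ changed, both changed) follow immediately from $T_{r_1, c} \le T_{r_2, c}$. The delicate case is $r_1$ changed and $r_2$ unchanged, where it suffices to rule out $T_{r_1, c} = T_{r_2, c} = v \in \{i, \ldots, j-1\}$. I plan to argue by propagating the column condition away from column $c$: if $r_2$ does not contain $i$, applying the column condition at column $c - (v - i)$ (where $r_1$ has entry $i$) gives $T_{r_2, c - (v-i)} \ge i+1$, and strict increase along $r_2$ then forces $T_{r_2, c} \ge v + 1$, a contradiction. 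If instead $r_2$ contains both $i$ and $j$, then by the hypothesis $r_2$ has all of $\{i, \ldots, j\}$ at consecutive columns anchored at column $c - (v - i)$; the column condition at the far end of this block contradicts the fact that $r_1$ has entry $\ge j + 1$ there, since $r_1$ is strictly increasing, has $j - 1$ at the preceding column, and does not contain $j$.

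For the claim that $T$ is determined by $(\sub{i}{j}(T), h_i^j(T))$, partition the rows of $T$ into three classes: $A_1$, the changed rows (contain $\{i, \ldots, j-1\}$ but not $j$); $A_2$, unchanged rows containing both $i$ and $j$; and $B$, unchanged rows containing $\{i+1, \ldots, j\}$ but not $i$ (by the hypothesis's contrapositive, any such row has no entry $\le i$). In $\sub{i}{j}(T)$, the $A_2$ rows are exactly the rows that still contain $i$, while rows of $A_1$ and $B$ both appear as rows containing $\{i+1, \ldots, j\}$ but not $i$; call this combined set $\mathcal{R}$. Applying the column condition to column $1$ of $T$, every row in $A_1$ has first entry $\le i$ while every row in $B$ has first entry $\ge i + 1$, so all rows of $A_1$ sit strictly above all rows of $B$ in $T$. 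Since $\sub{i}{j}$ preserves row positions, the same ordering holds in $\sub{i}{j}(T)$, and hence the top $h_i^j(T) = |A_1|$ rows of $\mathcal{R}$ are precisely the changed ones. Reverting these rows (replace $j$ by $i$ and re-sort) recovers $T$ uniquely.

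I expect the delicate column-condition case in the first part to be the main obstacle: one must recognize that the hypothesis pins the block $\{i, \ldots, j-1\}$ to consecutive columns of any changed row and then propagate the column condition of $T$ several columns away from the problematic column $c$ to expose the structural contradiction. The second part, by contrast, reduces to organizing the rows of $T$ by whether they contain $i$ and $j$ and then observing that the column condition at column $1$ cleanly separates $A_1$ from $B$ by position.
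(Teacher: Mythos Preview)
The paper does not supply its own proof of this lemma; it is quoted verbatim from \cite[Proposition 1.6]{CEP80} and used as a black box. So there is nothing in the present paper to compare against.

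Your argument is correct. A couple of small points worth tightening. First, in the ``both changed'' case you call easy, the sub-case where $T_{r_1,c}\in\{i,\ldots,j-1\}$ but $T_{r_2,c}\notin\{i,\ldots,j-1\}$ is not entirely immediate from $T_{r_1,c}\le T_{r_2,c}$: you need the extra observation that $r_2$, being changed, does not contain $j$, forcing $T_{r_2,c}\ge j+1$ and hence $T_{r_2,c}\ge T_{r_1,c}+1$. This is the same trick you use in the delicate case, so it is harmless, but it deserves a sentence. Second, your three classes $A_1,A_2,B$ are not a partition of all rows of $T$ (rows missing $i$ and also missing some element of $\{i+1,\ldots,j\}$ fall outside all three). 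This does not matter for the argument, since such rows are visibly not in $\mathcal{R}$ and hence irrelevant to the reconstruction, but you should not call it a partition.

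Your approach is essentially the one in \cite{CEP80}: observe that the block $\{i,\ldots,j-1\}$ sits at consecutive columns of each changed row, so the operation is a local shift-by-one on that block, and then separate the ambiguous rows of $\sub{i}{j}(T)$ by comparing first-column entries. The second half of your argument (using the column condition at column $1$ to order $A_1$ above $B$) is exactly the mechanism that makes $h_i^j(T)$ sufficient to invert the map.
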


While the condition in the above lemma seems strange at first, it arises in a natural way when one repeatedly applies the $\sub{i}{j}$ operators as described by the next claim.
For the sake of completeness, we provide a proof.

\begin{claim}[{implicit in proof of \cite[Corollary 1.7]{CEP80}}] \label{claim:S operator induction}
	Let $T$ be a conjugate semistandard tableau with entries in $[n]$.
	Let 
	\[
		(1,2) \prec (1,3) \prec \cdots \prec (1,n) \prec (2,3) \prec \cdots \prec (n-2,n-1) \prec (n-2,n) \prec (n-1,n)
	\]
	be a partial order on $[n]^2$.
	Let $i, j \in [n]$ be such that $i < j$ and let $(i',j')$ be the immediate predecessor of $(i,j)$ in the $\prec$ order.
	Then the tableau
	\[
		T' \coloneqq \sub{i'}{j'} \circ \cdots \circ \sub{1}{3} \circ \sub{1}{2}(T)
	\]
	satisfies the hypothesis of \autoref{lem:S operator} for $(i,j)$.
	In other words, if a row of $T'$ contains an integer $k \le i$, then that row contains all integers in $\set{i,i+1,\ldots,j-1}$.
\end{claim}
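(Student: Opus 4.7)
The plan is to trace a single row $R$ through the sequence of substitution operators defining $T'$ and show that if some $k \le i$ lies in $R$ at the end, then $R$ must already contain $\set{i, i+1, \ldots, j-1}$. Two simple facts about the operators guide the argument. First, $\sub{a}{b}$ only deletes the value $a$ (replacing it with $b > a$) and only inserts the value $b$, so every operator strictly raises the value it touches. Second, the operators are applied in order of increasing $a$, so once the complete ``$\sub{a}{*}$ phase'' $\sub{a}{a+1}, \sub{a}{a+2}, \ldots, \sub{a}{n}$ has finished, no later operator can reintroduce the value $a$ into any row.

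I would split into two cases depending on whether $k < i$ or $k = i$. Assume first that $k < i$, so the $\sub{k}{*}$ phase is fully completed before $T'$ is reached. Since $k$ is present in $R$ at stage $T'$, and no operator after the $\sub{k}{*}$ phase can insert $k$, the value $k$ must in fact be present in $R$ throughout the entire $\sub{k}{*}$ phase. Consequently every operator $\sub{k}{b}$ for $b \in \set{k+1, \ldots, n}$ was skipped on $R$, which forces $b$ to belong to $R$ at the moment $\sub{k}{b}$ fired. Operators in the $\sub{k}{*}$ phase can only modify the value $k$ itself, so each such $b$ was already present at the start of the phase. This gives $R \supseteq \set{k, k+1, \ldots, n}$ at the end of the $\sub{k}{*}$ phase, and every subsequent operator $\sub{a}{b}$ with $a > k$ trivially skips $R$ (both $a$ and $b$ already lie in $R$), so the inclusion is preserved down to $T'$. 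In particular $R \supseteq \set{i, i+1, \ldots, j-1}$.

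The case $k = i$ is the same argument restricted to the partial phase $\sub{i}{i+1}, \sub{i}{i+2}, \ldots, \sub{i}{j-1}$ that precedes $\sub{i}{j}$. Because $i$ can be removed only by operators of the form $\sub{i}{*}$ and cannot be reintroduced afterwards, the assumption $i \in R$ in $T'$ forces $i \in R$ throughout this partial phase. Each $\sub{i}{i+\ell}$ for $\ell = 1, \ldots, j-1-i$ was therefore skipped, so the value $i+\ell$ was in $R$ when the operator fired; since operators in the $\sub{i}{*}$ phase do not modify values strictly greater than $i$, each $i + \ell$ was already present at the start of the phase and remains present in $T'$. Hence $R \supseteq \set{i, i+1, \ldots, j-1}$.

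The main thing to check carefully is that the saturation $R \supseteq \set{k, \ldots, n}$ established at the end of the $\sub{k}{*}$ phase in Case 1 really persists through the intermediate phases $\sub{k+1}{*}, \ldots, \sub{i-1}{*}$ and the partial $\sub{i}{*}$ phase; this follows from the elementary observation that $\sub{a}{b}$ is skipped on any row containing both $a$ and $b$. A minor subtlety worth flagging is that the value $i$ itself may be introduced into $R$ only by an earlier operator $\sub{a}{i}$ with $a < i$ and then possibly eliminated during the partial $\sub{i}{*}$ phase; this is handled by phrasing everything in terms of the final state $T'$ and using that once the $\sub{a}{*}$ phase ends, the value $a$ can no longer be introduced or removed.
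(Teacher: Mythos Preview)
Your proposal is correct and follows essentially the same approach as the paper: both split into the cases $k=i$ and $k<i$, argue that the presence of $k$ in the row at stage $T'$ forces every $\sub{k}{b}$ in the relevant (partial) phase to be skipped, and conclude that the required values $\{i,\ldots,j-1\}$ (respectively $\{k,\ldots,n\}$) are present and preserved. Your write-up is somewhat more explicit than the paper's about why the saturation persists through later phases and about the fact that $k$ cannot be reintroduced once the $\sub{k}{*}$ phase ends, but the underlying argument is the same.
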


\begin{proof}
	The case of $(i,j) = (1,2)$ is vacuously true.
	Suppose $(i,j) \succ (1,2)$ and that some row $r$ of $T'$ contains an integer $k \le i$.

	\begin{itemize}
		\item
			If $k = i$, then it must be the case that the operator $\sub{i}{j-1} \circ \cdots \circ \sub{i}{i+1}$ did not replace the $i$ in row $r$.
			This implies that the tableau $\sub{i-1}{n} \circ \cdots \circ \sub{1}{2}(T)$ contains every element of $\set{i,i+1,\ldots,j-1}$ in row $r$. 
			Because of this, the operator $\sub{i}{j-1} \circ \cdots \circ \sub{i}{i+1}$ does not modify any of the entries in row $r$ coming from the set $\set{i,i+1,\ldots,j-1}$, so row $r$ of $T'$ contains every element of $\set{i,\ldots,j-1}$.
		\item
			If $k < i$, then the application of the composite operator $\sub{k}{n} \circ \sub{k}{n-1} \circ \cdots \circ \sub{k}{k+1}$ in the definition of $T'$ did not replace the $k$ appearing in row $r$ of $T$.
			This means that every element of $\set{k,\ldots,n}$ appears in row $r$ of the tableau $\sub{k-1}{n} \circ \cdots \circ \sub{1}{2}(T)$.
			Applying the operator $\sub{i'}{j'} \circ \cdots \circ \sub{k}{k+1}$ will not change this, so row $r$ of $T'$ contains every element of $\set{k,\ldots,n}$.
			In particular, every element of $\set{i,\ldots,j-1}$ appears in this row. \qedhere
	\end{itemize}
\end{proof}

For a partition $\sigma$ and natural number $n \in \naturals$, we let $K_\sigma$ and $\overline{K}_\sigma$ denote the conjugate semistandard tableaux whose $i$\ts{th} row has entries $(1,\ldots,\sigma_i)$ and $(n-i+1,n-i+2,\ldots,n)$, respectively.
For example, if $\sigma = (4,3,1)$ and $n = 5$, we have
\[
	K_{(4,3,1)} = \young(1234,123,1) \qquad \overline{K}_{(4,3,1)} = \young(2345,345,5).
\]
The operators $\sub{i}{j}$ provide a convenient way to transform an arbitrary conjugate semistandard tableau into $\overline{K}_\sigma$.

\begin{lemma}[{\cite[Corollary 1.7]{CEP80}}] \label{lem:composite S operator}
	Let $T$ be a conjugate semistandard tableau of shape $\sigma$.
	Then 
	\[
		(\sub{n-1}{n} \circ \sub{n-2}{n} \circ \cdots \circ \sub{2}{3} \circ \sub{1}{n} \circ \cdots \circ \sub{1}{3} \circ \sub{1}{2})(T) = \overline{K}_\sigma.
	\]
	Moreover, if we denote by $h_i^j$ the number of times $i$ is replaced by $j$ in the application of $\sub{i}{j}$ above, then $T$ is determined by $\sigma$ and the $h_i^j$.
\end{lemma}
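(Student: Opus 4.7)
The plan is to prove both assertions by a coordinated induction on the $\prec$-ordering. For the first claim, I would set $T^{(i)}$ to denote the tableau obtained after applying all operators $\sub{i'}{j'}$ with $i' \le i$ in the prescribed order, with $T^{(0)} = T$. The target invariant is: \emph{$T^{(i)}$ is conjugate semistandard, and every row of $T^{(i)}$ that contains some integer $k \le i$ must in fact contain all of $\set{k, k+1, \ldots, n}$}.

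Conjugate semistandardness is maintained throughout by combining \autoref{claim:S operator induction} with \autoref{lem:S operator}: the former guarantees that the hypothesis of the latter is met before each individual substitution $\sub{i}{j}$ in the composition, so the latter certifies that $\sub{i}{j}$ outputs a conjugate semistandard tableau. The combinatorial content of the invariant I would prove by induction on $i$. The base case $i = 0$ is vacuous. For the inductive step, consider a row $\rho$ of $T^{(i-1)}$. If $\rho$ already contains some $k \le i-1$, the inductive hypothesis forces $\rho \supseteq \set{k,\ldots,n}$, so in particular $\rho$ contains $j$ for every $j \in \set{i+1,\ldots,n}$, and none of the operators in the $i$-th batch modifies $\rho$. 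If $\rho$'s smallest entry is $i$, trace through $\sub{i}{i+1}, \sub{i}{i+2}, \ldots, \sub{i}{n}$ in order: either $\rho$ already contains $\set{i,\ldots,n}$, in which case $\rho$ is untouched and still contains $i$ (satisfying the invariant at level $i$), or the first operator $\sub{i}{j^*}$ with $j^*$ not present in $\rho$ replaces the $i$ with $j^*$, after which $\rho$ no longer contains $i$ and is immune to all subsequent $\sub{i}{\bullet}$.

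Setting $i = n-1$, the invariant says that if row $r$ of $T^{(n-1)}$ contains any entry $k$, then it contains all of $\set{k,\ldots,n}$. Applying this to the minimum entry of row $r$ forces that row to be exactly $\set{k_r, k_r+1, \ldots, n}$, and since the row has length $\sigma_r$ we must have $k_r = n - \sigma_r + 1$. Thus $T^{(n-1)} = \overline{K}_\sigma$.

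For the second assertion, I would unwind the composition in reverse. The combinatorial invariant together with \autoref{claim:S operator induction} ensures that the hypothesis of \autoref{lem:S operator} holds before \emph{every} operator $\sub{i}{j}$ that we apply; the invertibility clause of that lemma then says that, knowing the tableau after $\sub{i}{j}$ and the integer $h_i^j$, we can reconstruct the tableau before $\sub{i}{j}$. Starting from $\overline{K}_\sigma$ (recoverable from $\sigma$) and reversing the entire sequence using the recorded counts $\set{h_i^j}$ yields $T$. The main delicate point is the induction in the first part: keeping track of how a row whose minimum is exactly $i$ evolves under the ordered batch $\sub{i}{i+1}, \ldots, \sub{i}{n}$ requires care, since the entries introduced by earlier operators in the batch influence which later operators in the same batch are active.
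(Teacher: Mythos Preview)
The paper does not give its own proof of this lemma; it is stated as a citation of \cite[Corollary 1.7]{CEP80}, with only the auxiliary \autoref{lem:S operator} and \autoref{claim:S operator induction} spelled out. Your argument is correct and is precisely the proof one would reconstruct from those two ingredients: \autoref{claim:S operator induction} supplies the hypothesis of \autoref{lem:S operator} before every single application of $\sub{i}{j}$, giving both semistandardness preservation and step-by-step invertibility, and your batch-level invariant (rows of $T^{(i)}$ meeting $[i]$ are final segments of $[n]$) is exactly what forces the end result to be $\overline{K}_\sigma$. One small omission worth patching: in your inductive step you treat the cases ``$\rho$ contains some $k \le i-1$'' and ``$\min\rho = i$'' but do not explicitly dispose of ``$\min\rho > i$''; this case is trivial (no $\sub{i}{\bullet}$ touches $\rho$, and the invariant is vacuous for that row), but it should be mentioned for completeness.
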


We are now ready to progress towards the main result of this section.
Namely, for any nonzero $f \in \detideal{n}{m}{r}$, we will find a linear change of variables that sends $f$ to $(K_\sigma | K_\sigma) + O(\eps)$ where $\sigma$ is the shape of some standard bideterminant in the support of $f$ when $f$ is written in the standard bideterminant basis.
For comparison, it is easy to do something similar in the monomial basis: given a polynomial $f(\vec{x})$ of degree $d$, there is some $m \in \naturals$ such that
\[
	\eps^{m} f(\eps^{-(d+1)} x_1, \eps^{-(d+1)^2} x_2, \ldots, \eps^{-(d+1)^n} x_n) = \LC_{\lex}(f) \LM_{\lex}(f) + O(\eps)
\]
where we take the lexicographic monomial order induced by $x_1 \succ x_2 \succ \cdots \succ x_n$.
To some extent, we are constructing an analogous change of variables in the bideterminant basis.

The main difficulty lies in finding a useful change of variables.
In the monomial basis, individual terms can be distinguished by their degree, so it suffices to use a change of variables that only involves multiplying each $x_i$ by some power of $\eps$.
However, in the bideterminant basis, multidegree is too coarse a notion to distinguish between bideterminants, so it seems that finding a clever substitution $x_{i,j} \mapsto \eps^{d_{i,j}} x_{i,j}$ will not be enough.

We start by working in a larger polynomial ring $\F[X,\Lambda,\Xi]$.
We will give two changes of variables: one that enforces structure on the tableaux encoding the rows of the bideterminants in the support of a polynomial $f$, and another that handles the tableaux encoding the columns of the bideterminants.
The proof of this lemma is inspired by and borrows ideas from the proof of \cite[Theorem 3.3]{CEP80}.

\begin{lemma} \label{lem:row/col transform}
	Let $\Lambda = (\lambda_{i,j})$ be an $n \times n$ matrix of variables and let $\prec_{\Lambda}$ be the lexicographic monomial order on $\F[\Lambda]$ induced by the order $\lambda_{i,j} \succ \lambda_{k,\ell}$ if $i < k$ or $i = k$ and $j < \ell$.
	Likewise, let $\Xi = (\xi_{i,j})$ be an $m \times m$ matrix of variables and let $\prec_{\Xi}$ be the corresponding lexicographic monomial order on $\F[\Xi]$.
	Then there are matrices $M \in \F[\Lambda]^{n \times n}$ and $N \in \F[\Xi]^{m \times m}$ with $\det(M) = \pm 1$ and $\det(N) = \pm 1$ such that the following holds.

	Let $f(X) \in \detideal{n}{m}{r}$ be a nonzero polynomial and let $f(X) = \sum_{k \in [s]} \alpha_k (S_k | T_k)(X)$ be the expansion of $f$ in the standard bideterminant basis.
	For $k \in [s]$, let $\sigma_k$ be the shape of the bideterminant $(S_k | T_k)$.
	Then there are nonempty sets $A, B \subseteq [s]$ such that
	\begin{align*}
		\LC_{\prec_{\Lambda}}(f(M X)) &= \sum_{k \in A} \alpha_k (K_{\sigma_k} | T_k)(X) \\
		\LC_{\prec_{\Xi}}(f(X N)) &= \sum_{k \in B} \alpha_k (S_k | K_{\sigma_k})(X),
	\end{align*}
	where we take leading coefficients in the rings $\F[X][\Lambda]$ and $\F[X][\Xi]$, respectively.
\end{lemma}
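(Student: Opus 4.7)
The plan is to take $M$ to be the unipotent lower-triangular matrix over $\F[\Lambda]$ with $M_{i,i}=1$, $M_{i,j}=\lambda_{i,j}$ for $i>j$, and $0$ above the diagonal; symmetrically, $N$ is the unipotent upper-triangular matrix over $\F[\Xi]$ with $N_{i,j} = \xi_{i,j}$ for $i<j$. Both have determinant $1$, so the $\det = \pm 1$ condition is satisfied. The arguments for $M$ and $N$ are mirror images, so I focus on the row-transform claim.

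The core step is to compute the $\prec_\Lambda$-leading term of $(S|T)(MX)$ for an individual standard bideterminant. First, for a single minor, Cauchy--Binet gives $\det((MX)_{R,C}) = \sum_{R'} \det(M_{R,R'}) \det(X_{R',C})$. Writing $R = \{r_1 < \cdots < r_p\}$ and running through the Leibniz expansion of each $\det(M_{R,R'})$, I would show that the unique lex-maximizing contribution comes from taking $R' = \{1,\ldots,p\}$ together with the bijection $r_i \mapsto i$, producing the monomial $\mu_R \coloneqq \prod_{i : r_i > i} \lambda_{r_i,i}$ with coefficient $\det(X_{\{1,\ldots,p\},C})$. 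The intuition is that $\prec_\Lambda$ prefers $\lambda$'s of small row index, so greedily setting $\phi(r_i) = i$ (the smallest value of $\phi(r_i) \leq r_i$ compatible with injectivity) is the lex-optimal choice; a case analysis at the first variable where a competing $(R', \phi)$ would disagree rules out all alternatives. Since $\F[X][\Lambda]$ is a domain, \autoref{lem:LM commutes} lifts this to a bideterminant: for any standard $(S|T)$ of shape $\sigma$,
\[
\LM_{\prec_\Lambda}((S|T)(MX)) = \mu_S \coloneqq \prod_i \mu_{S(i,\bullet)}, \qquad \LC_{\prec_\Lambda}((S|T)(MX)) = (K_\sigma | T)(X).
\]

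Assembling the pieces, for $f(MX) = \sum_k \alpha_k (S_k|T_k)(MX)$ I set $A \coloneqq \{k \in [s] : \mu_{S_k} \text{ is maximal with respect to } \prec_\Lambda\}$ and obtain $\LC_{\prec_\Lambda}(f(MX)) = \sum_{k \in A} \alpha_k (K_{\sigma_k}|T_k)(X)$ \emph{provided} this sum is nonzero. The key combinatorial observation is that $\mu_S = \prod_{(i,j):S(i,j)>j} \lambda_{S(i,j),j}$ together with the shape $\sigma$ uniquely determines a conjugate semistandard tableau $S$: $\mu_S$ records, for each column $j$, the multiset of entries strictly greater than $j$ in that column, which combined with $\hat{\sigma}_j$ recovers the full column multiset and hence $S$. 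This forces any two distinct $k, k' \in A$ to yield distinct standard bideterminants $(K_{\sigma_k}|T_k)$: either the shapes differ, or the shapes agree and $\mu_{S_k} = \mu_{S_{k'}}$ forces $S_k = S_{k'}$, so that $T_k \neq T_{k'}$ since the original $(S_k, T_k)$'s are distinct. Linear independence of the standard bideterminant basis (\autoref{thm:straightening}) together with $\alpha_k \neq 0$ then gives the nonvanishing, and $A$ is nonempty because $f \neq 0$.

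The analogous statement for $N$ is proved by the dual Cauchy--Binet expansion $\det((XN)_{R,C}) = \sum_{C'} \det(X_{R,C'}) \det(N_{C',C})$ with the greedy choice $C' = \{1,\ldots,q\}$; the roles of $S$ and $T$ are swapped throughout. The main obstacle is the nonvanishing step for the leading coefficient: a priori, several indices $k$ with coinciding $\mu_{S_k}$ might produce identical standard bideterminants whose weighted sum cancels, forcing one to chase a messier next-to-leading contribution. The combinatorial uniqueness of $S$ given $(\sigma, \mu_S)$ is precisely what rules this out and lets one appeal cleanly to linear independence of the standard bideterminant basis.
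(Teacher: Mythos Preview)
Your argument is correct, and it takes a genuinely different route from the paper's proof.

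The paper builds $M$ as a product $E_{1,2}(\lambda_{1,2}) E_{1,3}(\lambda_{1,3}) \cdots E_{n-1,n}(\lambda_{n-1,n}) J_n$ of elementary row-addition matrices (with the $\lambda$'s \emph{above} the diagonal) followed by the anti-diagonal permutation, and analyzes $f(MX)$ by an induction along the sequence of elementary factors. At each step, left-multiplication by $E_{i,j}(\lambda_{i,j})$ has the effect, on leading $\lambda_{i,j}$-terms, of applying the tableau operator $\sub{i}{j}$ to the left tableau; the paper tracks the leading coefficient through the entire sequence and invokes the CEP machinery (\autoref{lem:S operator}, \autoref{claim:S operator induction}, \autoref{lem:composite S operator}) for the injectivity that prevents cancellation. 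The final $J_n$ converts $\overline{K}_\sigma$ to $K_\sigma$.

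You instead take $M$ to be a single lower-triangular unipotent matrix and extract the leading $\Lambda$-term of each bideterminant in one shot via Cauchy--Binet, identifying the greedy bijection $r_i \mapsto i$ as the unique lex-maximizer. Your injectivity argument---that $\mu_S$ together with the shape recovers $S$ because $\mu_S$ encodes, for each column $j$, the multiset of entries exceeding $j$---is short and self-contained, and replaces the imported CEP lemmas. This is more direct for the present statement and avoids the iterative bookkeeping. The paper's approach, on the other hand, keeps everything phrased in terms of the $\sub{i}{j}$ operators, which makes it port cleanly to the Pfaffian setting in \autoref{sec:pfaff} where the same operators reappear.
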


\begin{proof}
	We first construct the matrix $M$ and prove the corresponding claim.
	For $i, j \in [n]$ with $i \neq j$, let $E_{i,j}(z)$ be the $n \times n$ matrix with ones on the diagonal and $z$ in the $(i,j)$ entry.
	Let $J_n$ be the $n \times n$ matrix whose $(i,j)$ entry is $1$ if $i + j = n+1$ and zero otherwise.
	We define the matrix $M$ as
	\[
		M \coloneqq E_{1,2}(\lambda_{1,2}) E_{1,3}(\lambda_{1,3}) \cdots E_{1,n}(\lambda_{1,n}) E_{2,3}(\lambda_{2,3}) \cdots E_{n-1,n}(\lambda_{n-1,n}) J_n.
	\]
	Since $\det(J_n) = \pm 1$ and $\det(E_{i,j}(z)) = 1$ for $i \neq j$, it follows that $\det(M) = \pm 1$.

	We now analyze the polynomial $f(M X)$.
	Recall that for a tableau $S$, we denote by $h_i^j(S)$ the number of entries changed from $i$ to $j$ when we apply the operator $\sub{i}{j}$ to $S$.
	Observe that for a bideterminant $(S|T)$, it follows from properties of the determinant that
	\[
		(S|T)(E_{i,j}(z) X) = z^{h_i^j(S)} (\sub{i}{j}(S) | T)(X) + O(z^{h_i^j(S) - 1}),
	\]
	where $O(z^{h_i^j(S)-1})$ denotes a polynomial in $\F[X][z]$ of degree at most $h_i^j(S)-1$.
	For $i,j \in [n]$ with $i \neq j$, define
	\[
		f_{i,j}(X,\Lambda) \coloneqq f(E_{1,2}(\lambda_{1,2}) E_{1,3}(\lambda_{1,3}) \cdots E_{1,n}(\lambda_{1,n}) E_{2,3}(\lambda_{2,3}) \cdots E_{i,j}(\lambda_{i,j}) X).
	\]
	Note that $f(M X) = f_{n-1,n}(J_n X, \Lambda)$.

	We claim that for every $i, j \in [n]$ with $i < j$, there is a non-empty set $A_{i,j} \subseteq [s]$ such that
	\[
		\LC_{\prec_{\Lambda}}(f_{i,j}(X,\Lambda)) = \sum_{k \in A_{i,j}} \alpha_k (\sub{i}{j} \circ \cdots \circ \sub{2}{3} \circ \sub{1}{n} \circ \cdots \circ \sub{1}{3} \circ \sub{1}{2}(S_k) | T_k)(X).
	\]
	By \autoref{lem:composite S operator}, this implies
	\[
		\LC_{\prec_{\Lambda}}(f_{n-1,n}(X,\Lambda)) = \sum_{k \in A_{n-1,n}} \alpha_k (\overline{K}_{\sigma_k} | T_k)(X).
	\]
	Using the fact that $(\overline{K}_{\sigma_k} | T)(J_n X) = (K_{\sigma_k} | T_k)(X)$, this yields
	\[
		\LC_{\prec_{\Lambda}}(f(M X)) = \LC_{\prec_{\Lambda}}(f_{n-1,n}(J_n X, \Lambda)) = \sum_{k \in A_{n-1,n}} \alpha_k (K_{\sigma_k} | T_k)(X)
	\]
	as claimed.

	We now prove the claim by induction on $(i,j)$ in the order $(1,2) \prec (1,3) \prec \cdots \prec (1,n) \prec (2,3) \prec \cdots \prec (n-1,n)$.
	Let $(i',j')$ be the predecessor of $(i,j)$ in the $\prec$ order.
	In the case that $(i,j) = (1,2)$, we abuse notation and set $f_{i',j'} \coloneqq f$ and $A_{i',j'} \coloneqq [s]$.
	Let 
	\[
		H_i^j \coloneqq \max_{k \in A_{i',j'}} h_i^j(\sub{i'}{j'} \circ \cdots \circ \sub{1}{2}(S_k))
	\]
	and
	\[
		A_{i,j} = \set{k \in A_{i',j'} : h_i^j(\sub{i'}{j'} \circ \cdots \circ \sub{1}{2}(S_k)) = H_i^j}.
	\]
	Note that $A_{i,j}$ is necessarily non-empty, as $H_i^j$ is a maximum over a finite nonempty set.
	By induction, there is some $\vec{e} \in \naturals^{n \times n}$ such that
	\[
		f_{i',j'}(X,\Lambda) = \Lambda^{\vec{e}} \sum_{k \in A_{i',j'}} \alpha_{k} (\sub{i'}{j'} \circ \cdots \circ \sub{1}{2}(S_k) | T_k)(X) + g(X,\Lambda),
	\]
	where $g(X,\Lambda) \in \F[X][\Lambda]$ is a polynomial in which every monomial is smaller than $\Lambda^{\vec{e}}$ in the $\prec_{\Lambda}$ order.
	Since $f_{i',j'}$ only depends on $\lambda_{1,2},\ldots,\lambda_{i',j'}$, it follows that $\Lambda^{\vec{e}}$ is a monomial in only these variables.
	We then apply the definition of $f_{i,j}$ to obtain
	\begin{align*}
		f_{i,j}(X,\Lambda) &= f_{i',j'}(E_{i,j}(\lambda_{i,j}) X,\Lambda) \\
		&= \Lambda^{\vec{e}} \sum_{k \in A_{i',j'}} \alpha_k (\sub{i'}{j'} \circ \cdots \circ \sub{1}{2}(S_k) | T_k)(E_{i,j}(\lambda_{i,j}) X) + g(E_{i,j}(\lambda_{i,j}) X, \Lambda) \\
		&= \Lambda^{\vec{e}} \lambda_{i,j}^{H_i^j} \sum_{k \in A_{i,j}} \alpha_k (\sub{i}{j} \circ \cdots \circ \sub{1}{2}(S_k) | T_k)(X) + \Lambda^{\vec{e}} p(X, \lambda_{i,j}) + g(E_{i,j}(\lambda_{i,j}) X, \Lambda),
	\end{align*}
	where $p(X,\lambda_{i,j}) \in \F[X][\Lambda]$ is a polynomial of degree at most $H_i^j-1$ in $\lambda_{i,j}$.
	This implies that every monomial of $\Lambda^{\vec{e}} p(X,\Lambda)$ is smaller than $\Lambda^{\vec{e}} \lambda_{i,j}^{H_i^j}$ in the $\prec_{\Lambda}$ order.
	Observe that the substitution $X \mapsto E_{i,j}(\lambda_{i,j}) X$ only changes the $\lambda_{i,j}$-degree of any $\Lambda$-monomial in $g(X,\Lambda)$.
	In particular, because every monomial of $g(X,\Lambda)$ is smaller than $\Lambda^{\vec{e}}$ in the $\prec_{\Lambda}$ order, the same holds true for every $\Lambda$-monomial of $g(E_{i,j}(\lambda_{i,j}) X, \Lambda)$.
	This implies that
	\[
		\LC_{\prec_{\Lambda}}(f_{i,j}) = \sum_{k \in A_{i,j}} \alpha_k (\sub{i}{j} \circ \cdots \circ \sub{1}{2}(S_k) | T_k)(X)
	\]
	as claimed.
	This establishes the claimed properties of $M$.

	To construct the matrix $N$, we overload notation and let $E_{i,j}(z)$ be the $m \times m$ matrix with ones on the diagonal and $z$ in the $(i,j)$ entry.
	Just as the matrix $M$ consisted of a sequence of row operations, the matrix $N$ will be composed of a sequence of column operations.
	We define $N$ as
	\[
		N \coloneqq J_m E_{m-1,m}(\xi_{m-1,m}) \cdots E_{2,3}(\xi_{2,3}) E_{1,m}(\xi_{1,m}) \cdots E_{1,3}(\xi_{1,3}) E_{1,2}(\xi_{1,2}).
	\]
	Since $\det(J_m) = \pm 1$ and $\det(E_{i,j}(z)) = 1$ for $i < j$, we get that $\det(N) = \pm 1$.

	As in the previous case, it follows from properties of the determinant that for a bideterminant $(S|T)$, we have
	\[
		(S|T)(X E_{i,j}(z)) = z^{h_i^j(T)} (S | \sub{i}{j}(T))(X) + O(z^{h_i^j(T)-1}).
	\]
	Using this, the analysis of the leading coefficient of $f(X N) \in \F[X][\Xi]$ proceeds in a manner analogous to the case of $f(MX)$, so we omit the details.
\end{proof}

We now come to the main result of this subsection: a change of variables that sends a polynomial $f(X)$ to $(K_\sigma | K_\sigma)(X) + O(\eps)$ where $\sigma$ is the shape of some standard bideterminant in the support of $f$.

\begin{proposition} \label{prop:reduction to single bideterminant}
	Let $f(X) \in \detideal{n}{m}{r}$ be nonzero.
	There is a collection of $nm$ linearly independent linear functions $\ell_{i,j}(X,\eps) \in \F(\eps)[X]$ indexed by $(i,j) \in [n] \times [m]$, an integer $q \in \integers$, a nonzero $\alpha \in \F$, and a partition $\sigma$ with $\sigma_1 \ge r$ such that
	\[
		f(\ell_{1,1}(X,\eps),\ldots,\ell_{n,m}(X,\eps)) = \eps^q \alpha (K_\sigma | K_\sigma)(X) + O(\eps^{q+1}).
	\]
\end{proposition}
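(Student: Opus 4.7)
The main idea is to apply \autoref{lem:row/col transform} twice---first to handle rows, then columns---to reduce $f$ in the sense of border complexity to a linear combination of bideterminants of the form $(K_\sigma|K_\sigma)$; then isolate a single summand using the multigrading; and finally amalgamate the resulting symbolic substitutions into a single $\eps$-parameter via \autoref{lem:semicontinuity}.

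Concretely, apply \autoref{lem:row/col transform} to $f$ to obtain $M(\Lambda)\in\F[\Lambda]^{n\times n}$ and a nonempty $A\subseteq[s]$ with
\[
	\LC_{\prec_\Lambda}(f(M(\Lambda)X)) = g(X) \coloneqq \sum_{k\in A}\alpha_k(K_{\sigma_k}|T_k)(X).
\]
The construction of the matrix in \autoref{lem:row/col transform} makes no essential use of the hypothesis $f\in\detideal{n}{m}{r}$, so the lemma may be applied a second time, now to $g$, yielding $N(\Xi)\in\F[\Xi]^{m\times m}$ and a nonempty $B\subseteq A$ with
\[
	\LC_{\prec_\Xi}(g(XN(\Xi))) = h(X) \coloneqq \sum_{k\in B}\alpha_k(K_{\sigma_k}|K_{\sigma_k})(X).
\]
The injectivity of the $\sub{i}{j}$-operators used to build $M$ and $N$ (\autoref{lem:S operator}, iterated along the path given in the proof of \autoref{lem:row/col transform}) implies that the partitions $\set{\sigma_k : k\in B}$ are pairwise distinct, so in particular $h\neq 0$.

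Each bideterminant $(K_\sigma|K_\sigma)$ is multihomogeneous of multidegree $\hat{\sigma}\oplus\hat{\sigma}$, and distinct partitions have distinct conjugates, so the summands of $h$ lie in pairwise distinct multigraded components. To isolate one summand, scale $X\mapsto D_tXD_u$ with fresh variables $\vec{t}=(t_1,\ldots,t_n)$, $\vec{u}=(u_1,\ldots,u_m)$ and diagonal matrices $D_t=\mathrm{diag}(\vec{t})$, $D_u=\mathrm{diag}(\vec{u})$, producing
\[
	h(D_tXD_u) = \sum_{k\in B}\alpha_k\vec{t}^{\hat{\sigma}_k}\vec{u}^{\hat{\sigma}_k}(K_{\sigma_k}|K_{\sigma_k})(X),
\]
a sum whose $(\vec{t},\vec{u})$-monomials are pairwise distinct. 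Picking any $k^*\in B$ and a lex order on $\vec{t},\vec{u}$ under which $\vec{t}^{\hat{\sigma}_{k^*}}\vec{u}^{\hat{\sigma}_{k^*}}$ is maximal singles out the term indexed by $k^*$.

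It remains to merge the three symbolic leading-coefficient extractions---in $\Lambda$, in $\Xi$, and in $(\vec{t},\vec{u})$---into a single $\eps$-substitution. Apply \autoref{lem:approximate lc} three times, each time with the appropriate coefficient ring ($R=\F[X,\Xi,\vec{t},\vec{u}]$, then $R=\F[X,\vec{t},\vec{u}]$, then $R=\F[X]$), and iteratively compose the resulting substitutions via \autoref{lem:semicontinuity}, replacing at each step the outer indeterminate by a sufficiently large power of the inner one. This yields a single substitution
\[
	f(M(\eps^{d'})D_{\eps^{a'}}XD_{\eps^{b'}}N(\eps^{e'})) = \eps^q\alpha(K_\sigma|K_\sigma)(X) + O(\eps^{q+1})
\]
with $\sigma\coloneqq\sigma_{k^*}$ and $\alpha\coloneqq\alpha_{k^*}\neq 0$. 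The $nm$ entries of the matrix $M(\eps^{d'})D_{\eps^{a'}}XD_{\eps^{b'}}N(\eps^{e'})$ furnish the desired linear forms $\ell_{i,j}(X,\eps)$; they are linearly independent because $\det(M(\eps^{d'})D_{\eps^{a'}})=\pm\prod_i\eps^{a'_i}$ and $\det(D_{\eps^{b'}}N(\eps^{e'}))=\pm\prod_j\eps^{b'_j}$ are nonzero in $\F(\eps)$. The bound $\sigma_1\ge r$ is immediate from $f\in\detideal{n}{m}{r}$ and \autoref{cor:width of I_r}. The main technical care needed is the bookkeeping for correctly composing the three $\eps$-substitutions so that higher-order error terms from one stage do not contaminate the leading term extracted in the next; each individual composition, however, is a direct application of \autoref{lem:semicontinuity}.
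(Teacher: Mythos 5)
Your proof is correct and takes essentially the same route as the paper's: apply \autoref{lem:row/col transform} to the row and column tableaux to reach $\sum_{k\in B}\alpha_k(K_{\sigma_k}|K_{\sigma_k})$, isolate a single summand using the multigrading, and then collapse the extra parameters to powers of $\eps$ via \autoref{lem:approximate lc}. Two cosmetic differences are worth noting. For the isolation step you scale $X\mapsto D_tXD_u$ with $n+m$ fresh indeterminates, exploiting that $(K_\sigma|K_\sigma)$ has multidegree $\hat\sigma\oplus\hat\sigma$; the paper instead uses only two fresh variables $y,z$ via $x_{i,j}\mapsto y^{(D+1)^i}z^{(D+1)^j}x_{i,j}$, encoding the multidegree base $D+1$. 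And you apply \autoref{lem:approximate lc} three times (once per group of auxiliary variables) and then compose the three $\eps$-substitutions using \autoref{lem:semicontinuity}; the paper applies \autoref{lem:approximate lc} once to the single lex order $\Lambda\succ\Xi\succ y\succ z$ on $\F[X][\Lambda,\Xi,y,z]$, which sidesteps the composition bookkeeping you flag at the end. Your observation that $\{\sigma_k : k\in B\}$ is a set of pairwise distinct partitions is correct---it follows from \autoref{lem:composite S operator} because all $k\in B$ share the same row and column $h$-vectors, and the $h$-vector together with the shape determines the tableau; the paper leaves this implicit, so you have actually made a step more explicit.
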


\begin{proof}
	Let $f = \sum_{k \in [s]} \alpha_k (S_k | T_k)$ be the expansion of $f$ in the standard bideterminant basis.
	Let $M$ and $N$ be the matrices constructed in \autoref{lem:row/col transform}.
	Let $\prec$ denote the lexicographic order on $\F[X][\Lambda,\Xi]$ induced by $\lambda_{1,2} \succ \lambda_{1,3} \succ \cdots \succ \lambda_{n-1,n} \succ \xi_{1,2} \succ \cdots \succ \xi_{m-1,m}$. 
	\autoref{lem:row/col transform} implies that there is a non-empty set $A \subseteq [s]$ such that
	\[
		g(X) \coloneqq \LC_{\prec}(f(M X)) = \sum_{k \in A} \alpha_k (K_{\sigma_k} | T_k)(X),
	\]
	and likewise that there is a non-empty set $B \subseteq A$ such that
	\[
		\LC_{\prec}(g(X N)) = \sum_{k \in B} \alpha_k (K_{\sigma_k} | K_{\sigma_k})(X).
	\]
	This implies that
	\[
		\LC_{\prec}(f(M X N)) = \sum_{k \in B} \alpha_k (K_{\sigma_k} | K_{\sigma_k})(X),
	\]
	where $\sigma_k$ denotes the shape of the bideterminant $(S_k | T_k)$.
	By \autoref{cor:width of I_r}, each bideterminant in the above sum has width at least $r$, so $(\sigma_k)_1 \ge r$ for all $k \in A$.

	Let $y$ and $z$ be new indeterminates and let $D \coloneqq \deg(f(X))$.
	Consider the change of variables
	\[
		x_{i,j} \mapsto y^{(D+1)^i} z^{(D+1)^j} x_{i,j}.
	\]
	Let $h(X,\Lambda,\Xi,y,z)$ be the image of $f(M X N)$ under this map.
	By construction, an $X$-monomial of multidegree $(\sum_i a_i \vec{e}_i) \oplus (\sum_i b_i \vec{e}_i)$ is multiplied by a factor of $y^{\sum_i a_i (D+1)^i} z^{\sum_j b_j (D+1)^j}$.
	In particular, since $\max_i a_i \le D$ and $\max_i b_i \le D$, $X$-monomials of distinct multidegree have distinct $(y,z)$-degree under this mapping.
	Observe that $\multideg((K_\sigma | K_\sigma)(X)) \neq \multideg((K_\tau | K_\tau)(X))$ for distinct partitions $\sigma \neq \tau$.
	Since each bideterminant $(K_\sigma | K_\sigma)(X)$ is mapped to a unique $(y,z)$-degree under this substitution, we get that the polynomial
	\[
		p(X) = \LC_{(y,z)}(\LC_{(\Lambda,\Xi)}(h(X,\Lambda,\Xi,y,z)))
	\]
	is a nonzero multiple of the bideterminant $(K_{\sigma_k} | K_{\sigma_k})(X)$ for some $k \in B$.
	If we augment the monomial order $\prec$ by setting $\Lambda \succ \Xi \succ y \succ z$ and taking the corresponding lexicographic order, we then have 
	\[
		\LC_{\prec}(h(X,\Lambda,\Xi,y,z)) = \alpha_k (K_{\sigma_k} | K_{\sigma_k})(X)
	\]
	for some $k \in B$.
	
	Applying \autoref{lem:approximate lc} to $h(X,\Lambda,\Xi,y,z)$ viewed as an element of $\F[X][\Lambda,\Xi,y,z]$, we get a map $\varphi : (\Lambda \cup \Xi \cup \set{y,z}) \to \set{\eps^d : d \in \integers}$ such that
	\[
		\varphi(h(X,\Lambda,\Xi,y,z)) = \eps^q \alpha_k (K_{\sigma_k} | K_{\sigma_k})(X) + O(\eps^{q+1})
	\]
	for some integer $q$.

	Note that $h(X,\Lambda,\Xi,y,z)$ was obtained from $f(X)$ by an invertible linear transformation of the $X$ variables.
	That is, there are $nm$ linearly independent linear polynomials $\ell'_{1,1}(X),\ldots,\ell'_{n,m}(X) \in \F[\Lambda,\Xi,y,z][X]$ such that
	\[
		h(X,\Lambda,\Xi,y,z) = f(\ell_{1,1}'(X),\ldots,\ell_{n,m}'(X)).
	\]
	Set $\ell_{i,j}(X,\eps) \coloneqq \varphi(\ell_{i,j}'(X)) \in \F(\eps)[X]$ for each $(i,j) \in [n] \times [m]$.
	Since the transformation $x_{i,j} \mapsto \ell_{i,j}'(X)$ is invertible as long as $y \neq 0$ and $z \neq 0$, the transformation $x_{i,j} \mapsto \ell_{i,j}(X,\eps)$ remains invertible under $\varphi$.
	Finally, it follows from the definition of $\varphi$ that
	\begin{align*}
		f(\ell_{1,1}(X,\eps),\ldots,\ell_{n,m}(X,\eps)) &= f(\varphi(\ell'_{1,1}(X)),\ldots,\varphi(\ell'_{n,m}(X))) \\
		&= \varphi(f(\ell'_{1,1}(X),\ldots,\ell'_{n,m}(X))) \\
		&= \varphi(h(X,\Lambda,\Xi,y,z)) \\
		&= \eps^q \alpha_k (K_{\sigma_k} | K_{\sigma_k})(X) + O(\eps^{q+1}). \qedhere
	\end{align*}
\end{proof}

\subsection{Projecting to the Determinant}

So far, we have constructed a linear change of variables taking a polynomial $f \in \detideal{n}{m}{r}$ to $(K_\sigma | K_\sigma) + O(\eps)$ for a bideterminant $(K_\sigma | K_\sigma)$ of width at least $r$.
Next, we show that a $(K_\sigma | K_\sigma)$-oracle can be used to compute $g(\vec{y}) + O(\eps)$, where $g$ is any polynomial computable by an algebraic branching program on $r$ vertices.
Ideally, one would like to appeal to the $\VBP$-completeness of the determinant, which gives a projection from $\det_r(X)$ to $g(\vec{y})$, to prove such a result.
The difficulty lies in the fact that a bideterminant may be a product of multiple determinants of varying sizes.
Because of this, we need a projection that behaves well on proper minors of $X$ and also allows us to deal with the possibility that we may be projecting from a power of the determinant as opposed to the determinant itself.
We almost construct such a projection, but we will need some post-processing in the form of an extra addition gate in order to handle powers of the determinant.

Let $g(\vec{y})$ be computable by a small algebraic branching program.
We begin by describing a projection $\varphi : X \to \vec{y} \cup \F$ of a generic matrix $X$ such that $\det(\varphi(X)) = 1 + g(\vec{y})$ and the leading principal minors of $\varphi(X)$ have determinant 1.
This is a small modification of an argument due to \textcite[Theorem 1]{Valiant79}; we include a proof for the sake of completeness.

\begin{lemma} \label{lem:abp to det}
	Let $g(\vec{y}) \in \F[\vec{y}]$ and suppose $g$ can be computed by a layered algebraic branching program on $m$ vertices.
	Then there is an $m \times m$ matrix $A \in \F[\vec{y}]^{m \times m}$ whose entries are linear polynomials in $\vec{y}$ such that
	\begin{enumerate}
		\item
			$\det(A) = 1 + g(\vec{y})$, and
		\item
			for every $k \in [m-1]$, we have $\det(A_{[k],[k]}) = 1$.
	\end{enumerate}
\end{lemma}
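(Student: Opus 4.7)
The plan is to take $A$ to be a small perturbation of the identity built directly from the ABP. Since the ABP is layered with $m$ vertices, I would first order them $v_1 = s, v_2, \ldots, v_m = t$ respecting the layer structure, so that every edge $(v_i, v_j)$ of the ABP satisfies $i < j$. Letting $N \in \F[\vec{y}]^{m \times m}$ denote the strictly upper triangular matrix whose $(i,j)$ entry is the affine linear form labeling the edge $v_i \to v_j$ (and $0$ if no such edge exists), and letting $E_{m,1}$ be the matrix with a single $1$ in position $(m,1)$, I would set
$$A \coloneqq I - N + E_{m,1}.$$
Each entry of $A$ is then an affine (hence linear) polynomial in $\vec{y}$.

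Condition (2) would follow almost immediately: for $k < m$, the leading principal submatrix $A_{[k],[k]} = I_k - N_{[k],[k]}$ is upper triangular with ones on the diagonal, so $\det(A_{[k],[k]}) = 1$.

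For condition (1), I would expand $\det(A) = \sum_{\sigma \in S_m} \sgn(\sigma) \prod_i A_{i,\sigma(i)}$ via the Leibniz formula. Every below-diagonal entry of $A$ vanishes except $A_{m,1} = 1$, so the only permutations with nonzero contribution are either the identity or those satisfying $\sigma(m) = 1$. The identity contributes $+1$. For any other such $\sigma$, the orbit of $m$ must close as a single cycle $1 = i_0 < i_1 < \cdots < i_k < i_{k+1} = m$ with the remaining points fixed; this permutation has sign $(-1)^{k+1}$, and its entry-product contribution is $A_{m,1} \cdot \prod_{t=0}^{k} A_{i_t, i_{t+1}} = (-1)^{k+1} w_{i_0, i_1} w_{i_1, i_2} \cdots w_{i_k, i_{k+1}}$, where $w_{i,j}$ denotes the weight of the edge $v_i \to v_j$. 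The two factors of $(-1)^{k+1}$ cancel, leaving exactly the weight of the $s$-to-$t$ path $v_{i_0} \to v_{i_1} \to \cdots \to v_{i_{k+1}}$. Summing over all such cycles reconstructs $g(\vec{y})$ as the sum of all $s$-to-$t$ path weights, and combining with the identity contribution yields $\det(A) = 1 + g(\vec{y})$.

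The only delicate point is the sign bookkeeping: the cycle sign $(-1)^{k+1}$ must be matched against the $k+1$ factors of $-1$ contributed by the entries $A_{i_t, i_{t+1}} = -w_{i_t, i_{t+1}}$ along the path portion of the cycle (the back-edge $A_{m,1} = +1$ contributes no sign), and these cancel precisely. I expect no further obstacles; the construction is essentially a bordered version of Valiant's classical reduction, tailored so that every proper leading principal minor remains trivial.
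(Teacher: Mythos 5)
Your proof is correct, and it takes a genuinely cleaner route than the paper's. The paper invokes the standard correspondence between cycle covers and the permanent-like expansion of $\det$, which is sign-free only when every cycle in every cover has odd length; to enforce this it adds self-loops to all vertices and then splits into two cases depending on whether $s$-$t$ paths have even or odd length, building two different graphs (one with a back-edge $t \to s$, the other by identifying $s$ with $t$ and adding a padding vertex). Your construction $A = I - N + E_{m,1}$ sidesteps the parity case split entirely: negating the edge weights makes the $k{+}1$ factors of $-1$ along a length-$(k{+}1)$ path cancel the cycle sign $(-1)^{k+1}$ automatically, so the Leibniz expansion gives $\det(A) = 1 + g(\vec{y})$ for any topological ordering, with no assumption on path parity. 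This also means your argument does not actually use that the ABP is \emph{layered} — any DAG with a topological ordering $v_1 = s, \ldots, v_m = t$ works — which is mildly more general than the paper's statement. The verification of condition (2) and the observation that the only contributing permutations are the identity and single "path-cycles" through $1$ and $m$ are both exactly right; one would just want to note explicitly that a cycle of length $\ge 2$ avoiding $m$ is impossible because all such steps would have to strictly increase, which your argument implicitly covers.
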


\begin{proof}
	We first recall the correspondence between cycle covers in graphs and the determinant.
	Let $G$ be a weighted directed graph on $m$ vertices and denote the weight of the edge $(i,j)$ by $w(i,j)$.
	Let $A(G) = (a_{i,j})$ be the $m \times m$ matrix given by
	\[
		a_{i,j} = \begin{cases}
			w(i,j) & (i,j) \in E(G) \\
			0 & (i,j) \notin E(G).
		\end{cases}
	\]
	Recall that a \emph{cycle cover} $C$ of $G$ is a collection of vertex-disjoint cycles in $G$ which span the vertices of $G$.
	Let $CC(G)$ denote the collection of all cycle covers of $G$.
	Given a cycle cover $C$ of $G$, let $\pi(C)$ denote the product of the edge weights in $C$.
	If every cycle cover of $G$ consists of odd-length cycles, then the definitions of $A(G)$ and the determinant imply that
	\[
		\det(A(G)) = \sum_{C \in CC(G)} \pi(C).
	\]

	We now proceed with the proof of \autoref{lem:abp to det}.
	Suppose $g(\vec{y})$ can be computed by a layered algebraic branching program on $m$ nodes.
	Let $s$ and $t$ be the start and end nodes of this branching program, respectively.
	Since the program is layered, every $s$-$t$ path has the same length.
	If the length of each $s$-$t$ path is even, we add an edge of weight $1$ from $t$ to $s$ and a self-loop of weight 1 to every vertex (including $s$ and $t$); if the length of each $s$-$t$ path is odd, we identify the vertices $s$ and $t$ with one another (resulting in a graph on $m-1$ nodes), add an isolated vertex $r$, and then add a self-loop to every vertex.
	Denote the resulting graph by $G$.
	In both cases, $G$ has one cycle cover for every $s$-$t$ path in the branching program, as well as a single cycle cover corresponding to the set of self-loops in the graph.
	Moreover, every cycle cover in $G$ consists solely of odd-length cycles.

	For a cycle cover $C$ corresponding to an $s$-$t$ path $P$ in the branching program, it follows from the definition of $G$ that $\pi(C) = \pi(P)$, where $\pi(P)$ is the product of the weights on the edges of $P$.
	If $C$ is the all-self-loops cycle cover, then $\pi(C) = 1$.
	Since every cycle cover in $G$ consists of odd-length cycles, we have
	\[
		\det(A(G)) = \sum_{C \in CC(G)} \pi(C) = 1 + \sum_{P} \pi(P) = 1 + g(\vec{y}),
	\]
	where the second summation is over all $s$-$t$ paths $P$ in the branching program.
	This proves the first part of the lemma.

	To prove the second part, let $v_1, \ldots, v_m$ be a topological ordering of the vertices in the algebraic branching program.
	Note that $v_1 = s$ and $v_m = t$.
	If every $s$-$t$ path in the branching program has even length, we order the rows and columns of $A(G)$ such that
	\[
		A(G)_{i,j} = w(v_i,v_j).
	\]
	If instead every $s$-$t$ path in the branching program has odd length, we set
	\[
		A(G)_{i,j} = \begin{cases}
			w(r, v_j) & i = 1 \\
			w(v_i, r) & j = 1 \\
			w(v_i, v_j) & \text{otherwise,}
		\end{cases}
	\]
	where $r$ is the isolated vertex with a self-loop.
	In either case, note that if $i > j$ and $A(G)_{i,j} \neq 0$, then we must have $i = m$.
	This implies that for every $k \in [m-1]$, the matrix $A(G)_{[k],[k]}$ is upper-triangular with ones along the diagonal.
	Thus $\det(A(G)_{[k],[k]}) = 1$ as desired.
\end{proof}

Although we want to construct an $(K_\sigma | K_\sigma)$-oracle circuit that computes any polynomial $g(\vec{y})$ that is computable by a small layered algebraic branching program, it will be convenient for us to assume that $g$ is homogeneous.
This is not restrictive, as one can always introduce a new variable $z$ and consider the homogeneous polynomial $\hat{g}(\vec{y},z) \coloneqq z^{\deg(g)} g(y_1/z,\ldots,y_n/z)$, which specializes to $g(\vec{y})$ under the map $z \mapsto 1$.
One needs to show that $\hat{g}(\vec{y},z)$ is as easy to compute as $g(\vec{y})$.
Below, we provide a proof that this can be done for layered ABPs, although we technically show that this is the case for $z^d g(y_1/z,\ldots,y_n/z)$ for some $d \ge \deg(g)$.

\begin{lemma} \label{lem:abp to hom abp}
	Let $g(\vec{y}) \in \F[\vec{y}]$ be a polynomial and suppose that $g$ can be computed by a layered algebraic branching program on $m$ vertices.
	Let $z$ be a new variable.
	Then there is a homogeneous polynomial $\hat{g}(\vec{y},z) \in \F[\vec{y},z]$ such that $\hat{g}$ can be computed by a layered algebraic branching program on $m$ vertices and that $\hat{g}(\vec{y},1) = g(\vec{y})$.
\end{lemma}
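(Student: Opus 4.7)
The plan is to homogenize the ABP edge by edge. Since the given ABP is layered on $m$ vertices, its vertex set partitions into layers $V_0, V_1, \ldots, V_d$ with source $s \in V_0$ and sink $t \in V_d$, and every $s$-$t$ path has length exactly $d$. Each edge weight is an affine linear polynomial $\ell(\vec{y}) = a_0 + \sum_i a_i y_i$, and $g(\vec{y})$ equals the sum over $s$-$t$ paths of the product of the edge weights along each path, so $\deg(g) \le d$.

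To each affine edge weight $\ell(\vec{y}) = a_0 + \sum_i a_i y_i$ associate its linear homogenization $\hat{\ell}(\vec{y}, z) \coloneqq a_0 z + \sum_i a_i y_i$, which is homogeneous of degree $1$ in $(\vec{y}, z)$ and satisfies $\hat{\ell}(\vec{y}, 1) = \ell(\vec{y})$. Replacing every edge weight of the ABP by its homogenization yields a new layered ABP on the same underlying graph (hence still on $m$ vertices) whose edge weights are homogeneous linear polynomials in $\F[\vec{y}, z]$. Let $\hat{g}(\vec{y}, z)$ be the polynomial computed by this homogenized ABP. Since every $s$-$t$ path has length exactly $d$ and each such path contributes a product of $d$ homogeneous linear forms, $\hat{g}(\vec{y}, z)$ is a sum of homogeneous polynomials of degree $d$, hence is itself homogeneous of degree $d$.

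Finally, the substitution $z \mapsto 1$ restores each $\hat{\ell}$ to the original $\ell$, and since the ABP computes its polynomial via sums of products of edge weights (both of which commute with this substitution), we obtain $\hat{g}(\vec{y}, 1) = g(\vec{y})$. This gives the required $\hat{g}$ on $m$ vertices.

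There is essentially no obstacle here: the construction is the standard homogenization trick applied directly to the branching program. The one subtlety worth noting, and which the surrounding prose flags explicitly, is that the degree of $\hat{g}$ equals the depth $d$ of the ABP rather than $\deg(g)$, and these may differ when the ABP exhibits cancellations. Thus what we prove is really that $\hat{g}(\vec{y}, z) = z^d g(\vec{y}/z)$ for some $d \ge \deg(g)$, which is exactly what the subsequent applications require.
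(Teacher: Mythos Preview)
Your proposal is correct and takes essentially the same approach as the paper: homogenize each affine edge label $\alpha_0 + \sum_i \alpha_i y_i$ to $\alpha_0 z + \sum_i \alpha_i y_i$ and use the layered structure to conclude homogeneity. The only cosmetic difference is that the paper phrases the homogeneity claim as an induction on layer depth (showing each vertex in layer $i$ computes a homogeneous degree-$i$ polynomial), whereas you argue directly via the path-sum, and you also correctly flag the $d \ge \deg(g)$ subtlety that the paper mentions in the surrounding discussion.
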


\begin{proof}
	Let $G = (V = V_0 \sqcup V_1 \sqcup \cdots \sqcup V_k, E)$ be an $m$-vertex ABP that computes $g(\vec{y})$, where the $V_i$ are the layers of the ABP.
	Without loss of generality, we assume that no vertex of $G$ computes the zero polynomial; if this is the case, we simply remove such a vertex.
	We relabel the edges of $G$ as follows: if an edge $e \in E$ is labeled by the polynomial $\ell_e(\vec{y}) = \alpha_0 + \sum_{i=1}^n \alpha_i y_i$, we relabel the edge $e$ with $\hat{\ell}_e(\vec{y},z) = \alpha_0 z + \sum_{i=1}^n \alpha_i y_i$.
	Let $\hat{G}$ denote the relabeled ABP.

	It is clear that $\hat{G}$ is an $m$-vertex layered ABP.
	For each vertex $v \in V$, let $g_v(\vec{y})$ be the polynomial computed by $v$ in $G$, and let $\hat{g}_v(\vec{y},z)$ be the polynomial computed at $v$ in $\hat{G}$.
	We claim that for each $i \in \set{0,1,\ldots,k}$ and $v \in V_i$, the polynomial $\hat{g}_v(\vec{y},z)$ is homogeneous of degree $i$ and that $\hat{g}_v(\vec{y},1) = g_v(\vec{y})$.
	We prove this by induction on the depth of the vertex $v$ in $G$, i.e., the layer of $V$ containing $v$.

	If $v \in V_0$, then $\hat{g}_v(\vec{y},z) = g_v(\vec{y}) = 1$ and we are done.
	Otherwise, we have $v \in V_i$ for some $i \ge 1$.
	By definition, we have
	\[
		\hat{g}_v(\vec{y},z) = \sum_{u \in V_{i-1}} \hat{\ell}_{u \to v}(\vec{y},z) \cdot \hat{g}_u(\vec{y},z).
	\]
	By induction, for every $u \in V_{i-1}$, the polynomial $\hat{g}_u(\vec{y},z)$ is a homogeneous degree-$(i-1)$ polynomial that satisfies $\hat{g}_u(\vec{y},1) = g_u(\vec{y})$.
	Furthermore, each nonzero $\hat{\ell}_{u \to v}(\vec{y},z)$ is a homogeneous degree-1 polynomial, so it follows that $\hat{g}_v(\vec{y},z)$ is a homogeneous degree-$i$ polynomial.
	Setting $z \mapsto 1$, we have
	\begin{align*}
		\hat{g}_v(\vec{y},1) &= \sum_{u \in V_{i-1}} \hat{\ell}_{u \to v}(\vec{y},1) \cdot \hat{g}_u(\vec{y},1) \\
		&= \sum_{u \in V_{i-1}} \ell_{u \to v}(\vec{y}) \cdot g_u(\vec{y}) \\
		&= g_v(\vec{y}).
	\end{align*}
	Thus, the polynomial $\hat{g}_v(\vec{y},z)$ is as claimed.

	To finish the proof of the lemma, observe that if $v$ is the output vertex of $G$, then $\hat{g}_v(\vec{y},z)$ is the desired polynomial.
\end{proof}

Given a nonzero $f(X) \in \detideal{n}{m}{r}$, we will use the preceding lemmas together with \autoref{prop:reduction to single bideterminant} to construct a depth-three $f$-oracle circuit computing $\det_{\Theta(r^{1/3})}(X) + O(\eps)$.
In fact, for any polynomial $g(\vec{y})$ computable by a layered algebraic branching program on $r$ vertices, we can construct an $f$-oracle circuit computing $g$.

\begin{theorem} \label{thm:proj to small abp} 
	Let $f(X) \in \detideal{n}{m}{r}$ be a nonzero polynomial and let $h(X,\eps) \in \F\llb \eps \rrb[X]$ be any polynomial such that $h(X,\eps) = f(X) + O(\eps)$.
	Let $g(\vec{y}) \in \F[\vec{y}]$ be a polynomial in the border of layered algebraic branching programs with at most $r$ vertices. 
	Then there is a depth-three $h$-oracle circuit $\Phi$ defined over $\F(\eps)$ such that the following hold.
	\begin{enumerate}
		\item
			$\Phi$ has $nm$ addition gates at the bottom layer, a single $h$-oracle gate in the middle layer, and a single addition gate at the top layer.
		\item
			If $\ch(\F) = 0$, then $\Phi$ computes $g(\vec{y}) + O(\eps)$.
		\item
			If $\ch(\F) = p > 0$, then $\Phi$ computes $g(\vec{y})^{p^k} + O(\eps)$ for some $k \in \naturals$.
	\end{enumerate}
\end{theorem}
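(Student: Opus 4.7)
The plan is to first build, for some $k \in \naturals$ (with $k = 0$ in characteristic zero), a depth-three circuit with a single \emph{exact} $f$-oracle gate that computes $g(\vec{y})^{p^k} + O(\eps)$, and then to invoke \autoref{lem:exact to approx oracle} to swap the exact oracle for an $h$-oracle.  The construction has three ingredients:  the bideterminant reduction of \autoref{prop:reduction to single bideterminant}, Valiant's ABP-to-determinant encoding (\autoref{lem:abp to det}), and the two-parameter semicontinuity argument of \autoref{lem:semicontinuity}.

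Applying \autoref{prop:reduction to single bideterminant} with a fresh indeterminate $\eta$ in place of the parameter there yields linear forms $\ell_{i,j}(X,\eta)$, an integer $q$, a nonzero $\alpha \in \F$, and a partition $\sigma$ with $\sigma_1 \ge r$ such that $f(\ell(X,\eta)) = \eta^q \alpha (K_\sigma \mid K_\sigma)(X) + O(\eta^{q+1})$.  Introducing a second fresh indeterminate $\delta$, I multiply one edge label of a layered ABP for $g$ (or for an exact approximant $g_\tau = g + O(\tau)$ if $g$ is only in the border) by $\delta$ to obtain a layered ABP on $r$ vertices computing $\delta g$, and then apply \autoref{lem:abp to det} to get an $r \times r$ matrix $A(\vec{y},\delta)$, linear in $\vec{y}$, with $\det(A) = 1 + \delta g$ and $\det(A_{[k],[k]}) = 1$ for every $k < r$.

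The crux is to embed $A$ into an $n \times m$ matrix $M(\vec{y},\delta)$ so that $(K_\sigma \mid K_\sigma)(M)$ is a clean power of $\det(A)$.  Let $\ell$ be the multiplicity of $\sigma_1$ in $\sigma$.  Place $A$ in the bottom-right $r \times r$ block of the $\sigma_1 \times \sigma_1$ leading principal submatrix of $M$, fill the top-left $(\sigma_1 - r) \times (\sigma_1 - r)$ diagonal block with $I$, and set all remaining entries of $M$ to zero.  A block-determinant computation gives $\det(M_{[k],[k]}) = 1$ for every $k < \sigma_1$ (each corresponds to a proper leading minor of $A$), while $\det(M_{[\sigma_1],[\sigma_1]}) = \det(A) = 1 + \delta g$, so $(K_\sigma \mid K_\sigma)(M) = (1 + \delta g)^\ell$.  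Writing $\ell = \ell' p^k$ with $\gcd(\ell',p) = 1$ (and $\ell' = \ell$, $p^k = 1$ in characteristic zero), Frobenius gives
\[
	(1 + \delta g)^\ell = (1 + \delta^{p^k} g^{p^k})^{\ell'} = 1 + \ell' \delta^{p^k} g^{p^k} + O(\delta^{2 p^k}).
\]

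Substituting $X = M(\vec{y},\delta)$ into the bideterminant reduction, then subtracting $\eta^q \alpha$ and rescaling the oracle edge by $(\eta^q \alpha \ell' \delta^{p^k})^{-1}$ at the top gate, produces $\Phi(\vec{y},\eta,\delta) = g^{p^k}(\vec{y}) + O(\delta^{p^k}) + O(\eta / \delta^{p^k})$.  Since $\eta/\delta^{p^k}$ is $O(\eta)$ in $\F\llb\eta\rrb((\delta))$, \autoref{lem:semicontinuity} produces an $N$ with $\Phi(\vec{y},\delta^N,\delta) = g^{p^k}(\vec{y}) + O(\delta)$; renaming $\delta$ as $\eps$ yields the depth-three $f$-oracle circuit of the prescribed shape, and \autoref{lem:exact to approx oracle} (combined, if needed, with a separate specialization for $\tau$) converts it into the required $h$-oracle circuit.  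I expect the main subtlety to be the interaction between the embedding and the passage from $(1 + \delta g)^\ell$ to $g^{p^k}$:  because $(K_\sigma \mid K_\sigma)$ is a product of \emph{nested} leading principal minors whose values cannot be chosen independently, placing $A$ flush against the bottom-right corner of the $\sigma_1$-block is precisely what collapses all smaller factors to $1$, and the appearance of $g^{p^k}$ rather than $g$ in positive characteristic is the unavoidable consequence of $p$ possibly dividing the multiplicity $\ell$.
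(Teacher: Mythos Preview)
Your proposal is correct and follows the same three-stage architecture as the paper (bideterminant reduction via \autoref{prop:reduction to single bideterminant}, the ABP-to-determinant encoding of \autoref{lem:abp to det}, then extraction of the linear term from a power $(1+\text{small})^\ell$ with a two-parameter semicontinuity merge), but two of your implementation choices differ from the paper's and are worth pointing out.

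First, you embed $A$ in the bottom-right $r \times r$ corner of the $\sigma_1 \times \sigma_1$ leading block, which makes the exponent equal to the multiplicity $\ell$ of $\sigma_1$; the paper instead puts $A$ in the top-left $r \times r$ block and pads with the identity, so its exponent is $t = |\{i : \sigma_i \ge r\}|$.  Both are fine for the theorem, and the resulting $p^k$ may differ, but either gives ``some $k$''.  Second, and more interestingly, you sidestep the homogenisation step (\autoref{lem:abp to hom abp}) entirely by baking $\delta$ into the ABP so that $\det(A) = 1 + \delta g$ from the start.  The paper instead homogenises $g$ to $\hat g(\vec y,\eps,z)$, applies \autoref{lem:abp to det} to $\hat g$, and only afterwards substitutes $y_i \mapsto \delta y_i$, $z \mapsto \delta$ to produce the factor $\delta^{\deg \hat g}$.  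Your route is more direct and saves a lemma; the paper's route has the mild advantage that it uses a single parameter $\eps$ for both the bideterminant reduction and the border-ABP approximation of $g$, whereas you carry $\eta$, $\delta$, and (in the border case) $\tau$ separately and must merge them one at a time.

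One small fix: ``multiply one edge label by $\delta$'' does not give an ABP for $\delta g$ unless every $s$--$t$ path passes through that edge.  You should instead scale all edges leaving the source (or all edges in any fixed layer) by $\delta$; this preserves the vertex count and genuinely yields $\delta g$.
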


\begin{proof}
	By \autoref{lem:exact to approx oracle}, it suffices to prove the theorem in the case where the oracle gates compute $f$ exactly.
	By assumption, there is a polynomial $\tilde{g}(\vec{y},\eps) \in \F[\eps][\vec{y}]$ such that $\tilde{g}(\vec{y},\eps) = g(\vec{y}) + O(\eps)$ and $\tilde{g}(\vec{y},\eps)$ can be computed by a layered algebraic branching program on at most $r$ vertices.
	\autoref{lem:abp to hom abp} implies that there is a homogeneous polynomial $\hat{g}(\vec{y},\eps,z) \in \F[\eps][\vec{y},z]$ computable by a layered algebraic branching program on at most $r$ vertices such that $\hat{g}(\vec{y},\eps,1) = \tilde{g}(\vec{y},\eps)$.

	Applying \autoref{prop:reduction to single bideterminant} to $f(X)$, we obtain linear functions $\ell_{1,1}(X,\eps),\ldots,\ell_{n,m}(X,\eps)$, a nonzero $\alpha \in \F$, and some $q \in \integers$ such that
	\[
		f(\ell_{1,1}(X,\eps),\ldots,\ell_{n,m}(X,\eps)) = \eps^q \alpha (K_\sigma | K_\sigma)(X) + O(\eps^{q+1})
	\]
	for some partition $\sigma$ of width at least $r$.
	Since $\hat{g}(\vec{y},\eps,z)$ can be computed by a layered algebraic branching program on at most $r$ vertices, we can obtain a layered ABP on exactly $r$ vertices computing $\hat{g}(\vec{y},\eps,z)$ by adding isolated vertices.
	Let $A(\vec{y},z) \in \F[\eps][\vec{y},z]^{r \times r}$ be the matrix obtained by applying \autoref{lem:abp to det} to $\hat{g}(\vec{y},\eps,z)$.
	Extend $A(\vec{y},z)$ to an $n \times m$ matrix by adding ones along the main diagonal and zeroes elsewhere.
	Then we have
	\begin{align*}
		f(\ell_{1,1}(A(\vec{y},z),\eps),\ldots,\ell_{n,m}(A(\vec{y},z),\eps)) &= \eps^q \alpha (K_\sigma | K_\sigma)(A(\vec{y},z)) + O(\eps^{q+1}) \\
		&= \eps^q \alpha \prod_{i=1}^{\hat{\sigma}_1} \det_{\sigma_i}(A(\vec{y},z)_{[\sigma_i],[\sigma_i]}) + O(\eps^{q+1}) \\
		&= \eps^q \alpha \prod_{i : \sigma_i \ge r} \det_{\sigma_i}(A(\vec{y},z)_{[\sigma_i],[\sigma_i]}) \cdot \prod_{i : \sigma_i < r} \det_{\sigma_i}(A(\vec{y},z)_{[\sigma_i], [\sigma_i]}) + O(\eps^{q+1}) \\
		&= \eps^q \alpha \prod_{i : \sigma_i \ge r} (1 + \hat{g}(\vec{y},\eps,z)) + O(\eps^{q+1}).
	\end{align*}
	Let $h(\vec{y},\eps,z) \coloneqq f(\ell_{1,1}(A(\vec{y},z),\eps),\ldots,\ell_{n,m}(A(\vec{y},z),\eps))$ and let $t = \Abs{\set{i : \sigma_i \ge r}}$.
	The above establishes $h(\vec{y},\eps,z) = \eps^q \alpha (1 + \hat{g}(\vec{y},\eps,z))^t + O(\eps^{q+1})$.

	Suppose $\ch(\F) = 0$.
	Under the substitution $y_i \mapsto \delta \cdot y_i$ and $z \mapsto \delta$, we have
	\begin{align*}
		h(\delta \cdot \vec{y}, \eps, \delta) &= \eps^q \alpha (1 + \hat{g}(\delta \cdot \vec{y},\eps,\delta))^t + O(\eps^{q+1}) \\
		&= \eps^q \alpha (1 + \delta^{\deg(\hat{g})} \hat{g}(\vec{y},\eps,1))^t + O(\eps^{q+1}) \\
		&= \eps^q \alpha (1 + \delta^{\deg(\hat{g})} g(\vec{y}) + O(\eps))^t + O(\eps^{q+1}) \\
		&= \eps^q \alpha \sum_{i=0}^t \binom{t}{i} \delta^{i \cdot \deg(\hat{g})} g(\vec{y})^i + O(\eps^{q+1}) \\
		&= \eps^q \alpha + \eps^q \delta^{\deg(\hat{g})} \alpha t g(\vec{y}) + O(\eps^q \delta^{2 \deg(\hat{g})}) + O(\eps^{q+1}).
	\end{align*}
	Performing the substitution
	\begin{align*}
		\eps &\mapsto \eps^N \\
		\delta &\mapsto \eps
	\end{align*}
	for $N$ sufficiently large yields
	\[
		h(\eps \cdot \vec{y}, \eps^N, \eps) = \eps^{qN} \alpha + \eps^{qN + \deg(\hat{g})} \alpha t g(\vec{y}) + O(\eps^{qN + \deg(\hat{g}) + 1}).
	\]
	The desired $f$-oracle circuit for $g$ is then given by
	\[
		\Phi(\vec{y}) \coloneqq \frac{h(\eps \cdot \vec{y}, \eps^N, \eps) - \eps^{qN}\alpha}{\eps^{qN + \deg(\hat{g})} \alpha t} = g(\vec{y}) + O(\eps).
	\]

	If instead $\ch(\F) = p > 0$, the above proof only needs to be modified in the case that $p$ divides $t$.
	Let $k \in \naturals$ be the largest natural number such that $p^k$ divides $t$ and write $t = p^k b$.
	In this case, we instead get
	\[
		h(\delta \cdot \vec{y}, \eps, \delta) = \eps^q \alpha + \eps^q \delta^{\deg(\hat{g}) p^k} \alpha b g(\vec{y})^{p^k} + O(\eps^q \delta^{2\deg(\hat{g}) p^k}) + O(\eps^{q+1}).
	\]
	Again, for $N$ sufficiently large, we obtain an $f$-oracle circuit for $g$ via
	\[
		\Phi(\vec{y}) \coloneqq \frac{h(\eps \cdot \vec{y}, \eps^N, \eps) - \eps^{qN} \alpha}{\eps^{qN + \deg(\hat{g}) p^k} \alpha b} = g(\vec{y})^{p^k} + O(\eps). \qedhere
	\]
\end{proof}

We now instantiate \autoref{thm:proj to small abp} with the determinant and iterated matrix multiplication polynomials.
These corollaries are essentially obvious, but seem interesting in their own right and will be of use in later sections.

\begin{corollary} \label{cor:proj to det}
	Let $f(X) \in \detideal{n}{m}{r}$ be a nonzero polynomial and let $h(X,\eps) \in \F\llb \eps \rrb [X]$ be any polynomial such that $h(X,\eps) = f(X) + O(\eps)$.
	Let $t \le O(r^{1/3})$.
	Then there is a depth-three $h$-oracle circuit $\Phi$ defined over $\F(\eps)$ with the following properties.
	\begin{enumerate}
		\item
			The bottom layer of $\Phi$ consists of $nm$ addition gates, the middle layer has a single $h$-oracle gate, and the top layer has a single addition gate.
		\item
			If $\ch(\F) = 0$, then $\Phi$ computes $\det_t(Y) + O(\eps)$.
		\item
			If $\ch(\F) = p > 0$, then $\Phi$ computes $\det_t(Y)^{p^k} + O(\eps)$ for some $k \in \naturals$.
	\end{enumerate}
\end{corollary}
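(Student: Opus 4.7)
The plan is to obtain this corollary as an immediate instantiation of \autoref{thm:proj to small abp} with $g(Y) = \det_t(Y)$. Since the statement of that theorem requires $g$ to lie in the border of layered algebraic branching programs on at most $r$ vertices, the only thing I need to verify is that $\det_t(Y)$ can be computed (exactly, hence trivially in the border) by a layered ABP on at most $r$ vertices whenever $t \le O(r^{1/3})$.

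For this, I would invoke the classical construction of Mahajan and Vinay, which produces a layered algebraic branching program of size $O(t^3)$ computing $\det_t(Y)$ (alternative constructions via clow sequences or iterated matrix multiplication give the same bound up to constants). Choosing the implicit constant in $t \le O(r^{1/3})$ small enough, this ABP has at most $r$ vertices, as required by the hypothesis of \autoref{thm:proj to small abp}. The three structural conclusions of the corollary—the depth-three shape with $nm$ bottom-layer addition gates, a single $h$-oracle gate in the middle, and a single addition gate on top; the characteristic-zero output $\det_t(Y) + O(\eps)$; and the characteristic-$p$ output $\det_t(Y)^{p^k} + O(\eps)$—then follow verbatim from the corresponding conclusions of \autoref{thm:proj to small abp}.

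There is essentially no obstacle here: the whole argument is a one-line reduction to \autoref{thm:proj to small abp} once the Mahajan--Vinay ABP bound is cited. The only minor point to be careful about is tracking the cubic blow-up: a branching program of $r$ vertices can simulate the determinant of an $r^{1/3} \times r^{1/3}$ matrix, which is precisely why the dimension $t$ drops from $r$ (the minor size generating the ideal) to $\Theta(r^{1/3})$ in the conclusion. No additional work is needed to handle the approximate oracle $h(X,\eps)$ or the field characteristic, since both cases are already addressed inside the statement and proof of \autoref{thm:proj to small abp}.
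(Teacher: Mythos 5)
Your proposal is correct and matches the paper's proof essentially verbatim: both cite Mahajan--Vinay for a layered ABP on $O(t^3) \le r$ vertices computing $\det_t(Y)$ and then invoke \autoref{thm:proj to small abp}. No differences to note.
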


\begin{proof}
	\textcite[Theorem 2]{MV97} constructed a layered ABP on $O(t^3) \le r$ vertices that computes $\det_t(Y)$.
	The corollary then follows from \autoref{thm:proj to small abp}.
\end{proof}

\begin{corollary} \label{cor:proj to imm}
	Let $f(X) \in \detideal{n}{m}{r}$ be a nonzero polynomial and let $h(X,\eps) \in \F\llb \eps \rrb[X]$ be any polynomial such that $h(X,\eps) = f(X) + O(\eps)$.
	Let $w, d \in \naturals$ satisfy $w(d-1) + 2 \le r$.
	Then there is a depth-three $h$-oracle circuit $\Phi$ defined over $\F(\eps)$ with the following properties.
	\begin{enumerate}
		\item
			The bottom layer of $\Phi$ consists of $nm$ addition gates, the middle layer has a single $h$-oracle gate, and the top layer has a single addition gate.
		\item
			If $\ch(\F) = 0$, then $\Phi$ computes $\IMM_{w,d}(\vec{y}) + O(\eps)$.
		\item
			If $\ch(\F) = p > 0$, then $\Phi$ computes $\IMM_{w,d}(\vec{y})^{p^k} + O(\eps)$ for some $k \in \naturals$.
	\end{enumerate}
\end{corollary}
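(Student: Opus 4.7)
The plan is to mirror the proof of \autoref{cor:proj to det} exactly, replacing the determinant-to-ABP construction of Mahajan--Vinay with the standard ABP construction for iterated matrix multiplication. Once we have a layered ABP on at most $r$ vertices computing $\IMM_{w,d}$, the corollary follows immediately from \autoref{thm:proj to small abp} applied to $f$ and $h$.

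More concretely, $\IMM_{w,d}(\vec{y})$ is the $(1,1)$-entry of the product $M_1 M_2 \cdots M_d$ of $d$ matrices of size $w \times w$ whose entries are distinct variables. There is a well-known layered ABP computing this polynomial whose layers are as follows: a single source vertex $s$ in layer $0$, followed by $d-1$ intermediate layers each containing $w$ vertices, followed by a single sink vertex $t$ in layer $d$. The edge from the $i$\ts{th} vertex of layer $k$ to the $j$\ts{th} vertex of layer $k+1$ is labeled by the $(i,j)$-entry of $M_{k+1}$ (with the first and last layers adjusted to account for the single source/sink). The total number of vertices is
\[
	1 + w(d-1) + 1 = w(d-1) + 2 \le r,
\]
by hypothesis.

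Having produced such a layered ABP on at most $r$ vertices computing $\IMM_{w,d}(\vec{y})$ exactly, we invoke \autoref{thm:proj to small abp} with this $g(\vec{y}) = \IMM_{w,d}(\vec{y})$. In characteristic zero this yields a depth-three $h$-oracle circuit of the claimed shape (with $nm$ bottom-level addition gates, one $h$-oracle in the middle, and one addition gate on top) computing $\IMM_{w,d}(\vec{y}) + O(\eps)$; in characteristic $p > 0$ we instead obtain $\IMM_{w,d}(\vec{y})^{p^k} + O(\eps)$ for some $k \in \naturals$, since \autoref{thm:proj to small abp} already distinguishes these two cases.

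There is essentially no obstacle here: the only nontrivial ingredient is the ABP-size bound, which is immediate from the explicit construction above and matches the hypothesis $w(d-1) + 2 \le r$ on the nose. All of the heavy lifting, including the reduction to a single bideterminant via \autoref{prop:reduction to single bideterminant}, the projection to $1 + g(\vec{y})$ through \autoref{lem:abp to det} and \autoref{lem:abp to hom abp}, and the handling of approximate oracles via \autoref{lem:exact to approx oracle}, has already been carried out inside \autoref{thm:proj to small abp}.
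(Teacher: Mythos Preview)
Your proposal is correct and takes essentially the same approach as the paper: observe that $\IMM_{w,d}$ is computed by a layered ABP on $w(d-1)+2 \le r$ vertices and invoke \autoref{thm:proj to small abp}. The paper's own proof is just a two-line version of what you wrote.
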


\begin{proof}
	It is clear that $\IMM_{w,d}(\vec{y})$ is computable by a layered algebraic branching program on $w(d-1) + 2 \le r$ vertices.
	\autoref{thm:proj to small abp} completes the proof.
\end{proof}

We conclude this section with a remark on the fact that in characteristic $p > 0$, we only obtain an oracle circuit for a $p$\ts{th} power of the target polynomial $g(\vec{y})$.

\begin{remark}
	Let $\F$ be a field of characteristic $p > 0$.
	If we interpret \autoref{thm:proj to small abp} as a result on ``factoring'' a polynomial $\detideal{n}{m}{r}$, then the appearance of $p$\ts{th} powers in the ``factors'' is not too surprising.
	Most results on polynomial factorization \cite{Kaltofen87,DSY09,KSS15,CKS19a} only guarantee a circuit that computes a $p$\ts{th} power of a factor if the multiplicity of this factor is a multiple of $p^k$ for some $k > 0$.
	In fact, if $f(\vec{x})^p$ can be computed by a size $s$ circuit, it is open whether $f(\vec{x})$ can be computed by a circuit of size $\poly(n,\deg(f),s)$, although some results are known when $n$ is small compared to $s$ \cite{Andrews20}.
\end{remark}

\section{Hardness of Pfaffian Ideals} \label{sec:pfaff}

This section proves an analogue of \autoref{thm:proj to small abp} for ideals generated by sub-Pfaffians of a skew-symmetric matrix.
The outline of the proof is similar to that of \autoref{thm:proj to small abp}, but some technical details must be modified to accommodate the change to Pfaffians.

\subsection{Computing a Standard Monomial}

In this subsection, we construct, for any nonzero $f \in \pfaffideal{2n}{2r}$, a change of variables that takes $f$ to $[K_\sigma](X) + O(\eps)$ for some partition $\sigma$ with $\sigma_1 \ge 2r$.
The outline of the proof is the same as the proof of \autoref{lem:row/col transform}, replacing the straightening law for bideterminants with the corresponding straightening law for Pfaffians.

The following lemma finds a change of variables that takes $f$ to a sum of standard monomials of the form $[K_\sigma](X)$.
This is the Pfaffian analogue of \autoref{lem:row/col transform} and borrows ideas from the proof of \textcite[Lemmas 2.1 and 2.2]{AF80} in a manner analogous to the use of \cite[Theorem 3.3]{CEP80} in proving \autoref{lem:row/col transform}.

\begin{lemma} \label{lem:pfaffian row/col transform}
	Let $\Lambda = (\lambda_{i,j})$ be a $2n \times 2n$ matrix of variables and let $\prec_{\Lambda}$ be the lexicographic monomial order on $\F[\Lambda]$ induced by the order $\lambda_{i,j} \succ \lambda_{k,\ell}$ if $i < k$ or $i = k$ and $j < \ell$.
	Then there is a matrix $M \in \F[\Lambda]^{2n \times 2n}$ with $\det(M) = \pm 1$ such that the following holds.

	Let $f(X) \in \pfaffideal{2n}{2r}$ be a nonzero polynomial and let $f(X) = \sum_{k \in [s]} \alpha_k [S_k](X)$ be the expansion of $f$ as a sum of standard monomials.
	For $k \in [s]$, let $\sigma_k$ be the shape of the tableau $S_k$.
	Then there is a nonempty set $A \subseteq [s]$ such that
	\[
		\LC_{\prec_{\Lambda}}(f(M X M^\top)) = \sum_{k \in A} \alpha_k [K_{\sigma_k}](X)
	\]
	where we take the leading coefficient in the ring $\F[X][\Lambda]$.
\end{lemma}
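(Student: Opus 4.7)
The plan is to mirror the argument for \autoref{lem:row/col transform}, replacing the one-sided row operations $X \mapsto E_{i,j}(z)X$ with their Pfaffian analogues $X \mapsto E_{i,j}(z) X E_{i,j}(z)^\top$, which preserve skew-symmetry and have determinant $1$, so by \autoref{lem:pfaff orthogonal transform} they preserve the Pfaffian of $X$ itself. Because the straightening law for Pfaffians involves only principal submatrices, a single matrix $M$ (applied as $X \mapsto MXM^\top$) will play the role of both $M$ and $N$ from the determinantal argument.

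The first step is to establish the Pfaffian analogue of the key identity
\[
[S](E_{i,j}(z) X E_{i,j}(z)^\top) = z^{h_i^j(S)} [\sub{i}{j}(S)](X) + O(z^{h_i^j(S)-1}).
\]
Writing $E_{i,j}(z) = I + z e_i e_j^\top$ and expanding $E_{i,j}(z) X E_{i,j}(z)^\top$, one sees that for a principal submatrix $X_{R,R}$ with $i \in R$ and $j \notin R$, this transformation modifies only row/column $i$ by adding $z$ times row/column $j$. Multilinearity of the Pfaffian in that row-column pair then yields $\Pf((E_{i,j}(z) X E_{i,j}(z)^\top)_{R,R}) = \Pf(X_{R,R}) + z \cdot (\pm 1) \cdot \Pf(X_{R',R'})$ with $R' = (R \setminus \{i\}) \cup \{j\}$, while the cases $i,j \in R$ or $i,j \notin R$ leave the Pfaffian unchanged. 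Taking the product over the rows of $S$ gives the identity above (with signs absorbed into the $\pm$ convention of standard monomials, which are invariant under reordering of indices).

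Second, I define
\[
M \coloneqq E_{1,2}(\lambda_{1,2}) E_{1,3}(\lambda_{1,3}) \cdots E_{1,2n}(\lambda_{1,2n}) E_{2,3}(\lambda_{2,3}) \cdots E_{2n-1,2n}(\lambda_{2n-1,2n}) J_{2n},
\]
so $\det(M) = \pm 1$. Letting $f_{i,j}(X,\Lambda)$ denote the image of $f$ after conjugation by the partial product up through $E_{i,j}(\lambda_{i,j})$, I would prove by induction on $(i,j)$ in the order used in \autoref{claim:S operator induction} that there is a nonempty $A_{i,j} \subseteq [s]$ with
\[
\LC_{\prec_\Lambda}(f_{i,j}(X,\Lambda)) = \sum_{k \in A_{i,j}} \alpha_k \, [\sub{i}{j} \circ \cdots \circ \sub{1}{2}(S_k)](X).
\]
The inductive step is essentially identical to its determinantal counterpart: one expands $f_{i,j}(X,\Lambda) = f_{i',j'}(E_{i,j}(\lambda_{i,j}) X E_{i,j}(\lambda_{i,j})^\top, \Lambda)$, applies the step-one identity to the leading-coefficient terms of $f_{i',j'}$, and takes $A_{i,j}$ to be the subset of $A_{i',j'}$ attaining the maximum $h_i^j$-value. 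Here \autoref{claim:S operator induction} guarantees that the tableaux involved satisfy the hypothesis of \autoref{lem:S operator}, so the map $S \mapsto (\sub{i}{j}(S), h_i^j(S))$ is injective and no cancellation occurs among the surviving terms.

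Finally, by \autoref{lem:composite S operator}, after all $\sub{i}{j}$'s are applied each tableau $S_k$ is transformed into $\overline{K}_{\sigma_k}$, and one checks that conjugation by $J_{2n}$ (which reverses the index $i \mapsto 2n-i+1$) converts $[\overline{K}_{\sigma_k}](X)$ into $[K_{\sigma_k}](X)$, up to an overall sign of $\det(J_{2n})^{\hat{\sigma}_k} = (\pm 1)$ which can be absorbed into the $\alpha_k$. The main obstacle is verifying step one cleanly: the multilinearity computation for the Pfaffian is slightly more delicate than the row-operation identity for the determinant, since $MXM^\top$ mixes row and column operations simultaneously, and the signs arising from reordering the index set $R'$ in nonstandard order must be tracked to match the standard-monomial conventions. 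Once this identity is in hand, the remainder is a direct transcription of the determinantal argument.
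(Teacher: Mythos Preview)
Your proposal is correct and follows essentially the same approach as the paper: define $M$ as the product of elementary matrices $E_{i,j}(\lambda_{i,j})$ followed by $J_{2n}$, establish the identity $[S](E_{i,j}(z)XE_{i,j}(z)^\top) = z^{h_i^j(S)}[\sub{i}{j}(S)](X) + O(z^{h_i^j(S)-1})$ via multilinearity of the Pfaffian, and then run the same lex-order induction on $(i,j)$ as in \autoref{lem:row/col transform}, invoking \autoref{lem:composite S operator} at the end. Your flagging of the sign-tracking in the multilinearity step as the main obstacle is apt (and arguably more careful than the paper, which asserts the one-row identity without a sign); any stray signs are ultimately harmless since \autoref{prop:sparsify pfaffian} isolates a single $[K_\sigma]$ and absorbs the sign into the nonzero scalar $\alpha$.
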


\begin{proof}
	We begin with the construction of the matrix $M$.
	For $i, j \in [2n]$ with $i < j$, let $E_{i,j}(z)$ denote the matrix which has ones on the diagonal and $z$ in the $(i,j)$ entry.
	We then let $M_{i,j}(\Lambda) \in \F[\Lambda]^{2n \times 2n}$ be the matrix
	\[
		M_{i,j}(\Lambda) \coloneqq E_{1,2}(\lambda_{1,2}) E_{1,3}(\lambda_{1,3}) \cdots E_{1,n}(\lambda_{1,n}) E_{2,3}(\lambda_{2,3}) \cdots E_{i,j}(\lambda_{i,j}).
	\]
	Letting $J_{2n}$ denote the $2n \times 2n$ matrix with ones on the anti-diagonal and zeroes elsewhere, we then define $M = M_{n-1,n}(\Lambda) J_n$.
	It is clear from the definition of $M$ that $\det(M) = \pm 1$.

	We now show that the polynomial $f(M X M^\top)$ behaves as claimed.
	Recall that if $S$ is a Young tableau, we let $h_i^j(S)$ denote the number of entries changed from $i$ to $j$ when the operator $\sub{i}{j}$ is applied to $S$.
	Observe that if $S$ is a one-row tableau, then the multilinearity of the Pfaffian and \autoref{lem:pfaff orthogonal transform} imply
	\[
		[S](E_{i,j}(z) X E_{i,j}(z)^\top) = \begin{cases}
			[S](X) + z [\sub{i}{j}(S)](X) & \text{if $i$ appears in $S$ but $j$ does not} \\
			[S](X) & \text{otherwise.}
		\end{cases}
	\]
	Note that if both $i$ and $j$ appear in $S$ or if neither appear in $S$, then $S = \sub{i}{j}(S)$.
	Thus, viewing the above as a polynomial in $\F[X][z]$, we see that the leading term is $z^{h_i^j(S)} [\sub{i}{j}(S)](X)$.
	This extends to a multi-row tableau $S$ via
	\[
		[S](E_{i,j}(z) X E_{i,j}(z)^\top) = z^{h_i^j(S)} [\sub{i}{j}(S)](X) + O(z^{h_i^j(S) - 1}),
	\]
	where $O(z^{h_i^j(S)-1})$ denotes a polynomial in $\F[X][z]$ of degree at most $h_i^j(S) - 1$.

	For $i, j \in [2n]$ with $i < j$, let 
	\[
		f_{i,j}(X,\Lambda) \coloneqq f(M_{i,j}(\Lambda) X M_{i,j}(\Lambda)^\top).
	\]
	Note that $f(M X M^\top) = f_{n-1,n}(J_n X J_n^\top)$.
	We claim that for every $i, j \in [2n]$ with $i < j$, there is a nonempty set $A_{i,j} \subseteq [s]$ such that
	\[
		\LC_{\prec_{\Lambda}}(f_{i,j}(X,\Lambda)) = \sum_{k \in A_{i,j}} \alpha_k [\sub{i}{j} \circ \cdots \sub{2}{3} \circ \sub{1}{n} \circ \cdots \circ \sub{1}{2} (S_k)](X).
	\]
	Assuming this, \autoref{lem:composite S operator} implies
	\[
		\LC_{\prec_\Lambda}(f_{n-1,n}(X,\Lambda)) = \sum_{k \in A_{n-1,n}} \alpha_k [\overline{K}_{\sigma_k}](X).
	\]
	From this, we use the fact that $[\overline{K}_{\sigma}](J_n X J_n^\top) = [K_\sigma](X)$ to obtain
	\begin{align*}
		\LC_{\prec_\Lambda}(f(M X M^\top)) &= \LC_{\prec_\Lambda}(f_{n-1,n}(J_n X J_n^\top)) \\
		&= \sum_{k \in A_{n-1,n}} \alpha_k [\overline{K}_{\sigma_k}](J_n X J_n^\top) \\
		&= \sum_{k \in A_{n-1,n}} \alpha_k [K_{\sigma_k}](X)
	\end{align*}
	as desired.

	It remains to prove the claim about $\LC_{\prec_\Lambda}(f_{i,j}(X,\Lambda))$.
	We proceed by induction on $(i,j)$ in the order $(1,2) \prec (1,3) \prec \cdots \prec (1,n) \prec (2,3) \prec \cdots \prec (n-1, n)$.
	Let $(i',j')$ be the predecessor of $(i,j)$ in the $\prec$ order.
	If $(i,j) = (1,2)$, we set $f_{i',j'}(X,\Lambda) = f(X)$ and $A_{i',j'} = [s]$.
	Let
	\[
		H_i^j \coloneqq \max_{k \in A_{i',j'}} h_i^j(\sub{i'}{j'} \circ \cdots \circ \sub{1}{2}(S_k))
	\]
	and
	\[
		A_{i,j} = \set{k \in A_{i',j'} : h_i^j(\sub{i'}{j'} \circ \cdots \circ \sub{1}{2}(S_k)) = H_i^j}.
	\]
	The set $A_{i,j}$ is necessarily nonempty, as $H_i^j$ is obtained by maximizing over a finite nonempty set.
	By induction, there is some $\vec{e} \in \naturals^{2n \times 2n}$ such that
	\[
		f_{i',j'}(X,\Lambda) = \Lambda^{\vec{e}} \sum_{k \in A_{i',j'}} \alpha_k [\sub{i'}{j'} \circ \cdots \circ \sub{1}{2}(S_k)](X) + g(X,\Lambda),
	\]
	where $g(X,\Lambda) \in \F[X][\Lambda]$ is a polynomial supported on monomials that are smaller than $\Lambda^{\vec{e}}$ in the $\prec_{\Lambda}$ order.
	Because $f_{i',j'}$ only depends on $\lambda_{1,2},\ldots,\lambda_{i',j'}$, we know that $\Lambda^{\vec{e}}$ is a monomial consisting of only these variables.
	Applying the definition of $f_{i,j}$, we then have
	\begin{align*}
		f_{i,j}(X,\Lambda) &= f_{i',j'}(E_{i,j}(\lambda_{i,j}) X E_{i,j}(\lambda_{i,j})^\top,\Lambda) \\
		&= \Lambda^{\vec{e}} \sum_{k \in A_{i',j'}} \alpha_k [\sub{i'}{j'} \circ \cdots \circ \sub{1}{2}(S_k)](E_{i,j}(\lambda_{i,j}) X E_{i,j}(\lambda_{i,j})^\top) + g(E_{i,j}(\lambda_{i,j}) X E_{i,j}(\lambda_{i,j})^\top,\Lambda) \\
		&= \Lambda^{\vec{e}} \lambda_{i,j}^{H_i^j} \sum_{\alpha \in A_{i,j}} \alpha_k [\sub{i}{j} \circ \cdots \circ \sub{1}{2}(S_k)](X) + \Lambda^{\vec{e}} p(X,\lambda_{i,j}) + g(E_{i,j}(\lambda_{i,j}) X E_{i,j}(\lambda_{i,j})^\top, \Lambda),
	\end{align*}
	where $p(X,\lambda_{i,j}) \in \F[X][\Lambda]$ is a polynomial of degree at most $H_i^j - 1$ in $\lambda_{i,j}$.
	Because of this, every monomial of $\Lambda^{\vec{e}} p(X,\lambda_{i,j})$ is smaller than $\Lambda^{\vec{e}} \lambda_{i,j}^{H_i^j}$ in the $\prec_{\Lambda}$ order.
	The same holds true for $g(E_{i,j}(\lambda_{i,j}) X E_{i,j}(\lambda_{i,j})^\top,\Lambda)$, as the substitution $X \mapsto E_{i,j}(\lambda_{i,j}) X E_{i,j}(\lambda_{i,j})^\top$ only changes the $\lambda_{i,j}$-degree of a monomial in $g(X,\Lambda)$ and every monomial of $g(X,\Lambda)$ is already smaller than $\Lambda^{\vec{e}}$ in the $\prec_\Lambda$ order.
	This implies that
	\[
		\LC_{\prec_\Lambda}(f_{i,j}(X,\Lambda)) = \sum_{k \in A_{i,j}} \alpha_k [\sub{i}{j} \circ \cdots \circ \sub{1}{2}](X)
	\]
	as claimed.
\end{proof}

We now use the result of \autoref{lem:pfaffian row/col transform} to construct a change of variables that takes a nonzero $f \in \pfaffideal{2n}{2r}$ to $[K_\sigma](X) + O(\eps)$ for a partition $\sigma$ of width at least $2r$.
This is the analogue of \autoref{prop:reduction to single bideterminant} for Pfaffians.
The proof is similar to that of \autoref{prop:reduction to single bideterminant}: after applying \autoref{lem:pfaffian row/col transform}, we scale the rows and columns of $X$ by powers of a new variable $y$ to isolate a single standard monomial $[K_\sigma](X)$.

\begin{proposition} \label{prop:sparsify pfaffian}
	Let $f(X) \in \pfaffideal{2n}{2r}$ be nonzero.
	There is a collection of $4n^2$ linearly independent linear functions $\ell_{i,j}(X,\eps) \in \F(\eps)[X]$ indexed by $(i,j) \in [2n] \times [2n]$, an integer $q \in \integers$, a nonzero $\alpha \in \F$, and a partition $\sigma$ with $\sigma_1 \ge 2r$ such that
	\[
		f(\ell_{1,1}(X,\eps),\ldots,\ell_{2n,2n}(X,\eps)) = \eps^q \alpha [K_\sigma](X) + O(\eps^{q+1}).
	\]
\end{proposition}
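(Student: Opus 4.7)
The approach is to closely mirror the proof of Proposition~\ref{prop:reduction to single bideterminant}, replacing the straightening law for bideterminants with Theorem~\ref{thm:pfaff straightening} and the row/column transform of Lemma~\ref{lem:row/col transform} with its Pfaffian analogue, Lemma~\ref{lem:pfaffian row/col transform}. The main structural difference is that, because we must preserve the skew-symmetry of $X$, we work with a single congruence $X \mapsto MXM^\top$ in place of the independent row and column transformations used in the determinantal setting.

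First, I would write $f(X) = \sum_{k \in [s]} \alpha_k [S_k](X)$ in the standard monomial basis. By Corollary~\ref{cor:pfaff ideal width}, each shape $\sigma_k$ of $S_k$ satisfies $(\sigma_k)_1 \ge 2r$. Applying Lemma~\ref{lem:pfaffian row/col transform} yields a matrix $M \in \F[\Lambda]^{2n \times 2n}$ with $\det(M) = \pm 1$ and a nonempty subset $A \subseteq [s]$ for which
\[
	\LC_{\prec_\Lambda}(f(MXM^\top)) = \sum_{k \in A} \alpha_k [K_{\sigma_k}](X).
\]

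Next, I would isolate a single $[K_{\sigma_{k^*}}]$ by a skew-symmetry-preserving diagonal scaling. Let $D \coloneqq \deg(f)$, introduce a new indeterminate $y$, and apply the substitution $x_{i,j} \mapsto y^{(D+1)^i + (D+1)^j} x_{i,j}$, realized as $X \mapsto YXY$ where $Y$ is the diagonal matrix whose $i$\ts{th} diagonal entry is $y^{(D+1)^i}$. Since $Y^\top = Y$, this substitution preserves skew-symmetry. Crucially, every monomial of $\Pf(X_{[\sigma_i],[\sigma_i]})$ is a perfect matching on $[\sigma_i]$ and so uses every index in $[\sigma_i]$ exactly once, hence the Pfaffian is uniformly scaled by $y^{\sum_{a=1}^{\sigma_i}(D+1)^a}$. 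Multiplying over the rows of $K_\sigma$ scales $[K_\sigma](X)$ by $y^{W(\sigma)}$, where
\[
	W(\sigma) \coloneqq \sum_{a \ge 1} \hat{\sigma}_a\,(D+1)^a.
\]
Since $[K_\sigma]$ is a product of $\hat{\sigma}_1$ Pfaffians each of positive degree, we have $\hat{\sigma}_1 \le D$, so every $\hat{\sigma}_a \le \hat{\sigma}_1 \le D$. The sequence $(\hat{\sigma}_a)_a$ is thus recoverable from $W(\sigma)$ by base-$(D+1)$ expansion, making $W$ injective on the finite set of shapes appearing in $f$. Extending the monomial order by declaring $\Lambda \succ y$ and taking the leading coefficient, exactly one summand $\alpha_{k^*}[K_{\sigma_{k^*}}](X)$ survives.

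Finally, I would invoke Lemma~\ref{lem:approximate lc} on this polynomial, viewed as an element of $\F[X][\Lambda, y]$, to obtain a specialization sending each $\Lambda$-variable and $y$ to a power of $\eps$, producing $\eps^q \alpha [K_{\sigma_{k^*}}](X) + O(\eps^{q+1})$ for some nonzero $\alpha \in \F$ and $q \in \integers$; since $(\sigma_{k^*})_1 \ge 2r$, this $\sigma_{k^*}$ has the required width. The overall substitution on $X$ is the composition of the congruences by $M$ and $Y$, an invertible linear map on $X$ over $\F(\Lambda, y)$; invertibility is preserved after the $\eps$-specialization, furnishing the $4n^2$ linearly independent linear functions $\ell_{i,j}(X,\eps) \in \F(\eps)[X]$. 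The main obstacle, relative to the determinantal case, is designing a scaling that simultaneously preserves skew-symmetry and separates distinct partition shapes: the diagonal congruence $X \mapsto YXY$ with geometrically growing weights achieves both, because every monomial of a Pfaffian uses each row index exactly once, making the $y$-weight of a standard monomial depend only on the multiset of indices appearing in each row of the tableau.
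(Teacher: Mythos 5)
Your proposal is correct and follows essentially the same route as the paper's proof: apply Lemma~\ref{lem:pfaffian row/col transform} to reduce to a sum of $[K_{\sigma_k}]$'s, separate the shapes by a diagonal congruence $X \mapsto D X D^\top$ with geometrically growing $y$-powers (exactly the $W(\sigma) = \sum_a \hat\sigma_a (D+1)^a$ base-$(D+1)$ injectivity argument), and then invoke Lemma~\ref{lem:approximate lc} and invertibility of the composite linear map. The only differences are notational (you call the diagonal matrix $Y$ where the paper calls it $D$), and you make the preservation of skew-symmetry explicit, which the paper handles silently through the congruence form.
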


\begin{proof}
	Let $M \in \F[\Lambda]^{2n \times 2n}$ be the matrix constructed in \autoref{lem:pfaffian row/col transform}.
	Let $f(X) = \sum_{k \in [s]} \alpha_k [S_k](X)$ be the expansion of $f$ as a sum of standard monomials.
	Then \autoref{lem:pfaffian row/col transform} implies
	\[
		\LC_{\prec_\Lambda}(f(M X M^\top)) = \sum_{k \in A} \alpha_k [K_{\sigma_k}](X),
	\]
	where $A \subseteq [s]$ is nonempty and $\sigma_k$ is the shape of the tableau $S_k$.
	From \autoref{cor:pfaff ideal width}, we know that $(\sigma_k)_1 \ge 2r$ for all $k \in A$.

	Let $d \coloneqq \deg(f(X))$.
	Let $y$ be a new indeterminate and let $D \in \F[y]^{2n \times 2n}$ be the diagonal matrix given by $D_{i,i} = (d+1)^i$.
	Observe that $(D X D^\top)_{[k],[k]} = D_{[k],[k]} X_{[k],[k]} D^{\top}_{[k],[k]}$.
	Using this and \autoref{lem:pfaff orthogonal transform}, we have
	\begin{align*}
		\Pf_k(D X D^\top) &= \Pf(D_{[k],[k]} X_{[k],[k]} D^\top_{[k],[k]}) \\
		&= \det(D_{[k],[k]}) \Pf_k(X) \\
		&= y^{\sum_{i=1}^k (d+1)^i} \Pf_k(X).
	\end{align*}
	It then follows that for a partition $\sigma$, we have
	\begin{align*}
		[K_\sigma](D X D^\top) &= \prod_{i=1}^{\hat{\sigma}_1} \Pf_{\sigma_i}(D X D^\top) \\
		&= \prod_{i=1}^{\hat{\sigma}_1} y^{\sum_{j=1}^{\sigma_i} (d+1)^j} \Pf_{\sigma_i}(X) \\
		&= y^{\sum_{i=1}^{\hat{\sigma}_1} \sum_{j=1}^{\sigma_i} (d+1)^j} [K_\sigma](X) \\
		&= y^{\sum_{i=1}^{\sigma_1} \hat{\sigma}_i (d+1)^i} [K_\sigma](X).
	\end{align*}

	Suppose $\sigma$ and $\tau$ are distinct partitions with $\max(\hat{\sigma}_1,\hat{\tau}_1) \le d$.
	Then we can interpret $\deg_y([K_\sigma](D X D^\top))$ and $\deg_y([K_\tau](D X D^\top))$ as numbers in base $d+1$.
	Because these numbers differ in at least one place value, we have $\deg_y([K_\sigma](D X D^\top)) \neq \deg_y([K_\tau](D X D^\top))$.
	In particular, if $\sigma$ and $\tau$ are distinct shapes of tableaux appearing in the support of $f(X)$, then by our choice of $d$ we have $\max(\hat{\sigma}_1,\hat{\tau}_1) \le d$, so $\deg_y([K_\sigma](D X D^\top)) \neq \deg_y([K_\tau](D X D^\top))$.

	Consider the polynomial $f(M D X D^\top M^\top)$.
	The preceding discussion implies
	\begin{align*}
		\LC_y(\LC_{\prec_\Lambda}(f(M D X D^\top M^\top))) &= \LC_y\del{\sum_{k \in A} \alpha_k [K_{\sigma_k}](D X D^\top)} \\
		&= \LC_y \del{\sum_{k \in A} y^{\sum_{i=1}^{(\hat{\sigma_k})_1} (\hat{\sigma_k})_i\, (d+1)^i} [K_{\sigma_k}](X)} \\
		&= \alpha_{k} [K_{\sigma_{k}}](X)
	\end{align*}
	for some fixed $k \in A$.

	By taking leading coefficients in the ring $\F[X][\Lambda,y]$ with respect to the lexicographic order that sets $\Lambda \succ y$, we then have
	\[
		\LC(f(M D X D^\top M^\top)) = \alpha_{k} [K_{\sigma_{k}}](X).
	\]
	Invoking \autoref{lem:approximate lc} yields a map $\varphi : \Lambda \cup \set{y} \to \set{\eps^i : i \in \integers, i \neq 0}$ that, when extended to a homomorphism $\varphi : \F[X,\Lambda,y] \to \F(\eps)[X]$, gives us
	\[
		\varphi(f(M D X D^\top M^\top)) = \eps^q \alpha_k [K_{\sigma_k}](X) + O(\eps^{q+1})
	\]
	for some $q \in\integers$.
	Finally, the transformation $X \mapsto \varphi(MD) X \varphi(D^\top M^\top)$ is linear and invertible, since $\det(\varphi(M)) = \pm 1$ and $\det(\varphi(D)) = \eps^m$ for some nonzero $m \in \integers$.
\end{proof}

\subsection{Projecting to the Pfaffian}

The previous subsection yields a change of variables that takes any nonzero $f \in \pfaffideal{2n}{2r}$ to $[K_\sigma](X) + O(\eps)$ for some partition $\sigma$ of width at least $2r$.
As in the case of the determinant, we now want to find a projection of $X$ that takes $[K_\sigma](X)$ to $\Pf_m(X)$ for $m$ as large as possible.
Na\"ively, we would like to combine \autoref{lem:abp to det} with \autoref{lem:pfaff to det} to achieve this.
This nearly works, but suffers from the drawback that for a matrix $A$, the Pfaffians of the leading principal submatrices of 
\[
	\begin{pmatrix}
		0 & A \\ -A^\top & 0
	\end{pmatrix}
\]
do not correspond to minors of the leading principal submatrices of $A$.
However, we can amend this by suitably permuting the rows and columns of the above matrix to obtain a new matrix whose leading principal sub-Pfaffians do correspond to minors of leading principal submatrices of $A$.

\begin{lemma} \label{lem:subpfaff to subdet} 
	Let $A$ be an $n \times n$ matrix.
	Then there is a $2n \times 2n$ skew-symmetric matrix $M$ such that for every $k \in [n]$, we have $\Pf(M_{[2k],[2k]}) = \pm \det(A_{[k],[k]})$.
\end{lemma}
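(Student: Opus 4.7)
The plan is to take the block matrix $B = \begin{pmatrix} 0 & A \\ -A^\top & 0 \end{pmatrix}$ of \autoref{lem:pfaff to det} and permute its rows and columns simultaneously so that the natural ``coupling'' between row $i$ of $A$ and column $i$ of $A$ becomes contiguous in the new ordering. Concretely, I would define the permutation $\pi : [2n] \to [2n]$ by $\pi(2i-1) = i$ and $\pi(2i) = n+i$ for $i \in [n]$, and let $P$ be the corresponding $2n \times 2n$ permutation matrix. I then set $M \coloneqq P B P^\top$. Since $B$ is skew-symmetric and conjugation by a permutation matrix preserves skew-symmetry, $M$ is skew-symmetric as required.

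The key observation is that $\pi(\set{1, 2, \ldots, 2k}) = \set{1, 2, \ldots, k} \cup \set{n+1, n+2, \ldots, n+k}$. Consequently, the leading principal $2k \times 2k$ submatrix of $M$ equals
\[
	M_{[2k],[2k]} = P_k \begin{pmatrix} 0 & A_{[k],[k]} \\ -A_{[k],[k]}^\top & 0 \end{pmatrix} P_k^\top,
\]
where $P_k$ is the permutation matrix on $[2k]$ induced by the restriction of $\pi$ to $[2k]$. This reduces the problem for each $k$ to a single application of \autoref{lem:pfaff to det} after accounting for the conjugation.

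To finish, I apply \autoref{lem:pfaff orthogonal transform} to conclude
\[
	\Pf(M_{[2k],[2k]}) = \det(P_k) \cdot \Pf\begin{pmatrix} 0 & A_{[k],[k]} \\ -A_{[k],[k]}^\top & 0 \end{pmatrix} = \pm(-1)^{\binom{k}{2}} \det(A_{[k],[k]}) = \pm \det(A_{[k],[k]}),
\]
using that $\det(P_k) = \pm 1$ and invoking \autoref{lem:pfaff to det} for the middle equality. This holds for every $k \in [n]$, yielding the lemma.

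Since \autoref{lem:pfaff orthogonal transform} and \autoref{lem:pfaff to det} already do the heavy lifting, the only genuine content is choosing the right permutation, and there is no real obstacle: the construction is entirely mechanical once one recognizes that interleaving the two blocks of $B$ makes the leading principal sub-Pfaffians track the leading principal minors of $A$.
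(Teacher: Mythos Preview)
Your proof is correct and follows essentially the same approach as the paper: the paper uses the identical permutation (sending $(1,2,\ldots,2n)$ to $(1,n+1,2,n+2,\ldots,n,2n)$), defines $M$ as the same conjugate $CBC^\top$, and then verifies the identity for each $k$ by conjugating $M_{[2k],[2k]}$ back to block form via the inverse of your $P_k$ before applying \autoref{lem:pfaff to det} and \autoref{lem:pfaff orthogonal transform}. The only cosmetic difference is that you state the relation $M_{[2k],[2k]} = P_k\begin{pmatrix}0 & A_{[k],[k]}\\ -A_{[k],[k]}^\top & 0\end{pmatrix}P_k^\top$ directly, whereas the paper computes the entries of $D M_{[2k],[2k]} D^\top$ explicitly; your phrasing ``restriction of $\pi$ to $[2k]$'' is slightly informal (the image is $[k]\cup\{n+1,\ldots,n+k\}$, not $[2k]$), but the intended order-preserving relabeling is clear.
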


\begin{proof}
	Let $\sigma \in S_{2n}$ be the permutation sending $(1,2,\ldots,2n)$ to $(1,n+1,2,n+2,\ldots,n,2n)$.
	Let 
	\[
		B = \begin{pmatrix}
			0 & A \\
			-A^\top & 0 
			\end{pmatrix}
	\]
	and let $C$ be the permutation matrix corresponding to $\sigma$, i.e., $c_{i,j} = 1$ if and only if $j = \sigma(i)$.
	Then $M \coloneqq C B C^\top$ is the matrix whose $i$\ts{th} row (respectively $j$\ts{th} column) is row $\sigma(i)$ (respectively column $\sigma(j)$) of $B$.
	We claim that for all $k \in [n]$, we have $\Pf(M_{[2k],[2k]}) = \pm \det(A_{[k],[k]})$.

	To see this, let $k \in [n]$ be arbitrary.
	Let $\tau \in S_{2k}$ be the permutation sending $(1,k+1,2,k+2,\ldots,k,2k)$ to $(1,2,3,\ldots,2k)$ and let $D$ be the corresponding permutation matrix.
	We will show that $\Pf(D M_{[2k],[2k]} D^\top) = \pm \det(A_{[k],[k]})$.
	By \autoref{lem:pfaff orthogonal transform}, this implies $\Pf(M_{[2k],[2k]}) = \pm \det(A_{[k],[k]})$, so $M$ behaves as desired.

	It remains to show that $\Pf(D M_{[2k],[2k]} D^\top) = \pm \det(A_{[k],[k]})$.
	Note that for $i \in [2k]$, we have 
	\[
		\sigma(\tau(i)) = \begin{cases}
			i & \text{if } i \le k \\
			i - k + n & \text{if } i > k.
		\end{cases}
	\]
	For $i, j \in [2k]$, we have, by definition, 
	\begin{align*}
		(D M_{[2k],[2k]} D^\top)_{i,j} &= (M_{[2k],[2k]})_{\tau(i),\tau(j)} \\
		&= M_{\tau(i),\tau(j)} \\
		&= (CBC^\top)_{\tau(i),\tau(j)} \\
		&= B_{\sigma(\tau(i)), \sigma(\tau(j))} \\
		&= \begin{cases}
			0 & \text{if } i \le k \text{ and } j \le k \\
			A_{i,j} & \text{if } i \le k \text{ and } j > k \\
			-A_{j,i} & \text{if } i > k \text{ and } j \le k \\
			0 & \text{if } i > k \text{ and } j > k.
		\end{cases}
	\end{align*}
	Thus, the matrix $D M_{[2k],[2k]} D^\top$ is the $2k \times 2k$ matrix given by
	\[
		D M_{[2k],[2k]} D^\top = \begin{pmatrix}
			0 & A_{[k],[k]} \\
			-A^\top_{[k],[k]} & 0
		\end{pmatrix}.
	\]
	It follows from \autoref{lem:pfaff to det} that 
	\[
		\Pf(D M_{[2k],[2k]} D^\top) = (-1)^{\binom{k}{2}} \det(A_{[k],[k]})
	\]
	as needed.
\end{proof}

We are now ready to conclude our main result for Pfaffian ideals, an analogue of \autoref{thm:proj to small abp} for Pfaffians.
The proof is similar to the proof of \autoref{thm:proj to small abp}, but augments the use of \autoref{lem:abp to det} with \autoref{lem:subpfaff to subdet}.

\begin{theorem} \label{thm:pfaff to small abp}
	Let $X$ be a $2n \times 2n$ generic skew-symmetric matrix.
	Let $f(X)$ be a nonzero polynomial in the ideal generated by the Pfaffians of the principal $2r \times 2r$ submatrices of $X$.
	Let $h(X, \eps) \in \F \llb \eps \rrb [X]$ be any polynomial such that $h(X,\eps) = f(X) + O(\eps)$.
	Let $g(\vec{y}) \in \F[\vec{y}]$ be a polynomial in the border of layered algebraic branching programs with at most $r$ vertices.
	Then there is a depth-three $h$-oracle circuit $\Phi$ defined over $\F(\eps)$ such that the following hold.
	\begin{enumerate}
		\item
			$\Phi$ has $nm$ addition gates at the bottom layer, a single $h$-oracle gate in the middle layer, and a single addition gate at the top layer.
		\item
			If $\ch(\F) = 0$, then $\Phi$ computes $g(\vec{y}) + O(\eps)$.
		\item
			If $\ch(\F) = p > 0$, then $\Phi$ computes $g(\vec{y})^{p^k} + O(\eps)$ for some $k \in \naturals$.
	\end{enumerate}
\end{theorem}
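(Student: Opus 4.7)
The plan is to mirror the proof of \autoref{thm:proj to small abp}, replacing its determinantal tools by the Pfaffian counterparts developed in this section. First I apply \autoref{lem:exact to approx oracle} to reduce to the case of exact $f$-oracle gates. By hypothesis on $g$, there is a polynomial $\tilde{g}(\vec{y},\eps)$ computed by a layered ABP on at most $r$ vertices with $\tilde{g}(\vec{y},\eps) = g(\vec{y})+O(\eps)$, and \autoref{lem:abp to hom abp} produces a homogeneous $\hat{g}(\vec{y},\eps,z)$ computable by a layered ABP on at most $r$ vertices with $\hat{g}(\vec{y},\eps,1) = \tilde{g}(\vec{y},\eps)$. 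Applying \autoref{prop:sparsify pfaffian} to $f$ yields linear forms $\ell_{i,j}(X,\eps)$, a nonzero $\alpha \in \F$, an integer $q$, and a partition $\sigma$ with $\sigma_1 \ge 2r$ (and all rows of even length) such that $f(\ell_{1,1}(X,\eps),\ldots,\ell_{2n,2n}(X,\eps)) = \eps^q \alpha [K_\sigma](X) + O(\eps^{q+1})$.

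Next I construct a $2n \times 2n$ skew-symmetric substitution matrix whose leading principal sub-Pfaffians encode $\hat{g}$. Apply \autoref{lem:abp to det} to $\hat{g}$ to obtain an $r \times r$ matrix $A(\vec{y},\eps,z)$ with linear entries such that $\det(A) = 1 + \hat{g}$ and $\det(A_{[k],[k]}) = 1$ for every $k < r$. Feeding $A$ into \autoref{lem:subpfaff to subdet} produces a $2r \times 2r$ skew-symmetric matrix $M$ with $\Pf(M_{[2k],[2k]}) = \pm 1$ for $k < r$ and $\Pf(M_{[2r],[2r]}) = \pm(1+\hat{g})$. I then extend $M$ to a $2n \times 2n$ skew-symmetric matrix $M'$ by taking the block-diagonal sum of $M$ with $n - r$ copies of the $2 \times 2$ skew-symmetric block $\begin{pmatrix} 0 & 1 \\ -1 & 0 \end{pmatrix}$, which has Pfaffian $1$. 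Since Pfaffians of block-diagonal skew-symmetric matrices multiply and each added block has Pfaffian $1$, the leading principal $2k \times 2k$ sub-Pfaffian of $M'$ equals $\Pf(M_{[2\min(k,r)],[2\min(k,r)]})$ for every $k \le n$. In particular $\Pf_{\sigma_i}(M'_{[\sigma_i],[\sigma_i]}) = \pm 1$ when $\sigma_i < 2r$ and equals $\pm(1+\hat{g})$ when $\sigma_i \ge 2r$.

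Since $\sigma_1 \ge 2r$, the substitution $X \mapsto M'$ yields $[K_\sigma](M') = \pm(1 + \hat{g}(\vec{y},\eps,z))^t$ where $t = \Abs{\set{i : \sigma_i \ge 2r}} \ge 1$. The composed $h$-oracle circuit, whose bottom layer computes the linear forms $\ell_{i,j}$ evaluated at the entries of $M'$, whose middle layer is a single oracle call, and whose top layer is a single addition, therefore computes $\eps^q \alpha (\pm 1)(1 + \hat{g}(\vec{y},\eps,z))^t + O(\eps^{q+1})$. To extract $g$ itself I replicate the final trick of the proof of \autoref{thm:proj to small abp}: substitute $y_i \mapsto \delta y_i$ and $z \mapsto \delta$, expand $(1 + \delta^{\deg \hat{g}} \hat{g}(\vec{y},\eps,1))^t$ binomially, and then apply \autoref{lem:semicontinuity} with $\delta \mapsto \eps$ and $\eps \mapsto \eps^N$ for $N$ sufficiently large to isolate the term linear in $\hat{g}$. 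A top-level affine adjustment subtracts the constant contribution and rescales, yielding $g(\vec{y}) + O(\eps)$ in characteristic zero and, when $\ch(\F) = p$ and $t = p^k b$ with $\gcd(p,b) = 1$, yielding $g(\vec{y})^{p^k} + O(\eps)$ exactly as in the determinantal argument. The only genuinely new ingredient beyond the proof of \autoref{thm:proj to small abp} is the block-diagonal padding step for $M$, which compensates for the fact that \autoref{lem:subpfaff to subdet} produces a skew-symmetric matrix of size $2r$ rather than $r$; the main obstacle was identifying this padding so that all sub-Pfaffians of index $> r$ still behave correctly.
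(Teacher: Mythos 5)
Your proof is correct and follows essentially the same route as the paper's. The one spot where you diverge is purely a matter of bookkeeping: the paper first pads $A(\vec{y},z)$ from $r\times r$ to $n\times n$ by placing ones on the tail of the diagonal, and then feeds this $n\times n$ matrix to \autoref{lem:subpfaff to subdet} to produce the full $2n\times 2n$ skew-symmetric matrix in one step, whereas you feed the unpadded $r\times r$ matrix $A$ into the lemma to get a $2r\times 2r$ matrix $M$ and then pad with block-diagonal copies of $\begin{psmallmatrix}0&1\\-1&0\end{psmallmatrix}$; a short computation (which you carry out) shows both yield the same leading principal sub-Pfaffians, and both keep the entries linear in $\vec{y},z$, so the remainder of the argument goes through identically.
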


\begin{proof}
	Using \autoref{lem:exact to approx oracle}, we only need to consider the case where the oracle gates compute $f$ exactly.
	By assumption, there is a polynomial $\tilde{g}(\vec{y},\eps) \in \F[\eps][\vec{y}]$ such that $\tilde{g}(\vec{y},\eps) = g(\vec{y}) + O(\eps)$ and $\tilde{g}(\vec{y},\eps)$ can be computed by a layered algebraic branching program on at most $r$ vertices.
	Further, \autoref{lem:abp to hom abp} yields a homogeneous polynomial $\hat{g}(\vec{y},\eps,z) \in \F[\eps][\vec{y},z]$ computable by a layered algebraic branching program on at most $r$ vertices such that $\hat{g}(\vec{y},\eps,1) = \tilde{g}(\vec{y},\eps)$.
	In what follows, we work with $\hat{g}(\vec{y},\eps,z)$.

	Applying \autoref{prop:sparsify pfaffian} to $f(X)$ gives us linear functions $\ell_{i,j}(X,\eps) \in \F(\eps)[X]$, an integer $q \in \integers$, and a nonzero $\alpha \in \F$ such that
	\[
		f(\ell_{1,1}(X,\eps),\ldots,\ell_{2n,2n}(X,\eps)) = \eps^q \alpha [K_\sigma](X) + O(\eps^{q+1})
	\]
	for some partition $\sigma$ with $\sigma_1\ge 2r$.

	Because $\hat{g}(\vec{y},\eps,z)$ can be computed by a layered ABP on at most $r$ vertices, we can obtain a layered ABP with exactly $r$ vertices that computes $\hat{g}(\vec{y},\eps,z)$ by adding dummy vertices if necessary.
	Let $A(\vec{y},z) \in \F[\eps][\vec{y},z]^{r \times r}$ be the matrix obtained by applying \autoref{lem:abp to det} to $\hat{g}(\vec{y},\eps,z)$.
	Extend $A(\vec{y},z)$ to an $n \times n$ matrix by adding ones along the diagonal and zeroes elsewhere.

	Let $M(\vec{y},z)$ be the $2n \times 2n$ matrix obtained by applying \autoref{lem:subpfaff to subdet} to $A(\vec{y},z)$.
	Let $\varphi : X \to \F[\eps][\vec{y},z]$ be the substitution given by $\varphi(X) = M(\vec{y},z)$.
	Under this substitution, we have
	\begin{align*}
		& f(\ell_{1,1}(\varphi(X),\eps), \ldots, \ell_{2n,2n}(\varphi(X),\eps)) \\
		&\qquad = \eps^q \alpha [K_\sigma](\varphi(X)) + O(\eps^{q+1}) \\
		&\qquad = \eps^q \alpha \prod_{i=1}^{\hat{\sigma}_1} \Pf_{\sigma_i}(\varphi(X)_{[\sigma_i],[\sigma_i]}) + O(\eps^{q+1}) \\
		&\qquad= \pm \eps^q \alpha \prod_{i=1}^{\hat{\sigma}_1} \det_{\sigma_i/2}(A(\vec{y},z)_{[\sigma_i/2],[\sigma_i/2]}) + O(\eps^{q+1}) \\
		&\qquad= \pm \eps^q \alpha \prod_{i : \sigma_i \ge 2r} \det_{\sigma_i/2}(A(\vec{y},z)_{[\sigma_i/2],[\sigma_i/2]}) \cdot \prod_{i : \sigma_i < 2r} \det_{\sigma_i/2}(A(\vec{y},z)_{[\sigma_i/2],[\sigma_i/2]}) + O(\eps^{q+1}) \\
		&\qquad= \pm \eps^q \alpha \prod_{i : \sigma_i \ge 2r} (1 + \hat{g}(\vec{y},\eps,z)) + O(\eps^{q+1}).
	\end{align*}
	Let $h(\vec{y},\eps,z) \coloneqq f(\ell_{1,1}(\varphi(X),\eps),\ldots,\ell_{2n,2n}(\varphi(X),\eps))$ and let $t \coloneqq \abs{\set{i : \sigma_i \ge 2r}}$.
	In this notation, the above establishes that $h(\vec{y},\eps,z) = \pm \eps^q \alpha (1 + \hat{g}(\vec{y},\eps,z))^t + O(\eps^{q+1})$.

	Suppose $\ch(\F) = 0$.
	Let $\delta$ be a new indeterminate.
	By performing the substitutions $y_i \mapsto \delta y_i$ and $z \mapsto \delta$, we obtain
	\begin{align*}
		h(\delta y_1, \ldots, \delta y_m, \eps, \delta) &= \pm \eps^q \alpha \del{1 + \hat{g}(\delta \cdot \vec{y},\eps,\delta)}^t + O(\eps^{q+1}) \\
		&= \pm \eps^q \alpha \del{1 + \delta^{\deg(\hat{g})} \hat{g}(\vec{y},\eps,1)}^t + O(\eps^{q+1}) \\
		&= \pm \eps^q \alpha \del{1 + \delta^{\deg(\hat{g})} g(\vec{y}) + O(\eps)}^t + O(\eps^{q+1}) \\
		&= \pm \eps^q \alpha \sum_{i=0}^t \binom{t}{i} \delta^{t \cdot \deg(\hat{g})} g(\vec{y})^t + O(\eps^{q+1}) \\
		&= \pm \eps^q \alpha \pm \eps^q \delta^{\deg(\hat{g})} \alpha t g(\vec{y}) + O(\eps^q \delta^{2 \deg(\hat{g})}) + O(\eps^{q+1}).
	\end{align*}
	Setting
	\begin{align*}
		\eps &\mapsto \eps^N \\
		\delta &\mapsto \eps
	\end{align*}
	for $N$ sufficiently large yields
	\[
		h(\eps y_1,\ldots, \eps y_m, \eps^N, \eps) = \pm \eps^{qN} \alpha \pm \eps^{qN + \deg(\hat{g})} \alpha t g(\vec{y}) + O(\eps^{qN + \deg(\hat{g}) + 1}).
	\]
	The claimed $f$-oracle circuit is then given by
	\[
		\Phi(\vec{y}) \coloneqq \frac{h(\eps y_1,\ldots,\eps y_m, \eps^N, \eps) \mp \eps^{qN} \alpha}{\pm \eps^{qN + \deg(\hat{g})} \alpha t} = g(\vec{y}) + O(\eps).
	\]

	In the case that $\ch(\F) = p > 0$, we need to modify the above argument in the case that $p$ divides $t$.
	Let $k \in \naturals$ be such that $p^k$ is the largest power of $p$ that divides $t$.
	Write $t = p^k b$.
	We then have
	\[
		h(\delta y_1, \ldots, \delta y_m, \eps, \delta) = \pm \eps^q \alpha \pm \eps^q \delta^{p^k \deg(\hat{g})} \alpha b g(\vec{y})^{p^k} + O(\eps^q \delta^{2 p^k \deg(\hat{g})}) + O(\eps^{q+1}).
	\]
	Again, for sufficiently large $N$, we can construct an $f$-oracle circuit that approximately computes $g$ via
	\[
		\Phi(\vec{y}) \coloneqq \frac{h(\eps y_1, \ldots, \eps y_m, \eps^N, \eps) \mp \eps^{qN} \alpha}{\pm \eps^{qN + p^k \deg(\hat{g})} \alpha b} = g(\vec{y})^{p^k} + O(\eps). \qedhere
	\]
\end{proof}

Since the Pfaffian can be computed efficiently by algebraic branching programs, we immediately obtain the following corollary of \autoref{thm:pfaff to small abp}.

\begin{corollary} \label{cor:proj to pfaff}
	Let $f(X) \in \pfaffideal{2n}{2r}$ be a nonzero polynomial, let $h(X,\eps) \in \F\llb \eps \rrb [X]$ be any polynomial such that $h(X,\eps) = f(X) + O(\eps)$, and let $t \le O(r^{1/3})$.
	Then there is a depth-three $h$-oracle circuit $\Phi$ defined over $\F(\eps)$ with the following properties.
	\begin{enumerate}
		\item
			The bottom layer of $\Phi$ consists of $4n^2$ addition gates, the middle layer has a single $h$-oracle gate, and the top layer has a single addition gate.
		\item
			If $\ch(\F) = 0$, then $\Phi$ computes $\Pf_t(X) + O(\eps)$.
		\item
			If $\ch(\F) = p > 0$, then $\Phi$ computes $\Pf_t(X)^{p^k} + O(\eps)$ for some $k \in \naturals$.
	\end{enumerate}
\end{corollary}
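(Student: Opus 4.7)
The plan is to derive this corollary as a direct instantiation of \autoref{thm:pfaff to small abp} with $g(\vec{y}) = \Pf_t(Y)$, where $Y$ denotes a generic $2t \times 2t$ skew-symmetric matrix of variables. Conclusion~(1) of the corollary is inherited verbatim from the theorem, and conclusions~(2) and~(3) match the characteristic-zero and positive-characteristic outputs of the theorem, respectively. The only nontrivial input I would need to verify is the hypothesis of \autoref{thm:pfaff to small abp}: namely, that $\Pf_t(Y)$ lies in the border (in fact, exact computation) of layered algebraic branching programs on at most $r$ vertices.

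For this, I would exhibit a layered ABP of size $O(t^3)$ for $\Pf_t$, in direct analogy with the $O(n^3)$-vertex construction of \textcite{MV97} for the $n \times n$ determinant that was used in the proof of \autoref{cor:proj to det}. Concretely, the Pfaffian admits a combinatorial expansion as a signed sum over perfect matchings of the complete graph on $[2t]$. One can design a layered ABP that enumerates these matchings in a canonical order (for example, by repeatedly selecting the partner of the smallest unmatched vertex), with each layer responsible for extending a partial matching by one edge; tracking the current "frontier" along with a small amount of parity information gives width $O(t^2)$ and depth $O(t)$, hence $O(t^3)$ vertices total. Because the hypothesis $t \le O(r^{1/3})$ gives $O(t^3) \le r$ for a suitable choice of the implicit constant, this ABP satisfies the vertex bound required by \autoref{thm:pfaff to small abp}.

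With the ABP in hand, the corollary follows immediately: \autoref{thm:pfaff to small abp} produces a depth-three $h$-oracle circuit with the structural properties claimed in~(1) (namely, $4n^2$ addition gates at the bottom, a single $h$-oracle gate in the middle, a single addition gate at the top), and the output is $\Pf_t(Y) + O(\eps)$ in characteristic zero, or $\Pf_t(Y)^{p^k} + O(\eps)$ for some $k \in \naturals$ in characteristic $p > 0$.

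The main obstacle, such as there is one, is simply confirming the $O(t^3)$ ABP bound for $\Pf_t$ with an explicit constant small enough to absorb into the $O(r^{1/3})$ hypothesis. This is a routine adaptation of the Mahajan--Vinay argument and does not require any new ideas; alternatively, one could cite any existing small-ABP construction for the Pfaffian in the literature. Once this is fixed, the remainder of the proof is a one-line appeal to \autoref{thm:pfaff to small abp}.
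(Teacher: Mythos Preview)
Your proposal is correct and matches the paper's proof essentially line for line: the paper also instantiates \autoref{thm:pfaff to small abp} with $g = \Pf_t$ and discharges the ABP hypothesis by citing an existing $O(t^3)$-size layered ABP for the Pfaffian (specifically \textcite[Theorem 12]{MSV04}), rather than reconstructing one from scratch.
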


\begin{proof}
	\textcite[Theorem 12]{MSV04} constructed a layered algebraic branching program of size $O(n^3)$ that computes the $2n \times 2n$ Pfaffian.
	Combining this with \autoref{thm:pfaff to small abp} completes the proof.
\end{proof}

\section{Partial Derivatives in Determinantal Ideals} \label{sec:partials}

We now proceed to our applications of \autoref{thm:proj to small abp}.
Our first such application is the determination of the minimum possible value of $\dim(\partial_{< \infty}(f))$ for a nonzero $f \in \detideal{n}{m}{r}$.
The dimension of the space of partial derivatives (and variants thereof) has been successfully used as a complexity measure in proving algebraic circuit lower bounds.
Though \autoref{thm:proj to small abp} gives us a tool to prove circuit lower bounds for any nonzero polynomial $f(X) \in \detideal{n}{m}{r}$, there may be instances where the $f$-oracle circuit is too costly to implement.
For example, if $f$ is computed by a homogeneous or read-once circuit, these properties are not inherited by the oracle circuit.
In such cases, it may be useful to have direct estimates for $\dim(\partial_{< \infty}(f))$.

For notational convenience, let
\[
	\dim(\partial_{< \infty}(\detideal{n}{m}{r})) \coloneqq \min_{f \in \detideal{n}{m}{r} \setminus \set{0}} \dim(\partial_{< \infty}(f)).
\]
Since $\det_r(X) \in \detideal{n}{m}{r}$ and $\dim(\partial_{< \infty}(\det_r)) = \binom{2r}{r}$, we clearly have $\dim(\partial_{<\infty}(\detideal{n}{m}{r})) \le \binom{2r}{r}$.
Combining \autoref{cor:proj to det} with \autoref{lem:pd linear transformation} establishes the existence of a universal constant $c > 0$ such that $\dim(\partial_{< \infty}(\detideal{n}{m}{r})) \ge \dim(\partial_{<\infty}(\det_{c r^{1/3}})) = \binom{2c r^{1/3}}{c r^{1/3}}$.

Alternatively, one can use the observation of \textcite[Lemma 6.4]{FSTW16} that the set of rank-$r$ matrices contains all $r$-sparse vectors in $\F^{n \times m}$. 
This implies that rank-$r$ matrices are a hitting set for all polynomials that have a monomial supported on at most $r$ variables.
For any $f \in \detideal{n}{m}{r+1}$, it follows by definition that $f(X)$ vanishes on matrices of rank $r$.
This implies that the leading monomial of $f(X)$ is supported on at least $r+1$ variables.
From here, it is straightforward to conclude that there are at least $2^{r+1}$ distinct leading monomials among the partial derivatives of $f(X)$, which implies the stronger lower bound $\dim(\partial_{<\infty}(\detideal{n}{m}{r+1})) \ge 2^{r+1}$.

In this section, we will show that the na\"ive upper bound on $\dim(\partial_{<\infty}(\detideal{n}{m}{r}))$ is tight.
That is, 
\[
	\dim(\partial_{< \infty}(\detideal{n}{m}{r})) = \dim(\partial_{< \infty}(\det_r)) = \binom{2r}{r}.
\]
If one interprets $\dim(\partial_{< \infty}(f))$ as a measure of the complexity of $f$, then this says the $r \times r$ determinant $\det_r(X)$ is of minimal complexity in $\detideal{n}{m}{r}$.

We will show that $\dim(\partial_{<\infty}((S|T))) \ge \binom{2r}{r}$ for any bideterminant $(S|T) \in \detideal{n}{m}{r}$ and then extend this to all nonzero polynomials in $\detideal{n}{m}{r}$ using \autoref{prop:reduction to single bideterminant}.
We start by considering partial derivatives with respect to a single variable.
Recall that the operator $\frac{\partial^d}{\partial x_{i,j}^d}$ refers to the order-$d$ Hasse derivative with respect to $x_{i,j}$.
In the lemma below, we abuse notation and allow a bitableau to have rows whose lengths are not necessarily nonincreasing.

\begin{lemma} \label{lem:bideterminant single partial}
	Let $X$ be an $n \times m$ matrix of variables and let $(S|T)(X)$ be a nonzero bideterminant of shape $\sigma$.
	Let $(i,j) \in [n] \times [m]$ and let $d \coloneqq \ideg_{x_{i,j}}(S|T)(X)$.
	Then
	\[
		\frac{\partial^d}{\partial x_{i,j}^d}(S|T)(X) = \pm (S'|T')(X),
	\]
	where $(S',T')$ is the bitableau whose $k$\ts{th} row $(S'(k,\bullet), T'(k, \bullet))$ is given by
	\[
		(S'(k, \bullet), T'(k, \bullet)) = \begin{cases}
			(S(k, \bullet) \setminus \set{i}, T(k, \bullet) \setminus \set{j}) & \text{if } (i, j) \in S(k, \bullet) \times T(k, \bullet) \\
			(S(k, \bullet), T(k, \bullet)) & \text{if } (i, j) \notin S(k, \bullet) \times T(k, \bullet).
		\end{cases}
	\]
\end{lemma}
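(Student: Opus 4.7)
The plan is to invoke the product rule for Hasse derivatives (\autoref{lem:product rule}) applied to the factorization
\[
	(S|T)(X) = \prod_{k=1}^{\hat{\sigma}_1} \det\bigl(X_{S(k,\bullet), T(k,\bullet)}\bigr),
\]
and exploit the fact that each factor is a determinant, hence multilinear in the entries of $X$.

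The first observation is that $x_{i,j}$ appears in the $k$\ts{th} factor if and only if $(i,j) \in S(k,\bullet) \times T(k,\bullet)$, and when it does, it appears with individual degree exactly $1$ (by multilinearity of the determinant). Therefore
\[
	d = \ideg_{x_{i,j}} (S|T)(X) = \bigl|\{k : (i,j) \in S(k,\bullet) \times T(k,\bullet)\}\bigr|.
\]
Let $k_1 < k_2 < \cdots < k_d$ be these indices.

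The second step is to apply \autoref{lem:product rule} with $a = d$. The resulting sum ranges over tuples $(d_1,\ldots,d_{\hat{\sigma}_1})$ of nonnegative integers summing to $d$, but since each factor has individual degree at most $1$ in $x_{i,j}$, any $d_k \ge 2$ annihilates its factor. Furthermore, $d_k = 1$ annihilates the $k$\ts{th} factor unless $k \in \{k_1,\ldots,k_d\}$. Since we need the $d_k$'s to sum to $d$, the unique surviving term corresponds to $d_{k_\ell} = 1$ for all $\ell \in [d]$ and $d_k = 0$ otherwise.

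The third step is to evaluate each surviving factor. For $k \notin \{k_1,\ldots,k_d\}$, the factor is unchanged, giving $\det(X_{S(k,\bullet), T(k,\bullet)}) = \det(X_{S'(k,\bullet), T'(k,\bullet)})$. For $k = k_\ell$, cofactor expansion along the row or column containing $x_{i,j}$ yields
\[
	\frac{\partial}{\partial x_{i,j}} \det\bigl(X_{S(k,\bullet), T(k,\bullet)}\bigr) = \pm \det\bigl(X_{S(k,\bullet) \setminus \{i\}, T(k,\bullet) \setminus \{j\}}\bigr) = \pm \det\bigl(X_{S'(k,\bullet), T'(k,\bullet)}\bigr),
\]
with a sign determined by the positions of $i$ and $j$ within $S(k,\bullet)$ and $T(k,\bullet)$. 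Multiplying these together gives $\pm (S'|T')(X)$, as claimed. The only mildly subtle point is that after deletion, the row lengths of $S'$ and $T'$ may not be non-increasing, which is why the lemma is stated with the generalized convention on bitableaux; no computation of the overall sign is needed since the statement only asserts the result up to $\pm 1$.
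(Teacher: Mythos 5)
Your proposal is correct and follows essentially the same argument as the paper: factor $(S|T)$ into its row determinants, note each factor is multilinear in $x_{i,j}$, apply the Hasse-derivative product rule, and observe that the unique surviving term arises from taking one derivative in each factor that contains $x_{i,j}$, with cofactor expansion supplying the $\pm(S'(k,\bullet)\,|\,T'(k,\bullet))$ for those factors.
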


\begin{proof}
	By definition, we have
	\[
		(S|T) = \prod_{k=1}^{\hat{\sigma}_1} (S(k,\bullet) | T(k,\bullet)).
	\]
	Let $A \subseteq [\hat{\sigma}_1]$ be the set of indices given by
	\[
		A \coloneqq \set{k : \text{$i \in S(k,\bullet)$ and $j \in T(k,\bullet)$}}.
	\]
	For $k \in [\hat{\sigma}_1]$, we have
	\[
		\ideg_{x_{i,j}}(S(k,\bullet) | T(k, \bullet)) = \begin{cases}
			1 & k \in A, \\
			0 & \text{otherwise.}
		\end{cases}
	\]
	If $\ideg_{x_{i,j}}(S(k,\bullet) | T(k,\bullet)) = 1$, then expanding the determinant $(S(k,\bullet)|T(k,\bullet))$ by minors gives us
	\[
		\frac{\partial}{\partial x_{i,j}} (S(k,\bullet)|T(k,\bullet)) = \pm (S'(k,\bullet) | T'(k,\bullet)),
	\]
	where $S'(k,\bullet)$ and $T'(k,\bullet)$ are the one-row tableaux obtained by removing $i$ from $S(k,\bullet)$ and $j$ from $T(k,\bullet)$, respectively.
	Note that for $\ell > \ideg_{x_{i,j}}(S(k,\bullet) | T(k,\bullet))$, we have
	\[
		\frac{\partial^\ell}{\partial x_{i,j}^\ell}(S(k,\bullet) | T(k,\bullet)) = 0.
	\]
	Using the product rule (\autoref{lem:product rule}), we then have
	\begin{align*}
		\del{\frac{\partial^d}{\partial x_{i,j}^d}}(S|T) &= \del{\frac{\partial^d}{\partial x_{i,j}^d}}\del{\prod_{k=1}^{\hat{\sigma}_1} (S(k,\bullet) | T(k,\bullet))} \\
		&= \sum_{d_1 + \cdots + d_{\hat{\sigma}_1} = d} \ \prod_{k=1}^{\hat{\sigma}_1} \frac{\partial^{d_k}}{\partial x_{i,j}^{d_k}}(S(k,\bullet) | T(k,\bullet)) \\
		&= \prod_{k=1}^{\hat{\sigma}_1} (\pm(S'(k,\bullet)|T'(k,\bullet))) \\
		&= \pm (S'|T') \neq 0.\qedhere
	\end{align*}
\end{proof}

We now extend the preceding lemma to partial derivatives with respect to multiple variables.

\begin{lemma} \label{lem:bideterminant nonzero partials}
	Let $(S|T)$ be a nonzero bideterminant of shape $\sigma$.
	Let $R \subseteq S(1,\bullet)$ and $C \subseteq T(1,\bullet)$ be subsets of the entries in the first row of $S$ and $T$, respectively, such that $|R| = |C|$.
	Write $R = \set{r_1,\ldots,r_\ell}$ and $C = \set{c_1,\ldots,c_\ell}$.
	Then there are positive integers $\set{d_1,\ldots,d_\ell}$ such that
	\[
		\del{\prod_{i=1}^\ell \frac{\partial^{d_i}}{\partial x_{r_i,c_i}^{d_i}}}((S|T)) \neq 0.
	\]
	In the case $\ch(\F) = 0$, we may take $d_1 = \cdots = d_\ell = 1$.
\end{lemma}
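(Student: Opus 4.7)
I would argue by induction on $\ell$ for the general statement, and treat the characteristic zero case with $d_1 = \cdots = d_\ell = 1$ via a separate leading-monomial argument. For the base case $\ell = 0$ the claim is trivial. For $\ell \geq 1$, set $d_1 \coloneqq \ideg_{x_{r_1,c_1}}(S|T)$; this is at least $1$ because the first row of the bitableau contains both $r_1 \in R \subseteq S(1,\bullet)$ and $c_1 \in C \subseteq T(1,\bullet)$. Applying \autoref{lem:bideterminant single partial} with the variable $x_{r_1,c_1}$ gives
\[
	\frac{\partial^{d_1}}{\partial x_{r_1,c_1}^{d_1}}(S|T) = \pm(S'|T'),
\]
where $(S'|T')$ is the nonzero bideterminant obtained by deleting $r_1$ from $S(k,\bullet)$ and $c_1$ from $T(k,\bullet)$ for every row $k$ that contains both indices. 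In particular $S'(1,\bullet) = S(1,\bullet) \setminus \set{r_1}$ and $T'(1,\bullet) = T(1,\bullet) \setminus \set{c_1}$, so the truncated sets $R' \coloneqq R \setminus \set{r_1}$ and $C' \coloneqq C \setminus \set{c_1}$ sit inside the first row of the new bitableau with $|R'| = |C'| = \ell - 1$. The inductive hypothesis applied to $(S'|T')$ produces positive integers $d_2, \ldots, d_\ell$ making the iterated derivative nonzero, and commutativity of Hasse derivatives (\autoref{lem:pd commute}) combines these with $d_1$ to prove the general claim for $(S|T)$.

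For the characteristic zero claim, I would instead show directly that $(S|T)$ has a monomial $X^{\vec{b}}$ with $b_{r_i,c_i} \geq 1$ for every $i \in [\ell]$, which in characteristic zero is exactly what is needed for $\prod_{i} \partial/\partial x_{r_i,c_i}(S|T)$ to be nonzero. Fix the lexicographic monomial order on $\F[X]$ induced by the variable ordering $x_{r_1,c_1} \succ x_{r_2,c_2} \succ \cdots \succ x_{r_\ell,c_\ell} \succ (\text{all other variables})$. Each minor $M_k \coloneqq (S(k,\bullet) | T(k,\bullet))$ is multilinear in the entries of its submatrix, and its monomials correspond to permutations between its rows and columns; the leading monomial of $M_k$ under the chosen order comes from a permutation that greedily pairs row $r_i$ with column $c_i$ whenever both indices appear in $M_k$. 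Since the $r_i$ are distinct and the $c_i$ are distinct these pairings are mutually compatible, and because $R \subseteq S(1,\bullet)$ and $C \subseteq T(1,\bullet)$ the leading monomial of $M_1$ already contains every $x_{r_i,c_i}$ as a factor. By \autoref{lem:LM commutes}, the leading monomial $X^{\vec{b}}$ of $(S|T) = \prod_k M_k$ is the product of the leading monomials of the $M_k$ and is in particular divisible by $\prod_i x_{r_i,c_i}$. The Hasse derivative identity applied to $X^{\vec{b}}$, with exponent vector the characteristic vector of $\set{(r_i,c_i) : i \in [\ell]}$, produces a scalar factor $\prod_i b_{r_i,c_i}$, which is a product of positive integers and hence nonzero over $\ch(\F) = 0$; distinct source monomials yield distinct output monomials, precluding any cancellation.

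\textbf{Main difficulty.} The leading-monomial shortcut above fails in positive characteristic because the scalar $\prod_i b_{r_i,c_i}$ can vanish modulo $p$, which is exactly the reason the general claim allows larger $d_i$ rather than fixing $d_i = 1$. A subsidiary point is checking that the bideterminant $(S'|T')$ produced in the induction remains nonzero and that its first row still contains $R \setminus \set{r_1}$ and $C \setminus \set{c_1}$; both follow because deleting a single entry from a row preserves distinctness of its remaining entries, and because the first row is among those from which the deletion takes place.
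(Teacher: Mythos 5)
Your inductive argument for the general claim is essentially the paper's proof: take $d_1 = \ideg_{x_{r_1,c_1}}(S|T)$ (which is positive for the reason you give), apply the single-variable lemma to land on $\pm(S'|T')$, observe that $R \setminus \set{r_1}$ and $C \setminus \set{c_1}$ remain in the first row of $(S',T')$, invoke the inductive hypothesis, and finish with commutativity of Hasse derivatives.

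For the characteristic-zero refinement, however, your route differs from the paper's. The paper's proof simply asserts the first-order conclusion in one line at the end; the intended (implicit) reasoning is that over $\ch(\F)=0$ the $d$-th Hasse derivative equals $\tfrac{1}{d!}$ times the $d$-fold ordinary derivative, so nonvanishing of the mixed higher-order derivative already constructed forces nonvanishing of the mixed first-order derivative. Your argument instead proves the claim directly: a lexicographic order tailored to the pairs $(r_i,c_i)$, combined with \autoref{lem:LM commutes}, shows that $\prod_i x_{r_i,c_i}$ divides $\LM((S|T))$, and the Hasse-derivative formula on that leading monomial produces the nonzero scalar $\prod_i b_{r_i,c_i}$ with no cancellation. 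Both approaches are correct. Yours is more self-contained, and it makes visible exactly where positive characteristic fails (the scalar $\prod_i b_{r_i,c_i}$ can vanish mod $p$); the paper's observation is shorter but relies on a fact about Hasse versus ordinary derivatives that it leaves to the reader.
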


\begin{proof}
	We prove this via induction on $\ell$.
	The case $\ell = 1$ follows from \autoref{lem:bideterminant single partial}.
	When $\ell \ge 2$, let $d_1 \coloneqq \ideg_{x_{r_1,c_1}}(S|T)$.
	\autoref{lem:bideterminant single partial} implies
	\[
		\frac{\partial^{d_1}}{\partial x_{r_1,c_1}^{d_1}}(S|T) = \pm (S'|T'),
	\]
	where $(S',T')$ is the bitableau obtained from $(S,T)$ as in the statement of \autoref{lem:bideterminant single partial}.
	Let $R' \coloneqq R \setminus \set{r_1}$ and $C' \coloneqq C \setminus \set{c_1}$.
	Since $S(1,\bullet) \subseteq S'(1,\bullet) \cup \set{r_1}$ and $R \subseteq S(1,\bullet)$, it follows that $R' \subseteq S'(1,\bullet)$.
	Similarly, we have $C' \subseteq T'(1,\bullet)$.
	By induction, there are positive integers $d_2,\ldots,d_\ell$ such that
	\[
		\del{\prod_{i=2}^\ell \frac{\partial^{d_i}}{\partial x_{r_i,c_i}^{d_i}}}((S'|T')) \neq 0.
	\]
	This implies
	\[
		\del{\prod_{i=2}^\ell \frac{\partial^{d_i}}{\partial x_{r_i,c_i}^{d_i}}}\del{\frac{\partial^{d_1}}{\partial x_{r_1,c_1}^{d_1}}((S|T))} \neq 0,
	\]
	so the fact that partial derivatives commute (\autoref{lem:pd commute}) yields
	\[
		\del{\prod_{i=1}^\ell \frac{\partial^{d_i}}{\partial x_{r_i,c_i}^{d_i}}}((S|T)) \neq 0.
	\]
	If $\ch(\F) = 0$, we also obtain
	\[
		\del{\prod_{i=1}^\ell \frac{\partial}{\partial x_{r_i,c_i}}}((S|T)) \neq 0. \qedhere
	\]
\end{proof}

We now use \autoref{lem:bideterminant nonzero partials} to lower bound the dimension of the space of partial derivatives of any bideterminant.

\begin{proposition} \label{prop:bideterminant partial dimension}
	Let $(S|T)$ be a nonzero bideterminant of width $r$.
	Then $\dim(\partial_{< \infty}((S|T))) \ge \binom{2r}{r}$.
	If $\ch(\F) = 0$, then we also have $\dim(\partial_{\le d}((S|T))) \ge \sum_{i=0}^d \binom{r}{i}^2$.
\end{proposition}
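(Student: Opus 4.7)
The plan is to exhibit $\binom{2r}{r}$ linearly independent elements of $\partial_{<\infty}((S|T))$, indexed by pairs $(R, C)$ with $R \subseteq S(1,\bullet)$, $C \subseteq T(1,\bullet)$, and $|R| = |C|$. Vandermonde's identity gives $\sum_{k=0}^{r} \binom{r}{k}^2 = \binom{2r}{r}$, matching the target count.

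For each such pair I would fix an arbitrary pairing $R = \{r_1,\ldots,r_\ell\}$, $C = \{c_1,\ldots,c_\ell\}$ and invoke \autoref{lem:bideterminant nonzero partials} to produce positive integers $d_1,\ldots,d_\ell \ge 1$ such that
\[
	D_{R,C}((S|T)) \coloneqq \prod_{i=1}^{\ell} \frac{\partial^{d_i}}{\partial x_{r_i,c_i}^{d_i}}((S|T)) \ne 0.
\]
In characteristic zero the same lemma permits $d_i = 1$ for every $i$, so that $D_{R,C}$ has order exactly $|R|$; this will be what the $\partial_{\le d}$ refinement requires.

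Linear independence I would establish by tracking multidegrees in the $\naturals^n \oplus \naturals^m$ grading on $\F[X]$. Since $(S|T)$ is multihomogeneous, \autoref{lem:multideg partial derivative} shows that each nonzero $D_{R,C}((S|T))$ is multihomogeneous with
\[
	\multideg(D_{R,C}((S|T))) = \multideg((S|T)) - \Bigl(\sum_{i=1}^{\ell} d_i \vec{e}_{r_i}\Bigr) \oplus \Bigl(\sum_{i=1}^{\ell} d_i \vec{e}_{c_i}\Bigr).
\]
Because every $d_i \ge 1$, the row-$r$ coordinate of this multidegree is strictly smaller than the corresponding coordinate of $\multideg((S|T))$ iff $r \in R$, and analogously for columns. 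Consequently $(R,C)$ can be recovered from the multidegree of $D_{R,C}((S|T))$, so distinct pairs contribute partial derivatives in distinct multihomogeneous components. Polynomials in distinct multihomogeneous components are automatically linearly independent, yielding the $\binom{2r}{r}$ bound. Restricting to pairs with $|R| = |C| \le d$ in characteristic zero gives the refinement $\dim(\partial_{\le d}((S|T))) \ge \sum_{i=0}^{d} \binom{r}{i}^2$.

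Given the preceding lemmas, no step is especially hard. The only subtlety I anticipate is the positive-characteristic case, where the exponents $d_i$ produced by \autoref{lem:bideterminant nonzero partials} may exceed $1$ and depend intricately on $(R, C)$ and the chosen pairing. Fortunately the uniform inequality $d_i \ge 1$ is all that is needed to separate multidegrees across pairs, so the proof goes through uniformly over any field.
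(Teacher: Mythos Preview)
Your proposal is correct and matches the paper's proof essentially line for line: the same indexing by pairs $(R,C)$ with $R\subseteq S(1,\bullet)$, $C\subseteq T(1,\bullet)$, $|R|=|C|$, the same invocation of \autoref{lem:bideterminant nonzero partials} to produce nonzero derivatives, and the same multidegree argument (via \autoref{lem:multideg partial derivative}) to separate distinct pairs into distinct multihomogeneous components. Your observation that the set $R$ (resp.\ $C$) is recoverable as exactly the row (resp.\ column) indices whose coordinate strictly drops is a slightly cleaner phrasing of the paper's ``pick $r_1\in R\setminus R'$'' case analysis, but the content is identical.
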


\begin{proof}
	Recall that because $(S|T)$ is of width $r$, we have $|S(1,\bullet)| = |T(1,\bullet)| = r$.
	For sets $R \subseteq S(1,\bullet)$ and $C \subseteq T(1,\bullet)$ with $|R| = |C|$, let $R = \set{r_1,\ldots,r_\ell}$ and $C = \set{c_1,\ldots,c_\ell}$ and define
	\[
		\frac{\partial}{\partial x_{R,C}} \coloneqq \prod_{i=1}^{\ell} \frac{\partial^{d_i}}{\partial x_{r_i,c_i}^{d_i}},
	\]
	where $d_1,\ldots,d_\ell$ are obtained by applying \autoref{lem:bideterminant nonzero partials} to $(S|T)$, $R$, and $C$.
	We will show that 
	\[
		D \coloneqq \Set{\frac{\partial}{\partial x_{R,C}}((S|T)) : R \subseteq S(1,\bullet), C \subseteq T(1,\bullet), |R| = |C|}
	\]
	is a set of linearly independent partial derivatives of $(S|T)$.
	From this, it follows immediately that
	\[
		\dim(\partial_{< \infty}((S|T))) \ge |D| = \sum_{i=0}^{r} \binom{r}{i}^2 = \binom{2 r}{r}
	\]
	and, in the case $\ch(\F) = 0$,
	\[
		\dim(\partial_{\le d}((S|T))) \ge \sum_{i=0}^d \binom{r}{i}^2.
	\]

	It remains to show that the elements of $D$ are linearly independent.
	From \autoref{lem:bideterminant nonzero partials}, we know that $\frac{\partial}{\partial x_{R,C}}((S|T)) \neq 0$.
	It follows from \autoref{lem:multideg partial derivative} that
	\[
		\multideg\del{\frac{\partial}{\partial x_{R,C}} ((S|T))} = \multideg((S|T)) - \del{\sum_{i=1}^\ell d_i \vec{e}_{r_i}} \oplus \del{\sum_{i=1}^\ell d_i \vec{e}_{c_i}}.
	\]
	Let $R' \subseteq S(1,\bullet)$ and $C' \subseteq T(1,\bullet)$ be such that $(R,C) \neq (R',C')$.
	From the above, we have
	\begin{align*}
		&\multideg\del{\frac{\partial}{\partial x_{R,C}}((S|T))} - \multideg\del{\frac{\partial}{\partial x_{R',C'}}((S|T))} \\
		&\quad = \del{\sum_{i=1}^\ell d_i' \vec{e}_{r_i'} - \sum_{i=1}^\ell d_i \vec{e}_{r_i}} \oplus \del{\sum_{i=1}^\ell d_i' \vec{e}_{c_i'} - \sum_{i=1}^\ell d_i \vec{e}_{c_i}}.
	\end{align*}
	Suppose without loss of generality that $R \neq R'$ and that $r_1 \in R \setminus R'$.
	Then the $r_1$ coordinate of $\sum_{i=1}^\ell d_i \vec{e}_{r_i} - \sum_{i=1}^\ell d_i' \vec{e}_{r_i'}$ is nonzero, so 
	\[
		\multideg\del{\frac{\partial}{\partial x_{R,C}}((S|T))} \neq \multideg\del{\frac{\partial}{\partial x_{R',C'}}((S|T))}.
	\]
	The argument when $C \neq C'$ is analogous.
	Thus, the elements of $D$ are nonzero, multihomogeneous, and of distinct multidegree.
	Polynomials of differing multidegree are linearly independent, so this immediately implies that the elements of $D$ are linearly independent as desired.
\end{proof}

We now use \autoref{prop:reduction to single bideterminant} to extend \autoref{prop:bideterminant partial dimension} to all nonzero polynomials in $I_r$.

\begin{theorem} \label{thm:pd lb}
	For every nonzero $f \in \detideal{n}{m}{r}$, we have $\dim(\partial_{< \infty}(f)) \ge \binom{2r}{r} = \dim(\partial_{< \infty}(\det_r))$.
	In the case $\ch(\F) = 0$, we also have $\dim(\partial_{\le d}(f)) \ge \sum_{i=0}^d \binom{r}{i}^2$.
\end{theorem}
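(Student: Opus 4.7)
The plan is to reduce to the bideterminant case via \autoref{prop:reduction to single bideterminant} and then invoke \autoref{prop:bideterminant partial dimension}, using the invariance of $\dim(\partial_{\le d}(\bullet))$ under invertible linear substitutions (\autoref{lem:pd linear transformation}).

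First, I would apply \autoref{prop:reduction to single bideterminant} to the given nonzero $f$ to obtain $nm$ linearly independent linear forms $\ell_{i,j}(X,\eps) \in \F(\eps)[X]$, an integer $q$, a nonzero $\alpha \in \F$, and a partition $\sigma$ with $\sigma_1 \ge r$ such that
\[
    g(X,\eps) \coloneqq f(\ell_{1,1}(X,\eps),\ldots,\ell_{n,m}(X,\eps)) = \eps^q \alpha (K_\sigma | K_\sigma)(X) + \eps^{q+1} h(X,\eps)
\]
for some $h \in \F\llb \eps \rrb [X]$. Since the substitution $x_{i,j} \mapsto \ell_{i,j}(X,\eps)$ is invertible over $\F(\eps)$, \autoref{lem:pd linear transformation} applied over $\F(\eps)$ yields $\dim_{\F(\eps)} \partial_{\le d}(g) = \dim_{\F(\eps)} \partial_{\le d}(f)$. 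Because the derivatives of $f$ lie in $\F[X]$, any $\F(\eps)$-linear dependence among them can be cleared of denominators and, on reading off coefficients of powers of $\eps$, produces an $\F$-linear dependence; hence $\dim_{\F(\eps)} \partial_{\le d}(f) = \dim_\F \partial_{\le d}(f)$.

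Next, I would transfer a basis of partial derivatives of $(K_\sigma | K_\sigma)$ to an $\F(\eps)$-linearly independent family of partial derivatives of $g$. Set $k \coloneqq \dim_\F \partial_{\le d}((K_\sigma | K_\sigma))$ and pick exponent matrices $A_1, \ldots, A_k$ so that $\frac{\partial (K_\sigma | K_\sigma)}{\partial X^{A_1}}, \ldots, \frac{\partial (K_\sigma | K_\sigma)}{\partial X^{A_k}}$ are $\F$-linearly independent. Suppose $\sum_{i=1}^k c_i(\eps) \frac{\partial g}{\partial X^{A_i}} = 0$ for some $c_i(\eps) \in \F(\eps)$ not all zero; after clearing denominators and extracting common powers of $\eps$, we may assume $c_i \in \F[\eps]$ with not all $c_i(0)$ vanishing. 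Reducing the relation modulo $\eps^{q+1}$ gives $\alpha \sum_i c_i(0) \frac{\partial (K_\sigma | K_\sigma)}{\partial X^{A_i}} = 0$, contradicting the choice of the $A_i$. Thus $\dim_\F \partial_{\le d}(f) = \dim_{\F(\eps)} \partial_{\le d}(g) \ge k$, and \autoref{prop:bideterminant partial dimension} supplies $k \ge \binom{2\sigma_1}{\sigma_1} \ge \binom{2r}{r}$ whenever $d$ is taken at least $\deg(f)$ (yielding the $\partial_{<\infty}$ bound), and in characteristic zero $k \ge \sum_{i=0}^d \binom{\sigma_1}{i}^2 \ge \sum_{i=0}^d \binom{r}{i}^2$; both inequalities use monotonicity of binomial coefficients in the upper argument together with $\sigma_1 \ge r$.

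I do not anticipate a serious obstacle. The one step that requires care is the reduction modulo $\eps^{q+1}$, which turns the $\F$-independence of the bideterminant's derivatives into $\F(\eps)$-independence of the perturbed derivatives of $g$. This is a routine semicontinuity-of-rank argument and is essentially the same manipulation used repeatedly elsewhere in the paper to pass between exact and approximative statements.
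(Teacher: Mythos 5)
Your proposal is correct and follows essentially the same route as the paper: apply \autoref{prop:reduction to single bideterminant} to degenerate $f$ to $\alpha(K_\sigma|K_\sigma)(X) + O(\eps)$, invoke \autoref{prop:bideterminant partial dimension} on the bideterminant, and transfer back to $f$ via \autoref{lem:pd linear transformation}. The paper compresses the semicontinuity step to the single assertion that a lower bound on $\dim\partial_{\le d}(g(X,0))$ implies the same bound on $\dim\partial_{\le d}(g(X,\eps))$, and does not flag the change of coefficient field from $\F$ to $\F(\eps)$; you spell both of these out (clear denominators, extract the lowest-order $\eps$-coefficient of a hypothetical relation to produce a relation among derivatives of $(K_\sigma|K_\sigma)$), which is the intended but unstated content. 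The argument is sound.
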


\begin{proof}
	Apply \autoref{prop:reduction to single bideterminant} to $f$ to obtain linear functions $\ell_{1,1}(X,\eps),\ldots,\ell_{n,m}(X,\eps) \in \F(\eps)[X]$ such that
	\[
		g(X, \eps) \coloneqq \frac{1}{\eps^q} f(\ell_{1,1}(X,\eps),\ldots,\ell_{n,m}(X,\eps)) = \alpha (K_\sigma | K_\sigma)(X) + O(\eps)
	\]
	for some $q \in \integers$, a nonzero $\alpha \in \F$, and a partition $\sigma$ with $\sigma_1 \ge r$.

	Note that $g(X, 0) = \alpha (K_\sigma | K_\sigma)(X)$.
	By \autoref{prop:bideterminant partial dimension}, we have $\dim(\partial_{<\infty}(g(X,0))) \ge \binom{2\sigma_1}{\sigma_1} \ge \binom{2r}{r}$.
	This implies that $\dim(\partial_{< \infty}(g(X,\eps)) \ge \binom{2r}{r}$.
	Since the change of variables $x_{i,j} \mapsto \ell_{i,j}(X,\eps)$ is an invertible linear transformation over $\F(\eps)$, \autoref{lem:pd linear transformation} implies
	\[
		\dim(\partial_{< \infty}(f)) = \dim(\partial_{< \infty}(g(X,\eps)) \ge \binom{2r}{r}.
	\]
	When $\ch(\F) = 0$, \autoref{prop:bideterminant partial dimension} also yields $\dim(\partial_{\le d}(g(X,0))) \ge \sum_{i=0}^d \binom{2 \sigma_1}{i}^2 \ge \sum_{i=0}^d \binom{r}{i}^2$.
	As above, this extends to a lower bound on $\dim(\partial_{\le d}(g(X,\eps)))$, so using \autoref{lem:pd linear transformation} we get
	\[
		\dim(\partial_{\le d}(f)) = \dim(\partial_{\le d}(g(X,\eps)) \ge \sum_{i=0}^d \binom{r}{i}^2. \qedhere
	\]
\end{proof}

\begin{remark}
	The hypothesis $\ch(\F) = 0$ in the second part of \autoref{thm:pd lb} cannot be avoided in general.
	If $\ch(\F) = p > 0$ and $f \in \detideal{n}{m}{r} \setminus \set{0}$, then $\frac{\partial}{\partial x_i}(f^p) = 0$ for all $i$, so $\dim(\partial_{\le 1}(f^p)) = 1 < 1 + r^2$.
\end{remark}

\section{Hardness Versus Randomness I: Low-Depth Circuits} \label{sec:low depth}

A recent breakthrough of \textcite{LST21a} obtained super-polynomial lower bounds for low-depth algebraic circuits.
Combining their result with the hardness-randomness result of \textcite{CKS19a} yields a deterministic algorithm for identity testing of low-depth algebraic circuits.
Specifically, for every fixed $\eps > 0$, they construct an explicit hitting set generator with seed length $O(n^\eps)$ and degree $O(\log n / \log \log n)$ that hits polynomial-size $o(\log \log \log n)$-depth circuits.

In this section, we give an improved construction of a hitting set generator for low-depth circuits.
For every $k \in \naturals$, we construct a generator with seed length $n^{1/2^k + o(1)}$ and degree $2^k$ that hits polynomial-size $o(\log \log \log n)$-depth circuits.
It follows from \autoref{lem:seed length lb} that the tradeoff between the seed length and degree of our generator is optimal up to the $n^{o(1)}$ factor in the seed length.
Our generator is also computable by a circuit of product-depth $k$ and size $n^{1+o(1)}$ or a circuit of larger product-depth but size $n \log^{O(1)} n$.
As remarked in the introduction, existing techniques in algebraic hardness-randomness produce generators that cannot be computed by circuits smaller than those they hit.
Additionally, our generator hits the closure of small low-depth circuits.
We note that the generator of \textcite{CKS19a}, when instantiated with a polynomial hard for the border of low-depth circuits, can also be shown to hit the closure of low-depth circuits.

Our result can be interpreted as a hardness-randomness framework for low-depth circuits in an aggressive setting of parameters.
In order to instantiate our generator, we need lower bounds on the size of low-depth circuits that compute the determinant, which itself can be computed by small algebraic branching programs.
In contrast, typical hardness-randomness results only require lower bounds for a family of polynomials whose coefficients can be computed explicitly, but the polynomials themselves need not be efficiently computable.
In return for these strong lower bound assumptions, we obtain a generator with parameters that improve on known constructions and are near-optimal in the regime of $n^{\Theta(1)}$ seed length.

\subsection{Making \cite{LST21a} Robust} \label{subsec:robust lst}

In this subsection, we establish that the lower bound of \textcite{LST21a} extends to the border of low-depth circuits.
This essentially follows from the fact that they use a rank-based measure to prove their lower bound.
The extension of lower bounds based on rank measures to the setting of border complexity is a standard observation in algebraic circuit complexity, but we make this explicit for the sake of completeness.
Throughout this subsection, we assume familiarity with the notation and definitions of \cite{LST21a}.

The proof of \textcite{LST21a} proceeds in two steps.
They first establish a lower bound against low-depth set-multilinear circuits.
They then show that a low-depth circuit computing a low-degree set-multilinear polynomial can be made set-multilinear without increasing the depth or size too much.
Combined, this establishes a lower bound against general low-depth circuits.

We first observe that the lower bound against set-multilinear circuits is robust.
To prove their lower bound, they construct from a given polynomial $f(\vec{x})$ a matrix $M_f$ such that $M_f$ has small rank if $f$ can be computed by a small set-multilinear circuit of low depth.

\begin{lemma}[{\cite[Claim 16]{LST21a}}] \label{lem:relrk ub}
	Let $k \ge 10d$ and let $w$ be any word of length $d$ such that the entries of $w$ are $\floor{\alpha k}$ and $-k$ where $\alpha = 1/\sqrt{2}$.
	Then for any $\Delta \ge 1$, any set-multilinear formula $\Phi$ of product-depth $\Delta$ and size $s$ satisfies
	\[
		\relrk_w(\Phi) \le s \cdot 2^{-\frac{k d^{1/(2^\Delta-1)}}{20}}.
	\]
\end{lemma}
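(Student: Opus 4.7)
The plan is to follow the argument of Limaye, Srinivasan, and Tavenas by inducting on the product-depth $\Delta$, exploiting two structural properties of the relative-rank measure $\relrk_w$: sub-additivity under sums, and sub-multiplicativity under products when the word $w$ is partitioned across the factors of the product. Together these let one decompose a set-multilinear formula at its top gate and reduce the bound to its sub-formulas.

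For the base case $\Delta = 1$, a formula of size $s$ and product-depth $1$ is at worst a sum of $s$ products of set-multilinear atoms of product-depth $0$. Each individual product has matrix rank $1$, and after the normalization in the definition of $\relrk_w$ has relative rank at most $2^{-kd/20}$. Summing over the $s$ terms via sub-additivity gives the claimed bound, matching the statement since $d^{1/(2^1 - 1)} = d$.

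For the inductive step, decompose $\Phi = \sum_i \prod_j \Phi_{i,j}$ at the top addition gate, where each $\Phi_{i,j}$ has product-depth $\Delta - 1$ and $\sum_{i,j} \size(\Phi_{i,j}) \le s$. For each product in the decomposition, the word $w$ splits into sub-words $w_{i,j}$ dictated by which variable sets each factor $\Phi_{i,j}$ depends on, and sub-multiplicativity gives
\[
\relrk_w\Bigl(\prod_j \Phi_{i,j}\Bigr) \le \prod_j \relrk_{w_{i,j}}(\Phi_{i,j}).
\]
The two-valued form of the entries of $w$, with $\alpha = 1/\sqrt{2}$, is specifically chosen so that, by a pigeonhole argument on how the $\lfloor\alpha k\rfloor$ and $-k$ entries distribute across the sub-words, at least one sub-word $w_{i,j^*}$ inherits enough residual imbalance to feed back into the inductive hypothesis with good parameters. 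Applying the inductive hypothesis to the dominant factor $\Phi_{i,j^*}$, bounding the remaining factors trivially by $1$, and summing over $i$ using sub-additivity then yields the claimed bound.

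The main obstacle is the pigeonhole/combinatorial step: one must verify that, for every way of splitting $w$ across a product, some sub-word is both large enough and retains the right two-valued structure for the inductive hypothesis to apply with an exponent that telescopes to exactly $d^{1/(2^\Delta - 1)}$. This is precisely where the choice $\alpha = 1/\sqrt{2}$ matters — it is (up to small perturbations) the unique value that makes the recursion close so that the exponent improves by the correct amount at every level. Tracking the $1/20$ constant in the final exponent, the size charged to the dominant factor, and the normalizations consistently throughout the induction is the principal bookkeeping burden.
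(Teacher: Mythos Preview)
The paper does not give its own proof of this lemma; it is quoted verbatim as \cite[Claim 16]{LST21a} and used as a black box. Your sketch is a faithful outline of the argument in \cite{LST21a}: induction on product-depth using sub-additivity of $\relrk_w$ at sum gates and sub-multiplicativity at product gates, together with the word-splitting pigeonhole that exploits the two-valued structure of $w$ and the choice $\alpha = 1/\sqrt{2}$ to make the exponent recurse to $d^{1/(2^\Delta-1)}$. So there is nothing to compare against in this paper, and your plan matches the original source.
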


Next, they show that $M_f$ has large rank when $f$ corresponds to the iterated matrix multiplication polynomial $\IMM_{n,d}(\vec{x})$.
To do this, they show that a set-multilinear projection of $\IMM_{n,d}(\vec{x})$ has large rank.
This projection behaves nicely in the setting of border complexity, as we describe below.

\begin{lemma}[cf.~{\cite[Lemma 8]{LST21a}}] \label{lem:proj to P_w}
	Let $w \in A^d$ be any word which is $b$-unbiased.
	If there is a set-multilinear circuit computing $\IMM_{2^b,d}(\vec{x}) + O(\eps)$ of size $s$ and product-depth $\Delta$, then there is also a set-multilinear circuit of size $s$ and product-depth $\Delta$ computing $P_w + O(\eps)$ for a polynomial $P_w \in \F_{\text{sm}}[\overline{X}(w)]$ such that $\relrk_w(P_w + O(\eps)) \ge 2^{-b/2}$.
\end{lemma}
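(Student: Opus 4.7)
The plan is to reduce this border version to the original non-border statement \cite[Lemma 8]{LST21a} by showing that the same variable substitution works over $\F((\eps))$, and that the required relative rank bound then transfers via semicontinuity of rank under specialization. The proof of the original lemma constructs a set-multilinear variable substitution $\pi$ such that $\pi(\IMM_{2^b,d}(\vec{x})) = P_w(\vec{x})$, with $P_w \in \F_{\text{sm}}[\overline{X}(w)]$ satisfying $\relrk_w(P_w) \ge 2^{-b/2}$; I would apply this exact same $\pi$ in the border setting.

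First, given a set-multilinear circuit $\Phi(\vec{x},\eps)$ over $\F((\eps))$ of size $s$ and product-depth $\Delta$ with $\Phi(\vec{x},\eps) = \IMM_{2^b,d}(\vec{x}) + O(\eps)$, I would form $\pi(\Phi)(\vec{x},\eps)$ by applying the substitution $\pi$ at the input gates. Since $\pi$ is simply a variable renaming that respects the set-multilinear structure by construction, $\pi(\Phi)$ remains a set-multilinear circuit of size $s$ and product-depth $\Delta$. Writing $\Phi = \IMM_{2^b,d} + \eps\, g$ with $g \in \F\llb \eps \rrb[\vec{x}]$ and noting that $\pi$ does not touch $\eps$, we get
\[
\pi(\Phi) \;=\; \pi(\IMM_{2^b,d}) + \eps\, \pi(g) \;=\; P_w + O(\eps),
\]
so $\pi(\Phi)$ border computes $P_w$ with the claimed size and depth.

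Next, I would establish the lower bound on $\relrk_w(P_w + O(\eps))$. The matrix $M_{P_w + O(\eps)}$ defining $\relrk_w$ is built from coefficients of $P_w + O(\eps)$ and thus has entries in $\F\llb\eps\rrb$. The specialization $\eps \mapsto 0$ sends $M_{P_w + O(\eps)}$ to $M_{P_w}$, and rank cannot decrease under passage from $\F$-coefficients to $\F\llb\eps\rrb$-coefficients: any nonvanishing $k \times k$ minor of $M_{P_w}$ is the $\eps=0$ specialization of the corresponding minor of $M_{P_w + O(\eps)}$, which must therefore also be nonzero. Hence $\rank_{\F((\eps))}(M_{P_w + O(\eps)}) \ge \rank_{\F}(M_{P_w})$, and normalizing by the ambient dimension (which depends only on $w$) yields $\relrk_w(P_w + O(\eps)) \ge \relrk_w(P_w) \ge 2^{-b/2}$.

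There is no real obstacle here beyond one minor bookkeeping point: one must confirm that the entries of $M_{P_w + O(\eps)}$ genuinely lie in $\F\llb \eps \rrb$ rather than only in $\F((\eps))$, so that the $\eps \mapsto 0$ specialization is well-defined. This is immediate from the assumption that $\Phi$ \emph{border computes} $\IMM_{2^b,d}$, which forces the output polynomial (and hence its coefficients, and hence the entries of $M$) to lie in $\F\llb\eps\rrb$. Everything else amounts to running the original argument of \cite[Lemma 8]{LST21a} over the ring $\F\llb\eps\rrb$ in place of $\F$.
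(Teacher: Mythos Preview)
Your proposal is correct and follows essentially the same approach as the paper: apply the set-multilinear projection from \cite[Lemma 8]{LST21a} to the border circuit (preserving size, product-depth, and set-multilinearity), and then observe that $\relrk_w(P_w + O(\eps)) \ge \relrk_w(P_w) \ge 2^{-b/2}$ because specializing $\eps \mapsto 0$ can only decrease rank. The paper's proof is terser about the rank step, but your explicit minor-based justification is exactly the underlying reason.
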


\begin{proof}
	\cite[Lemma 8]{LST21a} establishes that such a polynomial $P_w$ can be obtained as a set-multilinear projection of $\IMM_{2^b,d}$.
	Since a nonzero projection of any polynomial in $\eps\F[\eps][\vec{x}]$ remains in $\eps \F[\eps][\vec{x}]$, the same projection takes $\IMM_{2^b,d} + O(\eps)$ to $P_w + O(\eps)$.
	It is clear that such a projection does not increase the size or product-depth of a circuit.
	Further, since this projection is set-multilinear, the set-multilinearity of the circuit is preserved.

	It remains to show that $\relrk_w(P_w + O(\eps)) \ge 2^{-b/2}$.
	\textcite{LST21a} show that $\relrk_w(P_w) \ge 2^{-b/2}$.
	Since $P_w$ can be obtained as a projection of $P_w + O(\eps)$ by setting $\eps = 0$, the lower bound on relative rank extends to $P_w + O(\eps)$ for any error term $O(\eps)$.
\end{proof}

Given the preceding lemmas, we now establish lower bounds on the size of low-depth set-multilinear circuits computing $\IMM_{n,d}(\vec{x})$ in the setting of border complexity.
The proof is analogous to that of \cite[Lemma 15]{LST21a}.

\begin{lemma}[cf.~{\cite[Lemma 15]{LST21a}}] \label{lem:robust sm imm lb}
	Let $n, d, \Delta \in \naturals \setminus\set{0}$ such that $n \ge 4^{10d + 1}$.
	Any set-multilinear circuit $\Phi$ of product-depth $\Delta$ that computes $\IMM_{n,d}(\vec{x}) + O(\eps)$ must have size
	\[
		n^{\Omega\del{\frac{d^{1/(2^\Delta - 1)}}{\Delta}}}.
	\]
\end{lemma}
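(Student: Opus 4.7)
The plan is to mirror the two-step structure of the original proof of \cite[Lemma 15]{LST21a}, verifying at each step that the argument survives the passage to border complexity. The key conceptual point is that $\relrk_w$ is a rank-based complexity measure, and rank-based measures extend gracefully to approximative computation via lower-semicontinuity of rank under specialization.

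First, I would argue that the rank upper bound in \autoref{lem:relrk ub} continues to hold for set-multilinear formulas defined over the field $\F((\eps))$. The inductive argument used by \textcite{LST21a} to bound $\relrk_w$ in terms of formula size depends only on the fact that the underlying ring of scalars is a field and on combinatorial properties of the word $w$; neither ingredient is disturbed by enlarging $\F$ to $\F((\eps))$. Thus, if $\Psi$ is a set-multilinear formula over $\F((\eps))$ of product-depth $\Delta$ and size $s'$ computing a set-multilinear polynomial in $\F((\eps))[\overline{X}(w)]$, then $\relrk_w(\Psi) \le s' \cdot 2^{-bd^{1/(2^\Delta-1)}/20}$ for the choice of $w$ in that lemma. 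A standard set-multilinearity-preserving conversion from product-depth-$\Delta$ circuits to product-depth-$\Delta$ formulas then lets me replace ``formula'' by ``circuit'' at the cost of a blow-up in size that is polynomial in $s$ (with the exponent depending on $\Delta$), matching the $\Delta$ in the denominator of the target exponent.

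Second, I would apply \autoref{lem:proj to P_w} to pass from the hypothesized circuit for $\IMM_{n,d}(\vec{x}) + O(\eps)$ to a set-multilinear circuit of the same size $s$ and product-depth $\Delta$ computing $P_w(\overline{X}(w)) + O(\eps)$, choosing $b \le \log_2 n$ so that $n \ge 2^b$ and selecting a $b$-unbiased word $w$ of length $d$ whose entries come from the set $\{\lfloor b/\sqrt{2}\rfloor, -b\}$; the hypothesis $n \ge 4^{10d+1}$ guarantees enough room to simultaneously satisfy $b \ge 10d$ and the unbiasedness requirement. The lemma gives $\relrk_w(P_w + O(\eps)) \ge 2^{-b/2}$. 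Here I read $\relrk_w$ as the rank of the appropriate matrix over $\F((\eps))$; since rank can only drop upon specializing $\eps \to 0$, the rank of that matrix over $\F((\eps))$ is at least its rank over $\F$ after setting $\eps = 0$, which is exactly the rank bound of \cite{LST21a} for $P_w$.

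Combining these two inequalities yields
\[
	2^{-b/2} \le \relrk_w(P_w + O(\eps)) \le s^{O(1)} \cdot 2^{-b d^{1/(2^\Delta - 1)} / 20},
\]
so $s \ge n^{\Omega(d^{1/(2^\Delta - 1)}/\Delta)}$ after absorbing the formula-conversion exponent into the denominator and using $b = \Theta(\log n)$. The main obstacle I anticipate is the first step: verifying scrupulously that the inductive rank-upper-bound argument of \textcite{LST21a} is genuinely field-independent so that it applies to $\F((\eps))$, and that the preservation of set-multilinearity under both the circuit-to-formula conversion and the projection to $P_w$ is not disturbed by working over a formal Laurent series ring rather than $\F$. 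Once these routine checks are in hand, the rest of the argument is a direct translation of the original proof.
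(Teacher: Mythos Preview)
Your proposal is correct and follows exactly the approach the paper intends: the paper does not spell out a proof at all, noting only that it is ``identical to that of \cite[Lemma 15]{LST21a},'' and your two-step argument (field-independence of the $\relrk_w$ upper bound over $\F((\eps))$, combined with \autoref{lem:proj to P_w} to lower-bound $\relrk_w(P_w + O(\eps))$) is precisely that proof carried out with the border-complexity checks made explicit.
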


Given this lower bound, we now implement the second step of \cite{LST21a} by lifting this lower bound to general low-depth circuits.
Let $f(\vec{x})$ be a set-multilinear polynomial.
\textcite{LST21a} lift their lower bound from set-multilinear circuits to general circuits by giving a non-trivial simulation of low-depth circuits by set-multilinear circuits.
Our goal is to perform this same lifting in the border setting: given a low-depth circuit computing $f(\vec{x}) + O(\eps)$, we want to find a low-depth set-multilinear circuit that also computes $f(\vec{x}) + O(\eps)$.

There is a subtle issue in that the error term $O(\eps)$ may not correspond to a set-multilinear polynomial, so we cannot immediately conclude the existence of a low-depth set-multilinear circuit computing $f(\vec{x}) + O(\eps)$.
However, if we allow the error term to change, such a transformation is possible.
Given a low-depth circuit computing $f(\vec{x}) + O(\eps)$, the set-multilinearization procedure of \textcite{LST21a} in fact yields a small, low-depth circuit computing the set-multilinear part of $f(\vec{x}) + O(\eps)$.
This only modifies the error term, which is permissible in our setting.

\begin{lemma}[cf.~{\cite[Proposition 9]{LST21a}}] \label{lem:set-multilinearization}
	Let $s$, $N$, and $d$, be growing parameters with $s \ge Nd$ and let $\Delta \in \naturals$.
	Assume that $\ch(\F) = 0$ or $\ch(\F) > d$.
	If $\Phi$ is a circuit of size at most $s$ and product-depth at most $\Delta$ computing $P + O(\eps)$ for a set-multilinear polynomial $P$ over the sets of variables $(X_1,\ldots,X_d)$ (with $|X_i| \le N$), then there is a set-multilinear circuit $\tilde{\Phi}$ of size $d^{O(d)} \poly(s)$ and product-depth at most $2 \Delta$ computing $P + O(\eps)$.
\end{lemma}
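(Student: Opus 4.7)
The plan is to apply the set-multilinearization construction of \cite[Proposition 9]{LST21a} verbatim, treating $\Phi$ as a circuit over the Laurent series field $K \coloneqq \F((\eps))$ rather than over $\F$. Because $K$ is infinite and $\ch(K) = \ch(\F)$, the characteristic hypothesis is inherited, and their construction — which uses only field arithmetic together with interpolation at $O(d)$ distinct field elements — goes through unchanged, producing a set-multilinear circuit $\tilde\Phi$ over $K$ of size $d^{O(d)}\poly(s)$ and product-depth at most $2\Delta$.

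What remains is to identify the polynomial that $\tilde\Phi$ actually computes. The construction is syntactic in the sense that, applied to \emph{any} circuit over $K$ computing a polynomial $Q \in K[\vec x]$, the resulting circuit computes $\mathrm{sm}(Q)$, the \emph{set-multilinear part} of $Q$ with respect to $(X_1,\dots,X_d)$, i.e.\ the sum of the monomials of $Q$ using exactly one variable from each $X_i$. The operator $\mathrm{sm}\colon K[\vec x] \to K[\vec x]$ is $K$-linear, as it is just the projection onto a coordinate subspace of the monomial basis.

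By hypothesis, $\Phi$ computes $Q = P + \eps\,g(\vec x, \eps)$ for some $g \in \F\llb\eps\rrb[\vec x]$. Since $P$ is already set-multilinear, $\mathrm{sm}(P) = P$, and $K$-linearity of $\mathrm{sm}$ gives
\[
  \tilde\Phi(\vec x) \;=\; \mathrm{sm}(Q) \;=\; P(\vec x) + \eps \cdot \mathrm{sm}(g)(\vec x,\eps).
\]
Because $\mathrm{sm}$ merely retains a subset of monomials, it preserves membership in $\F\llb\eps\rrb[\vec x]$; hence $\mathrm{sm}(g) \in \F\llb\eps\rrb[\vec x]$, so $\tilde\Phi(\vec x) = P(\vec x) + O(\eps)$, as required.

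The only conceptual obstacle is justifying the two assertions used above about the LST21a construction: that it is a purely syntactic procedure whose semantic effect on the computed polynomial is exactly $\mathrm{sm}(\cdot)$, and that it transfers without modification from $\F$ to $K$. Both follow by inspection: their procedure homogenizes each gate with respect to every partition $X_i$ in turn (via interpolation on $|X_i|+1$ scalars) and discards all output components of multidegree other than $(1,\dots,1)$, and each of these operations is manifestly invariant under enlarging the base field. No independent estimate on the error term is needed — once $\tilde\Phi$ is guaranteed to be set-multilinear over $K$ and to compute $\mathrm{sm}(Q)$, the conclusion is immediate from $K$-linearity.
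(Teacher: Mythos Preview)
Your proposal is correct and matches the paper's approach: the paper does not give a formal proof of this lemma but explains in the preceding paragraph that the set-multilinearization procedure of \textcite{LST21a} computes the set-multilinear part of its input, which when applied to $P + O(\eps)$ leaves $P$ unchanged and only modifies the error term. Your write-up makes this precise by working over $K = \F((\eps))$ and invoking $K$-linearity of the projection $\mathrm{sm}$, which is exactly the content of the paper's informal remark.
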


Using \autoref{lem:set-multilinearization}, we now lift \autoref{lem:robust sm imm lb} to a lower bound against low-depth circuits without the set-multilinear restriction.
The proof is identical to that of \cite[Corollary 4]{LST21a}.

\begin{corollary}[cf.~{\cite[Corollary 4]{LST21a}}] \label{cor:LST border lb}
	Let $d \le (\log n)/100$ and suppose either $\ch(\F) = 0$ or $\ch(\F) > d$.
	Any algebraic circuit of product-depth $\Delta$ which computes $\IMM_{n,d}(\vec{x}) + O(\eps)$ must have size at least $n^{d^{\exp(-O(\Delta))}}$.
\end{corollary}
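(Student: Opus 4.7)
The plan is to follow the two-step strategy of \cite{LST21a}, now in the border regime: first reduce to the set-multilinear setting via \autoref{lem:set-multilinearization}, then invoke the robust set-multilinear lower bound \autoref{lem:robust sm imm lb}. Concretely, suppose $\Phi$ is a circuit of product-depth $\Delta$ and size $s$ that computes $\IMM_{n,d}(\vec{x}) + O(\eps)$. Since $\IMM_{n,d}(\vec{x})$ is set-multilinear over its natural partition of variables into $d$ blocks of size $n$, the hypotheses of \autoref{lem:set-multilinearization} are met (with $N = n$, $d \le (\log n)/100$, and the characteristic assumption), so there exists a set-multilinear circuit $\tilde{\Phi}$ of product-depth at most $2\Delta$ and size $d^{O(d)} \poly(s)$ computing $\IMM_{n,d}(\vec{x}) + O(\eps)$ (possibly with a different error term, which is fine).

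Next I would verify that $d \le (\log n)/100$ is enough to make the growth condition $n \ge 4^{10d+1}$ in \autoref{lem:robust sm imm lb} hold for large $n$, since $4^{10d+1} = 2^{20d+2} \le n^{1/5} \cdot 4 \le n$. Applying \autoref{lem:robust sm imm lb} to $\tilde{\Phi}$ at product-depth $2\Delta$ then gives
\[
    d^{O(d)} \poly(s) \;\ge\; n^{\Omega\!\del{d^{1/(2^{2\Delta}-1)}/(2\Delta)}}.
\]

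Finally, since $d \le (\log n)/100$, the overhead $d^{O(d)}$ is bounded by $2^{O(d \log d)} = n^{O(\log \log n / \log n) \cdot O(d)} = n^{o(1)}$, and it can be absorbed into the lower bound. The exponent $d^{1/(2^{2\Delta}-1)}/(2\Delta)$ on the right-hand side is of the form $d^{\exp(-O(\Delta))}$, since $2^{2\Delta}-1 = e^{O(\Delta)}$ and the polynomial-in-$\Delta$ factor $1/(2\Delta)$ can be absorbed into the $\exp(-O(\Delta))$ by a routine adjustment (e.g., replacing the exponent $1/(2^{2\Delta}-1)$ by $1/2^{2\Delta+1}$ loses only a constant factor and kills the $1/(2\Delta)$ in the range $d \le (\log n)/100$). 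Rearranging yields $s \ge n^{d^{\exp(-O(\Delta))}}$, as claimed.

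The main obstacle, conceptually, is really in the preceding lemmas rather than in this corollary: one has to make sure that the set-multilinearization procedure of \cite{LST21a} respects approximation, i.e.\ transforms a low-depth border computation of a set-multilinear polynomial $P$ into a low-depth set-multilinear border computation of $P$. Once \autoref{lem:set-multilinearization} is in hand in the border setting, the deduction of the corollary is a straightforward parameter computation combining the two lemmas, as sketched above.
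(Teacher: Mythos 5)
Your proof follows exactly the paper's approach, which is simply to invoke the robust analogues of the two LST steps: set-multilinearize via \autoref{lem:set-multilinearization} (doubling the depth), then apply \autoref{lem:robust sm imm lb}, exactly as in \cite[Corollary~4]{LST21a}. One arithmetic slip, however: with $d = \Theta(\log n)$ the overhead $d^{O(d)}$ is $n^{\Theta(\log\log n)}$, not $n^{o(1)}$ as you claim (your computation $O(\log\log n/\log n)\cdot O(d)$ evaluates to $O(\log\log n)$, not $o(1)$); this doesn't invalidate the conclusion, since $n^{\Theta(\log\log n)}$ is still absorbed by $n^{d^{\exp(-O(\Delta))}}$ precisely in the regime of $\Delta$ where the stated bound is non-trivial, but the intermediate claim as written is false.
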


\subsection{Constructing a Hitting Set Generator}

We now use the lower bound of \autoref{cor:LST border lb} to design hitting set generators for the closure of small low-depth circuits.
Of course, a generator with improved parameters can be constructed if one assumes an even stronger lower bound on the size of low-depth circuits needed to compute $\IMM_{n,d}(\vec{x}) + O(\eps)$.
For ease of exposition, we directly instantiate our generator with the lower bound of \autoref{cor:LST border lb}.

The generator of \autoref{cons:matrix generator} will act as a basic building block in our construction.
In order to make use of this generator, we need to extend \autoref{cor:LST border lb} to a lower bound for any non-zero polynomial in the ideal $\detideal{n}{m}{r}$.
This essentially follows by combining \autoref{cor:LST border lb} with \autoref{cor:proj to imm}.

\begin{lemma} \label{lem:const depth det ideal lb}
	There is a universal constant $c_{\ref{lem:const depth det ideal lb}} > 0$ such that the following holds.
	Let $f(X) \in \detideal{n}{m}{r}$ be a nonzero polynomial.
	Assume that either $\ch(\F) = 0$ or $\ch(\F) > \deg(f)$.
	Then any circuit of product-depth $\Delta$ which computes $f(X) + O(\eps)$ must be of size 
	\[
		r^{(\log r)^{\exp(-c_{\ref{lem:const depth det ideal lb}} \Delta)}}.
	\]
\end{lemma}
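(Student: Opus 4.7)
The plan is to reduce the task to an instance of the low-depth border lower bound of \autoref{cor:LST border lb}, using the $\IMM$-projection afforded by \autoref{cor:proj to imm}. Choose $d := \lceil (\log r)/200 \rceil$ and let $w$ be the largest integer with $w(d-1) + 2 \le r$, so $w = \Theta(r/\log r)$. For all sufficiently large $r$ these choices force $d \le (\log w)/100$, which is the hypothesis of \autoref{cor:LST border lb}.

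Suppose $\Phi$ is a product-depth $\Delta$, size $s$ circuit computing some $h(X,\eps) = f(X) + O(\eps)$. Apply \autoref{cor:proj to imm} with these $w,d$ to obtain a depth-three $h$-oracle circuit $\Psi$ of size $O(nm)$ whose only non-addition gate is a single $h$-oracle, and which computes $\IMM_{w,d}(\vec{y}) + O(\eps)$ in characteristic zero, or $\IMM_{w,d}(\vec{y})^{p^k} + O(\eps)$ in characteristic $p$. Inspecting the proof of \autoref{thm:proj to small abp}, $p^k$ is the largest power of $p$ dividing an integer $t$ bounded by $\deg(f)$; the hypothesis $\ch(\F) > \deg(f)$ forces $k = 0$, so in both characteristic cases $\Psi$ cleanly computes $\IMM_{w,d}(\vec{y}) + O(\eps)$.

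Substituting $\Phi$ for the oracle in $\Psi$ produces a circuit of product-depth $\Delta$ and size $O(s + nm)$ that border computes $\IMM_{w,d}(\vec{y})$. By \autoref{cor:LST border lb}, which applies since $\ch(\F) = 0$ or $\ch(\F) > \deg(f) \ge r > d$, one obtains $O(s + nm) \ge w^{d^{\exp(-O(\Delta))}}$. Plugging in $w = \Theta(r/\log r)$ and $d = \Theta(\log r)$ gives $w^{d^{\exp(-O(\Delta))}} = \exp\!\bigl((\log r)^{1 + \exp(-O(\Delta))}\bigr)$, which matches the form of the claimed bound $r^{(\log r)^{\exp(-c\Delta)}} = \exp\!\bigl((\log r)^{1+\exp(-c\Delta)}\bigr)$. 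Choosing $c_{\ref{lem:const depth det ideal lb}}$ sufficiently large absorbs the hidden constant in the $O(\Delta)$ exponent as well as the additive $O(nm)$ overhead (which in all applications of the lemma is polynomially bounded in $r$ and therefore dominated by the bound).

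The main obstacle is mostly book-keeping: verifying that $w,d$ can be chosen to satisfy both $w(d-1)+2 \le r$ (from \autoref{cor:proj to imm}) and $d \le (\log w)/100$ (from \autoref{cor:LST border lb}), and checking that the characteristic hypothesis suppresses the $p^k$-power in the reduction so that the border lower bound for $\IMM_{w,d}$ applies directly. Once these two points are settled, the remainder is a routine composition of the reduction with the known lower bound.
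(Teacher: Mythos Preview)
Your overall approach is the same as the paper's: reduce to $\IMM_{w,d}$ via \autoref{cor:proj to imm} with $w = \Theta(r/\log r)$, $d = \Theta(\log r)$, then invoke \autoref{cor:LST border lb}. Your handling of the characteristic-$p$ case (bounding $t \le \deg(f)$ to force $k=0$) is a valid variant of the paper's degree argument.

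There is, however, one genuine gap. The lemma is stated for arbitrary $n,m$ with no assumed relation to $r$, so the additive $O(nm)$ (in the paper, $O(n^2m^2)$) overhead from the oracle circuit cannot simply be declared ``polynomially bounded in $r$ in all applications.'' As written, you have only proved the bound $O(s+nm) \ge r^{(\log r)^{\exp(-c\Delta)}}$, which says nothing about $s$ when $nm$ is much larger than the right-hand side. Appealing to how the lemma is later used does not prove the lemma itself.

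The paper closes this gap with a short preliminary reduction: without loss of generality, $f$ does not lie in any $\detideal{n'}{m}{r}$ with $n'<n$ or $\detideal{n}{m'}{r}$ with $m'<m$ (otherwise zero out the unused row/column). Then $f$ depends on some variable in every row and every column, hence on at least $\max(n,m)$ variables, forcing $s \ge \max(n,m)$. This gives $n^2m^2 \le O(s^4)$, so the overhead is polynomial in $s$ rather than in $n,m$, and the lower bound on $s + O(s^4)$ transfers to a lower bound on $s$ of the same shape. You should add this reduction; the rest of your argument then goes through.
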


\begin{proof}
	Without loss of generality, we assume that there is no $n' < n$ such that $f \in \detideal{n'}{m}{r}$ and that there is no $m' < m$ such that $f \in \detideal{n}{m'}{r}$.
	If there is such an $n'$ or $m'$, we may zero out the $n$\ts{th} row (respectively $m$\ts{th} column) of $X$ without affecting the polynomial $f$.
	In particular, we may assume that $f$ depends on at least one variable in each row and column of $X$, so $f$ depends on at least $\max(n,m)$ variables.
	This implies that any circuit computing $f + O(\eps)$ must have size at least $s \ge \max(n,m)$.

	Let $\Phi$ be a circuit of size $s$ and product-depth $\Delta$ that computes $f(X) + O(\eps)$.
	Let $d \coloneqq (\log r)/1000$ and $w \coloneqq r / \log r$.
	Using \autoref{cor:proj to imm}, we obtain a circuit $\Psi(\vec{y})$ of size $s + O(n^2 m^2)$ and product-depth $\Delta$ that computes
	\[
		\Psi(\vec{y}) = \begin{cases}
			\IMM_{w,d}(\vec{y}) + O(\eps) & \text{if } \ch(\F) = 0 \\
			\IMM_{w,d}(\vec{y})^{p^k} + O(\eps) & \text{if } \ch(\F) = p > 0.
		\end{cases}
	\]
	In the case $\ch(\F) = p > 0$, the fact that $\Psi$ is obtained from $\Phi$ by adding a layer of addition gates above and below $\Phi$ implies
	\[
		\deg(f) \ge \deg(\IMM_{w,d}^{p^k}) = d p^k.
	\]
	By assumption, we have $p > \deg(f)$, so $k = 0$.
	That is, we have the equality
	\[
		\Psi(\vec{y}) = \IMM_{w,d}(\vec{y}) + O(\eps)
	\]
	both when $\ch(\F) = 0$ or when $\ch(\F) > \deg(f)$.
	When $r$ is sufficiently large, we have 
	\[
		d = (\log r)/1000 \le (\log r - \log \log r)/100 = (\log w)/100.
	\]
	\autoref{cor:LST border lb} then implies 
	\[
		s + O(n^2 m^2) \ge w^{d^{\exp(-O(\Delta))}} = r^{(\log r)^{\exp(-O(\Delta))}}.
	\]
	Since $n^2 m^2 \le O(s^4)$, we conclude the desired lower bound on $s$.
\end{proof}

Having established border complexity lower bounds against low-depth circuits for all nonzero polynomials in $\detideal{n}{m}{r}$, we now turn to polynomial identity testing.
Using \autoref{lem:const depth det ideal lb}, we show that matrices of low rank are a hitting set for the closure of low-depth circuits.

\begin{lemma} \label{lem:small depth matrix generator}
	Let $\F$ be a field of characteristic zero or characteristic larger than $s^\Delta$.
	Let $\mathcal{G}_{n,m,r}(Y,Z)$ be the generator defined in \autoref{cons:matrix generator}.
	There is a universal constant $c_{\ref{lem:small depth matrix generator}} > 0$ such that for $r = 2^{(\log s)^{1 - \exp(-c_{\ref{lem:small depth matrix generator}} \Delta)}}$, the map $\mathcal{G}_{\sqrt{n},\sqrt{n},r-1}(Y,Z)$ is a hitting set generator for the closure of $n$-variate circuits of size $s$ and product-depth $\Delta$.
\end{lemma}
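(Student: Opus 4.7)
The plan is to reduce to the border complexity lower bound for determinantal ideals in \autoref{lem:const depth det ideal lb}. Suppose for contradiction that $\mathcal{G}_{\sqrt{n},\sqrt{n},r-1}(Y,Z)$ fails to hit some nonzero $f(X) \in \F[X]$ in the border of product-depth-$\Delta$ circuits of size $s$, where we view $X$ as a $\sqrt{n}\times \sqrt{n}$ matrix of the $n$ input variables. Then we have an approximating circuit $\Phi(X,\eps)$ of size $s$ and product-depth $\Delta$ with $\Phi(X,\eps) = f(X) + O(\eps)$, and yet $f(\mathcal{G}_{\sqrt{n},\sqrt{n},r-1}(Y,Z)) = 0$.

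By \autoref{lem:vanish on matrix generator}, the vanishing $f(\mathcal{G}_{\sqrt{n},\sqrt{n},r-1}(Y,Z)) = 0$ is equivalent to $f \in \detideal{\sqrt{n}}{\sqrt{n}}{r}$. So it suffices to argue that no nonzero element of $\detideal{\sqrt{n}}{\sqrt{n}}{r}$ lies in the border of size-$s$, product-depth-$\Delta$ circuits. First I would verify the characteristic hypothesis of \autoref{lem:const depth det ideal lb}: a circuit of size $s$ and product-depth $\Delta$ computes a polynomial of degree at most $s^\Delta$, so $\deg(f) \le s^\Delta$, and the assumption $\ch(\F) = 0$ or $\ch(\F) > s^\Delta$ covers the needed hypothesis $\ch(\F) > \deg(f)$. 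Then \autoref{lem:const depth det ideal lb} gives the lower bound
\[
	s \;\ge\; r^{(\log r)^{\exp(-c_{\ref{lem:const depth det ideal lb}}\Delta)}}.
\]

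It remains to check that for $r = 2^{(\log s)^{1 - \exp(-c_{\ref{lem:small depth matrix generator}} \Delta)}}$ and a suitably chosen constant $c_{\ref{lem:small depth matrix generator}}$, the above inequality fails, yielding the desired contradiction. Taking logarithms, we need
\[
	(\log r)^{1 + \exp(-c_{\ref{lem:const depth det ideal lb}}\Delta)} \;>\; \log s,
\]
which rearranges to $\log r > (\log s)^{1/(1 + \exp(-c_{\ref{lem:const depth det ideal lb}}\Delta))}$. Using the elementary bound $1/(1+x) \le 1 - x/2$ for $x \in (0,1)$, we have
\[
	\frac{1}{1 + \exp(-c_{\ref{lem:const depth det ideal lb}} \Delta)} \;\le\; 1 - \tfrac{1}{2}\exp(-c_{\ref{lem:const depth det ideal lb}} \Delta) \;\le\; 1 - \exp(-c_{\ref{lem:small depth matrix generator}} \Delta)
\]
for $c_{\ref{lem:small depth matrix generator}}$ chosen slightly larger than $c_{\ref{lem:const depth det ideal lb}}$ (absorbing the factor of $1/2$). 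With this choice of $c_{\ref{lem:small depth matrix generator}}$, our definition of $r$ satisfies the required inequality, so the purported approximating circuit cannot exist.

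The only substantive obstacle is the parameter accounting in the last paragraph — making the exponents line up exactly so that $r$ is small enough to be a useful seed length yet large enough to trigger the lower bound. Everything else is a direct application of previously established lemmas: \autoref{lem:vanish on matrix generator} to pass from generator correctness to an ideal-containment statement, and \autoref{lem:const depth det ideal lb} to convert that containment into a circuit lower bound.
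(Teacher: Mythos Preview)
Your proposal is correct and follows essentially the same route as the paper: assume a nonzero $f$ in the border of size-$s$ product-depth-$\Delta$ circuits is killed by the generator, invoke \autoref{lem:vanish on matrix generator} to place $f\in\detideal{\sqrt{n}}{\sqrt{n}}{r}$, check the degree bound $\deg(f)\le s^\Delta$ to meet the characteristic hypothesis, and then apply \autoref{lem:const depth det ideal lb} for a contradiction. The only cosmetic difference is in the final parameter check: the paper expands $(1-\exp(-c_{\ref{lem:small depth matrix generator}}\Delta))(1+\exp(-c_{\ref{lem:const depth det ideal lb}}\Delta))>1$ directly and chooses $c_{\ref{lem:small depth matrix generator}}$ as a large multiple of $c_{\ref{lem:const depth det ideal lb}}$, whereas you pass through the equivalent inequality $1/(1+x)<1-x/2$ and take $c_{\ref{lem:small depth matrix generator}}\ge c_{\ref{lem:const depth det ideal lb}}+\ln 2$; both reach the same conclusion.
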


\begin{proof}
	Let $c_{\ref{lem:const depth det ideal lb}}$ be the constant from \autoref{lem:const depth det ideal lb} and choose $k > 0$ large enough so that $\exp(-c_{\ref{lem:const depth det ideal lb}}) < (1/2)^{\frac{1}{k-1}}$.
	Suppose for the sake of contradiction that the statement of the lemma fails for $c_{\ref{lem:small depth matrix generator}} = k c_{\ref{lem:const depth det ideal lb}}$.
	Then there is a circuit $\Phi$ of size $s$ and product-depth $\Delta$ which computes $f(X) + O(\eps)$ for some nonzero $f(X)$ such that $f(\mathcal{G}_{\sqrt{n},\sqrt{n},r-1}(Y,Z)) = 0$.
	By \autoref{lem:vanish on matrix generator}, we have $f(X) \in \detideal{\sqrt{n}}{\sqrt{n}}{r}$.
	Since $f$ is computed by a circuit of size $s$ and product-depth $\Delta$, we have $\deg(f) \le s^\Delta$, so either $\ch(\F) = 0$ or $\ch(\F) > \deg(f)$.
	The lower bound of \autoref{lem:const depth det ideal lb} implies
	\[
		s \ge 2^{(\log r)^{1 + \exp(-c_{\ref{lem:const depth det ideal lb}} \Delta)}} 
		= 2^{(\log s)^{(1 - \exp(-c_{\ref{lem:small depth matrix generator}} \Delta))(1 + \exp(-c_{\ref{lem:const depth det ideal lb}} \Delta)}}.
	\]
	We claim that
	\[
		(1 - \exp(-c_{\ref{lem:small depth matrix generator}} \Delta))(1 + \exp(-c_{\ref{lem:const depth det ideal lb}} \Delta)) > 1.
	\]
	This would imply $s > s$, a contradiction, which would in turn prove that $\mathcal{G}_{\sqrt{n},\sqrt{n},r-1}(Y,Z)$ is a hitting set generator for the closure.

	To prove this inequality, we first note that it suffices to prove the equivalent
	\[
		\exp(-c_{\ref{lem:const depth det ideal lb}} \Delta) > \exp(-c_{\ref{lem:small depth matrix generator}} \Delta) + \exp(-(c_{\ref{lem:const depth det ideal lb}} + c_{\ref{lem:small depth matrix generator}}) \Delta).
	\]
	By our choice of $c_{\ref{lem:small depth matrix generator}}$ and the fact that $\Delta \ge 1$, we have
	\begin{align*}
		\exp(-c_{\ref{lem:small depth matrix generator}} \Delta) + \exp(-(c_{\ref{lem:const depth det ideal lb}} + c_{\ref{lem:small depth matrix generator}}) \Delta) &< 2 \exp(-c_{\ref{lem:small depth matrix generator}} \Delta) \\
		&= 2\exp(-k c_{\ref{lem:const depth det ideal lb}} \Delta) \\
		&\le 2 \exp(- c_{\ref{lem:const depth det ideal lb}} \Delta) \exp(-(k-1)c_{\ref{lem:const depth det ideal lb}}) \\
		&< \exp(-c_{\ref{lem:const depth det ideal lb}} \Delta),
	\end{align*}
	where the last step follows from our choice of $k$ so that $\exp(-c_{\ref{lem:const depth det ideal lb}}) \le 2^{\frac{1}{k-1}}$.
	This establishes the claimed inequality and completes the proof of the lemma.
\end{proof}

\autoref{lem:small depth matrix generator} constructs a hitting set generator for polynomial-size low-depth circuits with seed length $n^{1/2+o(1)}$ and degree 2.
By \autoref{lem:seed length lb}, this seed length is near-optimal for a degree-two generator.
To obtain hitting set generators with better seed length, we recursively apply the generator of \autoref{lem:small depth matrix generator}.

\begin{theorem} \label{thm:small depth hsg}
	Let $\F$ be a field of characteristic zero.
	For every fixed $k \in \naturals$, there is an explicit hitting set generator $\mathcal{G}_k$ for the closure of $n$-variate, size-$s$, product-depth $\Delta \le o(\log \log \log n)$ circuits such that $\mathcal{G}_k$ has the following properties.
	\begin{enumerate}
		\item
			$\mathcal{G}_k$ has seed length $n^{1/2^k} s^{o(1)}$.
		\item
			$\deg(\mathcal{G}_k) = 2^k$.
		\item
			$\mathcal{G}_k$ can be computed by a circuit of product-depth $k$ and size $ns^{o(1)}$.
			Moreover, each product gate in this circuit has fan-in 2.
		\item
			If $s \le n^{O(1)}$, then $\mathcal{G}_k$ can also be computed by a circuit of size $n \log^{O(1)} n$.
	\end{enumerate}
\end{theorem}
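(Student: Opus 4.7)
The plan is to construct $\mathcal{G}_k$ recursively by composing $k$ copies of the rank-based generator $\mathcal{G}_{\sqrt{n},\sqrt{n},r-1}$ from Construction~\ref{cons:matrix generator}. For the base case take $\mathcal{G}_1(Y,Z) = YZ$ with $Y \in \F^{\sqrt{n}\times r}$ and $Z \in \F^{r\times\sqrt{n}}$, choosing a single parameter $r = s^{o(1)}$ large enough to meet the threshold of Lemma~\ref{lem:small depth matrix generator} even for the inflated parameters $(O(n s^{o(1)}),\ \Delta+k)$ that will appear during the recursion; such a choice exists since $k$ is constant, $\Delta = o(\log\log\log n)$, and the threshold grows only poly-logarithmically in its inputs. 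For $k \ge 2$, letting $\ell_{k-1}$ be the seed length of $\mathcal{G}_{k-1}$, define $\mathcal{G}_k \coloneqq \mathcal{G}_{k-1} \circ \Pi_k$, where $\Pi_k\colon (Y_k,Z_k) \mapsto Y_k Z_k$ takes $Y_k \in \F^{\sqrt{\ell_{k-1}}\times r}$, $Z_k \in \F^{r \times \sqrt{\ell_{k-1}}}$ and flattens their rank-$\le r-1$ product into $\F^{\ell_{k-1}}$.

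Correctness is established by induction on $k$; the base case is exactly Lemma~\ref{lem:small depth matrix generator}. For the inductive step, suppose $f$ lies in the border of size-$s$, product-depth-$\Delta$ circuits and $f \circ \mathcal{G}_k = 0$. Writing $g \coloneqq f \circ \mathcal{G}_{k-1}$, one has $g \circ \Pi_k = 0$. If $g = 0$, the inductive hypothesis applied to $f$ forces $f = 0$, the desired conclusion. Otherwise $g$ is nonzero and, by the circuit bounds on $\mathcal{G}_{k-1}$ established below, belongs to the border of circuits of size $O(n s^{o(1)})$ and product-depth $\Delta + (k-1)$. Because $r$ was chosen up front to accommodate precisely these inflated parameters, Lemma~\ref{lem:small depth matrix generator} applied to $g$ yields $g\circ\Pi_k \ne 0$, a contradiction.

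The remaining bounds follow by routine unrolling. The recursion $\ell_k = 2r\sqrt{\ell_{k-1}}$ with $\ell_0 = n$ gives $\ell_k = n^{1/2^k}\,r^{O(1)} = n^{1/2^k} s^{o(1)}$, and $\deg(\mathcal{G}_k) = 2\deg(\mathcal{G}_{k-1}) = 2^k$. Realised as $k$ stacked bilinear layers, $\mathcal{G}_k$ becomes a product-depth-$k$ circuit whose $i$-th layer contains $O(\ell_{i-1}\, r)$ fan-in-$2$ multiplications followed by additions; the layer sizes form a geometric sequence, so the total size is $O(rn) = n s^{o(1)}$, and the product-depth is exactly $k$ since successive sum layers at the interface between two bilinear layers collapse.

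The main obstacle is claim (4). For fixed $\Delta$ the threshold $r = 2^{(\log s)^{1-\Omega(1)}}$ is super-polylogarithmic, so the naive outer layer alone has size $\Theta(nr) \gg n\log^{O(1)} n$. To achieve size $n\log^{O(1)} n$ when $s \le n^{O(1)}$, the plan is to replace the outermost bilinear layer by an iterated matrix product of $O(\log r)$ factors of polylogarithmic dimension whose image still contains every rank-$\le r-1$ matrix; a careful tree-structured evaluation, combined with the observation that the inner layers only need to produce $\ell_i = n^{1/2^i} s^{o(1)}$ values, should bring the total work down to $n\cdot\log^{O(1)}(n)$. This will necessarily inflate the product-depth beyond $k$, but claim (4) places no depth restriction.
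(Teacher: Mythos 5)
Your recursive construction of $\mathcal{G}_k$ by composing $k$ rank-generators, the induction on $k$, and the seed-length/degree/product-depth analyses all match the paper's proof (the paper recomputes $r_k$ at each level while you fix a single $r$ up front, a cosmetic difference), and your case split on whether $g = f\circ\mathcal{G}_{k-1}$ vanishes is just an explicit rephrasing of what the paper does implicitly. Two remarks there: the size of the composed circuit is $s + n s^{o(1)}$, which the paper bounds as $s^{1+o(1)}$; your $O(ns^{o(1)})$ is wrong when $s$ is much larger than $n$, though this does not affect the final conclusion since $r$ chosen for $s^{1+o(1)}$ is still $s^{o(1)}$.

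The real gap is claim (4). Your plan to ``replace the outermost bilinear layer by an iterated matrix product of $O(\log r)$ factors of polylogarithmic dimension whose image still contains every rank-$\le r-1$ matrix'' cannot work: if $M_1 M_2 \cdots M_t$ is a chain with some internal dimension bounded by $\mathrm{polylog}(n)$, then $\rank(M_1\cdots M_t) \le \mathrm{polylog}(n)$, but $r-1$ is super-polylogarithmic (indeed $r = 2^{(\log n)^{1-\Omega(1)}}$ when $s \le n^{O(1)}$). So the image would miss almost all the low-rank matrices that correctness requires, and by \autoref{lem:vanish on matrix generator} the generator would no longer hit $\detideal{\sqrt{n}}{\sqrt{n}}{r}$. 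The paper's actual argument for (4) is to leave the product $YZ$ alone and exploit its very rectangular shape ($\sqrt{n} \times n^{o(1)}$ times $n^{o(1)} \times \sqrt{n}$) via Coppersmith's fast rectangular matrix multiplication \cite{Coppersmith82}, which computes such a product in $n\log^{O(1)} n$ arithmetic operations; the inner layers for $k\ge 2$ have only $n^{1/2+o(1)}$ outputs and so are cheap by the naive bound. You correctly identified the $\mathcal{G}_1$ layer as the bottleneck, but the fix is an algorithmic speedup for the same product, not a change to the generator's structure.
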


\begin{proof}
	We proceed via induction on $k$.
	In the case $k = 1$, \autoref{lem:small depth matrix generator} establishes the claimed bounds on the seed length and degree of the generator.
	We can compute $\mathcal{G}_1(Y,Z)$ with a circuit of product-depth 1 by na\"{i}vely computing the matrix product $YZ$.
	To improve the circuit complexity of $\mathcal{G}_1$ from $n^{1 + o(1)}$ to $n \log^{O(1)} n$ when $s$ is small enough, we use fast rectangular matrix multiplication.
	Observe that when $s \le n^{O(1)}$, the output of the generator $\mathcal{G}_1(Y,Z)$ is the product of an $\sqrt{n} \times n^{o(1)}$ matrix and an $n^{o(1)} \times \sqrt{n}$ matrix.
	Such a product can be computed in time $n \log^{O(1)} n$ arithmetic operations using Coppersmith's algorithm \cite{Coppersmith82} for rectangular matrix multiplication.
	(For an explanation of why this is the case, see \textcite[Appendix C]{Williams14}.)

	When $k \ge 2$, let $\mathcal{G}_{k-1}(\vec{w})$ be the generator given by induction and let $\Phi$ be a nonzero circuit of size $s$ and product-depth $\Delta$ over $\F(\eps)$.
	By induction, we have that $\mathcal{G}_{k-1}(\vec{w})$ hits $\Phi$ even when $\eps = 0$, so $\Phi(\mathcal{G}_{k-1}(\vec{w})) \neq 0$ and $\Phi(\mathcal{G}_{k-1}(\vec{w})) \notin \eps \F[\eps][\vec{w}]$.
	Further, the composition $\Phi(\mathcal{G}_{k-1}(\vec{w}))$ can be computed by a circuit of product-depth $\Delta + k - 1$ and size $s + n s^{o(1)} \le s^{1 + o(1)}$.

	Let $n_{k-1}$ and $d_{k-1}$ be the seed length and degree, respectively, of $\mathcal{G}_{k-1}(\vec{w})$.
	Arrange the variables $\vec{w}$ into a $\sqrt{n_{k-1}} \times \sqrt{n_{k-1}}$ matrix and let
	\[
		\mathcal{G}_k(Y,Z) \coloneqq \mathcal{G}_{k-1}(\mathcal{G}_{\sqrt{n_{k-1}},\sqrt{n_{k-1}},r_k}(Y,Z)),
	\]
	where
	\[
		r_k = 2^{\log (s^{1 + o(1)})^{1 - \exp(-c_{\ref{lem:small depth matrix generator}}(\Delta+k-1))}}.
	\]
	\autoref{lem:small depth matrix generator} implies that $\mathcal{G}_{\sqrt{n_{k-1}},\sqrt{n_{k-1}},r_k}(Y,Z)$ hits $\Phi(\mathcal{G}_{k-1}(\vec{w}))$ even when $\eps = 0$.
	Equivalently, the composition $\mathcal{G}_k(Y,Z)$ hits $\Phi$ even when $\eps = 0$.
	We now analyze the parameters of $\mathcal{G}_k(Y,Z)$.
	\begin{description}
		\item [Seed length] 
			By definition, the seed length $n_k$ of $\mathcal{G}_k(Y,Z)$ is bounded by
			\[
				n_k \le 2 \sqrt{n_{k-1}} r_k.
			\]
			It follows from induction that $n_{k-1} \le n^{1/2^{k-1}} s^{o(1)}$, so we bound the above as
			\[
				n_k \le 2 n^{1/2^k} s^{o(1)} r_k.
			\]
			We now bound $r_k$.
			As $k$ is fixed and $\Delta \le o(\log \log \log n) \le o(\log \log \log s)$, we have $k + \Delta \le o(\log \log \log s)$.
			This implies
			\[
				\exp(-c_{\ref{lem:small depth matrix generator}}(\Delta + k - 1)) \ge \frac{1}{\exp(o(\log \log \log s))} \ge \omega\del{\frac{1}{\log \log s}}.
			\]
			From this, we obtain
			\[
				(\log(s^{1 + o(1)}))^{1 - \exp(-c_{\ref{lem:small depth matrix generator}}(\Delta+k-1))} \le (\log(s^{1+o(1)}))^{1 - \omega(\frac{1}{\log\log s})} \le \frac{\log s^{1+o(1)}}{\omega(1)} \le o(\log s).
			\]
			By definition, we have
			\[
				r_k = 2^{\log (s^{1 + o(1)})^{1 - \exp(-c_{\ref{lem:small depth matrix generator}}(\Delta+k-1))}} \le 2^{o(\log s)} \le s^{o(1)}.
			\]
			Thus $n_k \le n^{1/2^k} s^{o(1)}$.
		\item [Degree]
			Clearly, we have $\deg(\mathcal{G}_k) = 2 \deg(\mathcal{G}_{k-1})$.
			By induction, $\deg(\mathcal{G}_{k-1}) = 2^{k-1}$, so $\deg(\mathcal{G}_k) = 2^k$.
		\item [Circuit size]
			We can compute $\mathcal{G}_{\sqrt{n_{k-1}},\sqrt{n_{k-1}},r_k}(Y,Z)$ with a circuit of product-depth 1 and size $O(n_{k-1} r_k)$.
			Using induction to bound $n_{k-1}$ and the analysis of the seed length to bound $r_k$, we have $O(n_{k-1} r_k) \le n s^{o(1)}$.
			By induction, we can compute $\mathcal{G}_{k-1}(\vec{w})$ with a circuit of product-depth $k-1$ and size $n s^{o(1)}$.
			Composing these circuits yields a circuit computing $\mathcal{G}_k(Y,Z)$ of product-depth $k$ and size $n s^{o(1)}$.

			If we have the additional assumption that $s \le n^{O(1)}$, then by induction we can compute $\mathcal{G}_{k-1}(\vec{w})$ with a circuit of size $n \log^{O(1)} n$.
			From the analysis of the seed length, we have $r_k \le s^{o(1)} \le n^{o(1)}$.
			Because $k \ge 2$, we have $n_{k-1} \le n^{1/2} s^{o(1)} \le n^{1/2 + o(1)}$.
			By definition, the generator $\mathcal{G}_{\sqrt{n_{k-1}},\sqrt{n_{k-1}},r_k}(Y,Z)$ can be computed by a circuit of size $2 n_{k-1} r_k \le n^{1/2 + o(1)}$.
			Composing this with the circuit computing $\mathcal{G}_{k-1}(\vec{w})$ yields a circuit of size $n \log^{O(1)} n$ that computes the generator $\mathcal{G}_k(Y,Z)$.
			\qedhere
	\end{description}
\end{proof}

\begin{remark}
	While \autoref{lem:small depth matrix generator} holds over fields of sufficiently large positive characteristic, this is not true of \autoref{thm:small depth hsg}.
	This occurs because in our construction of the generator $\mathcal{G}_k$, we apply \autoref{lem:small depth matrix generator} to a polynomial of degree $s^\Delta 2^k$.
	Doing so requires $\ch(\F) > s^\Delta 2^k$ for all $k$, which is not possible for fields of non-zero characteristic.
	Of course, for any fixed $k$, the generator $\mathcal{G}_k$ can be constructed over fields of sufficiently large characteristic.
\end{remark}

\section{Hardness Versus Randomness II: Formulas} \label{sec:formulas}

One can mimic the results of \autoref{sec:low depth} in the setting of algebraic formulas.
While we still lack strong lower bounds for formulas, it seems reasonable to conjecture that neither iterated matrix multiplication nor the determinant can be computed by polynomial-size algebraic formulas.
If we strengthen this assumption to a lower bound against border formula complexity, then we can obtain hitting set generators for the closure of small formulas just as in \autoref{lem:small depth matrix generator} and \autoref{thm:small depth hsg}.
In this section, we describe this construction.

We start by constructing a generator whose correctness is conditional on the hardness of bideterminants for border formulas.
By \autoref{thm:proj to small abp}, such lower bounds are implied by lower bounds on the border formula size of any family of polynomials computable by small ABPs, including iterated matrix multiplication and the determinant.
Phrasing our results in terms of the border formula complexity of bideterminants allows us to derive hardness-to-randomness results for homogeneous formulas as well as general formulas.

First, we show that lower bounds for bideterminants imply the generator $\mathcal{G}_{n,m,r}$ of \autoref{cons:matrix generator}, with appropriate parameters, hits the closure of small formulas.
Recall that for a partition $\sigma$, the $i$\ts{th} row of the tableau $K_\sigma$ consists of $(1, 2, \ldots, \sigma_i)$.

\begin{lemma} \label{lem:formula matrix generator}
	Let $\F$ be an arbitrary field.
	Let $t : \naturals \to \naturals$ be a function such that for every partition $\sigma$, the border formula complexity of $(K_\sigma | K_\sigma)(X)$ is bounded from below by $t(\sigma_1)$.
	Let $\mathcal{G}_{n,m,r}(Y,Z)$ be the generator defined in \autoref{cons:matrix generator}.
	Then $\mathcal{G}_{\sqrt{n},\sqrt{n},t^{-1}(2sn)-1}(Y,Z)$ is a hitting set generator for the closure of $n$-variate formulas of size $s$.

	If $t : \naturals \to \naturals$ instead lower bounds the size of homogeneous formulas computing $(K_\sigma | K_\sigma)(X) + O(\eps)$, then $\mathcal{G}_{\sqrt{n},\sqrt{n},t^{-1}(2sn)-1}$ hits the closure of $n$-variate size-$s$ homogeneous formulas.
\end{lemma}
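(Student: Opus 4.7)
The plan is a contradiction argument. Suppose $f(X) \in \F[X]$ is a nonzero $n$-variate polynomial (with $X$ regarded as a $\sqrt{n} \times \sqrt{n}$ matrix of variables) of border formula complexity at most $s$ that vanishes on the output of $\mathcal{G}_{\sqrt{n},\sqrt{n},t^{-1}(2sn)-1}(Y,Z)$. By \autoref{lem:vanish on matrix generator}, this forces $f \in \detideal{\sqrt{n}}{\sqrt{n}}{t^{-1}(2sn)}$. The core technical tool is \autoref{prop:reduction to single bideterminant}, which yields linear forms $\ell_{i,j}(X,\eps) \in \F(\eps)[X]$, a nonzero $\alpha \in \F$, an integer $q$, and a partition $\sigma$ with $\sigma_1 \geq t^{-1}(2sn)$ satisfying
\[
	f(\ell_{1,1}(X,\eps),\ldots,\ell_{\sqrt{n},\sqrt{n}}(X,\eps)) = \eps^q \alpha (K_\sigma|K_\sigma)(X) + O(\eps^{q+1}).
\]

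Given a size-$s$ border formula $\Phi$ computing $f(X) + O(\eps)$, I would substitute each variable leaf of $\Phi$ by a sub-formula of size at most $2n-1$ computing the corresponding $\ell_{i,j}$, and append a single multiplication at the root by the $\F(\eps)$-scalar $\eps^{-q}\alpha^{-1}$. The resulting formula has size at most $s(2n-1)+1 \leq 2sn$, and in view of the displayed identity above it border computes $(K_\sigma|K_\sigma)(X)$. But by hypothesis on $t$, any formula border computing $(K_\sigma|K_\sigma)(X)$ must have size at least $t(\sigma_1) \geq t(t^{-1}(2sn)) > 2sn$, which is the desired contradiction. For the homogeneous variant the derivation is identical: substituting homogeneous degree-one forms into a homogeneous formula yields a homogeneous formula, and multiplying the root by an $\F(\eps)$-scalar preserves homogeneity, so the conclusion follows with the homogeneous border formula lower bound on $(K_\sigma|K_\sigma)(X)$ in place of the general one.

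The only delicate point, and the main obstacle to be addressed carefully, is verifying that the substitution genuinely preserves border computation. The forms $\ell_{i,j}(X,\eps)$ generally contain negative powers of $\eps$ in their coefficients, so directly composing them with the $O(\eps)$ error term from $\Phi$ could in principle introduce spurious low-order $\eps$ contributions that swamp the desired $\eps^q\alpha(K_\sigma|K_\sigma)(X)$ term. I would handle this in the standard way by first performing the substitution with a fresh indeterminate $\delta$ in place of $\eps$, so that the approximation parameters of $\Phi$ and of the linear change of variables are decoupled and the resulting formula lives over $\F((\delta))((\eps))$, and then invoking \autoref{lem:semicontinuity} to collapse the two parameters via $\delta \mapsto \eps^N$ for sufficiently large $N$. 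Once this bookkeeping is discharged, the rest of the argument is a direct composition of \autoref{lem:vanish on matrix generator} with \autoref{prop:reduction to single bideterminant}.
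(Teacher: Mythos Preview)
Your proposal is correct and follows essentially the same approach as the paper: contradiction via \autoref{lem:vanish on matrix generator} and \autoref{prop:reduction to single bideterminant}, then substituting the linear forms into a size-$s$ border formula for $f$ to obtain a too-small border formula for $(K_\sigma|K_\sigma)(X)$. You are in fact more explicit than the paper about the two-parameter semicontinuity step (the paper simply asserts the resulting formula computes $(K_\sigma|K_\sigma)(X)+O(\eps)$); the only slip is writing $t(t^{-1}(2sn))>2sn$ where only $\geq$ is guaranteed, but since your formula-size bound $s(2n-1)+1$ is strictly below $2sn$ for $s\geq 2$ the contradiction survives.
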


\begin{proof}
	We first consider non-homogeneous formulas.
	Let $r \coloneqq t^{-1}(2sn)$.
	Suppose for the sake of contradiction that $\mathcal{G}_{\sqrt{n},\sqrt{n},r-1}(Y,Z)$ is a not a hitting set generator for the closure of size-$s$ formulas.
	Then there is some nonzero polynomial $f(X)$ such that $f(\mathcal{G}_{\sqrt{n},\sqrt{n},r-1}(Y,Z)) = 0$ and $f(X) + O(\eps)$ can be computed by a formula of size $s$.
	Since $f(\mathcal{G}_{\sqrt{n},\sqrt{n},r-1}) = 0$, \autoref{lem:vanish on matrix generator} implies that $f \in \detideal{\sqrt{n}}{\sqrt{n}}{r}$.
	By \autoref{prop:reduction to single bideterminant}, there are linear forms $\ell_{1,1}(X,\eps),\ldots,\ell_{\sqrt{n},\sqrt{n}}(X,\eps)$, some nonzero $\alpha \in \F$, an integer $q$, and a partition $\sigma$ with $\sigma_1 \ge r$ such that
	\[
		\frac{1}{\alpha \eps^q} f(\ell_{1,1}(X,\eps),\ldots,\ell_{\sqrt{n},\sqrt{n}}(X,\eps)) = (K_\sigma | K_\sigma)(X) + O(\eps).
	\]
	This yields a formula of size $sn$ that computes $(K_\sigma | K_\sigma)(X) + O(\eps)$.
	This contradicts the assumption that any such formula must be of size at least $t(\sigma_1) \ge t(r) \ge 2sn$.
	Thus $\mathcal{G}_{\sqrt{n},\sqrt{n},r-1}(Y,Z)$ is a hitting set generator for the closure of $n$-variate size-$s$ formulas.

	The homogeneous case is analogous.
	The only difference is that if $f(X)$ is computed by a size-$s$ homogeneous formula, we need to establish that $\frac{1}{\alpha} f(\ell_{1,1}(X,\eps), \ldots, \ell_{\sqrt{n},\sqrt{n}}(X,\eps))$ is computable by a homogeneous formula of size $sn$.
	This follows immediately from the fact that the $\ell_{i,j}(X,\eps) \in \F(\eps)[X]$ are homogeneous linear polynomials in $X$.
\end{proof}

Assuming super-polynomial lower bounds on the border formula complexity of bideterminants, we can recursively apply the generator of \autoref{lem:formula matrix generator} to obtain generators with smaller seed length.
This is analogous to the derivation of \autoref{thm:small depth hsg} from \autoref{lem:small depth matrix generator}.
The only difference is in the analysis, as we now have to compute the generator using formulas, not low-depth circuits.

\begin{proposition} \label{prop:formula recursion superpoly lb}
	Let $\F$ be an arbitrary field.
	Let $t : \naturals \to \naturals$ be a function such that for every partition $\sigma$, the border formula complexity of $(K_\sigma | K_\sigma)(X)$ is bounded from below by $t(\sigma_1)$.
	Assume $t(r) \ge r^{\omega(1)}$.
	Then for every fixed $k \in \naturals$, there is an explicit hitting set generator $\mathcal{G}_k$ for the closure of $n$-variate size-$s$ (homogeneous) formulas with the following properties.
	\begin{enumerate}
		\item
			$\mathcal{G}_k$ has seed length $n^{1/2^k}s^{o(1)}$.
		\item
			$\deg(\mathcal{G}_k) = 2^k$.
		\item
			$\mathcal{G}_k$ can be computed by a homogeneous formula of size $n s^{o(1)}$.
		\item
			If $s \le n^{O(1)}$, then $\mathcal{G}_k$ can also be computed by a circuit of size $n \log^{O(1)} n$.
	\end{enumerate}
\end{proposition}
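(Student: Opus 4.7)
The plan is to mirror the proof of Theorem~\ref{thm:small depth hsg} almost verbatim, with Lemma~\ref{lem:formula matrix generator} playing the role of Lemma~\ref{lem:small depth matrix generator} at each level. I proceed by induction on $k$. For the base case $k = 1$, Lemma~\ref{lem:formula matrix generator} directly yields $\mathcal{G}_1 \coloneqq \mathcal{G}_{\sqrt{n},\sqrt{n},\,t^{-1}(2sn)-1}$; by Lemma~\ref{lem:matrix generator} each output coordinate is a homogeneous degree-$2$ formula of size $2 \cdot t^{-1}(2sn) \le s^{o(1)}$, where the last inequality uses the hypothesis $t(r) \ge r^{\omega(1)}$, equivalently $t^{-1}(N) \le N^{o(1)}$, together with the harmless assumption $n \le s^{O(1)}$ (formulas of size $s < n$ cannot even read all variables). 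For the refined circuit-size bound in the regime $s \le n^{O(1)}$, I would compute $\mathcal{G}_1$ via Coppersmith's rectangular matrix multiplication algorithm exactly as in Theorem~\ref{thm:small depth hsg}.

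For the inductive step, given $\mathcal{G}_{k-1}$ with seed length $n_{k-1}$ and degree $2^{k-1}$, I arrange its inputs into a $\sqrt{n_{k-1}} \times \sqrt{n_{k-1}}$ matrix and set
\[
\mathcal{G}_k(Y, Z) \coloneqq \mathcal{G}_{k-1}\bigl(\mathcal{G}_{\sqrt{n_{k-1}},\,\sqrt{n_{k-1}},\,r_k - 1}(Y, Z)\bigr).
\]
If $\Phi$ is a nonzero size-$s$ (homogeneous) formula, then $\Phi(\mathcal{G}_{k-1}(\vec w))$ is nonzero by induction and is computable by a (homogeneous) formula of size $s_k \le s \cdot n_{k-1} s^{o(1)}$, obtained by substituting each variable of $\Phi$ with the corresponding per-coordinate formula from $\mathcal{G}_{k-1}$. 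Choosing $r_k \coloneqq t^{-1}(2 s_k n_{k-1})$ and invoking Lemma~\ref{lem:formula matrix generator} shows that $\mathcal{G}_{\sqrt{n_{k-1}},\sqrt{n_{k-1}},r_k - 1}$ hits $\Phi \circ \mathcal{G}_{k-1}$, so $\mathcal{G}_k$ hits $\Phi$.

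The parameter bounds then follow exactly as in Theorem~\ref{thm:small depth hsg}. The hypothesis $t^{-1}(N) \le N^{o(1)}$ propagates through the recursion to give $r_k \le s^{o(1)}$ at every level. Seed length satisfies $n_k \le 2 \sqrt{n_{k-1}} \cdot r_k \le n^{1/2^k} s^{o(1)}$; the degree doubles at each level, giving $2^k$; and each output coordinate of $\mathcal{G}_k$ is a homogeneous formula of size $n s^{o(1)}$, obtained by substituting the inner degree-$2$ homogeneous formulas into the outer homogeneous formula for $\mathcal{G}_{k-1}$. The refined $s \le n^{O(1)}$ circuit-size claim reuses Coppersmith's rectangular matrix multiplication at the outermost level, exactly as in Theorem~\ref{thm:small depth hsg}.

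The main subtlety I expect is the formula-size accounting: formulas, unlike circuits, cannot share subexpressions, so substituting the $n_{k-1}$ inputs of $\mathcal{G}_{k-1}$ with formulas of size $2 r_k$ multiplies the per-coordinate size by at most this factor. Since $k$ is a fixed constant and each per-level blow-up is only $s^{o(1)}$, the cumulative blow-up remains $s^{o(1)}$, fitting comfortably into the target $n s^{o(1)}$ bound. Homogeneity is preserved automatically because the inner generator outputs homogeneous degree-$2$ polynomials in fresh variables that match the homogeneous variable slots of $\mathcal{G}_{k-1}$. Beyond this bookkeeping, no ideas beyond those already present in Theorem~\ref{thm:small depth hsg} and Lemma~\ref{lem:formula matrix generator} should be required.
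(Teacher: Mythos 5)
Your proposal is correct and matches the paper's proof essentially step for step: both proceed by induction on $k$ with \autoref{lem:formula matrix generator} in the role that \autoref{lem:small depth matrix generator} played for low-depth circuits, compose $\mathcal{G}_{k-1}$ with a fresh matrix-product generator at seed length $\sqrt{n_{k-1}}$, and do the same bookkeeping on seed length, degree, formula size, and the Coppersmith refinement. The only slips are cosmetic: your stated size bound $s_k \le s \cdot n_{k-1} s^{o(1)}$ for $\Phi \circ \mathcal{G}_{k-1}$ is looser than needed (each coordinate of $\mathcal{G}_{k-1}$ is a formula of size $s^{o(1)}$, so the composition has size $s^{1+o(1)}$), and "each output coordinate of $\mathcal{G}_k$ is a homogeneous formula of size $n s^{o(1)}$" should read "of size $s^{o(1)}$ per coordinate, hence $n s^{o(1)}$ total" — neither affects the conclusion since $k$ is fixed and $t^{-1}(N) \le N^{o(1)}$ absorbs the slack.
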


\begin{proof}
	We use induction on $k$. 
	The claimed bounds on the seed length and degree in the case $k=1$ follow immediately from \autoref{lem:formula matrix generator}.
	To compute the generator $\mathcal{G}_1(Y,Z)$ using a homogeneous formula, we directly write the matrix product $YZ$ as a homogeneous formula.
	If $s \le n^{O(1)}$, then as in the case of low-depth circuits, the generator $\mathcal{G}_1(Y,Z)$ outputs a $\sqrt{n} \times n^{o(1)} \times \sqrt{n}$ matrix product.
	This product can be computed by a circuit of size $n \log^{O(1)} n$ using Coppersmith's algorithm \cite{Coppersmith82} for fast rectangular matrix multiplication.

	When $k \ge 2$, let $\mathcal{G}_{k-1}(\vec{w})$ be the generator given by induction and let $\Phi$ be a nonzero (homogeneous) formula of size $s$.
	By induction, $\mathcal{G}_{k-1}$ hits $\Phi$ even when $\eps = 0$, so $\Phi(\mathcal{G}_{k-1}(\vec{w})) \neq 0$ and $\Phi(\mathcal{G}_{k-1}(\vec{w})) \notin \eps \F[\eps][\vec{w}]$.
	Furthermore, the composition $\Phi(\mathcal{G}_{k-1}(\vec{w}))$ can be computed by a (homogeneous) formula of size $n s^{1 + o(1)}$.

	Let $n_{k-1}$ and $d_{k-1}$ be the seed length and degree, respectively, of $\mathcal{G}_{k-1}$.
	Arrange the variables of $\vec{w}$ into a $\sqrt{n_{k-1}} \times \sqrt{n_{k-1}}$ matrix and let
	\[
		\mathcal{G}_k(Y,Z) \coloneqq \mathcal{G}_{k-1}(\mathcal{G}_{\sqrt{n_{k-1}}, \sqrt{n_{k-1}}, r_k}(Y,Z)),
	\]
	where $r_k \coloneqq t^{-1}(n s^{1 + o(1)})$ and $\mathcal{G}_{n,m,r}(Y,Z)$ is the generator of \autoref{cons:matrix generator}.
	By \autoref{lem:formula matrix generator}, the generator $\mathcal{G}_{\sqrt{n_{k-1}}, \sqrt{n_{k-1}}, r_k}(Y,Z)$ hits the composition $\Phi(\mathcal{G}_{k-1}(\vec{w}))$ even when $\eps = 0$.
	Equivalently, $\mathcal{G}_k(Y,Z)$ hits $\Phi$, even when $\eps = 0$.
	We now analyze the parameters of $\mathcal{G}_k$.

	\begin{description}
		\item [Seed length]
			By construction, $\mathcal{G}_k$ has seed length $2 \sqrt{n_{k-1}} r_k$.
			It follows from induction that $n_{k-1} \le n^{1/2^{k-1}} s^{o(1)}$.
			By assumption, we have 
			\[
				r_k = t^{-1}(n s^{1+o(1)}) \le (ns)^{o(1)} \le s^{o(1)}.
			\]
			This lets us bound the seed length of $\mathcal{G}_k$ by
			\[
				2 \sqrt{n_{k-1}} r_k \le n^{1/2^k} s^{o(1)}
			\]
			as claimed.

		\item [Degree]
			Clearly $\deg(\mathcal{G}_k) = 2 \deg(\mathcal{G}_{k-1})$.
			By induction, we have $\deg(\mathcal{G}_{k-1}) = 2^{k-1}$, so $\deg(\mathcal{G}_k) = 2^k$.
	
		\item [Formula size]
			Each coordinate of $\mathcal{G}_{\sqrt{n_{k-1}},\sqrt{n_{k-1}},r_k}(Y,Z)$ can be computed by a homogeneous formula of size $2 r_k \le s^{o(1)}$.
			By induction, the generator $\mathcal{G}_{k-1}$ can be computed by a homogeneous formula of size $n s^{o(1)}$.
			Composing these formulas gives a homogeneous formula of size $n s^{o(1)}$ that computes $\mathcal{G}_k$.

			If we additionally have $s \le n^{O(1)}$, then each coordinate of $\mathcal{G}_{\sqrt{n_{k-1}},\sqrt{n_{k-1}},r_k}(Y,Z)$ can be computed by a formula of size $n^{o(1)}$.
			Using the fact that $k \ge 2$, this generator has $n_{k-1} \le n^{1/2 + o(1)}$ outputs, so we can compute the generator using a circuit of size $n^{1/2 + o(1)}$.
			By induction, the generator $\mathcal{G}_{k-1}$ can be computed by a circuit of size $n \log^{O(1)} n$.
			Composing these yields a circuit of size $n \log^{O(1)} n$ that computes the generator $\mathcal{G}_k$.
		\qedhere
	\end{description}
\end{proof}

We now relax the hardness assumption of \autoref{prop:formula recursion superpoly lb} using \autoref{thm:proj to small abp}.
This allows us to construct hitting set generators for the closure of small formulas using lower bounds on the border formula complexity of any family of polynomials that can be computed efficiently by algebraic branching programs, including the determinant and iterated matrix multiplication.

\begin{theorem} \label{thm:formula hardness-randomness}
	Let $\F$ be a field of characteristic zero.
	Let $\set{f_n(\vec{x}) : n \in \naturals}$ be a family of $n^{\Theta(1)}$-variate polynomials such that (1) $f_n(\vec{x})$ is computable by algebraic branching programs of size $n^{\Theta(1)}$, and (2) the border formula complexity of $f_n(\vec{x})$ is bounded from below by $n^{\omega(1)}$.
	Then the conclusion of \autoref{prop:formula recursion superpoly lb} holds for formulas; that is, for every fixed $k \in \naturals$, there is an explicit hitting set generator $\mathcal{G}_k$ for the closure of $n$-variate size-$s$ formulas with the following properties.
	\begin{enumerate}
		\item
			$\mathcal{G}_k$ has seed length $n^{1/2^k}s^{o(1)}$.
		\item
			$\deg(\mathcal{G}_k) = 2^k$.
		\item
			$\mathcal{G}_k$ can be computed by a homogeneous formula of size $n s^{o(1)}$.
		\item
			If $s \le n^{O(1)}$, then $\mathcal{G}_k$ can also be computed by a circuit of size $n \log^{O(1)} n$.
	\end{enumerate}
\end{theorem}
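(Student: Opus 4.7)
The plan is to derive this from \autoref{prop:formula recursion superpoly lb} by supplying the required border-formula lower bound on bideterminants. Concretely, I will show that the hypotheses on $\{f_n\}$ force
\[
t(\sigma_1) \;:=\; \text{border formula complexity of } (K_\sigma \mid K_\sigma)(X) \;\ge\; \sigma_1^{\omega(1)},
\]
after which \autoref{prop:formula recursion superpoly lb} applied with this $t$ immediately yields a generator having exactly the claimed seed length, degree, and formula/circuit sizes.

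To prove the bideterminant lower bound, first observe that $(K_\sigma \mid K_\sigma)(X)$ contains the leading principal minor $\det_{\sigma_1}(X_{[\sigma_1],[\sigma_1]})$ as an explicit factor (since the first row of $K_\sigma$ is $(1,2,\ldots,\sigma_1)$), so it lies in $\detideal{\sigma_1}{\sigma_1}{\sigma_1}$. Fix a constant $c$ so that $f_n$ is computable by a layered ABP of size at most $n^c$, and set $N := \lfloor \sigma_1^{1/c} \rfloor$; then $f_N$ is computable by a layered ABP on at most $\sigma_1$ vertices, and has $N^{\Theta(1)}$ variables. By \autoref{thm:proj to small abp} applied with $f = (K_\sigma \mid K_\sigma)$ and $g = f_N$, there is a depth-three $f$-oracle circuit $\Phi$ with $\sigma_1^2$ bottom addition gates, a single oracle gate, and one top addition gate such that $\Phi(\vec{y}) = f_N(\vec{y}) + O(\eps)$; moreover each bottom gate computes a linear form in $\vec{y}$ with coefficients in $\F(\eps)$, hence has formula size $N^{O(1)}$.

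Now suppose for contradiction that $(K_\sigma \mid K_\sigma)(X) + O(\eps)$ admits a formula $F$ of size $T = \sigma_1^{O(1)}$. Substitute $F$ for the oracle gate in $\Phi$: each leaf of $F$ labeled by a variable $x_{i,j}$ is replaced by a fresh copy of the corresponding bottom linear form $L_{i,j}(\vec{y})$ of size $N^{O(1)}$. Since $F$ has at most $O(T)$ leaves, the resulting formula for $f_N(\vec{y}) + O(\eps)$ has size $O(T \cdot N^{O(1)}) = \sigma_1^{O(1)} = N^{O(1)}$, contradicting the assumed bound of $N^{\omega(1)}$ on the border formula complexity of $f_N$. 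Thus $t(\sigma_1) \ge \sigma_1^{\omega(1)}$, and \autoref{prop:formula recursion superpoly lb} provides the desired generator.

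The main (and essentially only) content is the reduction in the previous paragraph, which is a routine application of \autoref{thm:proj to small abp}. I do not expect a serious obstacle beyond careful accounting of the polynomial blow-up incurred when a formula is substituted for an oracle gate (so that the composition remains a formula of size $\poly(T, N)$ rather than something larger); everything else is inherited verbatim from \autoref{prop:formula recursion superpoly lb}.
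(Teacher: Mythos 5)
Your proof is correct and uses the same underlying machinery as the paper; the difference is that you apply \autoref{thm:proj to small abp} directly with $g = f_N$ for $N = \lfloor \sigma_1^{1/c}\rfloor$, whereas the paper first transfers the assumed hardness of $f_n$ to hardness of $\det_n$ via $\VBP$-completeness, and only then invokes the specialization \autoref{cor:proj to det} to conclude that bideterminants are hard. Your shortcut is valid and slightly cleaner: it avoids the intermediate $\VBP$-completeness step, using the ABP upper bound on $f_n$ once (to satisfy the hypothesis of \autoref{thm:proj to small abp}) rather than twice (once for completeness, once in the corollary). The size accounting is sound: the bottom linear forms of $\Phi$ live in $\F(\eps)[\vec{y}]$ with $N^{\Theta(1)}$ variables, so substituting them at each of the $O(T)$ leaves of the hypothesized formula $F$ yields a formula of size $O(T\cdot N^{O(1)}) = N^{O(1)}$ for $f_N + O(\eps)$, contradicting the $N^{\omega(1)}$ lower bound. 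One nit: $(K_\sigma\mid K_\sigma)$ lies in $\detideal{\sigma_1}{\sigma_1}{\sigma_1}$ only after you fix $X$ to be $\sigma_1 \times \sigma_1$ (equivalently, observe that the polynomial depends only on those $\sigma_1^2$ variables); as written you write the ideal with the matrix dimensions set to $\sigma_1$ without saying this, but it is clearly what you intend.
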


\begin{proof}
	Because the determinant is $\VBP$-complete, the assumed lower bound on the border formula complexity of $f_n$ implies that the border formula complexity of $\det_n(X)$ is bounded from below by $n^{\omega(1)}$.
	Thus, it suffices to extend this to a lower bound on the border formula complexity of bideterminants as in the hypothesis of \autoref{prop:formula recursion superpoly lb}.

	Let $X$ be an $n \times m$ generic matrix and let $\sigma$ be a partition. 
	Let $\Phi$ be a formula of size $s$ which computes $(K_\sigma | K_\sigma)(X) + O(\eps)$.
	By using \autoref{cor:proj to det} and converting the resulting circuit into a formula, we obtain a formula of size $O(s n^2 m^2) \le O(s^3)$ which computes $\det_r(X) + O(\eps)$ for $r = \Theta(\sigma_1^{1/3})$.
	Such a formula must be of size $r^{\omega(1)}$.
	This implies $s \ge r^{\omega(1)}$, which in turn yields $s \ge \sigma_1^{\omega(1)}$.
	Hence the hypothesis of \autoref{prop:formula recursion superpoly lb} holds, so we obtain the claimed family of generators.
\end{proof}

\section{Lower Bounds for the Ideal Proof System} \label{sec:ips}

Our final application of \autoref{thm:proj to small abp} is to proof complexity.
We construct an unsatisfiable system of equations $\mathcal{F}$ such that no IPS refutation of $\mathcal{F}$ can be computed by a low-depth circuit of polynomial size.
We also show that if the border formula complexity of the determinant is super-polynomial, then polynomial-size formulas cannot refute $\mathcal{F}$.

In general, one cannot immediately transfer circuit lower bounds to proof complexity lower bounds.
The difficulty in proving lower bounds on the size of IPS refutations lies in the fact that for a given system of equations $\mathcal{F}$, there are many possible refutations of $\mathcal{F}$ and we must prove a lower bound for each of them.
The set of IPS refutations of $\mathcal{F}$ in fact has useful algebraic structure (see \cite[Section 6]{GP18}), but to the best of our knowledge this has not been used successfully in proving IPS lower bounds.

\textcite{FSTW16} developed machinery to derive IPS lower bounds from stronger notions of circuit lower bounds.
Specifically, they showed that circuit lower bounds can be lifted to IPS lower bounds if one can prove circuit lower bounds on either (a) circuits that compute a polynomial $f(\vec{x})$ as a function over the boolean hypercube or (b) circuits that compute any multiple of $f(\vec{x})$.
Using this approach, they proved $\mathcal{C}$-IPS lower bounds for various restricted circuit classes $\mathcal{C}$.

Recent work by \textcite{ST21a} constructed a family of CNF formulas that require IPS refutations of super-polynomial size if and only if $\VP \neq \VNP$.
To the best of our knowledge, this is the first instance where an algebraic circuit lower bound (without further assumptions) is known to imply a lower bound for IPS.
Their result requires the underlying field to be finite; in contrast, we work with fields of characteristic zero, which are necessarily infinite.

Recall that \autoref{thm:proj to small abp} extends circuit lower bounds for $\det(X)$ to circuit lower bounds for the ideal $\detideal{n}{m}{r}$. 
Since $\detideal{n}{m}{r}$ is closed under multiplication by arbitrary polynomials, it is natural to follow the strategy of \cite{FSTW16} and attempt to lift the lower bound for $\detideal{n}{m}{r}$ to an IPS lower bound.
To do this, we need a system of polynomials $f, g_1, \ldots, g_k$ that satisfies the hypothesis of \autoref{lem:fstw} with the additional property that $f \in \detideal{n}{m}{r}$, where $r$ is not too small compared to $n$ and $m$.
Fortunately, such a system is easy to construct.
Recall that for two matrices $A, B \in \F^{n \times m}$, their Hadamard product $A \odot B$ is given by $(A \odot B)_{i,j} \coloneqq a_{i,j} b_{i,j}$.
Let $X$ and $Y$ be two $n \times n$ matrices of variables and $I_n$ be the $n \times n$ identity matrix.
Consider the system
\begin{align*}
	\det_n(X) &= 0 \\
	XY - I_n &= 0 \\
	X \odot X - X &= 0 \\
	Y \odot Y - Y &= 0.
\end{align*}
This system is unsatisfiable, since $\det_n(X) = 0$ implies that $X$ is not invertible, while $XY - I_n = 0$ implies that $X$ is invertible.
However, removing the equation $\det_n(X) = 0$ results in a satisfiable system as witnessed by $X = Y = I_n$.
Thus, this system satisfies the hypotheses of \autoref{lem:fstw} and is a natural candidate for IPS lower bounds.
Note the equations $X \odot X - X = 0$ and $Y \odot Y - Y = 0$ can be removed without affecting the hardness of this system.
These equations enforce boolean constraints on the variables $x_{i,j}$ and $y_{i,j}$, which is the typical setting of proof complexity.

The use of \autoref{lem:fstw} in the preceding sketch only relies on hardness of the ideal generated by $\det_n(X)$, while \autoref{thm:proj to small abp} allows us to lift lower bounds to the larger ideal $\detideal{n}{n}{r}$.
Using a natural generalization of \autoref{lem:fstw}, we can lift a lower bound on the circuit complexity $\detideal{n}{n}{r}$ to an IPS lower bound.
Let $X$ and $Y$ be $n \times n$ matrices of variables and let $r \in \naturals$ such that $r \le n$.
Consider the system of equations given by
\begin{align*}
	\det_r(X_{S,T}) &= 0 \quad\quad \forall S, T \in \binom{[n]}{r} \\
	XY - I_n &= 0 \\
	X \odot X - X &= 0 \\
	Y \odot Y - Y &= 0,
\end{align*}
where $\binom{[n]}{r}$ is the set of size-$r$ subsets of $[n]$.
As in the previous example (which corresponds to the special case of $r = n$), this system of equations is unsatisfiable: the first collection of equations implies that $\rank(X) < r \le n$, while the equation $XY - I_n = 0$ implies that $\rank(X) = n$.
One can show that refuting this system is as hard as computing a nonzero element of $\detideal{n}{n}{r}$, which, in light of \autoref{cor:proj to det}, is as hard as computing the $\Theta(r^{1/3}) \times \Theta(r^{1/3})$ determinant.

As discussed in the introduction, the preceding lower bound is somewhat unsatisfying.
For constant-depth IPS, we would conclude a lower bound of $r^{(\log r)^{\Omega(1)}}$ on the size of a refutation.
However, this system consists of $\binom{n}{r}^2 + 3 n^2$ equations, which is much larger than our lower bound when $r \le (1 - \Omega(1)) n$.
We will use the rank condenser of \textcite{FS12} (\autoref{lem:rank condenser construction}) to give a more succinct encoding of the contradiction ``$\rank(X) < r$ and $\rank(X) = n$.''

Suppose we want to check if a matrix $M \in \F^{n \times n}$ has rank at least $r$ or less than $r$.
Instead of computing all $r \times r$ minors of $M$, we first apply a rank condenser $\mathcal{E}$ to $M$ on the left and the right, resulting in a set of $|\mathcal{E}|$ matrices of size $r \times r$.
By appropriately setting the parameters of the rank condenser, we are guaranteed that one of these $r \times r$ matrices has full rank if and only if $\rank(M) \ge r$.
Thus, to check if $\rank(M) \ge r$, it suffices to compute $\det_r(E M E^\top)$ for all $E \in \mathcal{E}$.

More formally, let $\mathcal{E}$ be a weak $(r,r(n-r))$-lossless rank condenser that satisfies $|\mathcal{E}| = 2 r (n-r) + 1$.
(Observe that such a condenser is given by \autoref{lem:rank condenser construction}.)
Let $X$ and $Y$ be $n \times n$ matrices of variables and consider the system of equations given by
\begin{align*}
	\det_r(E X E^\top) &= 0 \quad \quad \forall E \in \mathcal{E} \\
	XY - I_n &= 0 \\
	X \odot X - X &= 0 \\
	Y \odot Y - Y &= 0,
\end{align*}
This reduces the number of equations in this system from $\binom{n}{r}^2 + 3n^2$ to $3n^2 + 2r(n-r)+1$, which is polynomial in $n$ and $r$ for any choice of $r$.

We now begin by stating and proving a generalization of \autoref{lem:fstw} that will be useful for our IPS lower bounds.
We note that the proof of \autoref{lem:general fstw} below is essentially the same as the proof of \autoref{lem:fstw}.

\begin{lemma} \label{lem:general fstw}
	Let $f_1(\vec{x}),\ldots,f_m(\vec{x}), g_1(\vec{x}), \ldots, g_k(\vec{x}) \in \F[\vec{x}]$ be an unsatisfiable system of equations where $g_1(\vec{x}),\ldots,g_k(\vec{x})$ is satisfiable.
	Let $C \in \F[\vec{x},y,\vec{z}]$ be an IPS refutation of $f_1,\ldots,f_m,g_1,\ldots,g_k$.
	Then $1 - C(\vec{x},\vec{0},g_1(\vec{x}),\ldots,g_k(\vec{x}))$ is a nonzero element of the ideal $\abr{f_1(\vec{x}),\ldots,f_m(\vec{x})} \subseteq \F[\vec{x}]$.
\end{lemma}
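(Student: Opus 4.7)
The plan is to directly adapt the proof of \autoref{lem:fstw}, the only substantive modification being to track multiple axiom variables $\vec{y} = (y_1,\ldots,y_m)$ in place of the single $y$. (Here $\vec{y}$ should be read as the placeholder vector of length $m$ corresponding to the axioms $f_1,\ldots,f_m$.)

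First I would unpack the definition of an IPS certificate. The condition $C(\vec{x},\vec{0},\vec{0}) = 0$ places $C$ in the ideal $\langle y_1,\ldots,y_m,z_1,\ldots,z_k\rangle \subseteq \F[\vec{x},\vec{y},\vec{z}]$, so we may split
\[
	C(\vec{x},\vec{y},\vec{z}) = D(\vec{x},\vec{y},\vec{z}) + E(\vec{x},\vec{z}),
\]
where $D$ lies in the ideal $\langle y_1,\ldots,y_m\rangle$ and $E(\vec{x},\vec{z}) \coloneqq C(\vec{x},\vec{0},\vec{z})$ lies in $\langle z_1,\ldots,z_k\rangle$.

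Next I would evaluate at $\vec{y} = (f_1(\vec{x}),\ldots,f_m(\vec{x}))$ and $\vec{z} = (g_1(\vec{x}),\ldots,g_k(\vec{x}))$ and use the second IPS condition:
\[
	1 = C(\vec{x},\vec{f}(\vec{x}),\vec{g}(\vec{x})) = D(\vec{x},\vec{f}(\vec{x}),\vec{g}(\vec{x})) + E(\vec{x},\vec{g}(\vec{x})).
\]
Rearranging gives $1 - E(\vec{x},\vec{g}(\vec{x})) = D(\vec{x},\vec{f}(\vec{x}),\vec{g}(\vec{x}))$. The left-hand side is exactly $1 - C(\vec{x},\vec{0},g_1(\vec{x}),\ldots,g_k(\vec{x}))$, while the right-hand side, being the image of a polynomial in $\langle y_1,\ldots,y_m\rangle$ under the substitution $y_i \mapsto f_i(\vec{x})$, is an element of $\langle f_1(\vec{x}),\ldots,f_m(\vec{x})\rangle \subseteq \F[\vec{x}]$. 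This handles membership in the ideal.

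Finally I would verify non-vanishing. By hypothesis, the subsystem $g_1(\vec{x}) = \cdots = g_k(\vec{x}) = 0$ is satisfiable, so we may pick $\vec{a}$ with $g_j(\vec{a}) = 0$ for every $j$. Evaluating at $\vec{a}$,
\[
	1 - C(\vec{a},\vec{0},g_1(\vec{a}),\ldots,g_k(\vec{a})) = 1 - C(\vec{a},\vec{0},\vec{0}) = 1 - 0 = 1 \neq 0,
\]
so the polynomial is nonzero. There is no real obstacle here: the argument is a routine bookkeeping exercise once one allows $\vec{y}$ to be a vector instead of a scalar, and the key conceptual inputs (splitting along the two groups of placeholder variables, and using satisfiability of the $g_j$'s to kill the $E$ term) are inherited verbatim from \textcite[Lemma 7.1]{FSTW16}.
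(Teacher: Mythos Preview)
Your proposal is correct and follows essentially the same route as the paper. The paper expands $C$ monomially in $\vec{y}$ as $C = \sum_{\vec{e}} C_{\vec{e}}(\vec{x},\vec{z})\,\vec{y}^{\vec{e}}$ and observes that $1 - C(\vec{x},\vec{0},\vec{g}) = \sum_{\vec{e}\neq\vec{0}} C_{\vec{e}}(\vec{x},\vec{g})\prod_i f_i^{e_i}$, which is just your decomposition $C = D + E$ written out explicitly; the non-vanishing argument via a satisfying assignment for the $g_j$ is identical.
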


\begin{proof}
	Let $C(\vec{x},y_1,\ldots,y_m, z_1,\ldots,z_k)$ be an IPS refutation of $f_1,\ldots,f_m,g_1,\ldots,g_k$ and let $h(\vec{x}) \coloneqq 1 - C(\vec{x},\vec{0},g_1(\vec{x}),\ldots,g_k(\vec{x}))$.
	To see that $h(\vec{x})$ is a nonzero polynomial, let $\vec{\alpha}\in\F^n$ be a point satisfying the equations $g_1(\vec{x}) = \cdots = g_k(\vec{x}) = 0$.
	Recall that because $C(\vec{x},\vec{y},\vec{z})$ is an IPS refutation, we have $C(\vec{x},\vec{0},\vec{0}) = 0$.
	Evaluating at $\vec{x} = \vec{\alpha}$, we have
	\begin{align*}
		h(\vec{\alpha}) &= 1 - C(\vec{\alpha},\vec{0},g_1(\vec{\alpha}),\ldots,g_k(\vec{\alpha})) \\
		&= 1 - C(\vec{x},\vec{0},\vec{0}) \\
		&= 1.
	\end{align*}
	Thus, the polynomial $h(\vec{x})$ is nonzero at $\vec{\alpha}$, so $h(\vec{x})$ is a nonzero polynomial.

	To show that $h(\vec{x})$ is an element of the ideal $\abr{f_1(\vec{x}),\ldots,f_m(\vec{x})}$, we first expand $C(\vec{x},\vec{y},\vec{z})$ as a polynomial in $\vec{y}$ to obtain
	\[
		C(\vec{x},\vec{y},\vec{z}) = \sum_{\vec{e} \in \naturals^m} C_{\vec{e}}(\vec{x},\vec{z}) \vec{y}^{\vec{e}}.
	\]
	We then have
	\begin{align*}
		h(\vec{x}) &= 1 - C(\vec{x},\vec{0},g_1(\vec{x}),\ldots,g_k(\vec{x})) \\
		&= C(\vec{x},f_1(\vec{x}),\ldots,f_m(\vec{x}),g_1(\vec{x}),\ldots,g_k(\vec{x})) - C(\vec{x},\vec{0},g_1(\vec{x}),\ldots,g_k(\vec{x})) \\
		&= \sum_{\vec{e} \in \naturals^m} C_{\vec{e}}(\vec{x},g_1(\vec{x}),\ldots,g_k(\vec{x})) \prod_{i \in [m]}f_i(\vec{x})^{e_i} - C_{\vec{0}}(\vec{x},g_1(\vec{x}),\ldots,g_k(\vec{x})) \\
		&= \sum_{\substack{\vec{e} \in \naturals^m \\ \vec{e} \neq \vec{0}}} C_{\vec{e}}(\vec{x},g_1(\vec{x}),\ldots,g_k(\vec{x})) \prod_{i \in [m]} f_i(\vec{x})^{e_i}.
	\end{align*}
	For each $\vec{e} \in \naturals^m$ such that $\vec{e} \neq \vec{0}$, it is clear that the term $C_{\vec{e}}(\vec{x},g_1(\vec{x}),\ldots,g_k(\vec{x})) \prod_{i \in [m]} f_i(\vec{x})^{e_i}$ is an element of the ideal $\abr{f_1(\vec{x}),\ldots,f_m(\vec{x})}$.
	Because ideals are closed under addition, we see that $h(\vec{x})$ is also an element of the ideal $\abr{f_1(\vec{x}),\ldots,f_m(\vec{x})}$.
\end{proof}

We now show that a weak lossless rank condenser provides a small collection of polynomial equations that encode the statement ``$\rank(X) < r$.''

\begin{lemma} \label{lem:rank condenser determinant}
	Let $\F$ be a field and let $M \in \F^{n \times n}$.
	Let $r \le n$ and let $\mathcal{E} \subseteq \F^{r \times n}$ be a weak $(r, L)$-lossless rank condenser such that $|\mathcal{E}| \ge 2 L + 1$.
	Then $\rank(M) < r$ if and only if $\det_r(E M E^\top) = 0$ for all $E \in \mathcal{E}$.
\end{lemma}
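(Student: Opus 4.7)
The forward direction is immediate: for any $E \in \F^{r \times n}$ we have $\rank(EME^\top) \leq \rank(M)$, so when $\rank(M) < r$ the $r \times r$ matrix $EME^\top$ has rank less than $r$ and $\det_r(EME^\top) = 0$.

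For the reverse direction I argue the contrapositive: assuming $\rank(M) \geq r$, I will exhibit some $E \in \mathcal{E}$ with $\det_r(EME^\top) \neq 0$. Since $\rank(M) \geq r$, I can choose $R, S \in \binom{[n]}{r}$ with $M_{R,S}$ invertible. Let $A \in \F^{n \times r}$ be the submatrix of $M$ consisting of the columns indexed by $S$, and let $B \in \F^{n \times r}$ be the transpose of the submatrix of $M$ consisting of the rows indexed by $R$. Both $A$ and $B$ have rank $r$, because each contains $M_{R,S}$ (respectively its transpose) as a full-rank $r \times r$ submatrix. The rank condenser property says that $\det(EA) = 0$ holds for at most $L$ elements of $\mathcal{E}$, and likewise $\det(EB) = 0$ for at most $L$, so since $|\mathcal{E}| \geq 2L + 1$ a union bound produces some $E_0 \in \mathcal{E}$ for which both $E_0 A$ and $E_0 B$ are invertible $r \times r$ matrices.

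The final step is to deduce $\det_r(E_0 M E_0^\top) \neq 0$ from invertibility of $E_0 A$ and $E_0 B$. When $\rank(M) = r$ exactly, the classical identity $M = A M_{R,S}^{-1} B^\top$ holds, which yields $E_0 M E_0^\top = (E_0 A)\, M_{R,S}^{-1}\, (E_0 B)^\top$, a product of three invertible $r \times r$ matrices and hence invertible. For general $\rank(M) \geq r$ the same factorization picks up a Schur-complement remainder $N = M - A M_{R,S}^{-1} B^\top$ of rank $\rank(M) - r$ whose rows indexed by $R$ and whose columns indexed by $S$ vanish; the hard part will be controlling the additive perturbation $E_0 N E_0^\top$ so as to conclude that $(E_0 A) M_{R,S}^{-1} (E_0 B)^\top + E_0 N E_0^\top$ remains invertible. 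I expect this step to require a finer analysis, perhaps via a further application of the rank condenser to rank-$r$ matrices extracted from $N$, or by directly analyzing the Cauchy--Binet expansion that writes $\det(EME^\top)$ as a sum of products of $r \times r$ minors of $E$ weighted by the entries of the $r$-th compound matrix of $M$, the latter being nonzero precisely when $\rank(M) \geq r$.
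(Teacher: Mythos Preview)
Your forward direction and your treatment of the case $\rank(M) = r$ are correct, and you are right that $\rank(M) > r$ is where the real difficulty lies. The paper's argument differs from yours there: instead of working with $n \times r$ submatrices of $M$, it takes a full-rank factorization $M = AB^\top$ with $A, B \in \F^{n \times t}$ of rank $t = \rank(M)$, applies the condenser to rank-$r$ column-submatrices of $A$ and of $B$ to find a single $E \in \mathcal{E}$ with $\rank(EA) = \rank(EB) = r$, and then asserts $\rank(EME^\top) = \rank\bigl((EA)(EB)^\top\bigr) = r$.

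That last assertion is unjustified --- an $r \times t$ matrix of full row rank times a $t \times r$ matrix of full column rank need not be invertible --- and in fact the lemma as stated is false. Over any field with $\ch(\F) \neq 2$, take $n = 2$, $r = 1$, $M = \bigl(\begin{smallmatrix} 0 & 1 \\ -1 & 0 \end{smallmatrix}\bigr)$: then $\rank(M) = 2 \geq r$, yet $EME^\top = 0$ for every row vector $E$, so no condenser can produce a nonzero value. More generally, for skew-symmetric $M$ and odd $r$ the matrix $EME^\top$ is skew-symmetric of odd order and hence singular for every $E$. So the obstruction you ran into is genuine: your Schur-complement remainder really can cancel the main term, and neither your decomposition nor the paper's can be completed as written. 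A correct statement uses independent matrices on the two sides: once $E_1$ is chosen (via the condenser applied to $r$ independent columns of $M$) so that $\rank(E_1 M) = r$, a second application of the condenser to $(E_1 M)^\top$ yields $E_2 \in \mathcal{E}$ with $E_1 M E_2^\top$ invertible.
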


\begin{proof}
	One direction is straightforward: if $\rank(M) < r$, then $\rank(E M E^\top) < r$ for all $E \in \mathcal{E}$.
	In this case, it is immediate that $\det_r(E M E^\top) = 0$ for all $E \in \mathcal{E}$.

	In the other direction, suppose that $\rank(M) \ge r$.
	Let $t \coloneqq \rank(M)$.
	We can write $M$ as $M = AB^T$, where $A, B \in \F^{n \times t}$ are matrices of rank $t$.
	Let $S \subseteq [t]$ be a subset of size $r$ such that $\rank(A_{[n],S}) = r$.
	For a matrix $E \in \F^{r \times n}$, we have $\rank(E A_{[n],S}) \le \rank(E A)$, as the columns of $E A_{[n],S}$ are a subset of the columns of $EA$.
	This implies
	\[
		\Abs{\Set{E \in \mathcal{E} : \rank(EA) < r}} \le \Abs{\Set{E \in \mathcal{E} : \rank(E A_{[n],S}) < r}} \le L,
	\]
	where the second inequality follows from the fact that $\mathcal{E}$ is a weak $(r,L)$-lossless rank condenser.

	A symmetric argument implies
	\[
		\Abs{\Set{E \in \mathcal{E} : \rank(EB) < r}} \le L,
	\]
	so we have
	\[
		\Abs{\Set{E \in \mathcal{E} : \rank(EA) < r \text{ or } \rank(EB) < r}} \le 2 L.
	\]
	Because $|\mathcal{E}| \ge 2 L + 1$, there is some $E \in \mathcal{E}$ such that $\rank(EA) = r$ and $\rank(EB) = r$.
	This implies that 
	\[
		\rank(E M E^\top) = \rank(E A B^\top E^\top) = r.
	\]
	from which it follows that $\det_r(E M E^\top) \neq 0$ as desired.
\end{proof}

Using \autoref{lem:general fstw}, we proceed to lift lower bounds for $\detideal{n}{n}{r}$ to IPS lower bounds in the setting of low-depth circuits.

\begin{theorem} \label{thm:ips vac0 lb}
	Let $X$ and $Y$ be $n \times n$ matrices of variables and let $r \le n$.
	Let $\mathcal{E} \subseteq \F^{r \times n}$ be a weak $(r, r(n-r))$-lossless rank condenser satisfying $|\mathcal{E}| \ge 2 r (n-r) + 1$.
	Assume that 
	\begin{enumerate}
		\item
			if $\ch(\F) = 0$, any product-depth $\Delta$ circuit which computes $\det_n(X) + O(\eps)$ must be of size at least $t(n,\Delta)$ for some function $t : \naturals \times \naturals \to \naturals$; and
		\item
			if $\ch(\F) = p > 0$, any product-depth $\Delta$ circuit which computes $\det_n(X)^{p^k} + O(\eps)$ for any $k \in \naturals$ must be of size at least $t(n,\Delta)$ for some function $t : \naturals \times \naturals \to \naturals$.
	\end{enumerate}
	Let $C(X,Y,\vec{z},W,U,V)$ be an IPS refutation of 
	\begin{align*}
		\det_r(E X E^\top) &= 0 \quad \quad \forall E \in \mathcal{E}\\
		XY - I_n &= 0 \\
		X \odot X - X &= 0 \\
		Y \odot Y - Y &= 0,
	\end{align*}
	where $\vec{z}$ is a collection of placeholder variables corresponding to the first set of equations and $W$, $U$, and $V$ are matrices of placeholder variables that correspond to the second, third, and fourth sets of equations, respectively.
	Then any product-depth $\Delta$ circuit that computes $C(X,Y,\vec{z},W,U,V) + O(\eps)$ must be of size $t(\Omega(r^{1/3}),\Delta+1) - O(n^4)$.
\end{theorem}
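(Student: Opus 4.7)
The plan is to extract from any IPS refutation a small low-depth circuit that border-computes the determinant, thereby converting the hardness hypothesis on $\det_n$ into a size lower bound for the refutation. First I would partition the axioms into ``hard'' axioms $\{\det_r(EXE^\top) : E \in \mathcal{E}\}$ and ``easy'' axioms $\{XY - I_n,\ X \odot X - X,\ Y \odot Y - Y\}$. The easy axioms are satisfied by $X = Y = I_n$, while the full system is unsatisfiable because, by \autoref{lem:rank condenser determinant}, the hard axioms force $\rank(X) < r$ whereas $XY = I_n$ forces $\rank(X) = n$. Applying \autoref{lem:general fstw} to the refutation $C$ then yields a nonzero polynomial
\[
h(X,Y) := 1 - C(X, Y, \vec{0},\ XY-I_n,\ X \odot X - X,\ Y \odot Y - Y)
\]
lying in the ideal $\abr{\det_r(EXE^\top) : E \in \mathcal{E}} \subseteq \F[X,Y]$. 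The Cauchy--Binet formula expresses each generator as an $\F$-linear combination of $r \times r$ minors of $X$, so in fact $h \in \detideal{n}{n}{r} \cdot \F[X,Y]$.

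Next I would specialize $Y$ to a value $Y_0 \in \F^{n \times n}$ (passing to a sufficiently large extension if $\F$ is too small) generic enough that $h(X, Y_0) \in \F[X]$ is nonzero; since ideal membership is preserved under such a substitution, $h(X,Y_0) \in \detideal{n}{n}{r}$. \autoref{cor:proj to det} applied to $h(X,Y_0)$ then supplies a depth-three $h(X,Y_0)$-oracle circuit of total size $O(n^4)$ with a single oracle call, border-computing $\det_t(Z)$, or $\det_t(Z)^{p^k}$ in positive characteristic, for $t = \Omega(r^{1/3})$. To bound the complexity of $h(X, Y_0)$ itself, note that starting from a size-$s$ border circuit for $C$ of product-depth $\Delta$, each placeholder substitution is cheap: $XY_0 - I_n$ is a matrix of linear forms in $X$ (requiring no multiplication gates), $X \odot X - X$ contributes an $O(n^2)$-wide layer of squaring gates at depth one, and the remaining placeholders collapse to constants. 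The substitutions together add $O(n^3)$ gates and raise the product-depth by at most one, so $h(X,Y_0)$ is border-computed with size $s + O(n^3)$ and product-depth $\Delta + 1$.

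Unrolling the oracle---whose enclosing depth-three structure consists of addition layers only and thus adds nothing to the product-depth---produces a size $s + O(n^4)$, product-depth $\Delta + 1$ circuit that border-computes $\det_t$ or its $p^k$-th power. Invoking the hardness hypothesis at dimension $t = \Omega(r^{1/3})$ and product-depth $\Delta + 1$ forces $s + O(n^4) \ge t(\Omega(r^{1/3}), \Delta + 1)$, which rearranges to the claimed bound. The main technical subtlety lies in the specialization $Y \mapsto Y_0$: over a small finite field a suitable $Y_0$ need not exist in $\F$ itself, but this can be handled either by passing to a sufficiently large extension (the hardness hypothesis for low-depth circuits being naturally uniform in the field) or by running \autoref{prop:reduction to single bideterminant} directly over the coefficient ring $\F(Y)$, where every nonzero leading coefficient is invertible and the straightening law still applies.
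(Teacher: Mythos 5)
Your argument follows the paper's overall strategy: apply \autoref{lem:general fstw} to extract from the refutation a nonzero polynomial $h$ in $\abr{\det_r(E X E^\top) : E \in \mathcal{E}}$, use Cauchy--Binet to place $h$ inside a determinantal ideal, and then invoke \autoref{cor:proj to det} to obtain a low-depth circuit border-computing $\det_{\Theta(r^{1/3})}$. Your one genuine departure is the specialization $Y \mapsto Y_0$, intended to produce a polynomial in $X$ alone before calling \autoref{cor:proj to det}. The paper never does this: it keeps $Y$ symbolic and views $[X \,|\, Y]$ as a single $n \times 2n$ matrix of variables, at which point the $r \times r$ minors of $X$ are among the $r \times r$ minors of $[X \,|\, Y]$, so Cauchy--Binet gives $h(X,Y) \in \detideal{n}{2n}{r}$ directly and \autoref{cor:proj to det} applies as stated for rectangular matrices. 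This is strictly simpler, costs nothing in the final bound, and sidesteps the field-size issue you flag at the end.

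On that issue, neither of your proposed fixes is fully watertight. Passing to an extension $\mathbb{K} \supseteq \F$ requires the hardness hypothesis over $\mathbb{K}$, but the theorem only grants a lower bound for circuits over $\F$; there is no general descent from $\mathbb{K}$-circuit lower bounds back to $\F$-circuit lower bounds that preserves product-depth. Running \autoref{prop:reduction to single bideterminant} over the coefficient field $\F(Y)$ does not close the gap either: while the linear substitution it produces has coefficients in $\F(\eps)$ only, the normalization scalar $\alpha$ is a bideterminant coefficient of $h$ and so lives in $\F(Y)$, leaving you with an oracle circuit over $\F(Y)(\eps)$ and the same need to descend to $\F$. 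The clean resolution is simply not to specialize at all and absorb $Y$ into the matrix, as the paper does. Your remaining accounting of size and product-depth is correct and matches the paper's $s + O(n^4)$, product-depth $\Delta + 1$ bound.
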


\begin{proof}
	Suppose $C+O(\eps)$ can be computed by a circuit of size $s$ and product-depth $\Delta$.
	The system above is unsatisfiable: if $X$ satisfies $\det_r(E X E^\top) = 0$ for all $E \in \mathcal{E}$, then \autoref{lem:rank condenser determinant} implies $\rank(X) < r \le n$; on the other hand, if $X$ satisfies $XY - I_n = 0$, then clearly $\rank(X) = n$.
	Observe that if we omit the equations $\det_r(E X E^\top) = 0$ for all $E \in \mathcal{E}$, then this system becomes satisfiable (take $X = Y = I_n$).
	\autoref{lem:general fstw} implies that
	\[
		f(X,Y) \coloneqq 1 - C(X, Y, 0, XY - I_n, X \odot X - X, Y \odot Y - Y)
	\]
	is a nonzero element of the ideal generated by the polynomials $\det_r(E X E^\top)$ for $E \in \mathcal{E}$.

	Expanding $\det_r(E X E^\top)$ using the Cauchy--Binet formula, we obtain
	\[
		\det_r(E X E^\top) = \sum_{S, T \in \binom{[n]}{r}} \det_r(E_{[r], S}) \det_r(X_{S, T}) \det_r(E^\top_{T,[r]}).
	\]
	Thus, each polynomial $\det_r(E X E^\top)$ lies in the ideal generated by the $r \times r$ minors of $X$.
	Arranging $X$ and $Y$ into an $n \times 2n$ matrix, we have $f(X,Y) \in \detideal{n}{2n}{r}$.
	The coordinates of $XY - I_n$, $X \odot X - X$, and $Y \odot Y - Y$ can be computed by a multi-output circuit of size $O(n^3)$ and product-depth 1.
	This yields a circuit of size $s + O(n^3)$ and product-depth $\Delta + 1$ that computes $f(X,Y) + O(\eps)$.
	Using \autoref{cor:proj to det}, we obtain a product-depth $\Delta + 1$ circuit $\Phi$ of size $s + O(n^4)$ such that
	\begin{enumerate}
		\item
			if $\ch(\F) = 0$, then $\Phi$ computes $\det_{\Theta(r^{1/3})}(X) + O(\eps)$; and
		\item
			if $\ch(\F) = p > 0$, then $\Phi$ computes $\det_{\Theta(r^{1/3})}(X)^{p^k} + O(\eps)$ for some $k \in \naturals$.
	\end{enumerate}
	In both cases, we must have $s + O(n^4) \ge t(\Omega(r^{1/3}),\Delta + 1)$, which completes the proof.
\end{proof}

Since \autoref{cor:LST border lb} establishes unconditional lower bounds on the size of low-depth circuits that border compute elements of $\detideal{n}{2n}{r}$ over fields of characteristic zero, we obtain corresponding lower bounds for low-depth IPS.

\begin{corollary} \label{cor:ips vac0 lb}
	Let $\F$ be a field of characteristic zero.
	Let $\mathcal{E} \subseteq \F^{r \times n}$ be the weak $(r,r(n-r))$-lossless rank condenser of size $|\mathcal{E}| = 2r(n-r) + 1$ given by \autoref{lem:rank condenser construction}.
	Let $X$ and $Y$ be $n \times n$ matrices of variables and let $r \le n$.
	Let $C(X,Y,\vec{z},W,U,V)$ be an IPS refutation of 
	\begin{align*}
		\det_r(E X E^\top) &= 0 \quad\quad \forall E \in \mathcal{E} \\
		XY - I_n &= 0 \\
		X \odot X - X &= 0 \\
		Y \odot Y - Y &= 0,
	\end{align*}
	where $\vec{z}$ is a collection of placeholder variables corresponding to the first set of equations and $W$, $U$, and $V$ are matrices of placeholder variables that correspond to the second, third, and fourth sets of equations, respectively.
	Then any product-depth $\Delta$ circuit that computes $C(X,Y,\vec{z},W,U,V) + O(\eps)$ must be of size $r^{(\log r)^{\exp(-O(\Delta))}} - O(n^4)$.
\end{corollary}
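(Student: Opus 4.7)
The plan is to derive this corollary as a direct instantiation of \autoref{thm:ips vac0 lb}, with the function $t(n,\Delta)$ supplied by the low-depth border lower bound that is already packaged in \autoref{lem:const depth det ideal lb}. First I would verify that the rank condenser $\mathcal{E}$ satisfies the hypotheses of \autoref{thm:ips vac0 lb}. By \autoref{lem:rank condenser construction}, for any $\omega \in \F$ of multiplicative order at least $n$ and any $S \subseteq \F \setminus \set{0}$, the matrices $\set{W_\omega(\alpha) : \alpha \in S}$ form a weak $(r, r(n-r))$-lossless rank condenser. Since $\F$ has characteristic zero (hence is infinite and contains elements of arbitrarily large multiplicative order), we can pick such an $\omega$ and a set $S$ of size exactly $2r(n-r)+1$, producing an $\mathcal{E}$ of the required size.

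Next I would supply the lower bound function $t(n,\Delta)$ needed by \autoref{thm:ips vac0 lb}. Because $\det_n(X)$ is itself a nonzero element of $\detideal{n}{n}{n}$, applying \autoref{lem:const depth det ideal lb} with $r=n$ and $f = \det_n$ yields
\[
	t(n,\Delta) \;\geq\; n^{(\log n)^{\exp(-c_1\Delta)}}
\]
for a universal constant $c_1 > 0$. The degree hypothesis on $\ch(\F)$ in \autoref{lem:const depth det ideal lb} is vacuous since $\ch(\F)=0$, and the $p^k$-power case in the hypothesis of \autoref{thm:ips vac0 lb} does not arise for the same reason. Plugging this $t$ into \autoref{thm:ips vac0 lb} gives the IPS refutation lower bound
\[
	t\bigl(\Omega(r^{1/3}),\, \Delta+1\bigr) - O(n^4) \;\geq\; \bigl(\Omega(r^{1/3})\bigr)^{\bigl(\log \Omega(r^{1/3})\bigr)^{\exp(-c_1(\Delta+1))}} - O(n^4).
\]

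Finally, I would simplify this expression. Using $\log(\Omega(r^{1/3})) = \Theta(\log r)$, and observing that the constant factor $1/3$ appearing in the base of the outer exponent and the additive shift from $\Delta$ to $\Delta+1$ can both be absorbed into the constant hidden inside $\exp(-O(\Delta))$, the bound simplifies to the claimed $r^{(\log r)^{\exp(-O(\Delta))}} - O(n^4)$. The main (and only) obstacle here is the asymptotic bookkeeping of this last step: one must confirm that the multiplicative slack in the base of the exponent does not dominate the doubly-iterated exponential structure of the exponent itself. Since $(\log r)^{\exp(-O(\Delta))}$ is $\omega(1)$ whenever $\Delta = o(\log \log \log r)$, the factor of $1/3$ and the shift in $\Delta$ indeed disappear into the $O(\Delta)$ constant, and no additional ideas beyond \autoref{thm:ips vac0 lb} and \autoref{lem:const depth det ideal lb} are required.
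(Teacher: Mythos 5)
Your proposal is correct and takes essentially the same approach as the paper, whose proof is a one-liner combining \autoref{thm:ips vac0 lb} with \autoref{cor:LST border lb}. You route the determinant lower bound through \autoref{lem:const depth det ideal lb} (applied to $\det_n \in \detideal{n}{n}{n}$), but that lemma is itself the packaging of \autoref{cor:LST border lb} with the projection to $\IMM$, so the underlying chain of reductions and the asymptotic bookkeeping absorbing the $r \mapsto \Theta(r^{1/3})$ and $\Delta \mapsto \Delta+1$ shifts into the $\exp(-O(\Delta))$ constant are identical to what the paper intends.
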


\begin{proof}
	This follows immediately from \autoref{thm:ips vac0 lb} and \autoref{cor:LST border lb}.
\end{proof}

\begin{remark} \label{rem:ac0p frege}
	Over fields of characteristic $p > 0$, bounded-depth IPS can efficiently simulate $\AC^0[p]$-Frege \cite[Theorem 3.5]{GP18}.
	Proving super-polynomial lower bounds on the length of $\AC^0[p]$-Frege proofs is a longstanding open problem in proof complexity.
	\autoref{cor:ips vac0 lb} can be seen as a step towards resolving this problem.
	In order to obtain $\AC^0[p]$-Frege lower bounds, two obstacles must be overcome.
	First, one must extend the lower bound of \textcite{LST21a} to hold over fields of small characteristic and to hold for $p$\ts{th} powers of the determinant.
	Second, it is necessary to prove an IPS lower bound for a system of equations that arises from the encoding of a CNF formula.
	Our system is not the encoding of a CNF; the IPS lower bounds of \textcite{FSTW16} also suffer from this drawback. 
\end{remark}

We can also carry out the reasoning of \autoref{thm:ips vac0 lb} with formulas instead of low-depth circuits.
The resulting formula-IPS lower bound is conditional, as we currently lack good lower bounds on the formula size of any explicit polynomial, let alone the determinant.

\begin{theorem} \label{thm:ips vf lb}
	Let $X$ and $Y$ be $n \times n$ matrices of variables and let $r \le n$.
	Let $\mathcal{E} \subseteq \F^{r \times n}$ be the weak $(r,r(n-r))$-lossless rank condenser of size $|\mathcal{E}| = 2r(n-r) + 1$ given by \autoref{lem:rank condenser construction}.
	Assume that 
	\begin{enumerate}
		\item
			if $\ch(\F) = 0$, any formula which computes $\det_n(X) + O(\eps)$ must be of size at least $t(n)$ for some function $t : \naturals \to \naturals$; and
		\item
			if $\ch(\F) = p > 0$, any formula which computes $\det_n(X)^{p^k} + O(\eps)$ for any $k \in \naturals$ must be of size at least $t(n)$ for some function $t : \naturals \to \naturals$.
	\end{enumerate}
	Let $C(X,Y,\vec{z},W,U,V)$ be an IPS refutation of 
	\begin{align*}
		\det_r(E X E^\top) &= 0 \quad\quad \forall E \in \mathcal{E} \\
		XY - I_n &= 0 \\
		X \odot X - X &= 0 \\
		Y \odot Y - Y &= 0,
	\end{align*}
	where $\vec{z}$ is a collection of placeholder variables corresponding to the first set of equations and $W$, $U$, and $V$ are matrices of placeholder variables that correspond to the second, third, and fourth sets of equations, respectively.
	Then any formula that computes $C(X,Y,\vec{z},W,U,V) + O(\eps)$ must be of size $\Omega\del{\frac{t(\Omega(r^{1/3}))}{n^3}}$.
\end{theorem}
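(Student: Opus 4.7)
My plan is to follow the strategy of the proof of \autoref{thm:ips vac0 lb}, replacing the low-depth circuit lower bound by the assumed formula lower bound for the determinant and carefully tracking the (multiplicative) blow-up of formula size under substitution. The system is unsatisfiable: the axioms $\det_r(E X E^\top) = 0$ for all $E \in \mathcal{E}$ force $\rank(X) < r \le n$ by \autoref{lem:rank condenser determinant}, while $XY - I_n = 0$ forces $\rank(X) = n$. Deleting the first family of equations yields a satisfiable subsystem, as witnessed by $X = Y = I_n$, so \autoref{lem:general fstw} applies to produce a nonzero polynomial
\[
	f(X,Y) \coloneqq 1 - C(X, Y, \vec{0}, XY - I_n, X \odot X - X, Y \odot Y - Y)
\]
lying in the ideal generated by $\{\det_r(E X E^\top) : E \in \mathcal{E}\}$. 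Expanding each generator by the Cauchy--Binet formula shows that every $\det_r(E X E^\top)$ lies in $\detideal{n}{n}{r}$; hence, viewing $X, Y$ together as an $n \times 2n$ matrix, $f \in \detideal{n}{2n}{r}$.

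Next I would translate a size-$s$ formula approximating $C$ into a small formula approximating $f$. Each entry of $XY - I_n$ is a formula of size $O(n)$, while each entry of $X \odot X - X$ and $Y \odot Y - Y$ has size $O(1)$; substituting these at the leaves of the formula for $C + O(\eps)$ (replacing the $\vec{z}$-leaves with $0$) produces a formula of size $O(sn)$ computing $f(X, Y) + O(\eps)$.

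Then I would apply \autoref{cor:proj to det} to obtain a depth-three $f$-oracle circuit computing $\det_t + O(\eps)$ if $\ch(\F) = 0$, or $\det_t^{p^k} + O(\eps)$ for some $k \in \naturals$ if $\ch(\F) = p > 0$, where $t = \Theta(r^{1/3})$. The bottom layer of this circuit consists of $2n^2$ addition gates, each a linear combination of the $t^2$ variables of the $t \times t$ matrix. Inlining the size-$O(sn)$ formula for the oracle and replacing each of its leaves by the corresponding bottom addition gate (itself a formula of size $O(t^2) = O(r^{2/3})$) yields a formula of size $O(s n r^{2/3})$ computing $\det_t + O(\eps)$ (or its $p^k$-th power). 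The assumed lower bound then gives $O(s n r^{2/3}) \ge t(\Omega(r^{1/3}))$, and using $r \le n$ to bound $n r^{2/3} \le n^{5/3} \le n^3$ yields the claimed lower bound $s \ge \Omega\del{t(\Omega(r^{1/3}))/n^3}$.

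The proof contains no substantial obstacle beyond the proof of \autoref{thm:ips vac0 lb}; the only real difference is that the substitutions in the definition of $f$ and the inlining of the $f$-oracle both inflate formula size multiplicatively rather than additively, which is exactly what produces the $n^3$ denominator (in contrast to the additive $O(n^4)$ loss for low-depth circuits).
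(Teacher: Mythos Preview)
Your proposal is correct and follows essentially the same route as the paper's proof: verify unsatisfiability via \autoref{lem:rank condenser determinant}, apply \autoref{lem:general fstw} to obtain a nonzero $f \in \detideal{n}{2n}{r}$ (using Cauchy--Binet to place each $\det_r(EXE^\top)$ in that ideal), build an $O(sn)$-size formula for $f + O(\eps)$, and then invoke \autoref{cor:proj to det} to reach a formula approximating $\det_{\Theta(r^{1/3})}$ (or its $p^k$-th power), contradicting the assumed lower bound. Your accounting of the final formula size as $O(sn\cdot r^{2/3})$ is in fact slightly sharper than the paper's coarser $O(sn^3)$; since you then relax it via $r \le n$ to recover the stated $\Omega(t(\Omega(r^{1/3}))/n^3)$ bound, the conclusions match.
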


\begin{proof}
	Suppose $C + O(\eps)$ can be computed by a formula of size $s$.
	As in the proof of \autoref{thm:ips vac0 lb}, we deduce from \autoref{lem:general fstw} that
	\[
		f(X,Y) \coloneqq 1 - C(X, Y, 0, XY - I_n, X \odot X - X, Y \odot Y - Y)
	\]
	is a nonzero element of the ideal $\detideal{n}{2n}{r} \subseteq \F[X,Y]$, where we view $X \cup Y$ as an $n \times 2n$ matrix.
	The coordinates of $XY - I_n$, $X \odot X - X$, and $Y \odot Y - Y$ can each be computed by a formula of size $O(n)$.
	This yields a formula of size $O(sn)$ that computes $f(X,Y) + O(\eps)$.
	From \autoref{cor:proj to det}, we obtain a formula $\Phi$ of size $O(sn^3)$ such that
	\begin{enumerate}
		\item
			if $\ch(\F) = 0$, then $\Phi$ computes $\det_{\Theta(r^{1/3})}(X) + O(\eps)$; and
		\item
			if $\ch(\F) = p > 0$, then $\Phi$ computes $\det_{\Theta(r^{1/3})}(X)^{p^k} + O(\eps)$ for some $k \in \naturals$.
	\end{enumerate}
	By assumption, we must have $O(sn^3) \ge t(\Omega(r^{1/3}))$, which implies the desired lower bound on $s$.
\end{proof}

The previous results show that in the setting of border complexity, the task of computing the $\Theta(r^{1/3}) \times \Theta(r^{1/3})$ determinant can be reduced to computing any IPS refutation of the system $\mathcal{F} \coloneqq \Set{\det_r(E X E^\top) = 0 : E \in \mathcal{E}} \cup \set{XY - I_n = 0}$ where $\mathcal{E}$ is a weak lossless rank condenser with appropriate parameters and size.
When $r = n$, we complement this by constructing a depth-three $\det_n$-oracle circuit with $O(n^2)$ wires that computes a refutation of this system.
This shows that the complexity of refuting $\set{\det(X) = 0, XY - I_n = 0}$ is equivalent to the complexity of approximately computing the determinant, up to polynomial factors.

\begin{proposition} \label{prop:ips ub}
	Let $\F$ be any field and let $X$ and $Y$ be $n \times n$ matrices of variables.
	Let $\mathcal{F}$ be the system of equations given by
	\begin{align*}
		\det(X) &= 0 \\
		XY - I_n &= 0.
	\end{align*}
	Then the following hold.
	\begin{enumerate}
		\item
			There is a depth-three $\det_n$-oracle circuit with $O(n^2)$ wires that computes an IPS refutation of $\mathcal{F}$.
		\item
			There is a depth-three $(\det_n + O(\eps))$-oracle circuit with $O(n^2)$ wires that approximately computes an IPS refutation of $\mathcal{F}$.
	\end{enumerate}
\end{proposition}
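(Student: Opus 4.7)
The plan is to exhibit an explicit IPS refutation $C$ of $\mathcal{F}$ and show it admits the claimed circuit implementation. The starting point is the multiplicativity of the determinant: under the constraints of $\mathcal{F}$, we have $\det_n(X)\det_n(Y) = \det_n(XY) = \det_n(I_n) = 1$ while simultaneously $\det_n(X) = 0$, producing the contradiction $0 = 1$. This reasoning leads to the candidate refutation
\[
    C(X, Y, w, Z) \coloneqq w \cdot \det_n(Y) + 1 - \det_n(I_n + Z),
\]
where $w$ is the placeholder for $\det_n(X)$ and $Z$ is the matrix of placeholders for $XY - I_n$.

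First I would verify the two defining properties of an IPS certificate. At $(w, Z) = (0, 0)$, we get $C(X, Y, 0, 0) = 0 \cdot \det_n(Y) + 1 - \det_n(I_n) = 0$, as required. At $(w, Z) = (\det_n(X), XY - I_n)$, we get $C = \det_n(X)\det_n(Y) + 1 - \det_n(XY) = \det_n(XY) + 1 - \det_n(XY) = 1$, again by multiplicativity of the determinant. Hence $C$ is a valid IPS refutation of $\mathcal{F}$.

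For part (1), I would describe the circuit as follows. The bottom layer uses $O(n^2)$ addition gates to form the entries of $Y$ (identity wires) and of $I_n + Z$ (each an affine function of a single $z_{i,j}$ shifted by a constant from $I_n$). The middle layer consists of two parallel $\det_n$-oracle gates, one on $Y$ and one on $I_n + Z$, together with a single multiplication forming the product $w \cdot \det_n(Y)$. The top layer is a single addition gate that produces $w \det_n(Y) + 1 - \det_n(I_n + Z)$. This yields a depth-three $\det_n$-oracle circuit with $O(n^2)$ wires, as claimed. Part (2) then follows by applying \autoref{lem:exact to approx oracle} to this circuit: substituting each exact $\det_n$-oracle with a $(\det_n + O(\eps))$-oracle evaluated at a sufficiently high power $\eps^N$ preserves the circuit topology, hence the depth and wire count, and yields a depth-three $(\det_n + O(\eps))$-oracle circuit border-computing $C$. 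No step in this plan presents any genuine obstacle beyond routine verification, since the upper bound direction is essentially immediate once the refutation $C$ is written down.
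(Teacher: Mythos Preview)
Your proposal is correct and essentially identical to the paper's proof: you write down the same refutation $C(X,Y,w,Z) = 1 - \det_n(I_n+Z) + w\cdot\det_n(Y)$, verify the two IPS conditions the same way via multiplicativity of the determinant, and derive part (2) from part (1) via \autoref{lem:exact to approx oracle}. The only minor imprecision is your layering description---the multiplication by $w$ sits above the $\det_n(Y)$ oracle rather than parallel to it---but since the $Y$-inputs feed directly into that oracle without passing through the bottom addition layer, the longest input-to-output path still has length three, so the depth bound holds.
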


\begin{proof}
	Item (2) follows immediately from (1) using \autoref{lem:exact to approx oracle}, so it suffices to prove (1).
	Let $Z$ be an $n \times n$ matrix of variables and let $w$ be an additional variable.
	We claim that
	\[
		C(X,Y,w,Z) \coloneqq 1 - \det_n(Z + I_n) + w \cdot \det_n(Y)
	\]
	is an IPS refutation of $\mathcal{F}$.
	It is clear from the expression above that $C(X,Y,w,Z)$ can be computed by a depth-three $\det_n$-oracle circuit with $O(n^2)$ wires.
	To see that $C(X,Y,w,Z)$ is a valid IPS refutation, observe that we have
	\[
		C(X,Y,0,0) = 1 - \det_n(I_n) + 0  = 0
	\]
	and
	\begin{align*}
		C(X, Y, \det_n(X), XY - I_n) &= 1 - \det_n(XY - I_n + I_n) + \det_n(X) \det_n(Y) \\
		&= 1 - \det_n(XY) + \det_n(XY) \\
		&= 1.
	\end{align*}
	Thus $C(X,Y,w,Z)$ is an IPS refutation of $\mathcal{F}$.
\end{proof}

We end with a brief discussion on the hard instance used in this section.

\begin{remark}
	\textcite[Example A.6]{GP18} showed that a short IPS refutation of $\set{\det_n(X) = 0, XY - I_n = 0}$ can be used to construct a short IPS proof of the inversion principle $XY = I_n \implies YX = I_n$.
	The inversion principle is one of the ``hard matrix identities'' of \textcite{SC04}, which are four tautologies proposed as candidates for separating the Frege and Extended Frege proof systems.
	Unfortunately, our methods are not able to prove lower bounds, conditional or otherwise, on the size of IPS proofs of the hard matrix identities.
\end{remark}

\section*{Acknowledgments}

We thank Tuomas Hakoniemi and Iddo Tzameret for useful conversations that led to the proof of \autoref{lem:general fstw} and for allowing us to include \autoref{lem:general fstw} in this work.

\printbibliography

\end{document}